\title{Algebraicity and Asymptotics: An explosion of BPS indices from algebraic generating series}
\abstract{It is an observation of Kontsevich and Soibelman that generating series that produce certain (generalized) Donaldson Thomas invariants are secretly algebraic over $\mathbb{Q}(x)$.  From a physical perspective this observation arises naturally for DT invariants that appear as BPS indices in theories of class $S[A]$: explicit algebraic equations (that completely determine these series) can be derived using (degenerate) spectral networks.  In this paper, we conjecture an algebraic equation associated to DT invariants for the Kronecker $3$-quiver with dimension vectors $(3n,2n),\, n \in \mathbb{Z}_{>0}$ in the non-trivial region of its stability parameter space.  Using a functional equation due to Reineke, we show algebraicity of generating series for Euler characteristics of stable moduli for the Kronecker $m$-quiver assuming algebraicity of generating series for DT invariants.  In the latter part of the paper we deduce very explicit results on the asymptotics of DT invariants/Euler characteristics under the assumption of algebraicity of their generating series; explicit asymptotics are deduced for dimension vectors $(3n,2n),\, n \in \mathbb{Z}_{>0}$ for the Kronecker $3$-quiver.  The algebraic equation is derived using spectral network techniques developed in \cite{sn} and \cite{wwc}, but the main results can be understood without knowledge of spectral networks.}
\author{Tom Mainiero}
\affiliation
{New High Energy Theory Center, Rutgers University, Piscataway, NJ 08854, USA}
\emailAdd{tom.mainiero@rutgers.edu}
\date{\today}
\begin{document}

\maketitle

\section{Introduction}

\subsection{Main Results Part I: A short-lived exercise in succinctness}

In this paper, we concern ourselves with some results relating to a property we will call \textit{algebraicity}: suitably defined generating series of Donaldson-Thomas Invariants, BPS indices, or even just plain Euler characteristics of (stable) quiver moduli---a priori defined as formal series---are secretly algebraic functions over $\mathbb{Q}$.  Kontsevich and Soibelman have an understanding of algebraicity in the context of DT invariants \cite{kont_soib} associated to a large class of 3CY categories; however their proof uses rather indirect methods: explicit algebraic relations are not produced. 

On the other hand, in a large class of (four dimensional $\mathcal{N}=2$) supersymmetric field theories called \textit{theories of class $S[A]$}, there exists algorithmic machinery for computing generating series of BPS indices: \textit{spectral networks}.  Using this machinery, algebraicity can be seen directly, and one can algorithmically construct explicit algebraic relations.  Roughly speaking, a spectral network is a directed, decorated graph associated to a $\mathbb{Z}_{>0}$-family of BPS states; algebraic relations for generating series of the associated BPS indices follows by a system of algebraic relations determined by the edges and vertices of this graph.

The main results of this paper are threefold:
\begin{enumerate}
	\labitem{(A)}{list:DT_equations} Algebraic equations that produce DT invariants for the $3$-Kronecker quiver:
		\begin{center}
\begin{tikzpicture}
\tikzstyle{block} = [rectangle, draw=blue, thick, fill=blue!10,
text width=16em, text centered, rounded corners, minimum height=2em]

\node at (1.2, 0)[circle,draw=blue,very thick] (second) {$q_{2}$};

\node at (-1.2, 0)[circle,draw=blue,very thick] (first) {$q_{1}$};

\draw[-latex] (second) to  [bend right = 35] (first);
\draw[-latex] (second)  to  [bend right = 0] (first);
\draw[-latex] (second)  to  [bend right = -35] (first);
	\end{tikzpicture}
\end{center}	
	
	 associated to the family of collinear dimension vectors $\left(3n q_{1} + 2n q_{2} \right)_{n = 1}^{\infty}$ (see \eqref{eq:func_eqs} and \eqref{eq:3_2_dt_relation});
	
	\labitem{(B)}{list:euler_equations} As a corollary of \ref{list:DT_equations} and a functional equation due to Reineke: an algebraic equation determining associated Euler characteristics of stable quiver moduli (see \eqref{eq:3_2_eul_relation} and accompanying definitions in \S \ref{sec:stab_eul_setup});
	
	\labitem{(C)}{list:asymptotics} Explicit constraints on the asymptotics of BPS indices/DT invariants/($m$-Kronecker) Euler characteristics due to algebraicity: ``\textit{Either there are finitely many BPS indices or exponentially many}".  Formulae are produced that allow for the extraction of asymptotics from algebraic equations; these formulae are applied to the algebraic equations of \ref{list:DT_equations} and \ref{list:euler_equations}.
\end{enumerate}

There are many reasons why physicists should be excited about algebraicity, one being the likelihood of deep implications for the geometry of moduli spaces of vacua for four-dimensional $\mathcal{N}=2$ field theories (in particular, the generalized cluster-like structures whose Coulomb branch manifestation was discovered by Gaiotto-Moore-Neitzke). As already mentioned in \ref{list:asymptotics}, a less subtle implication---one that is treated with more attention in this paper---is the fact that \textit{algebraicity leads to exponential growth of BPS indices with the charge}. Only in the simplest of possible scenarios--in particular, when the generating series is a rational function with zeros and poles at roots of unity---do BPS indices \textit{not} grow exponentially with the charge.  The complicated nature of spectral networks in theories of class $S[A]$ with rank $K>1$ (e.g. $SU(K)$ super Yang Mills) suggests that typically algebraic generating series are not polynomials or rational functions; in this sense, the nice behaviour of the BPS spectrum for theories of rank $1$ may be rather exceptional.\footnote{In this paper we do not mention theories of class $S[D]$ or $S[E]$ only because spectral network techniques are currently best understood for class $S[A]$; nevertheless, one should expect algebraicity (and the corollary of exponential growth) should be follow for types $D$ and $E$ using an appropriate generalization of the spectral network machinery (algebraicity follows from the combinatorial nature of spectral network machinery).  See work of Longhi-Park \cite{Longhi:2016rjt} for such a generalization to types $D_{n},\,E_{6},\,$ and $E_{7}$.} In the final part of this paper we provide a constructive proof of how algebraic generating series lead to exponential growth; in the process we give an explicit way to extract precise asymptotics of BPS indices given the algebraic equations satisfied by generating series.  (Note that, in early work \cite{Kol:1998zb}, Barak Kol predicted exponential growth of the density of certain BPS states with charge in four-dimensional $\mathcal{N}=4,\, SU(3)$ SYM---a rank $K=2$ theory of class $S[A]$ associated to a torus; this is a corollary of the exponential growth in the BPS index: see, e.g., \S \ref{sec:alg_asymptotics}.)

In order to make the results of the paper more accessible to non-experts, and to compensate for its size, we provide a detailed and slow introduction.

\subsection{Motivation: BPS indices, DT invariants, and Generating Series}

Let us begin with some context and motivation that will hopefully put algebraicity and the results of this paper into a larger context.  This section is written with the non-expert in mind.

We begin with a diagram that encodes various relationships between between BPS indices (in four-dimensional $\mathcal{N} = 2$ supersymmetric field theories), DT invariants, and Euler characteristics of stable moduli.  Dotted arrows indicate partially-defined correspondences (with the conditions defining the domain of definition indicated); the squigglyness of an arrow is negatively correlated with the author's (subjective and biased) notion of rigour.  Reineke's arrow is magenta coloured to stress that the relationship is not a simple equality of invariants, but a relationship between generating series.  (This diagram should not be considered complete, having inevitable omissions; notably, one can include a squiggly red arrow labelled ``Denef-Moore" from the ``BPS Indices" box to the ``DT invariants box" in reference to the work \cite{Denef-Moore}.)

\begin{center}
	\begin{tikzpicture}
		\tikzstyle{block} = [rectangle, draw=blue, thick, fill=blue!10,
		text width=15em, text centered, rounded corners, minimum height=2em]
		
		\tikzstyle{blockred} = [rectangle, draw=blue, thick, fill=red!10,
		text width=15em, text centered, rounded corners, minimum height=2em]
		
		\node at (4.5,4.5)[blockred,draw=red,very thick] (BPS) {BPS indices in four-dimensional $\mathcal{N} = 2$ supersymmetric field theories};
		
		\node at (-4.5,4.5)[blockred,draw=red,very thick] (sn) {Spectral Networks};
		
		\node at (-4.5, 0)[block,draw=blue,very thick] (exp) {Exponents of certain commutators of Poisson automorphisms};
		
		\node at (4.5, 0)[block,draw=blue,very thick] (steuler) {Euler characteristics of stable moduli};
		
		\node at (-4.5, -4.5)[block,draw=blue,very thick] (DT) {Donaldson-Thomas Invariants};
		
		\node at (4.5, -4.5)[block,draw=blue,very thick] (sseuler) {Stacky (and Weighted) Euler Characteristics of semistable moduli};
		
		\draw[->, densely dotted, thick, decorate,decoration=snake,color=red] (BPS) -- (steuler) node[sloped,above,midway]{\tiny Denef} node[sloped,below,midway]{\parbox{7em}{\tiny \centering Semistable=stable, Type IIB inspired}};
		
		\draw[<->, densely dotted, thick, decorate,decoration=snake, segment length=8mm, color=red] (sn) -- (BPS) node[sloped,above,midway]{\parbox{5em}{\tiny \centering Gaiotto-Moore-Neitzke}} node[sloped,below,midway]{\parbox{5em}{\tiny \centering Theories of class $S[A]$}};
		
		\draw[<-, decorate,decoration=snake, segment length=10mm, thick, color=red] (exp) -- (sn) node[sloped,above,midway]{\parbox{5em}{\tiny \centering Gaiotto-Moore-Neitzke}};
		
		\draw[->,decorate, thick, decoration=snake, segment length=8mm, color=red] (BPS) -- (exp)  node[sloped,above,midway]{\tiny Gaiotto-Moore-Neitzke};
		
		\draw[latex-latex, thick, densely dotted,color=magenta] (exp) -- (steuler) node[above,midway]{\tiny Reineke} node[below,midway]{\tiny Acyclic Quivers \footnotemark};
		
		\draw[latex-latex,thick, color=blue] (DT) -- (exp)  node[sloped,above,midway]{\tiny Kontsevich-Soibelman};
		
		\draw[latex-latex, thick, color=blue] (DT) -- (sseuler) node[above,midway]{\parbox{5em}{\tiny \centering Joyce-Song (JS), Behrend (B); Kontsevich-Soibelman (KS)}} node[below,midway]{\parbox{6em}{\tiny \centering  Quivers with Potential, Coherent Sheaves (JS,B); 3CY categories (KS)}} ;
		
		\draw[latex-latex, thick, color=blue,densely dotted] (steuler) -- (sseuler) node[below,midway,sloped]{\tiny Semistable=Stable};
	
	\end{tikzpicture}
\footnotetext{Joyce-Song's reproduction \cite[Second Proof of Thm. 7.29]{js} of Reineke's proof \cite{reineke:integrality} indicates that Reineke's functional equations relating DT invariants and BPS indices holds for an arbitrary finite quiver without relations (when equipped with the appropriate stability condition).}
\end{center}

\subsubsection{BPS Indices} \label{sec:intro_BPS_indices}
  A prevalent philosophy in modern physics is a notion familiar to geometers: it is better to study families of objects as opposed to individual objects.  This idea is natural to those who study theories with extended supersymmetry, where one never really has a single theory, but a whole family parametrized by the choices of supersymmetric vacua.  Even if one attempts to focus on a single theory (specified by a choice of vacuum), at low energies this theory will look like a sigma model from spacetime into the moduli of supersymmetric vacua.  Hence, there is no escape from families and one is naturally forced to ask questions about the geometry of the moduli of vacua in order to understand any particular theory.

To probe this geometry, one might try something Yonedaesque: look at maps into the moduli space and try asking for certain ``invariants" that can be associated to such maps, e.g. quantities that are well-behaved in families of theories. Particularly, as a physicist, one might ask ``What part of the particle spectrum of the theory parallel transports nicely as I vary moduli?"  One answer to this question is provided by BPS states: states invariant under half\footnote{Here we are thinking of $\mathcal{N}=2$ supersymmetry; in theories with, e.g. $\mathcal{N}=4$ supersymmetry one may also speak of BPS states invariant under 1/4 of the supercharges.} of the supersymmetries.  Indeed, representation-theoretic arguments suggest that such states should parallel transport nicely as any moduli or parameters of a theory are deformed slightly.  A careful study of this story leads one to the definition of the \textit{BPS index}  (a.k.a. the second helicity supertrace): a $\mathbb{Z}$-valued quantity that is a weighted count of BPS states; it is defined as a weighted-supertrace over a finite dimensional Hilbert space (see \cite{Lerche:1988zy,Lerche:1988zz,Harvey:1995fq,Kiritsis:1997gu} for some early references).  Under sufficiently small perturbations, the BPS index remains constant under parallel transport.  For larger perturbations, e.g. over a sufficiently large region in the moduli of vacua, it has been long known that the BPS index is not constant under parallel transport, but \textit{piecewise constant}: jumps occur along the union of (real) hypersurfaces in the moduli of vacua (a.k.a. the wall-crossing phenomenon). Fortunately, these jumps are intimately connected to the geometry of the moduli space of vacua.

To make the notions above more explicit, we restrict our attention to four-dimensional $\mathcal{N}=2$ supersymmetric field theories on $\mathbb{R}^{3,1}$ such that the notion of a Coulomb branch $\CB$---a component of the moduli space of supersymmetric vacua---is sensible.  $\CB$ is a complex affine space equipped with the following data\footnote{Note that this data equivalently describes the Coulomb branch as a \textit{special K\"{a}hler} manifold away from a collection of singular points $\CV$ (where the K\"{a}hler metric degenerates).}
\begin{enumerate}
	\labitem{(D1)}{list:lattice_dat}  A sheaf $\widehat{\Gamma}$ over $\CB$ valued in lattices (finitely generated free abelian groups) equipped with a $\mathbb{Z}$-valued antisymmetric pairing---i.e. given any open $U \subset \CB$, then $\widehat{\Gamma}(U)$ (the space of sections of $\widehat{\Gamma}$ over $U$) is a finitely generated free abelian group equipped with an antisymmetric pairing $\langle \cdot, \cdot \rangle_{U} \in \widehat{\Gamma}(U)^{\otimes 2} \rightarrow \mathbb{Z}$.  The stalk $\widehat{\Gamma}_{u}$ at a point $u \in \CB$ is referred to as \textit{the charge lattice} at $u$.\footnote{ To put this lattice into a physical context, if the low energy effective field theory at $u$ is a $U(1)^{r}$ gauge theory with $f$ flavours, then $\widehat{\Gamma}_{u}$ is the rank $2r + f$ the lattice of possible electric, magnetic, and flavour charges of states in the theory defined by the vacuum $u$.}
	
	\labitem{(D2)}{list:cc_dat} The \textit{central charge function} $\mathcal{Z}$: a global section of $\mathrm{Hom}_{\mathbb{Z}}(\widehat{\Gamma},\underline{\mathbb{C}}) \otimes_{\underline{\mathbb{C}}} \mathcal{O}_{\CB}$ over $\CB$.
\end{enumerate}
Satisfying the conditions that, there is a complex codimension 1 (Zariski closed) subset $\CV \subset \CB$ of ``singular points" such that:
\begin{enumerate}
	\labitem{(C1)}{list:lattice_cond} $\widehat{\Gamma}$ is a local system (locally constant sheaf) on the complement $\CB' := \CB \backslash \CV$.  Moreover, the  quotient of $\widehat{\Gamma}(U)$ by the radical $\widehat{\Gamma}_{\mathrm{rad}}(U) := \{\gamma \in \widehat{\Gamma}(U): \langle \gamma, \cdot \rangle \equiv 0\}$ is a lattice of rank $2\dim_{\mathbb{C}}(\CB)$;
	
	\labitem{(C2)}{list:cc_cond} for each open $U \subset \CB'$, $\mathcal{Z}$ defines an embedding $U \hookrightarrow \mathrm{Hom}_{\mathbb{Z}}(\widehat{\Gamma}(U),\mathbb{C})$ whose image is a Lagrangian subspace of a symplectic slice of $\mathrm{Hom}_{\mathbb{Z}}(\widehat{\Gamma}(U),\mathbb{C})$.
\end{enumerate}

Choosing a point $u \in \CB '$, the BPS index---which is defined using the data of the $\mathcal{N}=2$ field theory on $\mathbb{R}^{3,1}$ (hence, more than just the data of $\CB$)---defines a function
\begin{align}
	\Omega^{\mathrm{BPS}}( \cdot, u): \widehat{\Gamma}_{u} \backslash \{0\} \rightarrow \mathbb{Z}.
	\label{eq:Omega_BPS}
\end{align}

As mentioned above, as $\gamma$ is parallel transported along a path on $\CB'$, $\Omega^{\mathrm{BPS}}$ exhibits the behaviour of a piecewise constant function on $\CB'$: jumps occur along the real codimension 1 locus defined by
\begin{align*}
	\{u \in \CB':\text{$\exists \gamma_{1}, \gamma_{2} \in \widehat{\Gamma}_{u}$ non-collinear charges of BPS states with $\arg \left[\mathcal{Z}_{u}(\gamma_{1}) \right] = \arg \left[\mathcal{Z}_{u}(\gamma_{2}) \right]$} \}
\end{align*}
often called the collection of \textit{walls of marginal stability}.  Of course, at this point it is natural to ask ``Can we quantify the piecewise constant behaviour?".  We will come back to this question; for now let us consider the following question: suppose that one is supplied with a point $u \in \CB$ and a charge $\gamma \in \widehat{\Gamma}_{u}$, then what is the BPS index $\Omega^{\mathrm{BPS}}(\gamma,u)$?  Trying to calculate this index directly from its definition as a weighted supertrace over some Hilbert space (encoding BPS particles of charge $\gamma$ in the theory defined by the vacuum $u$) would be easy if the Hilbert space were divinely gifted.  More often than not, however, the higher powers that be take pleasure in our suffering, and only in rare instances---particularly when both the theory of interest and the BPS states of interest admit a perturbative description---is it possible to easily construct the Hilbert space.  More generally, one must rely on indirect methods to compute the BPS index---much like the way one can sometimes compute Euler characteristics without understanding the full cohomology of the space. 

\subsubsection{DT invariants}

Typically in supersymmetric field/string theories, BPS states take on semi-classical descriptions closely related to interesting geometric quantities; in such situations BPS indices can be interpreted as counts of:
	\begin{enumerate}
			\labitem{(a)}{list:slag} special Lagrangians in some (Calabi-Yau) Calabi-Yau threefold $X$ (``$A$-branes");
			
			\labitem{(b)}{list:shf} coherent sheaves (``$B$-branes") on a Calabi-Yau threefold $X^{\vee}$ (the mirror-dual description to \ref{list:slag});
	
			\labitem{(c)}{list:quiv} complex quiver representations;
		\end{enumerate}
sometimes equivalent (or dual) descriptions of the same theory allow all three interpretations to play a role: e.g. Denef \cite{denef:qqhh} describes a physical derivation of the relationship between special Lagrangians and quivers, and---when $X = \mathbb{P}^{n}$---mathematicians may recognize the passage from \ref{list:shf} to \ref{list:quiv} in the context of Beilinson's theorem \cite{beilinson}.  The interpretations \ref{list:slag} and \ref{list:shf} are rather natural for string theories, or field theories with string-theoretic constructions. It is thus, reasonable to expect that the BPS index, a priori defined as a trace over a finite-dimensional Hilbert space, should be given in terms of an appropriate ``count" of geometric objects.  Indeed, in the proper context, BPS indices are conjecturally\footnote{Here the word ``conjecture" is subtle and can be used in two ways. (1): For physicists, one can ask whether or not the definition of DT invariant is the appropriate notion of BPS index when there is an underlying semiclassical geometric description of the physics---at least for four-dimensional $\mathcal{N}=2$ field theories, Gaiotto-Moore-Neitzke's construction may satisfy most physicist's level of rigour regarding the equivalence between Kontsevich and Soibelman's definition of DT invariants and BPS indices.  (2): For mathematicians, the word ``conjecture" should be taken to mean that the very definition of a BPS index relies on the construction of Hilbert spaces in quantum field theories: which may not be possible at the most stringent levels of rigour.} the same as Donaldson-Thomas (DT)-invariants \cite{dt,thomas,behrend,ks:wcf,js}, which are defined terms of geometric quantities; the appropriate catchphrase being that ``DT invariants are virtual counts of semistable objects in a 3-Calabi-Yau (3CY) Category".  

Indeed, a significant amount of evidence\footnote{At least from field theories obtained by geometric engineering.} suggests (see \cite[\S 1.1]{cddmms:geom_eng} and references therein) that we should conjecture that BPS states in four-dimensional $\mathcal{N}=2$ field theories are in correspondence with a subcollection of objects in a 3CY category: a triangulated ($A_{\infty}$-)category equipped with data and conditions abstracted from the derived category of coherent sheaves on a Calabi-Yau 3-fold. Corresponding to \ref{list:slag} - \ref{list:quiv}, this category is usually: a Fukaya category, ($A_{\infty}$ enhancement of the derived) category of coherent sheaves \ref{list:quiv}, or a ($A_{\infty}$-enhanced derived) category of quiver representations.  Assuming such a 3CY category $\mathcal{C}$ is handed to us, distinguished triangles play the role of bound state formations\footnote{See, e.g. \cite[\S 5.5.1]{mirror_sym_2}.}; hence, it is natural to expect that the charge of a BPS state--thought of as an object of $\mathcal{C}$---should be identified with its class in the Grothendieck group $K_{0}(\mathcal{C})$.  From this perspective, then the ``count" of BPS states of a particular charge $\gamma \in K_{0}(\mathcal{C})$ should be a ``count" of isomorphism classes of objects with class $\gamma$; this count can be realized by replacing the \textit{set} of isomorphism classes of objects with class $\gamma$ with a moduli \textit{space}/stack of objects (usually something algebraic, e.g. a scheme or an Artin stack) and computing its (weighted) Euler characteristic.  For those worried about where the choice of vacuum $u \in \CB$ has gone, have no fear: an important algebro-geometric lesson is that a notion of ``semistability" is typically required to form interesting moduli stacks; roughly, it is the choice of stability condition that replaces the choice of vacuum.  The notion of stability that has gained acceptance among physicists and practitioners of Kontsevich and Soibelmans motivic DT invariants is provided by (an $A_{\infty}$ version of) Bridgeland stability conditions on $\mathcal{C}$.  For the purposes of this vague discussion there is no harm in thinking of $\mathcal{C}$ as an ordinary triangulated category (i.e. by passing to the derived category $H^{0}(\mathcal{C})$) and thinking of a stability condition as the data of a pair $(\mathsf{A}, Z)$ of an abelian subcategory $\mathsf{A}$ (arising as the heart of a bounded t-structure) along with a linear function $Z \in \mathrm{Hom}_{\mathbb{Z}} \left(K_{0}(\mathsf{A}), \mathbb{C} \right)$ satisfying some conditions; from this data one can define the notion of a (semi)stable object.  The ability to vary a stability condition in a nice manner is aided by the observation that the space of stability conditions forms a Hausdorff complex manifold \cite[\S 3.4]{ks:wcf} $\mathrm{Stab}(\mathcal{C})$.  

To translate this into more physical language: for a fixed $u \in \CB'$ the choice of abelian category $\mathsf{A}$ can be thought of as the choice of a set of elementary BPS states (in the theory defined by vacuum $u$) from which all other BPS states can be generated as ``bound" states\footnote{In order to make sense of such a statement, one must choose a ``locally constant" family of theories---parametrized by $[0, \infty)$---where the theory over $t=0$ is the theory of interest and the limit $t \rightarrow \infty$ is a ``free" limit where all BPS states (or at least a subset of the ones of interest) split into a set of ``generating" constituents.  This is, for example, what happens to Halo bound states for as one approaches a ``$\mathcal{K}$-wall" (see, e.g. \cite[\S 3]{gmn:framed}).}; hence, $\widehat{\Gamma}_{u} \cong K_{0}(\mathsf{A}) \cong K_{0}(\mathcal{C})$ and the function $Z$ is then immediately given by the central charge function $\mathcal{Z}_{u}$.  Given that $\mathrm{Stab}(\mathcal{C})$ is a complex manifold, one may hope that the choice of abelian subcategory may be done in such a manner that one can embed $\CB$ into $\mathrm{Stab}(\mathcal{C})$ in a holomorphic manner.  As it turns out, this is only possible locally due to monodromy around points in the complex codimension 1 locus $\CV$ (mentioned above the conditions \ref{list:lattice_cond}-\ref{list:cc_cond}): slowly varying the theory along a homotopically non-trivial closed loop on $\CB'$ will typically act non-trivially on the generating set of BPS states (at the level of the charge lattice, this is the monodromy of the local system $\widehat{\Gamma}$ over $\CB'$); more precisely, monodromy acts as an autoequivalence on $\mathcal{C}$ that does not preserve the abelian subcategory $\mathsf{A}$ \cite{douglas:2000ah,douglas:2001,douglas:2002fj}. Hence, at best we have an embedding of the universal cover of $\CB'$ into $\mathrm{Stab}(\mathcal{C})$ (c.f. the conjecture in \cite[\S 1.1]{cddmms:geom_eng}).

Working with a fixed stability condition, DT invariants define a map
\begin{align*}
	\Omega^{\mathrm{DT}}( \cdot, Z): K_{0} \left(\mathsf{A} \right) \backslash \{0 \} \rightarrow \mathbb{Q}
\end{align*}
such that $\Omega^{\mathrm{DT}}(\gamma; Z)$ has the interpretation of a virtual count of semistable objects with fixed class $\gamma \in K_{0}(\mathsf{A})$.  As mentioned above, traditionally this count is viewed as the (weighted) Euler characteristic of an appropriately-defined moduli stack of objects with class $\gamma$---it is the stackiness that makes this count a priori rational-valued; however, it is conjectured to be an integer for many 3CY categories \cite{soib}. (Of course a DT invariant must be an integer if it is to agree with a BPS index, as the latter is always an integer.)

  In the simplest of situations $\Omega^{\mathrm{DT}}(\gamma,Z)$ can be defined as an honest Euler characteristic of a moduli space of semistable objects with class $\gamma \in K_{0} \left(\mathsf{A} \right)$---particularly when the moduli space of semistable objects with class $\gamma \in K_{0} \left(\mathsf{A} \right)$ is a smooth (complex) projective variety.  But this is not always the case.  For concreteness, let us specialize our attention to DT invariants that count quiver moduli---particularly quiver moduli for an acyclic quiver $Q$ (i.e. without oriented cycles).  In this case our 3CY category of interest is $\mathsf{D^{b}Rep}_{\mathbb{C}}(Q)$; restricting our focus to stability conditions ($Z, \mathsf{A}$) such that $\mathsf{A}= \mathsf{Rep}_{\mathbb{C}}(Q)$ (identified as the full subcategory defined by chain complexes of representations supported in degree zero), then $K_{0} \left(\mathsf{A} \right)$ is canonically identified with the free abelian group $\mathbb{Z}Q_{0}$ generated by the set of vertices $Q_{0}$ of $Q$. The class of any object $A$ of $\mathrm{Rep}_{\mathbb{C}}(Q)$ is an element of $\mathbb{Z}_{\geq 0} Q_{0}$: the dimension vector of the quiver representation $A$.  Now, if we choose $\gamma \in \mathbb{Z}_{> 0} Q_{0}$ to be primitive\footnote{I.e. $\gamma \neq n \alpha$ for some $n \in \mathbb{Z}$ such that $n \neq 1$, and $\alpha \in \mathbb{Z} Q_{0}$.}, then any semistable representation is necessarily stable and, wonderfully, one can define a moduli space of stable representations of dimension vector $\gamma$ as a smooth projective variety \cite{king:quiv_rep}; in this case $\Omega^{\mathrm{DT}}(\gamma,Z)$ is simply the Euler characteristic of this variety.  

On the other hand, for non-primitive $\gamma$, the moduli spaces of semistable and stable representations no longer coincide.  In addition, although the moduli space of stable representations is always a smooth variety, the moduli space of semistable representations---which really should be thought of as parametrizing \textit{polystable} representations (direct sums of stable representations)---is a singular projective variety.  Not surprisingly, the proper count of semistable objects is no longer a na\"{i}ve Euler characteristic of semistable moduli and, as the presence of singularities in our moduli space suggests, one should really be computing stacky invariants. In this situation we can define DT invariants as (weighted) Euler characteristics of an appropriately defined moduli \textit{stack} of semistable representations (in the spirit of Joyce-Song, and Behrend); from this perspective it is not surprising that $\Omega^{\mathrm{DT}}$ is a priori rational (as opposed to integer) valued.  

Despite this fuss being made about DT invariants \textit{not} always being Euler characteristics of stable moduli, for quivers without potential there are stability conditions such that the corresponding DT invariants are related to Euler characteristics of stable moduli.  This relationship occurs at the level of generating series: roughly a generating series that encodes Euler characteristics of stable moduli is the inverse series for a generating series that encodes the corresponding DT invariants, and was first pointed out by Reineke \cite{reineke:integrality} in the context of the $m$-Kronecker quiver equipped with what we will call ``an ($m$)-\textit{wild stability}" condition due to its connection to wild wall crossing and the exponential growth phenomenon studied in \cite{wwc} (c.f. appendices \ref{app:quiv_rep} and \ref{app:m_wild} for details and references regarding wild stability conditions and their associated DT invariants).

A different approach to DT invariants---one that includes generating series naturally---has enjoyed more success in the physics literature: Kontsevich and Soibelman's definition of (numerical) DT invariants associated to 3CY categories.  The key idea is that DT invariants are encoded into a single group element of a pronilpotent group given by the completion of a subgroup of (birational) Poisson automorphisms of an algebraic torus. The group element is defined a priori by the stability data, and DT invariants are recovered as the exponents that appear in the decomposition of this group element into a distinguished set of generators---i.e. a decomposition an infinite ordered composition of distinguished (birational) Poisson automorphisms.\footnote{See \S \ref{sec:alg_geom} for an explicit representation of these distinguished birational Poisson automorphisms.} The advantage of this description is that the group elements (and their decompositions) associated to different choices of stability condition are naturally related by \textit{wall-crossing} formulae---leading to a precise description of the piecewise-linear behaviour of DT invariants.  The power of these formulae is the following: if $(Z_{0}, \mathsf{A}_{0})$ is a stability condition, all of whose DT invariants are known, then for any other stability condition $(Z_{1}, \mathsf{A}_{1})$, connected by a piecewise smooth path $\underline{Z}: [0,1] \rightarrow \mathrm{Stab}(\mathcal{C})$, wall-crossing formulae---which effectively reduce to the computation of commutators of Poisson automorphisms---allow for the extraction of the DT invariants associated to $(Z_{1}, \mathsf{A}_{1})$.

Kontsevich and Soibelman's approach indicates that DT invariants should be encoded in generating series: the distinguished Poisson automorphisms mentioned above are determined by infinite series (see \S \ref{sec:alg_geom}).  In particular, one should think of DT invariants/BPS indices  in $\mathbb{Z}_{> 0}$-families: Let $\Gamma$ be the lattice $K_{0}(\mathsf{A})$ (or, in a physical context, the stalk/charge lattice $\widehat{\Gamma}_{u}$); then for each primitive dimension vector $\gamma_{r} \in \Gamma$ such that $\mathbb{Z}_{>0} \gamma_{r}$ represents the ray $r \in (\Gamma \backslash \{0\})/\mathbb{Z}_{> 0}$, we should encode the sequence of DT invariants/BPS indices $\Omega_{n} := \Omega(n \gamma_{r})$ into a generating series of the form
\begin{align}
	\gdt_{r} = \prod_{n = 1}^{\infty} (1 - (\pm z)^{n})^{n \Omega_{n}} \in \formal{\mathbb{Q}}{z}
	\label{eq:gen_series_intro}
\end{align}
where $z$ is a formal parameter and the sign ``$\pm$" depends on the context under consideration.  

The property of \textit{algebraicity} concerns the generating series $\gdt_{r}$: although defined as a formal series, $\gdt_{r}$ is secretly an algebraic function over $\mathbb{Q}$, i.e. $\gdt_{r}$ is the root of a polynomial with coefficients in the field of rational functions $\mathbb{Q}(z)$.  	

\subsubsection{Algebraicity and Asymptotics} \label{sec:alg_asymptotics}

As we will see, an interesting side effect of algebraicity is the feature of exponential asymptotics:
suppose $\ggen$ is a formal series in a variable $z$ that encodes a sequence of integers $(\beta_{n})_{n = 1}^{\infty} \subset \mathbb{Z}$ via an ``Euler product" factorization
\begin{align}
	\ggen = \prod_{n=1}^{\infty} (1 - (\pm z)^{n})^{n \beta_{n}}.
	\label{eq:ggen_intro}
\end{align}
If $\ggen$ is the $z = 0$ series expansion of an algebraic function over $\mathbb{Q}$ then---as a corollary of the results in \S \ref{sec:generating_asymptotics}---the large $n$ growth of $\beta_{n}$ will be of the form
\begin{align}
	\beta_{n}= \Theta(n) n^{-2 - \alpha} \mathtt{R}^{-n} + \mathcal{O}(n^{-2 - \alpha'} \mathtt{R}^{-n})
	\label{eq:hag_asymp}
\end{align}
as $n \rightarrow \infty$; where $\Theta \in \mathcal{O}(1)$ is a bounded (``oscillatory") function, $\mathtt{R} \in \conj{\mathbb{Q}} \cap (0,1]$, and $\alpha,\alpha'$ are non-negative rational numbers satisfying $\alpha' > \alpha$.  This behaviour can be summarized by the slogan:
\begin{center}
	\textit{There are only finitely many non-vanishing $\beta_{n}$ (when $\mathtt{R} = 1)$, or an exponential growth with $n$ (when $\mathtt{R} < 1$).}
\end{center}
In this paper, this result is applicable in two situations: when $\ggen = \gdt_{r}$ (c.f. \eqref{eq:gen_series_intro}) so that the $\beta_{n} = \Omega_{n}$ are DT invariants/BPS indices, and when $\ggen$ is the generating series for Euler characteristics of stable $m$-Kronecker moduli (for an $m$-wild stability condition).

 The details about how to compute $\alpha$ and $\Theta(n)$ can be found in \S \ref{sec:alg_asymp}, for now it is worth mentioning that $\mathtt{R}$ is the smallest of the radii of convergences of the formal series defined by $\ggen$ and $\ggen^{-1}$---equivalently, the radius of convergence of the series $\log(\ggen)$.  To see the connection between radii of convergences and growth rate is not hard: if $\sum_{n = 0 }^{\infty} h_{n} z^{n}$ is the series expansion of any function, holomorphic at $z =0$, then the radius of convergence $\mathtt{R}_{h}$ of this series is given by
\begin{align*}
	\mathtt{R}_{h}^{-1} = \limsup_{n \rightarrow \infty} (h_{n})^{1/n} ;
\end{align*}
at this point it is only a mild insult to your calculus teacher to expect that the $h_{n}$ should behave roughly like $\mathtt{R}_{\ggen}^{-n}$ in the large $n$ limit.  Now, heuristically speaking, looking at \eqref{eq:ggen_intro} it is not too far-fetched to believe that the large $n$ behaviour of $n \beta_{n}$ is given by the coefficients $[z^{n}] \log(\ggen)$ (see Lemma \ref{lem:omega_to_log}).  Because $\ggen$ is an algebraic series (in particular holomorphic), then $\log(\ggen)$ is series expansion of a holomorphic function about $z = 0$; applying our mild insult, we end up with the na\"{i}ve expectation that $\beta_{n}$ behaves like $n^{-1} \mathtt{R}_{\log(\ggen)}^{-n} = n^{-1} \mathtt{R}^{-n}$.  This heuristic analysis does not get the ``subexponential" behaviour $n^{-2-\alpha} \Theta(n)$ correct, but it is instructive for seeing the ``exponential" behaviour $\mathtt{R}^{-n}$.

To justify the slogan---in particular the statement about ``exponential growth"---we need to exclude the possibility that $\mathtt{R}>1$; indeed, when we apply the assumption that the $\beta_{n}$ must be integers, the constraint $\mathtt{R} \leq 1$ follows (see Prop. \ref{prop:growth}).  Given this, we have a justification for the slogan above:  if $\mathtt{R} = 1$ there must be finitely many $\beta_{n}$---due to the subexponential $n^{-2-\alpha}$ suppression---and when $\mathtt{R} < 1$ we have exponential growth.  

The slightly more subtle observation that $\mathtt{R}$ is an algebraic number follows from the fact that singularities of the analytic continuation of $\log(\ggen)$: occurring due to zeros, branch points, and poles of the algebraic function $\ggen$---which occur are all at algebraic numbers.  For example, if there is a branch point on the boundary of the origin-centred disk of radius $\mathtt{R}$, then $\mathtt{R}$ is the absolute value of a root of the discriminant polynomial of any polynomial equation obeyed by $\ggen$ (when this latter polynomial is thought of as a polynomial with coefficients in $\mathbb{Q}(z)$).  If the failure is due to an (unbranched) zero or a pole of $\ggen$, then the polynomial equation obeyed by $\mathtt{R}$ is given by one of the coefficients of the polynomial equation obeyed by $\ggen$.

In independent work, Clay Cordova and Shu-Heng Shao conjectured that the growth rate $\mathtt{R}$ of sequences of $m$-Kronecker DT invariants are algebraic numbers: see \cite[\S 5]{clay_shao}.  As just described, this conjecture is affirmed via a corollary of algebraicity for the associated generating series for Kronecker DT invariants.  Moreover, Cordova-Shao produce explicit (irreducible, integer coefficient) polynomials whose positive roots are growth rates of the $m$-Kronecker invariants $d(n,2n,m)$ by numerical analysis for $m \in \{3, \cdots ,8\}$ (the notation $d(a,b,m)$ is defined in the beginning of \S \ref{sec:intro_partII}).  When $m=3$, we have $d(n,2n,3) = d(n,n,3)$ (c.f. App. \ref{app:mod_equivs} and references therein); the large $n$ growth rate of $d(n,n,3)$ is a known rational number $\mathtt{R} = \frac{256}{27}$ (given by the exponential of \eqref{eq:c_m} for $m = 3$), so satisfies the easy first degree (irreducible) polynomial: $256-27x$.  For $m>3$, however the corresponding polynomials are no longer first degree (the growth rates are no longer rational) and have surprisingly large integer coefficients.  Using our technique it should be possible to verify Cordova-Shao's polynomials by studying the generating series associated to the $(1,2|m)$-herd spectral networks (which produce the generating series for the $m$-Kronecker DT invariants $d(n,n2,m)$) that may be constructed according to the procedure outlined in \cite[App. D]{glm:sn_spin}.  In this paper, as a corollary of the algebraic equations for generating series deduced for the slope 3/2 invariants $d(3n,2n,3)$, we deduce an irreducible polynomial \eqref{eq:d_2} that divides the discriminant polynomial of \eqref{eq:3_2_3_poly} (see \eqref{eq:Disc_F_3_2}) and whose smallest magnitude root provides the growth rate; just as with the Cordova-Shao polynomials, our polynomial has amusingly large integer coefficients.

A curious corollary of exponential growth of BPS indices is the existence of Hagedorn temperatures: temperatures at which the $\log$ of the partition function---or more accurately correlation functions given by its derivatives---for the theory fails to be analytic (as a function of temperature) due to an exponential growth in the density of states.  Heuristically, if the density of states $\rho(E)$ of a theory grows as $E^{-\alpha} \exp(c E)$ for large energies, then one expects that the partition function $\mathpzc{Z}(\beta) = \int_{0}^{\infty} dE \rho(E) e^{-\beta E}$ (and its derivatives) diverges at the inverse temperature $\beta = c$.  This sort of argument can be made precise in our situation; so let us say a few more relevant words.  First, suppose that we are presented an $\mathcal{N}=2$ field theory on $\mathbb{R}^{3,1}$ and fix a Coulomb branch modulus $u \in \CB'$ off of any walls of marginal stability.   For each fixed $\vartheta \in \mathbb{R}/(2\pi \mathbb{Z})$, the partition function of the full theory---regarded as a function of the inverse temperature $\beta$---includes a summand $\mathpzc{Z}_{\vartheta}$ resulting from the contribution of BPS states with central charge phase $\vartheta$; because we are off any walls of marginal stability, there is an injective mapping
\begin{align*}
	\left \{\text{$\vartheta \in \mathbb{R}/(2 \pi \mathbb{Z})$: $\vartheta$ is the central charge phase of some BPS state } \right\} &\longrightarrow (\Gamma \backslash \{0\})/\mathbb{Z}_{> 0}\\
	\vartheta & \longmapsto r(\vartheta)
\end{align*}
so that all BPS states of central charge phase $\vartheta$ have charges living along a single ray $\mathbb{Z}_{>0} \gamma_{\vartheta}$ for the unique primitive charge vector $\gamma_{\vartheta} \in \Gamma$ associated to the ray $r(\vartheta)$ (if we were faithful to the notation of the previous section we would write $\gamma_{r(\vartheta)}$ instead).  As a result,
\begin{align*}
	\mathpzc{Z}_{\vartheta}(\beta) = \sum_{n=1}^{\infty} N(n) e^{-n \beta |\mathcal{Z}_{u}(\gamma_{\vartheta})|}
\end{align*}
where $N(n)$ is the number of BPS states of charge $n \gamma_{\vartheta}$ (given as the trace of the identity operator on a finite-dimensional Hilbert space), and $\mathcal{Z}_{u}(\gamma_{\vartheta})$ is the central charge of a BPS state of charge $\gamma_{\vartheta}$.  Luckily for us, the BPS index---a weighted trace over the same finite-dimensional Hilbert space used to calculate $N(n)$---provides a lower-bound\footnote{This follows most easily from the definition \eqref{eq:BPS_superdim} of the BPS index in App. \ref{app:BPS_index_quiver} and the fact that the half-hypermultiplet representation is four (complex) dimensional.} on $N(n)$:
\begin{align*}
	N(n) \geq \frac{1}{4} |\Omega(n \gamma_{\vartheta};u)|.
\end{align*}
So if we think of $\mathpzc{Z}_{\vartheta}$ as a series in $\exp(-\beta |\mathcal{Z}_{u}(\gamma_{\vartheta})|)$, and assume an exponential growth rate for BPS indices ($\mathtt{R} < 1$), then it follows that $\mathpzc{Z}_{\vartheta}$ has radius of convergence $< \mathtt{R}$.  This is highly suggestive that the full theory has a Hagedorn temperature 
\begin{align*}
	\tau^{H}_{\vartheta} &\leq - |\mathcal{Z}_{\gamma_{\vartheta}}| \log(\mathtt{R})^{-1}.
\end{align*} 
Under the assumption that $N(n) \sim C |\Omega(n \gamma_{\vartheta};u)|$ as $n \rightarrow \infty$ for some constant $C \geq 1/4$, the inequality above is actually an equality; this seems to be the case when the spectrum of BPS states is controlled by the $3$-Kronecker quiver, where numerical evidence (based on computations of protected spin characters) suggests that $N(n)$ differs from the corresponding BPS index by an overall sign \cite[\S 7.4 and A.1]{wwc}.  What happens at and beyond Hagedorn temperature at this point is unclear; it is expected that the theory undergoes a phase transition whose nature is described by the analytic continuation of the partition function.

There may be an even more direct handle on the Hagedorn temperature---and the possible phase transition that occurs---when we consider the theory in the presence of a (timelike) BPS line defect associated to the magnetically dual charge $\gamma^{\vee}_{\vartheta} \in \Gamma$, i.e. a line defect corresponding to $\gamma^{\vee}_{\vartheta}$ such that $\langle \gamma^{\vee}_{\vartheta}, \gamma_{\vartheta} \rangle = 1$.   From the perspective of a fixed spatial slice, the insertion of this line defect creates the presence of an infinitely massive ``magnetic monopole" if we regard $\gamma_{\vartheta}$ as an electric charge.  In the theory with the insertion, there exists a subset of the BPS states---dubbed \textit{halo bound states}---that have semiclassical descriptions as bound states consisting of a a nucleus/core given by the infinitely massive monopole surrounded by a ``halo" of BPS states of the theory without the insertion, each of charge $\gamma_{\vartheta}$.  The generating series $\gdt_{r(\vartheta)}$ is related to the partition function that counts these halo bound states; an understanding of precisely how $\gdt_{r(\vartheta)}$ fits into the finite-temperature partition function of the theory will requires careful sign-counting and, once again, a better understanding of the protected spin-character.  Nevertheless, identifying the formal parameter $z$ with $\pm \exp(-\beta |\mathcal{Z}_{\gamma_{\vartheta}} |)$ the finite radius of convergence $\mathtt{R}$ of $\gdt_{r}$ suggests that something interesting may happen at the temperature
\begin{align*}
	- |\mathcal{Z}_{\gamma_{\vartheta}}| \log(\mathtt{R})^{-1}.
\end{align*}
	It is interesting to speculate if the algebraic curve defined by the minimal polynomial of $\gdt_{r}$---which defines the domain of the maximal analytic continuation of $\gdt_{r}$---can relay any information about possible phase transitions of the theory in the presence of the line defect.  As described in \S \ref{sec:alg_geom}, this curve may play an important role in the geometry of the moduli space of vacua for the three dimensional effective field theory obtained by compactifying the theory (without the line defect insertion) on a circle.

Of course, here we have only focused on the part of the partition function receiving contributions from BPS states with fixed central charge phase $\vartheta$.  However, for theories whose BPS spectrum is partially controlled by the $m$-Kronecker quiver ($m \geq 3$), it is known that within $\mathbb{R}/(2 \pi \mathbb{Z})$, there exist two intervals $A_{+} \subset \mathbb{R}/(2 \pi \mathbb{Z})$ and $A_{-} = - A_{+}$ along with dense subsets $D_{+} \subset A_{+}$ and $D_{-} = - D_{+}$ such that for each $\vartheta \in D := D_{+} \cup D_{-}$ there exists a BPS state with central charge phase $\vartheta$.  Moreover, numerical evidence seems to support the conjecture that for each $\vartheta \in D$, the associated $\mathbb{Z}_{> 0}$-family of BPS indices $(\Omega(n \gamma_{\vartheta};u))_{n = 1}^{\infty}$ have exponential growth.  With this in mind, one should expect (countably) infinitely many Hagedorn temperatures $(\tau_{\vartheta}^{H})_{\vartheta \in D}$ each temperature $\tau_{\vartheta}^{H}$ corresponding to a possible phase transition related to the exponentially growing degeneracies of BPS states of central charge phase $\vartheta \in D$.

Hagedorn temperatures have been familiar among string theorists for quite some time (see e.g. the work of  Atick and Witten \cite{atick_witten}), with exponential growths in the density of single particle states with the mass seeming to arise as a natural consequence of ``stringy" degrees of freedom. However, string theorists should note that the novel feature of our context is that Hagedorn temperatures arise from field theoretic degrees of freedom.  On the other hand, it may be an interesting to ask\footnote{A question posed to the author quite frequently by Willy Fischler.} if there is an intrinsically stringy explanation for such temperatures, in a manner that is faithful to the Type IIB/M-theoretic origins of theories of class $S[A]$.  Work of Barak Kol \cite{Kol:1998zb} provides heuristics for such an explanation in the context of four dimensional $\mathcal{N}=4,\, SU(3)$ SYM.

It is also interesting to note that the bound states that contribute to the exponential degeneracy have semiclassical sizes that grow at least linearly with their charge \cite[\S 7]{wwc} (or equivalently their mass).  Thermodynamic arguments suggest that growth (by an unbounded function) in (semiclassical) size with mass is a general feature of any four-dimensional field theory with an exponential growth in the density of states.  This behaviour should be familiar to phemenologists in the context of QCD, where the Hagedorn temperature is a phase transition temperature into the quark-gluon plasma due to the exponential degeneracy of resonant states (Hagedorn's ``fireballs") of ever growing size \cite{hag, er:hag}.  In our case, the states contributing to the Hagedorn temperature are not resonances, but honest mass-eigenstates (albeit this is only sensible in the infinite volume theory on $\mathbb{R}^{3,1}$ with unbroken supersymmetry).  In particular, they are ``threshold" bound states (or when quiver language is sensible: polystable representations), whose finely balanced stability is due to the magic of supersymmetry.  Using spectral networks---a tool of theories of class $S[A]$---one may obtain explicit results regarding Hagedorn temperatures; this begs the question if anything interesting can be learned about Hagedorn temperatures , and their associated phase transitions, in field theory by studying theories of class $S[A]$.

\subsubsection{Algebraicity and Geometry} \label{sec:alg_geom}
An enlightening way to view wall-crossing formulae (WCF) is via Kontsevich and Soibelman's suggestion that the collections of DT invariants satisfying WCF are in one-to-one correspondence with a certain class of ``formal\footnote{A formal variety/scheme is essentially a space where all locally defined functions are given by formal series (i.e. not necessarily convergent series).  Among the plethora of references for a precise definition, one can see e.g. \cite[\S II.9]{hartshorne}, or \cite[Ch.1]{demazure} (for those with functorial inclinations).}  Poisson varieties": the data of DT invariants and WCF can be used to construct the formal Poisson variety, and conversely, one can recover the DT invariants from this (formal) variety.   Not surprisingly, this geometric perspective may shed new light on algebraicity.  Indeed, algebraicity hints that the formal variety may not be so formal: its local functions may be series expansions of regular functions on some branched cover of the affine plane---it is as if the variety had been wearing a tuxedo t-shirt all along.

Physicists should be pleased to note that Kontsevich and Soibelman's geometric suggestion lies at the heart of the work of Gaiotto, Moore, and Neitzke (GMN) who connect BPS indices to the geometry of moduli spaces of supersymmetric vacua, a physical manifestation of (a symplectic leaf of) Kontsevich and Soibelman's formal Poisson varieties.  The connection is the following: suppose one is interested in the BPS indices of an $\mathcal{N}=2$ field theory on $\mathbb{R}^{3,1}$, then GMN tell us that these BPS indices should be studied by looking at the $\mathcal{N}=2$ field theory on $\mathbb{R}^{2,1} \times S^{1}$ formed by compactifying our original theory on a circle.  The Coulomb branch of this latter theory is now a hyperk\"{a}hler manifold $\mathpzc{M}$---perhaps appropriately called the ``Seiberg-Witten (moduli) space"\footnote{As suggested to me by Greg Moore.} in reference to early work by Seiberg and Witten \cite{Seiberg:1996nz}; it is the geometry of $\mathpzc{M}$---which can be probed by distinguished atlases of coordinates---that encodes our BPS indices.  In particular, recall that from a hyperk\"{a}hler manifold $\mathpzc{M}$ we can recover a $\mathbb{P}^{1}$-family of holomorphic symplectic manifolds $(\mathpzc{M}_{\zeta},\omega_{\zeta})_{\zeta \in \mathbb{P}^{1}}$ (each element of the family diffeomorphic to one another, but not necessarily holomorphic) if we fix one such complex structure $\zeta$, then GMN provide us with a distinguished atlas of holomorphic-symplectic\footnote{Here we are glossing over important details: the coordinates that GMN consider are only holomorphic on the complement of a distinguished set of codimension 1 loci.  To recover a holomorphic symplectic atlas we can either restrict each coordinate chart to an open domain in the complement of the codimension 1 loci, or take such a restriction and consider its maximal analytic continuation.} coordinate charts that cover $\mathpzc{M}_{\zeta}$.  The BPS indices of the theory are encoded in the transition functions of this atlas.  In particular, the transition functions of the GMN atlas are holomorphic symplectomorphisms of a special type: those completely determined by generating series of BPS indices of the form \eqref{eq:gen_series_intro}.

This is precisely the technique for constructing a Kontsevich-Soibelman formal Poisson variety from some given DT invariants satisfying the constraints of wall-crossing formulae.\footnote{More precisely, the holomorphic-symplectic manifold $\mathpzc{M}_{\zeta}$ for $\zeta \neq 0$ appears to be a symplectic slice of an analytic version of the Kontsevich-Soibelman formal Poisson variety.  For example, in theories of class $S[A]$ the space $\mathpzc{M}_{\zeta}$ (for fixed $\zeta \neq 0$) can be identified \cite{GMN2} with a symplectic slice of Fock and Goncharov's $\mathcal{X}$-spaces \cite{FG,FG2}: explicit examples of what we are calling ``Kontsevich-Soibelman" spaces.}  Actually, because it turns out that the Kontsevich-Soibelman formal Poisson variety is often more than just a ``formal variety", we will refer to it as the ``Kontsevich-Soibelman space" (associated to a set of DT invariants) when emphasizing this point.  To emphasize the importance of algebraicity of our generating series on the construction of the Kontsevich-Soibelman space, we must say a few more words about the glueing transformations alluded to above.

Suppose we start with the data of $\Lambda$ is a lattice and $\langle \cdot, \cdot \rangle$ is an integer valued antisymmetric form on $\Lambda$.  Then the group ring $\mathbb{C}[\Lambda]$---thought of as the complex vector space $\bigoplus_{\lambda \in \Lambda} \mathbb{C}X_{\lambda}$ with product $X_{\alpha} X_{\beta} := X_{\alpha + \beta}$---has an induced Poisson bracket 
	\begin{align}
		\{X_{\alpha}, X_{\beta} \} := \langle \alpha, \beta \rangle X_{\alpha} X_{\beta} = \langle \alpha, \beta \rangle X_{\alpha + \beta}
		\label{eq:Poisson_bracket}
	\end{align}
and may be thought of as the ring of functions on the space of characters 
\begin{align*}
\mathbb{T} = \mathrm{Hom}_{\mathrm{Grp}}(\Lambda, \mathbb{C}^{\times}) \cong (\mathbb{C}^{*})^{\mathrm{rank}(\Lambda)}
\end{align*}
thought of as a complex variety:\footnote{Actually, depending on what sort of algebraic geometers we aspire to be---we would say that $\mathbb{C}[\Lambda]$ is the ring of functions on $\spec \mathbb{C}[\Lambda]$ whose $\mathbb{C}$-valued points are given by $\mathrm{Hom}_{\mathrm{CRing}}(\mathbb{C}[\Lambda], \mathbb{C}) \cong \mathrm{Hom}_{\mathrm{Grp}}(\Lambda, \mathbb{C}^{\times})$.} $X_{\alpha}$ is simply the map that evaluates a character on $\alpha \in \Lambda$.  

To construct a Kontsevich-Soibelman space, our dream is to glue together copies of $\mathbb{T}$.  In the algebraic world, glueing is done along Zariski open subsets, so our glueing transformations should really just be Poisson isomorphisms between Zariski open subsets of copies of $\mathbb{T}$---or if we think of all copies of $\mathbb{T}$ as canonically identified with a single space, abusively denoted by $\mathbb{T}$, we seek Poisson automorphisms between two Zariski open subsets of $\mathbb{T}$: this is precisely the set of \textit{birational} Poisson automorphisms.  In the language of rings of functions, a birational Poisson automorphism is just a ring automorphism of the quotient field $\mathbb{C}(\Lambda)$ that preserves the Poisson bracket on $\mathbb{C}(\Lambda)$ induced\footnote{If we think of the quotient field as consisting of fractions $f/g$ for $f,g \in \mathbb{C}[\Lambda]$ with $g \neq 0$, then the Poisson bracket on the integral domain $\mathbb{C}[\Lambda]$ induces a Poisson bracket on its quotient field by utilizing the Leibniz rule.} from the bracket on $\mathbb{C}[\Lambda]$.

Our favourite such automorphisms will be given by first choosing a quadratic refinement\footnote{If we want to rid ourselves of the choice of a quadratic refinement, we can instead think about copies of tori given by $\spec$ of a twisted version of the group ring $\mathbb{C}[\Lambda]$: the product rule is twisted compared to the usual group rule by $X_{\alpha} X_{\beta} = (-1)^{\langle \alpha, \beta \rangle} X_{\alpha+\beta}$.  Taking the spectrum of the twisted group ring, one obtains a variety that is isomorphic to $\mathbb{T}$ as defined above, but not canonically so: the choice of quadratic refinement specifies an isomorphism.  Working with these twisted tori is advantageous when compared to the annoyance of dealing with global issues of choice for quadratic refinement over various copies of glued tori.} $q: \Lambda \rightarrow \{\pm 1\}$ of $\langle \cdot, \cdot \rangle$ and defining
\begin{align}
	K_{\gamma} : X_{\alpha} \longmapsto  X_{\alpha} (1-q(\gamma) X_{\gamma})^{\langle \gamma, \alpha \rangle}
	\label{eq:KS_transf}
\end{align}
for some given $\gamma \in \Lambda$; this action may be familiar to some readers as a cluster $\mathcal{X}$ transformation (see \cite{FG2005,FG2}): it is a birational automorphism that is a regular automorphism from the Zariski open set $q(\gamma) X_{\gamma} \neq 1$ to itself.  Next, suppose that $(\Lambda, \langle \cdot, \cdot \rangle)$ is the data of a lattice and antisymmetric pairing that appears in the context of our discussion about DT invariants in the previous section: e.g. $\Lambda$ could be the lattice of dimension vectors for some quiver, or (if we are physicists) the charge lattice $\widehat{\Gamma}(U)$ associated to some simply connected $U \subset \CB'$.  Imagine, further, that we are handed:
\begin{enumerate}
	\item a ray $r \in (\Lambda \backslash \{0 \}) /\mathbb{Z}_{> 0}$, generated by $\gamma_{r} \in \Lambda$ a primitive lattice vector, and
	
	\item an associated \textit{finite} sequence of DT invariants $(\Omega(n \gamma_{r}))_{n = 1}^{N}$;
\end{enumerate}	
then we will encode these into the geometry of two algebraic tori glued together by the birational Poisson automorphism:
\begin{align}
	S_{r} := \left(K_{N\gamma_{r}} \right)^{\Omega(N \gamma_{r})} \circ \left(K_{(N-1) \gamma_{r}} \right)^{\Omega([N-1] \gamma_{r})} \cdots \circ \left( K_{2\gamma_{r}} \right)^{\Omega(2 \gamma_{r})} \circ \left(K_{\gamma_{r}} \right)^{\Omega(\gamma_{r})}.
	\label{eq:S_r_finite}
\end{align}
On the other hand, we can can recover the sequence $\left(\Omega(n \gamma_{r}) \right)_{n = 1}^{N}$ if one is presented with the glueing automorphism $S_{r}$: in the subgroup of rational Poisson automorphisms generated by the basic automorphisms \eqref{eq:KS_transf} there happens to be a unique factorization of $S_{r}$ into generators, which allows the extraction of the DT invariants by reading off exponents.  When there are only finitely many DT invariants associated to a ray $r \in \Lambda/\mathbb{Z}_{\geq 0}$, one may continue in this manner: associating to each such ray an automorphism $S_{r}$ which can be used to glue together copies of $\mathbb{T}$ to construct a complex algebraic variety (a cluster $\mathfrak{X}$-variety by construction).

However, as emphasized in previous sections, there are examples where this finiteness condition along each ray does not hold: e.g. DT invariants/BPS indices associated to the $m$-Kronecker quiver ($m \geq 3$) or certain theories of class $S[A_{\geq 2}]$ (e.g. pure $SU(3)$ SYM).  Nevertheless, Kontsevich, Soibelman, and (on the physics side) the GMN crew have found success in encoding infinite sequences of DT invariants in compositions of Poisson automorphisms via the same procedure as above; that is, we form \textit{infinite} compositions
\begin{align}
	S_{r} = \cdots \left(K_{n \gamma_{r}} \right)^{\Omega(n \gamma_{r})} \circ \cdots \circ \left( K_{2\gamma_{r}} \right)^{\Omega(2 \gamma_{r})} \circ \left(K_{\gamma_{r}} \right)^{\Omega(\gamma_{r})}.
	\label{eq:S_r_infinite}
\end{align}
which can be thought of as the limit of a sequence of finite truncations.  If we are physicists, and we believe that the GMN functions \textit{always} provide an honest holomorphic atlas on the space $\mathpzc{M}_{\zeta}$ described above, then we need not worry too much about the meaning of the automorphism $S_{r}$ that glues different coordinate charts together: its finite truncations are required to converge to a holomorphic Poisson automorphism.  From this perspective, if we are to take physicists seriously (wise advice), then we should expect that Kontsevich and Soibelman's space---constructed by glueing together tori via automorphisms of the form \eqref{eq:S_r_finite}---is at least an analytic space for certain (physical) classes of DT invariants.

Suppose for the moment that we do not believe in GMN, or---alternatively---we are militant algebraists that abhor the idea of working with analytic spaces.  We can then ask if there is still a geometric interpretation of the infinite composition \eqref{eq:S_r_infinite} without any assumption on the behaviour of $(\Omega(n \gamma_{r}))_{n = 1}^{\infty}$.  To aid us in the quest toward this answer, we first note that if we truncate the infinite composition \eqref{eq:S_r_infinite} to the composition of birational Poisson automorphisms associated to the first $L \geq 1$ DT invariants, we can compute the resulting transformation $S_{r}^{L}$ to be equivalent to
\begin{align}
	S_{r}^{L}: X_{\alpha} \mapsto X_{\alpha} \left[ \prod_{n=1}^{L} (1 - q(n \gamma_{r}) X_{n \gamma_{r}})^{n \Omega(n \gamma_{r})} \right]^{\langle \gamma_{r}, \alpha \rangle}.
	\label{eq:S_r_trunc}
\end{align}
So, as $L \rightarrow \infty$ the right hand side of \eqref{eq:S_r_trunc} can be expressed as a product of $X_{\alpha}$ with a formal series in $X_{\gamma_{r}}$---already we can see the appearance of generating series, but we will come back to this later.  Thus, in order for $S_{r}$ to be a morphism of a geometric object, that corresponding geometric object must admit formal series in $X_{\gamma_{r}}$ as local functions.  In the world of algebraic geometry, this is indicative that the appropriate space we should be dealing with is a formal scheme.  In this paper, it would not be fair to the reader to induce deliberate coma by delving into a precise theory of formal schemes; instead we will simply mention that a formal scheme is a special type of (locally ringed) space that may include formal series as global functions.  One way way of obtaining a formal scheme is by taking formal completions or, more suggestively, ``formal neighbourhoods" of ordinary varieties.   As the nickname suggests, a formal neighbourhood is a formal version of a normal bundle in differential geometry; namely a formal neighbourhood of a subvariety $A \subset V$ has underlying topological space given simply by $A$, but with functions that are regular along the directions of $A$ and formal series in the normal directions to $A$: one might like to visualize the formal neighbourhood of $A \subset V$ as $A$ sitting inside of $V$ with infinitesimal ``$\infty$-jet" fuzz pointing along the normal directions to $A$.

In our situation, we seek a formal scheme that admits functions that are formal along the ``$\gamma_{r}^{*}$-direction" (to make the dual vector $\gamma_{r}^{*}$ precise, one must choose a splitting of $\mathbb{Z}\gamma_{r} \hookrightarrow \Lambda \rightarrow \Lambda/\mathbb{Z} \gamma_{r}$), i.e. that admits formal series in $X_{\gamma_{r}}$.  Let us try to define this formal scheme by restricting to a formal neighbourhood of a subtorus of $\mathrm{Hom}(\Lambda, \mathbb{C}^{\times})$ that is isomorphic to $\mathrm{Hom}(\Lambda/(\mathbb{Z} \gamma_{r}), \mathbb{C}^{\times})$; such a subtorus has normal directions in the ``$\gamma_{r}^{*}$-direction".  In fact, there is a whole $\mathbb{C}^{\times}$-family of such subtori given by the loci $X_{\gamma_{r}} \equiv \alpha$ for some fixed choice of $\alpha \in \mathbb{C}^{\times}$, but taking formal completions along the subtorus defined by $\alpha$, we get formal series in $(X_{\gamma_{r}} - \alpha)$---not quite what we want.  This suggests that, in order to get formal series in $X_{\gamma_{r}}$ by taking formal neighbourhoods, we need to consider a partial compactification of $\mathbb{T}$ by adding in a divisor corresponding to the missing $X_{\gamma_{r}} \equiv 0$ locus.  The easiest partial compactifications to consider are those coming from toric varieties. Namely, one can embed $\mathbb{T}$ as the dense torus of a toric variety $\mathbb{X}$ which can be thought of as a partial compactification of $\mathbb{T}$ obtained by glueing in a divisor ``at $X_{\gamma_{r}}=0$"; the resulting space is isomorphic to $\mathbb{C} \times (\mathbb{C}^{*})^{\mathrm{rank}(\Lambda) - 1}$ when thought of as a complex variety.  Taking a formal neighbourhood of our added divisor, we obtain a formal scheme---call it $\widehat{\mathbb{X}}$---with the functions we desire: its ring of global functions is isomorphic to $\formal{\mathbb{C}}{X_{\gamma_{r}}} \otimes_{\mathbb{C}} \mathbb{C}[\Lambda/\mathbb{Z} \gamma_{r}]$.  Moreover a careful construction shows that the resulting formal scheme happens to be Poisson with Poisson bracket given by \eqref{eq:Poisson_bracket} extended to formal series via the obvious bilinearity condition. 

Now for the punchline: on this formal scheme there is an interesting family of Poisson automorphisms given by embedding the group (under multiplication) of formal series with constant coefficient 1
\begin{align*}
	A = \{f \in \formal{\mathbb{Q}}{z}: [z^{0}]f =1 \}
\end{align*}
into the group of Poisson automorphisms of  $\widehat{\mathbb{X}}$ via
\begin{equation}
	\begin{aligned}
		\theta: A &\longrightarrow \operatorname{Aut}\left[(\widehat{\mathbb{X}},\{ \cdot, \cdot \}) \right]\\
			    f &\longmapsto (\theta_{f}: X_{\alpha} \mapsto X_{\alpha} f^{\langle \gamma, \alpha \rangle}).
	\end{aligned}
	\label{eq:theta_action}
\end{equation}
We conclude that the infinite composition of Poisson automorphisms $S_{r}$ can be thought of as the automorphism $\theta_{\gdt_{r}}$ associated to the generating series
\begin{align*}
	\gdt_{r} =\prod_{n = 1}^{\infty}(1- (q(\gamma_{r})X_{\gamma_{r}})^{n})^{n \Omega(n \gamma_{r})}.
\end{align*}
In this sense, $S_{r}$ can be thought of as the glueing transformation between copies of various formal (Poisson) schemes formed by the above construction: the result is Kontsevich and Soibelman's formal Poisson variety.

Of course, even the laziest of geometers prefer spaces with better functions than formal series.  This is why the property of \textit{algebraicity}---the property that every generating series $\gdt_{r}$ is an algebraic function over $\mathbb{Q}$ for each $r \in (\Lambda \backslash \{0\})/\mathbb{Z}_{> 0}$---is so fascinating.  Certainly algebraic functions are holomorphic; so it is no surprise that $\theta_{\gdt_{r}}$ secretly ``extends" to a Poisson automorphism defined on a little tubular neighbourhood of the $X_{\gamma_{r}} = 0$ locus sitting inside of $\mathbb{X}$, now thought of as a holomorphic manifold.  In this manner, the Konstevich-Soibelman space can be constructed as an analytic space, just as we expected with physical assumptions.  However, one can go further: the minimal polynomial of $\gdt_{r}$ (a polynomial with coefficients in $\mathbb{Q}[z]$) defines an algebraic curve  $\twid{\mathbb{X}}$ that forms a finite degree cover of $\mathbb{X}$ (the degree of the cover being given by the degree of the minimal polynomial).  As it so happens, one can make $\twid{\mathbb{X}}$ into a Poisson variety and, moreover, $\theta_{\gdt_{r}}$ ``extends" to an honest (global) Poisson automorphism of this cover.  In this manner, one can construct the Kontsevich-Soibelman space as a legitimate variety by glueing together covers of $\mathbb{X}$---if we so wish, we can throw out the $z=0$ locus (and its preimage in the cover) and restrict ourselves to glueing together covers of the original torus $\mathbb{T}$ (the resulting space is birationally equivalent to the construction with $\mathbb{X}$).

Of course, the above all still holds for generating series that are algebraic over $\mathbb{C}$.  Because we have generating series that are algebraic over $\mathbb{Q}$, we could have replaced all occurrences of $\mathbb{C}$ with any $\mathbb{Q}$-algebra---specifically, $\mathbb{Q}$ itself.  E.g. the torus $\mathbb{T}$ could have been defined as $\spec \mathbb{Q}[\Lambda]$ and the toric variety $\mathbb{X}$ with the corresponding construction for a toric variety over $\mathbb{Q}$.  The advantage of such a construction is that we may be able to talk about points in the resulting Kontsevich-Soibelman space in fields other than $\mathbb{C}$.  It is interesting to speculate if such points have any relevance to physics.

To emphasize the geometric constraints that algebraicity provides, we will summarize our discussion in the following manner.  First, let:
\begin{itemize}
	\item $\formal{\mathbb{Q}}{z}^{\mathrm{rat}}_{1}$ denote the subgroup (under multiplication) of series expansions (about $z=0$) of elements of the subgroup of rational functions (under multiplication) generated by $\{(1-z^{k})^{\pm 1}: k \in \mathbb{Z}\}$,

	\item $\formal{\mathbb{Q}}{z}^{\mathrm{rat}}$ denote the subalgebra of formal series that are series expansions of rational functions about $z=0$,
	
	\item $\formal{\mathbb{Q}}{z}^{\mathrm{alg}}$ the subalgebra of formal series that are algebraic (i.e. appear roots of some polynomial in with coefficients in $\mathbb{Q}[z]$), and
	
	\item $\formal{\mathbb{Q}}{z}^{\mathrm{hol}}$ the subalgebra of formal series that have finite radius of convergence (hence are series expansions of holomorphic functions around $z = 0$). 
	
\end{itemize}	
	We have the inclusions (growing increasingly horrified as we read from left to right)
\begin{align*}
	\formal{\mathbb{Q}}{z}_{1}^{\mathrm{rat}} \subset \formal{\mathbb{Q}}{z}^{\mathrm{rat}} \subset \formal{\mathbb{Q}}{z}^{\mathrm{alg}} \subset \formal{\mathbb{Q}}{z}^{\mathrm{hol}} \subset \formal{\mathbb{Q}}{z}.
\end{align*}

With this in mind, the geometry of the Kontsevich-Soibelman space is determined by the associated generating series of DT invariants via the following sequence of progressively more general situations:
\begin{enumerate}
	\item $\gdt \in \mathbb{Q}[z]_{1} \Rightarrow \theta_{\gdt}$ is a finite sequence of cluster $\mathcal{X}$- transformations between algebraic tori ($\cong \mathbb{T}$) $\Rightarrow$ the Kontsevich-Soibelman space is obtained by glueing together tori along the complement of complement of a finite collection of codimension 1 loci (given by the zeros and poles of $\gdt$---which are roots of unity).
	
	\item $\gdt \in \formal{\mathbb{Q}}{z}^{\mathrm{rat}} \Rightarrow \theta_{\gdt}$ is a birational transformation between algebraic tori $\Rightarrow$ the Kontsevich-Soibelman space is obtained by glueing together tori along the complement of a finite collection of codimension 1 loci (given by the zeros and poles of $\gdt$);
	
	\item $\gdt \in \formal{\mathbb{Q}}{z}^{\mathrm{alg}} \Rightarrow \theta_{\gdt}$ is a birational transformation between finite algebraic \textit{covers} of tori $\Rightarrow$ the Kontsevich-Soibelman space is obtained by glueing together covers of tori along the complement of the ramification locus)
	
	\item $\gdt \in \formal{\mathbb{Q}}{z}^{\mathrm{hol}} \Rightarrow \theta_{\gdt}$ is an analytic transformation along open sets in the analytic topology $\Rightarrow$ the Kontsevich-Soibelman space is obtained by glueing together holomorphic tori along open sets to form an analytic space.
	
	\item $\gdt \in \formal{\mathbb{Q}}{z}  \Rightarrow \theta_{\gdt}$ is a Poisson automorphism of formal varieties $\Rightarrow$ the Kontsevich-Soibelman space is obtained by glueing together together partial compactifications (add a codimension 1 divisor ``at $z=0$") of tori along formal neighbourhoods of the added divisor (a formal Poisson variety).
\end{enumerate}

\subsection{Main Results Part II: The Return of the Main Results} \label{sec:intro_partII}

The main examples of this paper center around spectral networks that compute DT invariants associated to representations of the $m$-Kronecker quiver $K_{m}$ shown below.
	\begin{center}
\begin{tikzpicture}
\tikzstyle{block} = [rectangle, draw=blue, thick, fill=blue!10,
text width=16em, text centered, rounded corners, minimum height=2em]

\node at (1.2, 0)[circle,draw=blue,very thick] (second) {$q_{2}$};

\node at (-1.2, 0)[circle,draw=blue,very thick] (first) {$q_{1}$};

\node at (0,0.1) {{\Huge \vdots}};

\draw [decorate,decoration={brace, amplitude=6pt, mirror},xshift=0pt,yshift=-1pt,red,thick]
(-1.2,-0.9) -- (1.2,-0.9) node[black, midway,below,yshift = -6pt]{{\tiny $m$ arrows}};
\draw[-latex] (second) to  [bend right = 35] (first);
\draw[-latex] (second)  to  [bend right = 60] (first);
\draw[-latex] (second)  to  [bend right = -35] (first);
\draw[-latex] (second)  to  [bend right = -60] (first);
	\end{tikzpicture}
\end{center}

Specifically, let $d(a,b,m)$ denote the quiver DT invariants $\Omega^{\mathrm{DT}}(a q_1 + b q_2;Z)$ defined with respect to a \textit{wild} stability condition: a homomorphism $Z: \mathbb{Z}[q_{1}, q_{2}] \rightarrow \mathbb{C}$ such that $\arg(Z(q_{1})) < \arg(Z(q_{2}))$.  Let us begin by recalling some results about the ``slope-1" invariants $d(n,n,m)$:  If we encode these invariants via the formal series
\begin{align*}
	\gdt_{1} := \prod_{n = 1}^{\infty} \left(1 - (-1)^{n} z^{n} \right)^{n d(n,n,m)}
\end{align*}
then there is an algebraic relation satisfied by $\gdt_{1}$; specifically, $\gdt_{1} = P^{m}$ where $P$ is the unique solution to the algebraic relation
\begin{align}
	0 := P - z P^{(m-1)^2} - 1.
	\label{eq:P_rel}
\end{align}
which is a formal series in $z$ with constant coefficient 1.  The algebraic relation \eqref{eq:P_rel} was first postulated in the mathematics literature by Kontsevich and Soibelman.  Later it was proven by Reineke using a functional equation that relates DT invariants for the $m$-Kronecker and Euler characteristics of stable moduli:  using known results about Euler characteristics of the moduli space of $m$-Kronecker stable representations for dimension vectors $n(q_{1} + q_{2})$, the functional equation determines the algebraic relation \eqref{eq:P}.   

On the other hand, in \cite{wwc}, the algebraic relation \eqref{eq:P_rel} was independently produced by utilizing spectral networks: specifically, a family of spectral networks called $m$-herds.  Far from pure mathematical curiousity, $m$-herds are realized as WKB networks (a.k.a $\CW$-networks) at certain regions on the Coloumb branch of pure four-dimensional $\mathcal{N}=2,\, SU(3)$ SYM.  With this explicit physical realization, one can use BPS quiver techniques to argue that the BPS indices associated to $m$-herds are, indeed, the DT invariants $d(n,n,m)$.

This result was rather surprising from a physical perspective due to the resulting exponential growth rate of BPS indices when $m \geq 3$: $|d(n,n,m)| \sim \text{const} \cdot n^{-5/2} \exp(c_{m} n)$ where the growth constant $c_{m} >0$ is given by
\begin{align}
	c_{m} := (m-1)^2 \log \left[(m-1)^2 \right] - m(m-2) \log \left[m(m-2) \right];
	\label{eq:c_m}
\end{align}
because the actual number of BPS states of a particular charge is (up to a constant) bounded below by the absolute value of the BPS index, the result is an exponential growth in the density of states.

In this paper our main example is concerned with the 3-Kronecker DT invariants $d(3n,2n,3)$; by encoding them into the generating series
	\begin{align*}
		\gdt_{3/2} &= \prod_{n=1}^{\infty}(1 - (-1)^{n} z^{n})^{n d(3n,2n,3)} \in \formal{\mathbb{Z}}{z}
	\end{align*}
we show, using a spectral network we call the $(3,2|3)$-herd---a generalization of the $m$-herd family---that $\gdt_{3/2}$ is an algebraic function over $\mathbb{Q}$.  Specifically, $\gdt_{3/2} = M V W$ for three formal series $M,V,W \in \formal{\mathbb{Z}}{z}$ that satisfy the algebraic relations
\begin{equation}
	\begin{aligned}
		M &= 1 + z M^{4} \left\{ (1 + V) (1 + V - W)^2 [V^2(1+W) - 1]^{3} \right\},\\
		0 &= (-1 + V) (1 + V)^2 + (1 + V^3) W - V (M + V) W^2,\\
		0 &=  V \left(V^2 -1 \right) - \left[M (V + 1) + V(V-2) - 1 \right] W.
	\end{aligned}
	\label{eq:func_eqs}
\end{equation}	 
With a bit of elimination theory, one can, moreover determine a $39$th degree polynomial $\mathcal{F}_{(3/2|3)} \in (\mathbb{Z}[z])[t]$ (see \eqref{eq:3_2_3_poly}) with coefficients in $\mathbb{Z}[z]$ that is irreducible as a polynomial in $\mathbb{C}[z,t]$, and has $\gdt_{3/2}$ as the unique root with series representation of the form $1 + \sum_{n = 1}^{\infty} t_{n} z^{n}$.

When trying to determine the slope 3/2 DT invariants, the associated Euler characteristics of stable moduli are not (to the author's current awareness) a priori known; so one cannot provide a derivation of the DT invariant generating series using Reinke's functional equation as in the slope 1 situation.  On the other hand, one can utilize Reineke's functional equation to determine the generating series $\geul_{3/2} \in (\mathbb{Z}[z])[e]$ of the associated Euler characteristics of stable moduli; in fact, the slope 3/2 Euler characteristic generating series is an algebraic function over $\mathbb{Q}$, satisfying an irreducible 9th degree polynomial with coefficients in $\mathbb{Z}[z]$ and irreducible as a polynomial in $\mathbb{C}[z,t]$ (see \eqref{eq:3_2_eul_relation}).  This result is an example of Thm.~\ref{thm:dt_euler_alg} below (an easy corollary of Reinke's functional equation): generating series $\gdt_{r}$ for slope $r$ DT-invariant generating series is an algebraic function (over $\mathbb{Q}$) if and only if the generating series $\geul_{r}$ for the associated Euler characteristics of stable moduli is an algebraic function (over $\mathbb{Q}$).  Moreover, an explicit Euler characteristic algebraic relation can be determined from a given DT invariant algebraic relation and vice versa.

\subsection{Protected Spin Characters and Functional Equations}
One future direction of this work---one that experts will note we have left out of most of the introduction---involves protected spin characters: a powerful generalization of the BPS index that includes the spin content of BPS states.  Indeed, using a generalization of the spectral network machinery, the work of Galakhov-Moore-Longhi in \cite{glm:sn_spin} suggests that the algebraic equations satisfied by generating series of BPS indices may actually be specializations of \textit{functional} equations for generating functions of protected spin characters.  In particular, they show that the generalization of the generating series $P$ of \eqref{eq:P_rel}---now a function $\mathsf{P}$ of two variables ``$z$" and ``$y$"---satisfies a functional equation of the form
\begin{align*}
	\mathsf{P}(z,y) = 1 + z \prod_{s=-(m-2)}^{m-2} \mathsf{P}(z y^{2s},y)^{m-1-|s|}
\end{align*} 
where the variable $y$ helps encode spin content, and we should have $P = \mathsf{P}(z,-1)$; indeed, the specialization to $y = -1$ recovers the algebraic equation \eqref{eq:P_rel}.  It is tempting to imagine what horrors (or simplifications) may be in store for the functional equation generalizations of the algebraic equations stated in this paper.

\subsection{Organization of the Paper}
The paper is organized in the manner listed below.
\begin{enumerate}
	\item \ref{sec:m_herds}: A brief review of some relevant results in \cite{wwc}.

	\item  \ref{sec:non_diag}: Continuation of the work in \cite{wwc} to generalizations of the $m$-herd networks.  This section culminates in the statement of the result \ref{eq:func_eqs} (derived in Appendix \ref{sec:derivation}), the algebraic relation \eqref{eq:3_2_dt_relation} satisfied by $\gdt_{3/2}$, and the resulting analysis of the corresponding BPS index asymptotics. 
	
	\item \ref{sec:stab_eul}: A review Reineke's functional equation that relates DT invariants and Euler characteristics of stable moduli for the $m$-Kronecker quiver.
	
	\item \ref{sec:alg_sn}: An in-depth discussion of how spectral networks lead to algebraicity.  The precise statement of algebraicity is included in Claim \ref{claim:func_eqn} and Corollary \ref{cor:alg_BPS}.
	
	\item \ref{sec:alg_asymp}: The derivation of how algebraicity of generating series lead to exponential asymptotics.
\end{enumerate}

The statements of (most) of the main results can be rephrased purely in terms of DT invariants/BPS indices and so can be understood without any knowledge of the spectral network machinery.

\section*{Acknowledgements}
I would like to thank Andrew Neitzke and Gregory Moore for useful discussions and reviews of preliminary drafts; C. Cordova, D. Galakhov, P. Longhi, Y. Soibelman, and  T. Weist for their useful discussions; and W. Fischler for (occasionally educational) amusement.  This work was supported by NSF Research Training Group award DMS-0636557 and DOE grant DOE-SC0010008.  The majority of this paper was written while the author was a graduate student at the University of Texas at Austin.

\section{Global Notation}
	If $F \in \formal{\mathbb{C}}{z}$ is a formal series in the variable $z$
	\begin{align*}
		F &= \sum_{n = 0}^{\infty} f_{n} z^{n}
	\end{align*}
	we will frequently use the notation
	\begin{align*}
		[z^{n}] F := f_{n}.
	\end{align*}

	 The term \textit{generating series} $\mathsf{G}$ for a sequence of rational numbers $(\beta_{n})_{n = 1}^{\infty} \subset \mathbb{Q}$ (which are expected to be integers in all of our examples) will mean a formal series 
	\begin{align*}
		\mathsf{G} &= 1 + \sum_{n = 1}^{\infty} g_{n} x^{n} \in \formal{\mathbb{Q}}{x}
	\end{align*}	
	such that the sequence $\beta_{n}$ appears in the ``Euler-product factorization" of $\mathsf{G}$:
	\begin{align}
		\mathsf{G} &= \prod_{n = 1}^{\infty}(1 - x^{n})^{n \beta_{n}}.
		\label{eq:gen_gen_func}
	\end{align}
		For any formal series with constant coefficient $1$, such a factorization exists and is unique.  Indeed, the $\beta_{n}$ can be extracted from $\mathsf{G}$ by taking the logarithm of both sides of \eqref{eq:gen_gen_func} and applying M\"{o}bius inversion:
	\begin{align}
		\beta_{n} = -\frac{1}{n^2} \sum_{k|n} k \mu \left(\frac{n}{k} \right) \left([x^{k}] \log(\mathsf{G}) \right),
		\label{eq:beta_from_ggen}
	\end{align}	
	 where $\mu: \mathbb{Z}_{>0} \rightarrow \{0,1,-1\}$ is the M\"{o}bius-mu function, $\log(\mathsf{G}) \in \formal{\mathbb{Q}}{x}$ is defined by composing the Taylor series expansion of $\log(1+x)$ around $x=0$ with $\mathsf{G}-1$, and $[x^{k}] \log(\mathsf{G})$ is the coefficient of $x^{k}$ in the formal series $\log(\mathsf{G})$; in terms of (formal) derivatives,
	\begin{align*}
		[x^{d}] \log(\mathsf{G}) &= \frac{1}{d!} \left[ \frac{d^n}{dx^{d}} \log \left(\mathsf{G} \right) \right]_{x = 0}.
	\end{align*}
	Note that \textit{any} formal series in $\formal{\mathbb{Q}}{z}$ with constant coefficient 1 is a generating series for a sequence of rational numbers defined by \eqref{eq:beta_from_ggen}.
	
	\begin{remark}
		In \S \ref{sec:asymptotics} we will also allow generating series $\ggen \in \formal{\conj{\mathbb{Q}}}{x}$ of algebraic numbers $(\beta_{n})_{n = 1}^{\infty} \subset \conj{\mathbb{Q}}$.
	\end{remark}

 In practice the generating series of interest will be series in a distinguished variable we will denote by $z$.  Two important generating series will be denoted by symbols (in sans-serif font) that we make an effort to protect.
	\begin{itemize}
		\item The generating series for DT invariants or BPS indices will be denoted by $\gdt$; it will be a generating series in the variable $\twid{z} = \pm z$ (where the sign depends on the context).
	
		\item The generating series for Euler characteristics of stable quiver moduli will be denoted by $\geul$; it will be a generating series in the variable $z$.
	\end{itemize}
	
	Other formal series in $z$, that are closely related to generating series, will also appear throughout with protected symbols:
	\begin{itemize}		
		\item the symbol $P$ will denote the formal series that controls various sorts of generating series (street-factors, soliton generating series, and the generating series for the corresponding BPS indices) related to $m$-herds;
		
		\item the symbols $M,\, V,$ and $W$ will denote the formal series that control various sorts of generating series related to $(3,2|3)$-herds (introduced below). 
	\end{itemize}
		
The $m$-Kronecker quiver will be denoted by the quiver with vertices $q_{1}$ and $q_{2}$ and $m$ arrows from $q_{2}$ to $q_{1}$.  In its physical embodiment, the $m$-Kronecker quiver will arise as a sub-quiver of the BPS quiver, and the vertices $q_{1},\, q_{2}$ will be identified with distinguished ``charges" $\gamma_{1}$ and $\gamma_{2}$ respectively.  DT invariants associated to the $m$-Kronecker quiver (equipped with a non-trivial stability condition) with dimension vector $(a,b) \in \mathbb{Z}^{2}$ will be denoted in two ways:
		\begin{enumerate}
			\item $d(a,b,m)$ if we are referring to results purely from the mathematical literature.
		
			\item $\Omega \left(a \gamma_{1} + b \gamma_{2} \right)$ if we are interpreting the DT invariants as BPS indices in some four-dimensional $\N = 2$ field theory.  The parameter $m$ will be clear from context.
		\end{enumerate}

\section{A Brief Review of \texorpdfstring{$m$-Herds}{m-herds}} \label{sec:m_herds}
We first roughly sketch the wall-crossing motivation behind the definition of an $m$-herd, as precisely defined in \cite{wwc}.  For the specifics of spectral networks the reader is referred to \cite{sn,snn}; for a detailed account of the definitions and techniques used to obtain the results stated in this paper, the reader is referred to \cite{wwc}: Section 2 and Appendices B-D.

\subsection{Setting, Terminology, and Conventions} \label{sec:setting}
We begin our journey in the realm of theories of class $S[A_{K-1}]$ for $K \geq 3$.  For concreteness we take our canonical example to be pure $SU(3)$ SYM which is given by the theory $S[A_{2}, C, D]$ where $C = \mathbb{CP}^1$, and $D$ is a specific pair of defect operators located at $0$ and $\infty$ (providing irregular punctures).  The discussion in this paper does not refer to the details of the punctures so nothing is lost by imagining $C$ as the cylinder $S^{1} \times \mathbb{R}$.  We now recall some essential notation and terminology.

\begin{definition}[Definitions]\
	\begin{enumerate}
		\item The \textit{Coulomb branch} $\CB$ of $S[A_{K-1},C,D]$ is the set of tuples $(\phi_{2}, \cdots, \phi_{K})$ of holomorphic $r$-differentials $\phi_{r}$ with singularities at $\mathfrak{s}_{1},\cdots, \mathfrak{s}_{n} \in C$ prescribed by the defect operators $D$.

		\item Let $u = (\phi_{2}, \, \cdots, \, \phi_{r} ) \in \CB$ and denote the holomorphic cotangent bundle of $C$ as $\CT^* C$.  Then the \textit{spectral cover} (a.k.a Seiberg-Witten curve) is a $K$-sheeted branched cover $\pi_{u}: \Sigma_{u} \rightarrow C$, where $\Sigma_{u}$ is the subvariety
	\begin{equation}
		\Sigma_{u} := \{\lambda \in \CT^*C: \lambda^{K} + \sum_{r=2}^{K} \phi_{r} \lambda ^{K-r} = 0\} \subset \CT^* C,
	\end{equation}
	and the projection $\pi_{u}$ is the restriction of the standard projection $\CT^* C \rightarrow C$.
	
		\item  On $\CB$ there may be (complex) codimension-1 loci where a cycle of $\Sigma_{u}$ degenerates.  Let $\CB^{*} = \CB - \text{\{degeneration loci\}}$.  Then  $\widehat{\Gamma} \rightarrow \CB^{*}$ is the local system of charge lattices.  In the theories of type $A_{K-1}$, the fibre $\widehat{\Gamma}_{u}$ is a sublattice of $H_{1}(\Sigma_{u};\mathbb{Z})$, but for our purposes, nothing will be lost if we take $\widehat{\Gamma}_{u}$ to be the full lattice:
	\begin{align*}
		\widehat{\Gamma}_{u} := H_{1}(\Sigma_{u};\mathbb{Z}).
	\end{align*}
	For each $u \in \CB^{*},\, \widehat{\Gamma}_{u}$ is equipped with the skew-symmetric pairing on homology: $\langle \cdot, \cdot \rangle_{u}: \widehat{\Gamma}_{u}^{\otimes 2} \rightarrow \mathbb{Z}$.

	\item $Z \in \widehat{\Gamma}^{*} \otimes_{\mathbb{Z}} \mathbb{C}$ is the central charge function.\footnote{It is given fibrewise via period integrals on $\Sigma_{u}$: $Z_{u}: \gamma \mapsto \int_{\gamma} \lambda_{u}$, where $\lambda_{u}$ is the pullback of the tautological 1-form on $\CT^{*}C$ to $\Sigma_{u}$ ($\lambda_{u}$ is the Seiberg-Witten differential).}
	
	\end{enumerate}
\end{definition}

\begin{definition}[Notation]
Unless otherwise noted, from this point on we will work over fixed $u \in \CB^{*}$ and use the streamlined notation $\Sigma := \Sigma_{u}$ and $\Gamma := \widehat{\Gamma}_{u}$.
\end{definition}

\begin{definition}\
\begin{enumerate}
	
	\item  Denoting the unit tangent bundle to $\Sigma$ via $UT \Sigma$, and defining $H$ to be the homology class represented by a 1-chain that wraps once around some fibre of $UT \Sigma \rightarrow \Sigma$, then 
		\begin{align*}
			\twid{\Gamma} := H_{1}(UT \Sigma; \mathbb{Z})/(2H);
		\end{align*}	
		it is a $\mathbb{Z}/2\mathbb{Z}$-extension of the charge lattice $\Gamma$.  We abuse notation and denote the image of $H$ in the quotient by $H$ again.
		
	\item $\twid{\left( \cdot \right)}: \Gamma \rightarrow \twid{\Gamma}$, taking $\gamma$ to $\twid{\gamma}$, is the \textit{standard lift} defined in App. \ref{app:standard_lift} and \cite[App. E]{wwc}.
	
	\item Let $L$ be a finitely generated commutative monoid, then $\formal{\mathbb{Z}}{L}$ is the ring whose underlying set is given by formal expressions of the form
	\begin{align*}
		\sum_{\alpha \in L} c_{\alpha} X_{\alpha}
	\end{align*}	
	such that
	\begin{enumerate}
		\item $c_{\alpha} \in \mathbb{Z},\, \forall \alpha \in L$;
		
		\item $\{\alpha \in \Gamma: c_{\alpha} \neq 0\}$ is well-ordered\footnote{This condition ensures there is a least-element among the $c_\alpha$; it allows us to define the product of two such formal series in the obvious way.  If we did not have this condition, the obvious multiplication would fail for the same reason that it only makes sense to take products of Laurent series (series infinite in only one direction) as opposed to series that are infinite in ``two directions".} with respect to a total order such that for any $l \in L$ we have $l < n l$ if $n > 1$.  In particular, if $L$ is a finitely generated submonoid of $\twid{\Gamma}$ and we are equipped with a central charge function $Z: \Gamma \rightarrow \mathbb{C}$, we have in mind a total order induced by (a possibly small perturbation of) the mass function:
		\begin{align*}
			L \hookrightarrow \twid{\Gamma} \overset{\mod H} {\longrightarrow} \Gamma \overset{|Z|}{\rightarrow} \mathbb{R}_{ \geq 0}.
		\end{align*}
	\end{enumerate}
	The ring structure is given (with identity $1 = X_{0}$) by the sum and product structure:
	\begin{align*}
		\sum_{\alpha \in L} c_{\alpha} X_{\alpha} + \sum_{\beta \in L} d_{\beta} X_{\beta} &= \sum_{\alpha \in L} \left( c_{\alpha} + d_{\alpha} \right) X_{\alpha} ; \\
		\left(\sum_{\alpha \in L} c_{\alpha} X_{\alpha} \right) \left(\sum_{\beta \in L} d_{\beta} X_{\beta} \right) &= \sum_{\alpha + \beta = \delta} c_{\alpha} d_{\beta} X_{\delta}.
	\end{align*}		
	$\formal{\mathbb{Z}}{L}$ contains the group-ring $\mathbb{Z}[L]$ as all such formal series with only finitely many non-vanishing coefficients.
	\end{enumerate}
\end{definition}

\begin{numrmk} \label{rmk:gen_choice}
	We will mostly be concerned with $\formal{\mathbb{Z}}{L}$ when $L \leq \twid{\Gamma}$ is a one-dimensional sub-lattice.  In such a situation, then $L$ has two possible generators (differing by a sign) and $\formal{\mathbb{Z}}{L}$ has two corresponding identifications, each corresponding to a choice of generator, with formal series in a single variable.
\end{numrmk}

\begin{definition}[Terminology]
	We will refer to WKB spectral networks as $\CW$-networks.  When referring to the WKB network at a particular choice of $u \in \CB^{*}$ and $\vartheta \in S^{1}$, we call the network $\CW_{\vartheta}$ (the point $u \in \CB^{*}$ will be clear from context).  Abstract spectral networks (which do not necessarily have any WKB realization) will be referred to as either ``spectral networks" or simply ``networks".
\end{definition}

\begin{definition}\
	\begin{enumerate}
		\item Recall to each street $p$ of a spectral network, we associate a formal series $Q(p) \in \formal{\mathbb{Z}}{\twid{\Gamma}}$ called a \textit{street-factor}.  The constant term (the coefficient in front of $X_{0} = 1$) of a street-factor is always 1.
	
		\item A street $p$ is \textit{two-way} if $Q(p) \neq 1$.  A spectral network is \textit{degenerate} if it contains a two-way street.
		
		\item The collection of two-way streets of a spectral network is its \textit{degenerate skeleton}.
	\end{enumerate}
\end{definition}

\begin{remark}[Notes]\

	\begin{itemize}

	\item The spectral networks of interest in this paper hypothetically correspond to BPS states and, hence, are all degenerate.  Furthermore, all networks will have the property that (up to a choice of sign) there exists a unique primitive $\gamma_{c} \in \Gamma$ such that every street-factor $Q(p)$ is an element of the subring $\formal{\mathbb{Z}}{\twid{\Gamma}_{c}} \leq \formal{\mathbb{Z}}{\twid{\Gamma}}$ generated by $\twid{\Gamma}_{c} := \mathbb{Z} \twid{\gamma}_{c}$.  Off of walls of marginal stability in $\CB$, every degenerate $\CW$-network satisfies this property.  
	
	\item Only three sheets of $\Sigma$ will be relevant for any network that we will consider; following the conventions of \cite{wwc}, in some local coordinate chart we will trivialize the spectral cover and denote two-way streets of type $12$ by red lines, streets of type $13$ by blue lines, and streets of type $13$ by fuchsia lines.
	
	\item Some network diagrams will be drawn with only the two-way streets.  Following the argument in \cite{wwc} Appendix C.3, one may decorate any such network with an arbitrarily complicated ``background" of one-way streets without modifying any computations relevant to the generating series attached to the two-way streets.  This is a particularly convenient observation as any such degenerate network that happens to be realized as a $\CW$-network will usually have such complicated backgrounds.
	
	\item Some diagrams in this paper that also appear in \cite{wwc} are mirror images of their original versions in \cite{wwc} (e.g. red streets run northwest instead of northeast and blue streets run northeast instead of northwest).  As a result, the statements about $m$-herds are can be superficially translated via the map \cite{wwc} $\rightarrow$ ``this paper" given by $\gamma' \mapsto \gamma_{1}$ and $\gamma \mapsto \gamma_{2}$.
	
	\end{itemize}

\end{remark}

  \begin{definition}\
 	 \begin{enumerate}
 		 \item	We will say a formal series $\gdt \in \formal{\mathbb{Z}}{\twid{\Gamma}_{c}} \cong \formal{\mathbb{Z}}{X_{\twid{\gamma_{c}}}}$ \textit{generates} $\Omega(n \gamma_{c}) $ (alternatively: $\gdt$ is the \textit{generating series for} $(\Omega(n \gamma_{c}) )_{n = 1}^{\infty}$) if 
  	\begin{align}
  		\gdt = \prod_{n = 1}^{\infty} (1 - X_{n \twid{\gamma_{c}}})^{n \Omega(n \gamma_{c})}.
  		\label{eq:gen_omega_def}
  	\end{align}
  	For notational convenience (when $\twid{\gamma}_{c}$ is known from context), it will be useful to define the variable
  	\begin{align}
  		\twid{z} :=  X_{\twid{\gamma_{c}}};
  		\label{eq:twid_z_def}
  	\end{align}
  	so that we may simply write,
  	\begin{align*}
  		\gdt = \prod_{n = 1}^{\infty} (1 - \twid{z}^{n})^{n \Omega(n \gamma_{c})}.
  	\end{align*}
  	
		\item If $u \in \CB$ is an $m$-wild point, and $\gamma_{1},\, \gamma_{2} \in \widehat{\Gamma}_{u}$ are the\footnote{The use of ``the" here may suggest that there is only one $m$-Kronecker BPS subquiver, with non-trivial stability condition, at $u \in \CB$.  This may not be the case, but implicitly we will always restrict our attention to a single $m$-Kronecker BPS subquiver (corresponding to a particular choice of $\gamma_{1}$ and $\gamma_{2}$).} two hypermultiplet charges such that $\langle \gamma_{1}, \gamma_{2} \rangle = m \in \mathbb{Z}_{>0}$, then the generating series for the BPS indices $\left( \Omega\left[ n \left(a \gamma_{1} + b \gamma_{2} \right) \right] \right)_{n=1}^{\infty}$ will be denoted by $\gdt_{a/b}$. 

	\end{enumerate}
  \end{definition}
Note that by taking the logarithm of both sides of \eqref{eq:gen_omega_def}, applying M\"{o}bius inversion, and (for simplicity of notation) defining $\twid{z} := X_{\twid{\gamma_{c}}}$ , one finds:
\begin{align}
	\Omega(n \gamma_{c}) = -\frac{1}{n^2} \sum_{d|n} \mu \left(\frac{n}{d} \right) \frac{1}{(d-1)!} \left[ \frac{d^n}{d\twid{z}^n} \log \left(\gdt \right) \right]_{\twid{z} = 0},
	\label{eq:omega_invert}
\end{align}
where $\mu: \mathbb{Z}_{>0} \rightarrow \{0,1,-1\}$ is the M\"{o}bius-mu function.

\begin{numrmk} \label{rmk:reineke_vs_me}
	When the BPS indices $\Omega(n \gamma_{c})$ are computed at an $m$-wild point, and are expected to coincide with DT invariants associated to the $m$-Kronecker quiver, it is useful to compare our definition of generating series in \eqref{eq:gen_omega_def} to the generating series appearing in the work of Reineke \cite{reineke:integrality}.  Indeed, if $\gamma_{c} = a \gamma_{1} + b \gamma_{2}$ as argued in Appendix \ref{app:signs}, it is natural to define the variable
	\begin{align}
		z := (-1)^{m a b + a + b} X_{\twid{a \gamma_{1} + b \gamma_{2}}}.
		\label{eq:z_var_def}
	\end{align}
Thus,
\begin{align*}
  		\gdt_{a/b} &= \prod_{n = 1}^{\infty} (1 - (-1)^{n \left(m a b + a + b \right)} z^{n})^{n \Omega(n \gamma_{c})} \in \formal{\mathbb{Z}}{z}\\
  		&=\prod_{n = 1}^{\infty} (1 - \left((-1)^{N} z \right)^{n})^{n \Omega(n \gamma_{c})},
\end{align*}
where $N := mab - a^2 - b^2$.  It follows that $\gdt_{a/b}$ coincides precisely with the generating series ``$G_{\mu}(t)$" of \cite[\S 6]{reineke:integrality} with $\mu = a/b$ and $t = z$.
\end{numrmk}

\subsection{Generating \texorpdfstring{$m$-herds via Wall-Crossing}{m-herds via Wall-Crossing}} \label{sec:m_herds_wallcrossing}
Suppose there exists a point $u_{0} \in \CB^{*}$ and a path $U: [0,1] \rightarrow \CB^{*}$ beginning at $u_{0}$ ($U(0) =u_{0}$) satisfying the following:

\begin{enumerate}
	\item At $u_{0}$ there are two BPS states with respective charges $\gamma_{1}, \gamma_{2} \in \Gamma_{u(0)}$, intersection pairing $\langle \gamma_{1}, \gamma_{2} \rangle = m \geq 1$, and BPS indices $\Omega(\gamma_{i}) = 1$ for $i=1,2$ (i.e. BPS hypermultiplets). 
	
	\item $U$ crosses a wall of marginal stability for this pair of hypermultiplets at some time $t_{*} \in (0,1)$.

	\item For $t \in (t_{*}, 1]$, $u$ does not cross any other walls of marginal stability involving the bound state of charge $\gamma_{1} + \gamma_{2}$ (produced by the wall-crossing at $U(t_{*})$).
\end{enumerate}

  Suppose further that the degenerate networks corresponding to the hypermultiplets at $u_{0}$ (i.e. the degenerate $\CW_{\vartheta_{i}}$ networks at phases $\vartheta_{i} = \arg \left(Z_{\gamma_{i}} \right)$ for $i = 1,2$) appear as simple saddle-connections.  Then the quasi-imaginative reader may be able to visualise a process by which $m$-herds are generated.  Indeed, let $\gamma_{i}(t)$ denote the parallel transport of $\gamma_{i}(t), i=1,2$ along the path $U$ from $U(0)$ to $U(t)$; further, for $t \in [0,1]$ define
\begin{align*}
	\vartheta_{i}(t) = 
	\left\{	
	\begin{array}{ll}
		\arg \left(Z_{\gamma_{i}(t)} \right) & \text{for $t \leq t_{w}$}\\
		\arg\left(Z_{\gamma_{1}(t) + \gamma_{2}(t)} \right) & \text{for $t > t_{w}$}
	\end{array}
	\right.,\, i = 1,2.
\end{align*}
Then $\{\vartheta_{1}(t)\}_{t \in [0,1]}$ and $\{\vartheta_{2}(t)\}_{t \in [0,1]}$ define two families of phases, equal to the central charge phases of the BPS states of charges $\gamma_{1}$ and $\gamma_{2}$, that are distinct for $ t \in [0, t_{*})$ but are equal (to the central-charge phase of the bound state $\gamma_{1} + \gamma_{2}$) for $t \in [t_{*},1]$.  Then looking at the family of networks $\{\CW_{\vartheta_{i}(t)} \}_{t \in [0,1]}$ we should see two distinct saddle connections for $t \in [0, t_{*})$ that combine into a single network on the wall of marginal stability at $t = t_{*}$ (c.f. Fig.~\ref{fig:saddle_conns}).  This network, which supports the linearly independent charges $\gamma_{1}(t_{*})$ and $\gamma_{2}(t_{*})$, should just be the superposition of two saddle connections with $m$ distinct, transverse intersections. Now the claim is that as $t$ continues to increase, the network at $t_{*}$ will resolve via the growth of a new two-way street growing from each of the intersection points.  Specifically, if we choose a local trivialization of our cover around each intersection point, and let the intersecting saddle connections be of types $12$ and $23$, then a new two-way street of type $13$ should grow from the intersection.  A visualization of this process is shown in Fig.~\ref{fig:3-herd_motivation} for the case $m=3$.

\begin{figure}[t!]
	\begin{center}
		 \includegraphics[scale=0.55]{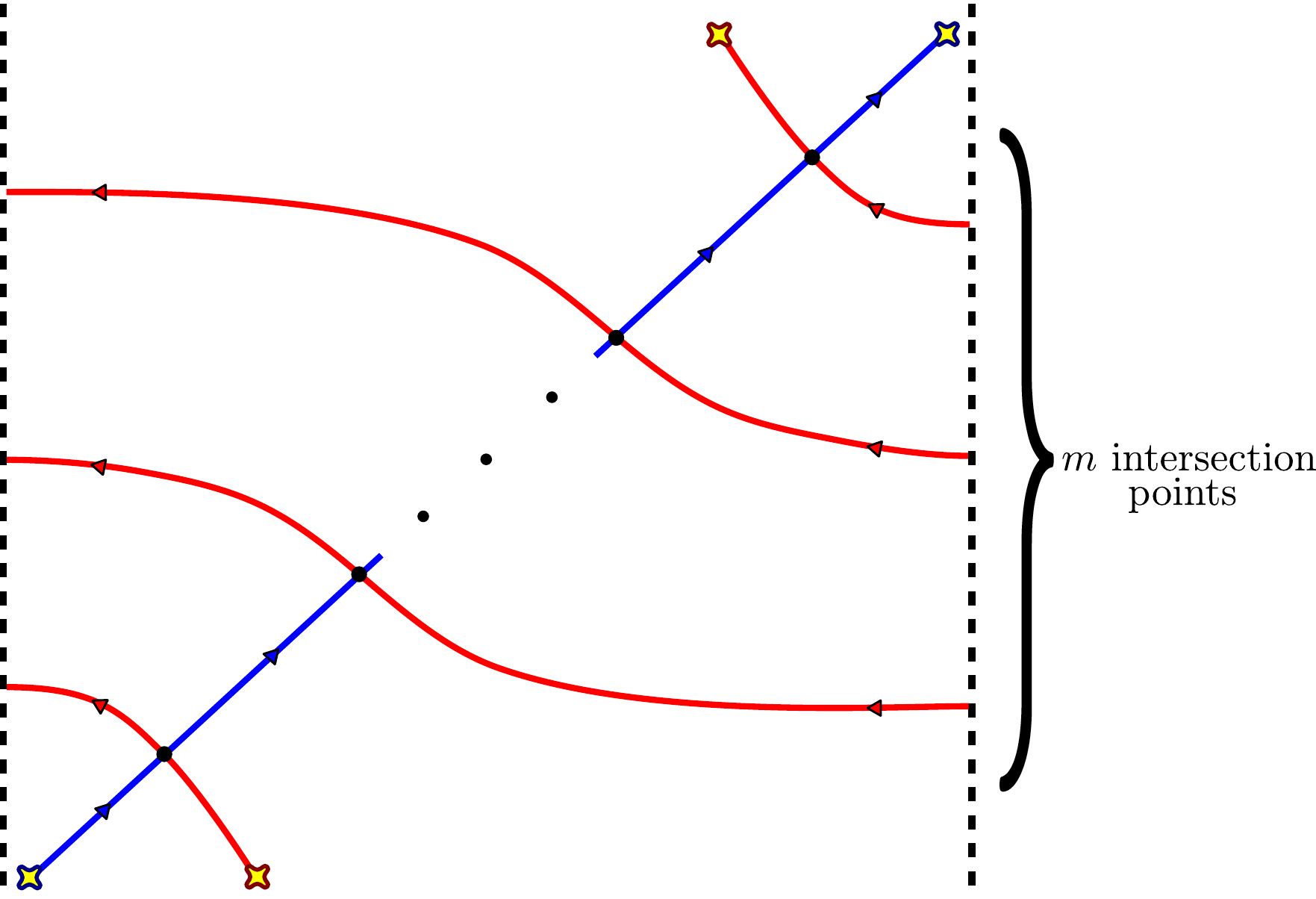}
		\caption{Two saddle connections appearing simultaneously on a wall of marginal stability.  The dotted lines are identified to form the cylinder.  Letting $\gamma_{1}$ be the charge supported by the red saddle connection and $\gamma_{2}$ the charge supported by the blue saddle connection, then we have the intersection pairing $\langle \gamma_{1}, \gamma_2 \rangle = m$ (using the orientation on the cylinder induced by the standard orientation on the plane).  \label{fig:saddle_conns}}
	\end{center}
\end{figure}

\begin{figure}[t!]
	\begin{center}
		 \includegraphics[scale=0.55]{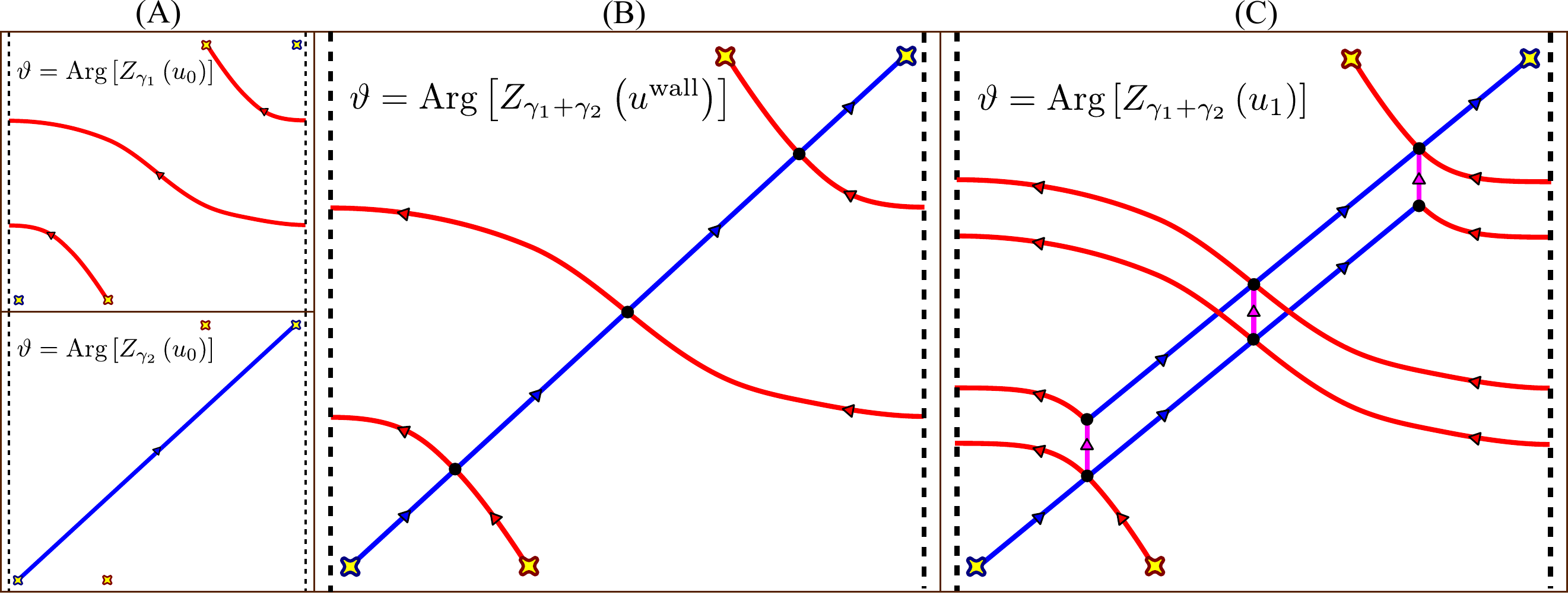}
		\caption{Snapshots of the family of hypothetical $\CW$-networks described in \S \ref{sec:m_herds_wallcrossing} that depict a wall-crossing of two hypermultiplets with charges $\gamma_{1}$ and $\gamma_{2}$ such that $\langle \gamma_{1}, \gamma_{2} \rangle =3$.  Streets of type $12$ are shown in red, $23$ in blue, and $13$ in fuchsia; only two-way streets are depicted.  Arrows denote street orientations according to the conventions set in \cite{wwc}; yellow crosses denote branch points.  The black dotted lines are identified to form the cylinder. $(A)$: The two hypermultiplet networks at a point $u^{0} = U(t=0)$ just ``before" the wall of marginal stability. $(B)$: The $\CW_{\vartheta}$-network at the point $u^{\text{wall}} = u(t_{*})$ on the wall of marginal stability and at phase $\vartheta = \arg\left[Z_{\gamma_{1}}(u^{\text{wall}}) \right] = \arg\left[Z_{\gamma_{2}}(u^{\text{wall}}) \right] = \arg\left[Z_{\gamma_{1} + \gamma_{2}}(u^{\text{wall}}) \right]$. $(C)$: Slightly ``after" the wall at a point $u_{1} = U(t=1)$, a BPS bound state of charge $\gamma_{1} + \gamma_{2}$ is born and a two-way street of type $13$ ``grows" as one proceeds away from the wall.
\label{fig:3-herd_motivation}}
	\end{center}
\end{figure}

\begin{figure}[t!]
	\begin{center}
		 \includegraphics[scale=0.55]{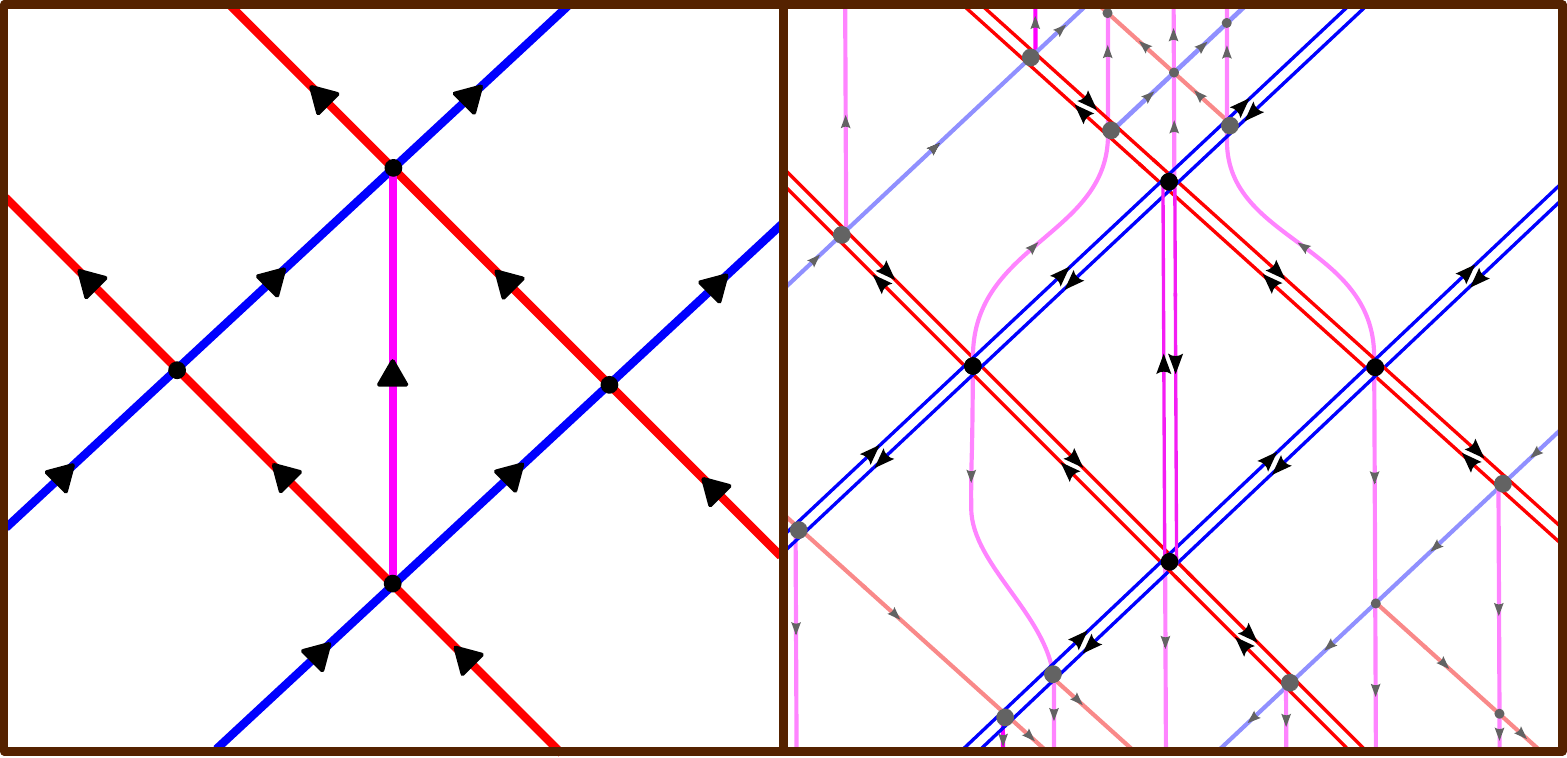}
		\caption{\textit{Left Frame}: The building-block for $m$-herds, named a \textit{horse}.  The solid streets depicted are capable of being two-way.  The sheets of the cover $\Sigma \rightarrow C$ are (locally) labelled from $1$ to $K \geq 3$.  Red streets are of type $12$, blue streets are of type $23$, and fuchsia streets are of type $13$.  We choose an orientation for this diagram such that all streets ``flow up".  \textit{Right Frame}: A horse as it may appear in an actual spectral network.  One-way streets shown as partially transparent and two-way streets are resolved using the ``British resolution" (c.f \cite{sn}).  One can imagine horses with increasingly intricate ``backgrounds" of one-way streets. \label{fig:horse}}
	\end{center}
\end{figure}

As it turns out, these hypothetical degenerate networks can be formed by glueing together copies of the local-patch ``building-block" of Fig.~\ref{fig:horse}, named a \textit{horse}.  Roughly speaking, a $\CW$-network is called an \textit{$m$-herd} if its sub-network consisting of only two-way streets is given by glueing together $m$ horses in a chain, and then coupling the two end horses to branch points.\footnote{After this coupling, the six-way street equations constrain some of the streets on a horse to be only one-way.}  The building-block construction provides the appropriate machinery for understanding the BPS spectrum associated to any $m$-herd.

\begin{proposition}[c.f. Prop. 3.1 of \cite{wwc}]  \label{prop_Q}\
	Let $N$ be an $m$-herd, then the street-factors $Q(p)$ for all two-way streets $p$ on $N$ are given in terms of powers of a single generating series $P \in \mathbb{Z} [ \! [ z ] \! ]$ satisfying the functional equation
	\begin{equation}
		P = 1 + z  P^{(m-1)^2},
		\label{eq:P}
	\end{equation}
	where $z = (-1)^{m} X_{\twid{\gamma_{c}}}$ with $\gamma_{c} = \gamma_{1} + \gamma_{2}$ for some $\gamma_{1},\, \gamma_{2} \in \Gamma = H_{1}(\Sigma;\mathbb{Z})$ such that $\langle \gamma_{1}, \gamma_{2} \rangle = m$.  Furthermore, the series $P^m$ generates the BPS indices $\Omega(n \gamma_{c})$. 
	\end{proposition}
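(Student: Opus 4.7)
The plan is to combine the combinatorial structure of the $m$-herd with the local algebraic relations imposed at its junctions and branch points to distill everything down to a single scalar series $P$, and then read off the BPS content from the street-factor at the "trunk" of the herd.

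First, I would fix a consistent labelling of the two-way streets on an $m$-herd by exploiting its building-block decomposition: each of the $m$ horses has the same local combinatorial structure (one central fuchsia street of type $13$ flanked by red streets of type $12$ and blue streets of type $23$), so one can label the street-factors horse-by-horse using symbols $Q_{12}^{(i)}, Q_{23}^{(i)}, Q_{13}^{(i)}$ for $i=1,\dots,m$. At this stage one should carefully track the homology classes carried by each street: the fuchsia (type $13$) streets descend from lifts to $\Sigma$ that, after glueing all horses together, produce the charge $\gamma_c = \gamma_1+\gamma_2$, while the internal red and blue streets carry $\gamma_1$ and $\gamma_2$ respectively (up to the standard lift convention). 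This is the step that pins down the formal variable $z=(-1)^m X_{\widetilde{\gamma_c}}$, with the sign arising from the quadratic refinement via the standard lift of \cite{wwc}, App.~E.

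Next, I would write the local equations the street-factors must satisfy. There are two sources: (i) the six-way street equations at each horse-joint, which express an outgoing street-factor as a product of incoming street-factors (the combinatorial content here is precisely the "$Q_\mathrm{out}=\prod Q_\mathrm{in}$" rule of \cite{wwc}); and (ii) the branch-point equations at the two end-horses, which relate the street-factor arriving at a branch point to $1$ plus a monomial times the factors pushing into it. Using the reflection symmetry of the herd and the uniformity of the horses, these equations collapse: one finds that every $Q^{(i)}_{\bullet}$ can be written as a power of a single series $P$, with the exponents dictated by how many horses (or half-horses, at the ends) an oriented path through the herd traverses. A careful bookkeeping yields that the central fuchsia street of the whole herd has street-factor $P^m$, and that the consistency equation at the ends takes the form $P = 1 + z\,P^{(m-1)^2}$: the exponent $(m-1)^2$ comes from counting the products of street-factors of type $12$ and $23$ contributing on either side, each contributing $(m-1)$ factors, combined multiplicatively at the branch point. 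Finally the factor of $z$ (rather than $X_{\widetilde{\gamma_c}}$) absorbs the sign from the standard lift, giving the stated $(-1)^m$.

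The last step is to identify $P^m$ as the generating series for $\Omega(n\gamma_c)$. By definition, $P^m$ is the street-factor of the central type-$13$ street of the herd, hence a formal series in $X_{\widetilde{\gamma_c}}$, and the interpretation \eqref{eq:gen_omega_def} requires writing $P^m=\prod_{n\ge 1}(1-\widetilde{z}^{\,n})^{n\,\Omega(n\gamma_c)}$. This will follow once one invokes the identification of street-factors for two-way streets with soliton/BPS generating functions as set up in \cite{wwc} (this is the "degenerate skeleton = BPS-generator" dictionary), together with the sign conventions of Remark~\ref{rmk:reineke_vs_me} to match the variable $\widetilde{z}$ to $X_{\widetilde{\gamma_c}}$ (with the overall sign $(-1)^m$ matching Reineke's conventions when $(a,b)=(1,1)$).

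The main obstacle, and the step I would spend the most care on, is the combinatorial derivation of the exponent $(m-1)^2$ together with the correct sign in $z=(-1)^m X_{\widetilde{\gamma_c}}$. The exponent requires faithfully tracking how many street-factor inputs meet at each six-way joint and each branch-point junction across a chain of $m$ horses — a miscount by even one propagates through the whole herd — while the sign is a delicate standard-lift computation where the quadratic refinement on $\widetilde{\Gamma}$ interacts with the odd pairing $\langle\gamma_1,\gamma_2\rangle = m$. Everything else (existence and uniqueness of $P\in 1+z\mathbb{Z}[[z]]$ solving \eqref{eq:P}, and integrality of the $\Omega(n\gamma_c)$) follows from the implicit function theorem in formal power series and Möbius inversion \eqref{eq:omega_invert} applied to $P^m$.
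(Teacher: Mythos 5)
You should first note that the paper does not actually prove this proposition: it is quoted from \cite{wwc} (Prop.\ 3.1 there), with the real work deferred to Appendices B--D of that reference, and the closest in-paper model of how such a derivation goes is the $(3,2|3)$ computation of \S\ref{sec:2_3_herds} and Appendix \ref{sec:derivation}. Measured against that, your outline follows the same broad strategy --- building-block decomposition into horses, six-way junction equations, branch-point equations, collapse of the soliton data to a single abelian series, then the street-factor/BPS dictionary, with the sign handled exactly as in App.\ \ref{app:signs} --- so the route is the right one in spirit.

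However, the two decisive steps are asserted rather than derived, and one is asserted in a form that is not correct. First, the exponent $(m-1)^2$: the claim that it comes from ``$(m-1)$ factors on either side, combined multiplicatively at the branch point'' is a heuristic, not a proof. In \cite{wwc} this exponent only emerges after solving the coupled (non-commutative) soliton-generating-series equations inductively along the whole chain of $m$ horses and then abelianizing; the powers of $P$ carried by the various streets vary along the herd, and the symmetry-plus-uniformity argument you invoke does not by itself pin down the count --- this is precisely the content of the cited proposition. Second, the identification of $P^{m}$ with the BPS generating series: there is no ``central fuchsia street'' (each of the $m$ horses carries its own type-$13$ street, and for $m$ even there is no middle one), and in any case a single street factor is not, by itself, the generating series in the sense of \eqref{eq:gen_omega_def}. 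The argument the paper actually runs (for the $(3,2|3)$-herd, and \cite{wwc} for $m$-herds) factorizes each $Q(p)$ as in \eqref{eq:Q-exp}, forms the 1-chains \eqref{eq:L_def}, and uses \eqref{eq:L_omega} together with the fact that any closed 1-chain of class $n\gamma_{c}$ must pass through the branch points a prescribed number of times; only this homological bookkeeping converts the powers of $P$ attached to the branch-point legs into $n\,\Omega(n\gamma_{c})$, and it is what yields $P^{m}$ rather than some other power. A minor point: integrality of the $\Omega(n\gamma_{c})$ does not follow from M\"obius inversion (which a priori gives rationals); it is not needed for the statement, so do not lean on it.
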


\section{Seeking Non-diagonal Herds} \label{sec:non_diag}
Now we pursue a natural question: what are the degenerate networks associated to slope $a/b \neq 1$ BPS states; particularly, we are interested in slopes $a/b$ such that $\arg \left(Z_{a\gamma_{1} + b\gamma_{2}} \right)$ lies inside of the densely populated arc of phases, i.e. the arc of phases corresponding to integers $a$ and $b$ satisfying (see App. \ref{app:dense_arc}):
\begin{align}
	\frac{m - \sqrt{m^2 - 4}}{2} < \frac{a}{b} < \frac{m + \sqrt{m^2 + 4}}{2}.
	\label{eq:dense_arc}	
\end{align}
Galakhov-Longhi-Moore have a constructive definition of candidates for such slope-$a/b$ networks \cite[App. D]{glm:sn_spin}; however, we will not use their construction---instead, in order to aid in our definitions, we will proceed by looking at actual $\CW$-networks associated to an $m$-wild point.

\subsection{A 3-wild Point} \label{sec:wild_point}
Let us turn our attention to pure $SU(3)$ SYM; as mentioned in \S \ref{sec:setting} this is just $S[A_{2},C,D]$ where $C = \mathbb{P}^{1}$ and $D$ denotes a particular set of defect operators at $0$ and $\infty$.  Let $z$ denote the coordinate on $\mathbb{P}^{1} \backslash \{0, \infty \}$, given by the restriction of the standard coordinate patch on $\mathbb{P}^{1} \backslash \{\infty \} \cong \mathbb{C}$ to $\mathbb{P}^{1} \backslash \{0, \infty \}$.   A point on the Coulomb branch $\CB$ is a pair $(\phi_{2},\phi_{3})$ of a quadratic (meromorphic) differential $\phi_{2}$ and cubic (meromorphic) differential $\phi_{3}$ on $\mathbb{P}^{1}$ with singularities at $0$ and $\infty$ fixed by our defect operators.   Explicitly, there is an identification $\mathbb{C}^{2} \overset{\sim}{\rightarrow} \CB$ sending the point $(u_2, u_3) \in \mathbb{C}^{2}$ to the pair $(\phi_{2}, \phi_{3})$ defined by:
\begin{align*}
	\phi_{2} &= \frac{u_{2} dz^{\otimes 2}}{z^2},\\
	\phi_{3} &= \left(\frac{\Lambda}{z^{2}}+\frac{u_{3}}{z^{3}}+\frac{\Lambda}{z^{4}} \right) dz^{\otimes 3}.
\end{align*}
Using this identification, we will abuse notation and say $u = (u_{2}, u_{3})$ is a point in $\CB$.  The spectral cover (Seiberg-Witten curve) at the point $u = (u_{2}, u_{3}) \in \CB$ of the pure $SU(3),\, \mathcal{N} = 2$ theory with dynamical scale $\Lambda$ is given by
\begin{equation*}
	\Sigma_{u} = \left \{ \lambda \in \CT^* C:  \lambda^{3} + \lambda \left(\frac{u_{2}}{z^{2}} \right)  \left(dz \right)^{\otimes 2} +\left(\frac{\Lambda}{z^{2}}+\frac{u_{3}}{z^{3}}+\frac{\Lambda}{z^{4}} \right) \left(dz\right)^{\otimes 3} = 0 \right \};
\end{equation*}
as usual we will work in units where $\Lambda = 1$.  Our focus will be on the point $u^{w} \in \CB$ given by $(u^{w}_{2},u^{w}_{3}) = \left(3, \frac{95}{10} \right)$ (where the superscript $w$ is for ``wild") where numerical observations indicate that the $3$-wild BPS spectrum presents itself. 

\begin{figure}[t!]
	\begin{center}
		\hspace*{\fill}
		 	\includegraphics[scale=0.6]{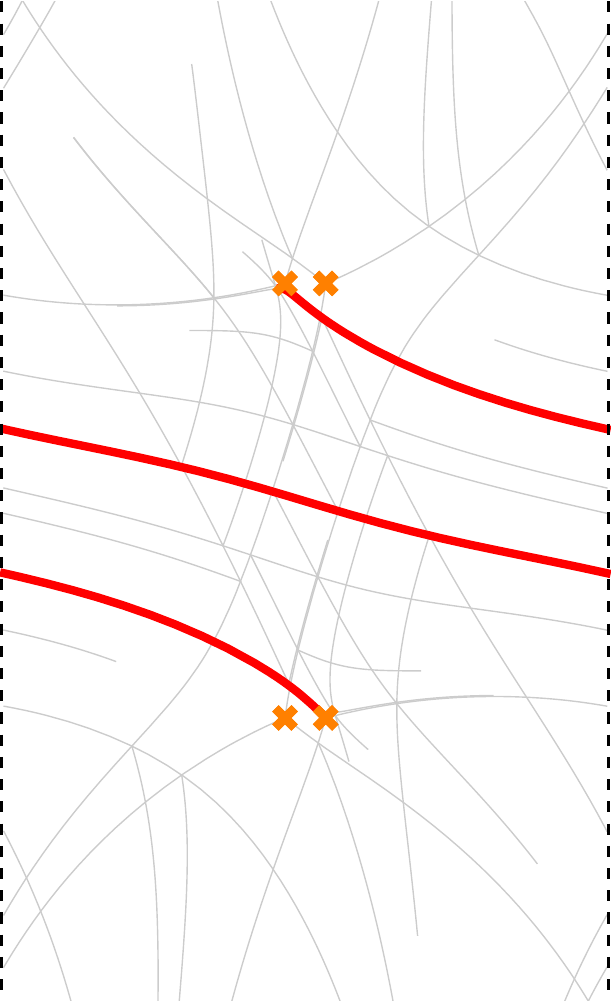}
		 \hfill
		  	\includegraphics[scale=0.6]{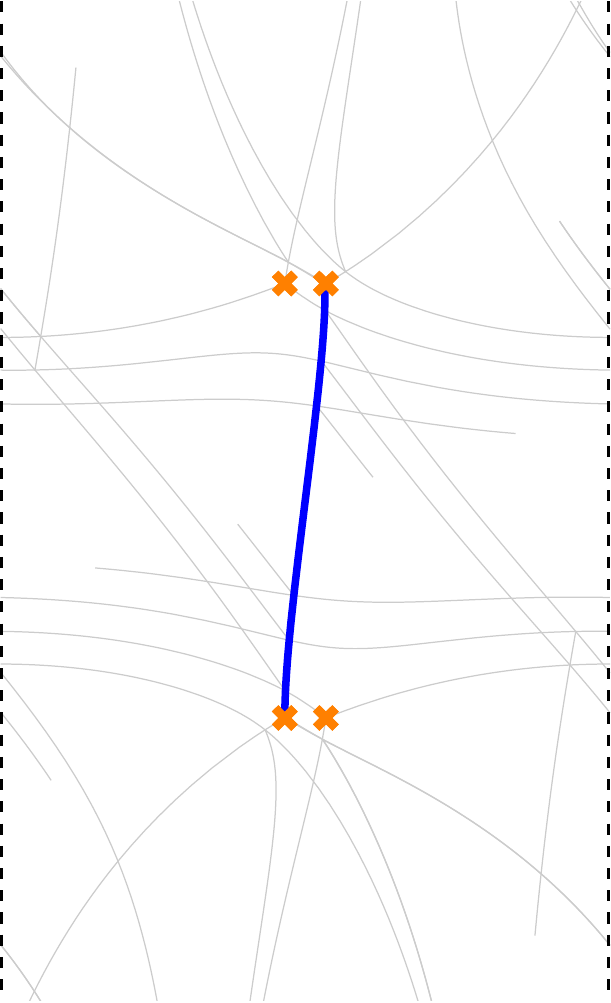}
		 \hspace*{\fill}
		\caption{Degenerate saddle-connection spectral networks at the point $u^{w}$ (defined in \S \ref{sec:wild_point}).  Streets that become two-way at a critical phase are thickened and highlighted in colour; in each figure, the dotted lines are identified in order to form a cylinder.  Both networks are produced at mass-cutoff $50$ in $\Lambda = 1$ units. \textit{Left}: $\CW$-network computed close to the critical phase $\arg(Z_{\gamma_{1}})$.  \textit{Right}: $\CW$-network computed close to the critical phase $\arg(Z_{\gamma_{2}})$. \label{fig:u_w_saddles}}
	\end{center}
\end{figure}

\begin{claim}
	$u^{w}$ is a $3$-wild point.
\end{claim}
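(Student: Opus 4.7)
The plan is to verify the claim by exhibiting, at $u^{w}$, the data characterizing $3$-wildness: two BPS hypermultiplets whose charges $\gamma_{1},\gamma_{2}\in\Gamma_{u^{w}}=H_{1}(\Sigma_{u^{w}};\mathbb{Z})$ satisfy $\langle\gamma_{1},\gamma_{2}\rangle=3$, together with the expected densely populated arc of additional BPS states. Since the two figures already referenced (Fig.~\ref{fig:u_w_saddles}) display the two candidate hypermultiplet networks produced by a numerical WKB computation, the role of the proof is essentially to (a) interpret those networks correctly as hypermultiplets and (b) compute the intersection pairing of the charges they carry.

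First I would locate the two critical phases $\vartheta_{1},\vartheta_{2}\in S^{1}$ where $\mathcal{W}_{\vartheta}$ at $u^{w}$ degenerates: scan $\vartheta$ numerically and flag the values at which a pair of formerly one-way streets merges onto a single compact two-way trajectory (a saddle connection). The two pictures in Fig.~\ref{fig:u_w_saddles} exhibit precisely these degenerations, each consisting of a single saddle connection; by the standard spectral-network dictionary (see \cite{sn,wwc}), a simple saddle connection contributes a hypermultiplet with $\Omega=1$. Thus $\Omega(\gamma_{1})=\Omega(\gamma_{2})=1$.

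Next I would read off the charges $\gamma_{1},\gamma_{2}$. Each saddle connection lifts canonically to a closed $1$-cycle on the spectral cover $\Sigma_{u^{w}}$ by concatenating the sheet labellings on either side of the two-way street and including the relevant branch cut contribution; this yields the homology classes $\gamma_{1},\gamma_{2}\in H_{1}(\Sigma_{u^{w}};\mathbb{Z})$. The pairing $\langle\gamma_{1},\gamma_{2}\rangle$ is then computed either directly on $\Sigma_{u^{w}}$, or (more practically) by counting signed transverse intersections of the two saddles on the base cylinder $C=\mathbb{P}^{1}\setminus\{0,\infty\}$, weighted according to the sheet labels of the intersecting strands. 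Inspection of Fig.~\ref{fig:u_w_saddles} shows exactly three transverse crossings of the projected saddles, all of the same sign, so $|\langle\gamma_{1},\gamma_{2}\rangle|=3$; a choice of orientation convention (as in App.~\ref{app:standard_lift}) fixes the sign.

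Finally, to upgrade ``two hypermultiplets with pairing $3$'' to genuine $3$-wildness, I would verify that no additional primitive BPS generators spoil the identification of the BPS subquiver at $u^{w}$ with the $3$-Kronecker quiver equipped with a wild stability condition, and I would check that the dense arc~\eqref{eq:dense_arc} is actually populated by scanning intermediate phases $\vartheta\in(\vartheta_{1},\vartheta_{2})$ and observing degenerate $\mathcal{W}_{\vartheta}$-networks supporting charges of the form $a\gamma_{1}+b\gamma_{2}$ with $a/b$ in the predicted range. The main obstacle is that this verification is intrinsically numerical: it relies on a finite mass cutoff (here $50$ in $\Lambda=1$ units) and on the absence of extra degenerations appearing beyond that cutoff. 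A fully rigorous argument would require either an exact analytic parametrization of the relevant trajectories of the quadratic/cubic differentials on $\Sigma_{u^{w}}$ or a BPS-quiver/mutation argument identifying the $3$-Kronecker subquiver at $u^{w}$ intrinsically; absent either, the ``proof'' should be stated as a numerically supported verification of the three points above, with the analytic pairing computation $\langle\gamma_{1},\gamma_{2}\rangle=3$ being the one step that is fully rigorous once the charges are identified.
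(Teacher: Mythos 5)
Your first two steps coincide with what the paper actually does: the two saddle-connection $\CW$-networks of Fig.~\ref{fig:u_w_saddles} are taken as (numerical) evidence for two BPS hypermultiplets (saddle connection $\Rightarrow$ hypermultiplet, justified exactly as in your proposal via the protected spin character or the type IIB special-Lagrangian picture), and the pairing $\langle\gamma_{1},\gamma_{2}\rangle=3$ is read off from the three transverse crossings of the projected saddles on the cylinder. The paper is equally candid that the existence of a BPS quiver at $u^{w}$ with $\gamma_{1},\gamma_{2}$ among its nodes is a conjecture supported by this evidence, so your honesty about the numerical/cutoff caveats matches the paper's own level of rigour.

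The genuine gap is in your last step. Being a $3$-wild point is not the statement that the dense arc \eqref{eq:dense_arc} is populated (that is a downstream consequence); by the paper's definition it requires, in addition to the $3$-Kronecker subquiver, that the central charge induces a \emph{wild} stability condition on it, i.e.\ the phase ordering of \eqref{eq:cond_wild_centralcharge}. This is the one step that is both essential and cheap to verify, and you never perform it: the paper records the explicit central charges \eqref{eq:3-wild_central_charge}, observes that for any rotation $e^{-i\theta}$ placing both in $\mathbb{H}$ the phases of $Z_{\gamma_{1}}$ and $Z_{\gamma_{2}}$ are ordered the right way, and invokes Prop.~\ref{prop:kron_stdform} to conclude the stability condition is stability-equivalent to the non-trivial one $\Theta_{3}$ rather than to $\Theta_{1}$ or $\Theta_{2}$ (under which only finitely many states exist and $u^{w}$ would not be wild). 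Without this phase comparison your argument only establishes ``a $3$-Kronecker subquiver with some stability condition.'' Conversely, your proposed substitute---scanning intermediate phases to see the dense arc fill in---is both unnecessary for the definition and practically hopeless at the available mass cutoffs (the paper itself only manages slopes $1/2$ and $2/3$, precisely because the required cutoff grows with $|aZ_{\gamma_{1}}+bZ_{\gamma_{2}}|$). Replace that step with the central-charge phase check and your argument reproduces the paper's.
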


	To support this claim, we note that at this point there are two BPS states with linearly independent charges $\gamma_{1}, \gamma_{2} \in \Gamma = \hat{\Gamma}_{u}$ and central charges
\begin{equation}
	\begin{aligned}
		Z_{\gamma_{1}} &= 20.980 - 40.148 i,\\
		Z_{\gamma_{2}} &= 7.244 - 9.083 i.
	\end{aligned}
	\label{eq:3-wild_central_charge}
\end{equation}
The existence of such states is established by observing the occurrence of two saddle-connection $\CW$-networks at the phases $\arg(Z_{\gamma_{1}})$ and $\arg(Z_{\gamma_{2}})$, both depicted in Fig.~\ref{fig:u_w_saddles}; from this figure it can be deduced that $\langle \gamma_{1}, \gamma_{2} \rangle = 3$.  Moreover, the fact that both BPS states are represented by saddle-connections implies that these states are BPS-hypermultiplets.\footnote{One can verify this claim directly by computing the protected spin-character \cite{glm:sn_spin} for a saddle-connection, orindirectly by noting that a saddle connection corresponds to a special Lagrangian 3-sphere in the type IIB origin of theories of class S.  Such 3-spheres correspond to BPS hypermultiplets in the field theory limit.}  This makes it reasonable to conjecture that there exists a BPS quiver at the point $u^{w}$ such that $\gamma_{1}$ and $\gamma_{2}$ label two of its nodes; under this conjecture the 3-Kronecker quiver is a subquiver of a BPS quiver at $u^{w}$. Furthermore, the central charge phases \eqref{eq:3-wild_central_charge} impose a wild stability condition on this subquiver  (c.f. Appendices \ref{app:stab_m_kron}-\ref{app:bridgeland_to_slope}).  Indeed, for any $\theta$ such that $e^{-i\theta} Z_{\gamma_{1}}$ and $e^{-i\theta} Z_{\gamma_{2}}$ lie in the half-space $\mathbb{H} \subset \mathbb{C}$, we have
\begin{align*}
	\arg \left(e^{-i\theta}Z_{\gamma_{1}} \right)  > \arg \left(e^{-i\theta} Z_{\gamma_{2}} \right)
\end{align*}
as elements of $(0, \pi)$.  So, via Prop. \ref{prop:kron_stdform} and the related discussion in Apps. \ref{app:quiv_rep} and \ref{app:BPS_quiv}, $u^{w}$ is indeed a 3-wild point.

The $\CW_{\vartheta}$-network at the phase $\vartheta = \arg \left(Z_{\gamma_{1} + \gamma_{2}} \right)$ is a 3-herd.  Now our interest lies in discovering the form of the degenerate networks inside of the densely populated arc of phases: i.e. networks associated to BPS states of charges $a \gamma_{1} + b \gamma_{2}$ for $a,b$ coprime and satisfying \eqref{eq:dense_arc}. The process of drawing such networks becomes progressively more computationally taxing as $a$ and $b$ grow: for slope-$a/b$, in order to see the production of two way streets, the mass cutoff of the $\CW$-network must be at least the mass of the predicted BPS state: $M_{a/b} := |a Z_{\gamma_{1}} + b Z_{\gamma_{2}} |$.  As a result, in this paper we will restrict our attention to results obtained by studying the two ``low mass" cases: slopes $1/2$ and $2/3$ (at $m=3$).  As mentioned in the introduction, the network at slope $2/3$ produces novel functional equations for the BPS spectrum.  On the other hand, the network at slope $1/2$, and its generalization to slope $1/(m-1)$ networks in the $m$-wild spectrum, produces the same spectrum as the slope $1$ ($m$-herd) situation (and, in fact, produces essentially the same algebraic equations as for the $m$-herd); nevertheless it merits discussion as it may lead to future insight toward the general structure for networks at general slopes.
 
For the time-being it is helpful to fix some terminology when referring to these ``off-diagonal" slope networks.  As generalizations of $m$-herds, plenty of names come to mind to a sufficiently mischievous author.  However, in a rare triumph of science over good fun, rather than parsing through the full spectrum of farm animals, we settle on a more descriptive terminology.

\begin{definition}[Terminology]\footnote{In \cite[App. D]{glm:sn_spin}, the authors use the terminology ``$m$-$(a,b)$-herd" instead of what we would call an $(a,b|m)$ herd.  The author rationalizes his notation due to a deep-rooted derision of multiple hyphens.}
	Let $a$ and $b$ be coprime integers.  Then by an $(a,b|m)$-herd we will mean a degenerate spectral network such that:
	\begin{enumerate}
		\item If $a = b = 1$, it is an $m$-herd;
		
		\item One can continuously deform its degenerate skeleton into two saddle-connections with $m$-intersection points: if (in a trivialization of the spectral cover) the branch points are of type $ij$ and $jk$, respectively, then in the limit that one shrinks the degenerate streets of type $ik$ to length zero, one recovers two saddle-connections that intersect $m$-times (c.f. Fig.~\ref{fig:saddle_conns}).
		
		\item The corresponding BPS indices $\Omega(n \gamma_{c})$ coincide with the DT invariants $d(na,nb,m)$ (for all $n \in \mathbb{Z}_{\geq 1}$) of the Kronecker $m$-quiver equipped with a non-trivial stability condition.
		
		\item Its degenerate skeleton is constructed by glueing together $m$ basic building-blocks.
	\end{enumerate}
\end{definition}

The last condition is what prevents this terminology from being a definition: we have not provided what the building-blocks should be for general $(a,b)$.  This condition is motivated by our experiences with $m$-herds, in particular the expectation that---as in the case for $m$-herds---the BPS spectrum of an $(a,b|m)$-herd can (hopefully) be calculated inductively for all $m$.

Speaking more vagely: an $(a,b|m)$-herd is a type of degenerate spectral network (with sufficiently simple skeleton) that can appear as a $\CW_{\vartheta}$ network at phase $\vartheta = \arg(Z_{a \gamma_1 + b \gamma_2})$ in an $m$-wild region (i.e. $\gamma_{1}$ and $\gamma_{2}$ are two hypermultiplet charges with $\langle \gamma_1, \gamma_2 \rangle = m$).

\begin{remark}
 Not all $\CW_{\vartheta}$ networks for $\vartheta = \arg(Z_{\gamma_1 + \gamma_2})$ at an $m$-wild point may be $m$-herds.  Indeed, during the work on \cite{wwc}, examples of spectral networks supporting the charge $\gamma_{1} + \gamma_{2}$ in an $m$-wild region, but were \textit{not} $m$-herds were found by D. Galakhov and P. Longhi.  Hence, one should not expect that all spectral networks producing the DT invariants $d(a,b,m)$ should be $(a,b|m)$-herds.
\end{remark}

In this paper, we will give definitions for $(m-1,1|m),(1,m-1|m),(2,3|3)$ and $(3,2|3)$-herds.
 As mentioned in the beginning of this section, in \cite[App. D]{glm:sn_spin}, the authors provide an excellent candidate for a rigorous definition of the degenerate skeleton of an $(a,b|m)$-herd for general $(a,b)$; in that paper, the building-blocks are referred to as ``fat-horses".  The reader should be warned, however, that our method of construction for $(m-1,1|m)$ and $(1,m-1|m)$-herds use different building-blocks than the ``fat-horses", but the resulting degenerate skeletons appear to be the same.

\begin{remark}
	In all $(a,b|m)$-herds considered, the BPS indices $\Omega_{m}(n(a \gamma_{1} + b \gamma_{2}))$ generated by an $(a,b|m)$-herd are identical to $\Omega_{m}(n(b \gamma_{1} + a \gamma_{2}))$.  As described in \ref{app:m_wild}, for the $m$-Kronecker quiver $K_{m}$ this is a consequence of the ``transposition" autofunctor acting on the category of representations of $K_{m}$. At the level of all the spectral networks in this paper, this symmetry is immediate consequence of the fact that the $(a,b|m)$-herd degenerate skeleton is the mirror image of the $(b,a|m)$-herd degenerate skeleton.  The networks of \cite{glm:sn_spin} also appear to have this mirror-image property.
\end{remark}

 
\subsection{\texorpdfstring{$(m-1,1|m)$-herds (for $m \geq 2$)}{(m-1,1|m)-herds (for m >= 2)}}
First, we turn our attention toward the $(2,1|3)$-herd, realized as a $\CW_{\vartheta}$-network at the point $u^{w} \in \CB$ described above in \S \ref{sec:wild_point} and with central charge phase $\vartheta = \arg \left(Z_{\gamma_{1} + 2\gamma_{2}} \right)$.   A simplified version of this network, with only the two-way streets shown, is depicted in Fig.~\ref{fig:(2_1)_diagram}.  Just as for the $m$-herds, Fig.~\ref{fig:(2_1)_diagram} suggests a rather simple generalization for diagrams corresponding to any $m \geq 2$.  Indeed,  one can check that the diagram in Fig.~\ref{fig:(2_1)_diagram} is reproduced by glueing together three copies of the building-block\footnote{This building block looks roughly like a horse (see Fig.~\ref{fig:horse}) tipped 45 degrees: the key difference between the two diagrams is that a horse has only one ``secondary" street (fuchsia street), while the building block shown in Fig.~\ref{fig:(m-1_1)_(1_m-1)_blocks} has two secondary streets.} shown in the left panel of Fig.~\ref{fig:(m-1_1)_(1_m-1)_blocks}, and coupling the end building blocks to four distinct branch points in the appropriate way.\footnote{In this procedure, just as with horses, some of the streets in the building block are constrained to be one-way streets after coupling to the branch points.}

\begin{definition}
	Let $m \geq 2$.  A network is an $(m-1,1|m)$-herd (respectively $(1,m-1|m)$-herd) if:
	\begin{enumerate}	
		\item Its two-way streets agree with the diagram formed by glueing $m$-copies of the building-block in the left panel (resp. right panel) of Fig.~\ref{fig:(m-1_1)_(1_m-1)_blocks} in the following manner: 
	\begin{align*}
		a_{1}^{(l)} &= a_{2}^{(l-1)}\\
		\conj{a}_{2}^{(l)} &= \conj{a}_{1}^{(l-1)}\\
		b_{1}^{(l)} &= \conj{b}_{2}^{(l-1)}\\
		b_{2}^{(l)} &= \conj{b}_{1}^{(l-1)},\\
	\end{align*}
	for every $l \in \{2, \cdots, m\}$ .
	
	\item Each of the two-way streets labelled by $a_{1}^{(1)},\, b_{2}^{(1)}, \, \conj{a}_{1}^{(m)},\,$ and $\conj{b}_{2}^{(m)}$ are connected to distinct branch points. 
	
	\item A version of the \textit{No Holes} condition\footnote{This is a technical, yet important, condition that allows one to define an $(m-1,m|m)$-herd (resp. $(m,m-1|m)$-herd)  on a general curve $C$ and insure that all street factors are formal series in a single variable $X_{\twid{\gamma}}$ for a primitive $\gamma \in \Gamma$, unique up to a choice of sign (c.f. Rmk.~\ref{rmk:gen_choice} in \S \ref{sec:setting}).  Without this condition, street factors may be formal series in a collection of formal variables, associated to linearly independent, primitive, elements of $\Gamma$.}, described in \cite[App. C.3]{wwc} , with the horse replaced by the building-block in the left (resp. right) panel of Fig.~\ref{fig:(m-1_1)_(1_m-1)_blocks}.
\end{enumerate}
\end{definition}

The name $(m-1,1|m)$-herd suggests that these diagrams are related to the slope $(m-1)/1$ BPS states. Indeed, the proofs and analyses of Appendices C-D in \cite{wwc} can all be appropriately modified with either one of the building-blocks of Fig.~\ref{fig:(m-1_1)_(1_m-1)_blocks} replacing the horse (Fig.~\ref{fig:horse}); with these simple modifications, we arrive at the following.

\begin{proposition} \label{prop_(m-1,1)_herds}
	The BPS indices $\{\Omega(n \gamma_{c})\}_{n=1}^{\infty}$ for the $(m-1,1|m)$-herds (resp. $(1,m-1|m)$-herds) are generated by a formal series $P$ satisfying the algebraic equation
	\begin{align}
		P = 1 + z P^{(m-1)^2}
		\label{eq:(m-1,1)_funceq}
	\end{align}
	with $z := - X_{\twid{\gamma_{c}}}$ for some $\gamma_{c} \in \Gamma$.  Furthermore, there exist charges $\gamma_{1}, \gamma_{2} \in \Gamma$ such that $\langle \gamma_{1}, \gamma_{2} \rangle = m$ and $\gamma_{c} = (m-1) \gamma_{1} +  \gamma_{2}$ (resp. $\gamma_{c} = \gamma_{1} + (m-1) \gamma_{2}$).
\end{proposition}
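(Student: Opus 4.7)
The plan is to run the machinery of \cite[App.~C--D]{wwc} with the ``horse'' replaced by the building-block of Fig.~\ref{fig:(m-1_1)_(1_m-1)_blocks}, and show that the resulting system of street-factor constraints closes on a single formal series obeying \eqref{eq:(m-1,1)_funceq}. First I would fix a local trivialization of the spectral cover and attach unknown street-factors $Q(a_i^{(l)}),\,Q(b_i^{(l)}),\,Q(\bar{a}_i^{(l)}),\,Q(\bar{b}_i^{(l)}) \in \formal{\mathbb{Z}}{\twid{\Gamma}}$ to every (potentially two-way) street of every copy of the building-block. The No Holes condition, together with the glueing equations $a_1^{(l)} = a_2^{(l-1)}$, $\bar{a}_2^{(l)} = \bar{a}_1^{(l-1)}$, $b_1^{(l)} = \bar{b}_2^{(l-1)}$, $b_2^{(l)} = \bar{b}_1^{(l-1)}$, forces every street-factor to lie in $\formal{\mathbb{Z}}{\twid{\Gamma}_c}$ for a single primitive $\gamma_c \in \Gamma$, so that after fixing a generator we obtain a sign $\epsilon \in \{\pm 1\}$ and a formal variable $z = \epsilon X_{\twid{\gamma_c}}$ in which to expand everything.

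Next I would impose the six-way street equations at each internal junction of every building-block, plus the four branch-point boundary conditions at the ends (the streets $a_1^{(1)},b_2^{(1)},\bar{a}_1^{(m)},\bar{b}_2^{(m)}$ terminating on distinct branch points). Just as in \cite[App.~D]{wwc}, the branch-point equations give $Q(a_1^{(1)}) = 1 + z\,Q(a_1^{(1)})^{\text{something}}$-type ``seed'' relations, and each internal six-way junction relates the secondary (fuchsia) street-factor to products of primary street-factors of the adjacent copies. By induction in $l$, I would show that every $Q$ is a monomial power of a single master series $P$, with exponents determined by the intersection pattern inside the building block. Closing the recursion across all $m$ copies, the double chain of secondary streets contributes two independent powers of $P^{m-1}$ at each junction, and the end branch points add the $+1$. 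Collecting exponents yields exactly $P = 1 + z P^{(m-1)^2}$, the same algebraic equation as in Prop.~\ref{prop_Q}, reflecting the fact that the building-block's combinatorics differ from the horse only in the auxiliary streets, not in the essential closure exponent.

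Third, I would compute $\gamma_c$ by summing the charges picked up along a 1-cycle on $UT\Sigma$ obtained by lifting any two-way street to the spectral cover, as in \cite[App.~C]{wwc}. In the slope-limit where the degenerate skeleton collapses to two saddle connections of charges $\gamma_1$ and $\gamma_2$ meeting $m$ times, the topology of the $(m-1,1|m)$-skeleton (with $m{-}1$ copies of the ``$\gamma_1$-type'' building-block cycle and a single ``$\gamma_2$-type'' cycle) produces the total charge $\gamma_c = (m-1)\gamma_1 + \gamma_2$; the mirror construction yields $\gamma_c = \gamma_1 + (m-1)\gamma_2$ for the $(1,m-1|m)$-case. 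The sign $\epsilon = -1$ in $z = -X_{\twid{\gamma_c}}$ follows from the standard lift $\twid{(\cdot)}$ of App.~\ref{app:standard_lift}, specifically from the parity of the self-intersection count of the lifted cycle under the conventions of App.~\ref{app:signs} (here $(m-1)(1) + (m-1) + 1$ is odd iff $m$ is even, giving $\epsilon = -1$ after the usual sign combinatorics).

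Finally I would identify $P$ with $\gdt$. Applying the Kontsevich-Soibelman factorization to the network automorphism assembled from the $Q$'s, the two-way streets all carry charges in $\mathbb{Z}_{>0}\gamma_c$, so the whole automorphism is the KS transformation along the ray generated by $\gamma_c$; reading off the distinguished factorization then gives $\prod_n (1 - z^n)^{n\Omega(n\gamma_c)} = P$, which is the claim. The main obstacle is the bookkeeping step: verifying that the two-way-street system closes to a single scalar functional equation with the exponent $(m-1)^2$ (rather than some more intricate matrix-valued recursion), and that no hidden secondary series survives. Once that combinatorial closure is checked for the new building-block---which is straightforward but tedious---the rest of the argument is a near-verbatim transcription of Prop.~\ref{prop_Q} and its proof in \cite{wwc}.
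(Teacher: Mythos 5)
Your overall route is the same one the paper takes: the paper's entire ``proof'' of this proposition is the remark that the analyses of Appendices C--D of \cite{wwc} go through verbatim once the horse is replaced by the new building-block, with the charge identification coming from the lift of the degenerate skeleton and the sign deferred to Appendix \ref{app:signs}. So in spirit your plan (set up street factors, impose the six-way and branch-point equations, use the No Holes condition to reduce to a single formal variable, solve the glued system inductively, then read off $\Omega(n\gamma_c)$ from the KS/street-factor factorization) is exactly the intended argument.

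Two concrete steps in your sketch, however, do not hold as written. First, your mechanism for the key exponent --- ``two independent powers of $P^{m-1}$ at each junction, and the end branch points add the $+1$'' --- would produce $P = 1 + z P^{2(m-1)}$, which agrees with $P^{(m-1)^2}$ only at $m=3$. In \cite{wwc} the exponent $(m-1)^2$ is not a local junction count: it emerges only after solving the coupled recursion across the entire chain of $m$ blocks (the street factors on the $l$-th block carry $l$-dependent powers of $P$ that only combine to $(m-1)^2$ at closure). So the proof must actually carry out that induction for the new building-block --- which is precisely the nontrivial content of ``modify Appendices C--D'' and of Prop.~\ref{prop_Q} --- rather than infer the exponent from a local count. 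Second, your sign parenthetical is arithmetically garbled: with the conventions of Appendix \ref{app:signs} the relevant exponent for $(a,b)=(m-1,1)$ is $mab+a+b = m(m-1)+(m-1)+1 = m^2$ (you dropped the factor of $m$ in $mab$), and in any case the quantity you wrote, $(m-1)+(m-1)+1 = 2m-1$, is odd for every $m$, contradicting your own ``odd iff $m$ is even.'' The determination of the sign in $z$ relative to $X_{\twid{\gamma_c}}$ therefore needs to be redone carefully by tracking the standard lift and self-intersection parities as in Appendix \ref{app:signs}, not asserted; as it stands, that step of your argument would not survive checking.
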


\noindent The reason for the negative sign appearing in $z = -X_{\twid{\gamma_{c}}}$ is explained in Appendix \ref{app:signs}.

Note that, modulo the identification of the formal variable $z$ with an element of $\formal{\mathbb{Z}}{\twid{\Gamma}}$, the algebraic equation (\ref{eq:(m-1,1)_funceq}) is identical to the algebraic equation (\ref{eq:P}) for $m$-herds.  In particular, the BPS indices $\{\Omega(n\gamma_{c})\}_{n=1}^{\infty}$ are identical to the $m$-herd situation.  This should be expected under the assumption of equivalence of BPS indices $\Omega\left[n(a \gamma_{1} + b \gamma_{2})\right]$ for $(a,b|m)$-herds  with the DT invariants $d(a,b,m)$ associated to the $m$-Kronecker quiver equipped with dimension vector $(a,b)$.  In particular, the observation that $\Omega \left[n(m \gamma_{1} + (m-1) \gamma_{2}) \right]$ is the same as $\Omega \left[n(\gamma_{1} + \gamma_{2})\right]$ is expected from the more general equivalence of $m$-Kronecker DT invariants
\begin{align*}
	d(a,b,m) = d(ma-b,b,m).
\end{align*}
As described in Appendix \ref{app:m_wild}, this equivalence follows as a consequence of ``reflection" endofunctor \cite{weist:loc} on the $m$-Kronecker quiver representation category---which takes representations with dimension vector $(a,b)$ to representations with dimension vector $(ma -b, b,m)$.  In the physics, this equivalence is expected to manifest itself as a monodromy of the local system $\hat{\Gamma} \rightarrow \CB^{*}$ (c.f. \cite{wwc} \S 6.2 and Appendix \ref{app:m_wild}).

\begin{figure}[t!]
	\begin{center}
		 \includegraphics[scale=0.25]{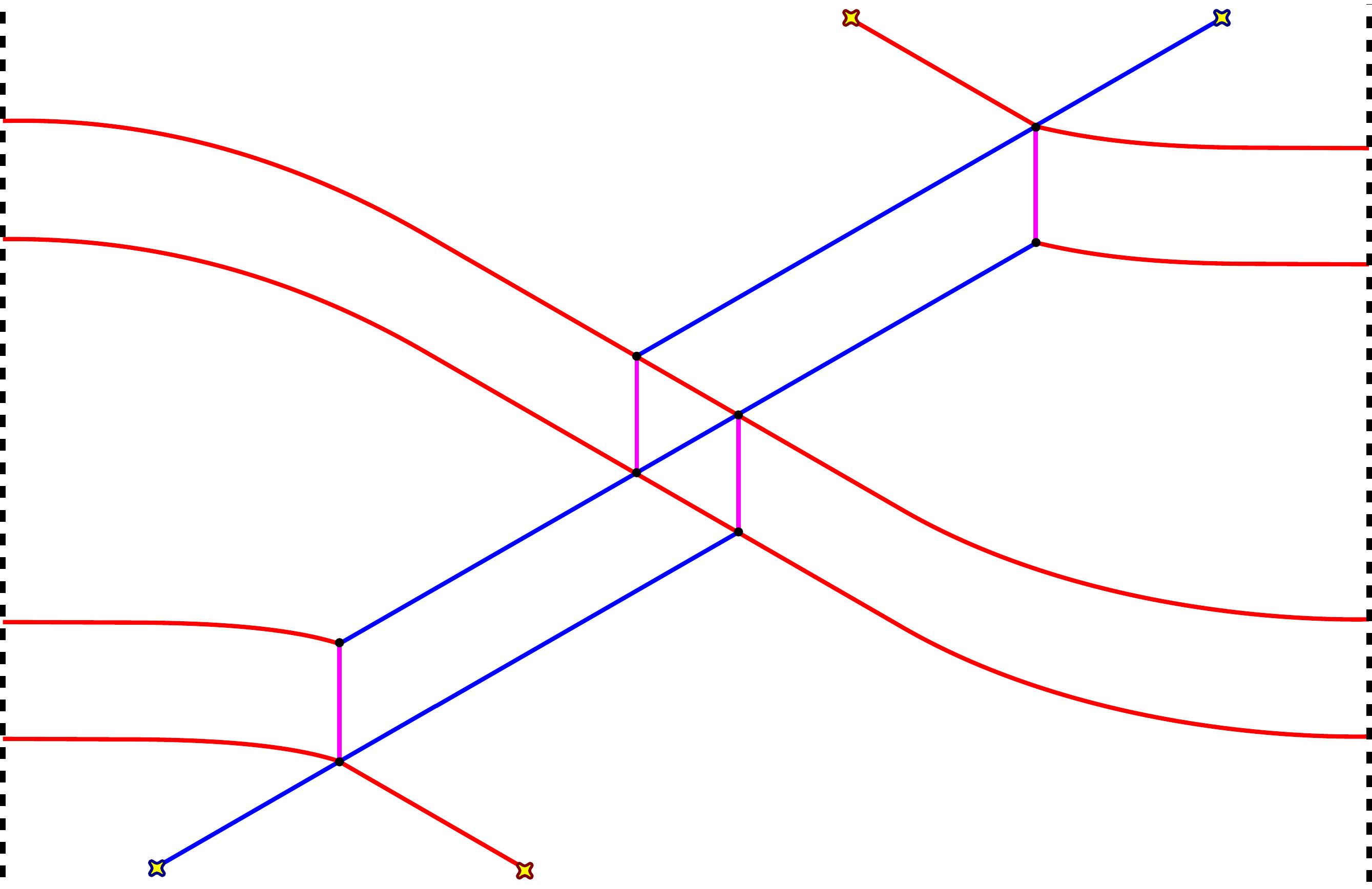}
		\caption{Two-way streets of a $(2,1|3)$-herd on the cylinder. \label{fig:(2_1)_diagram}}
	\end{center}
\end{figure}

\begin{figure}[t!]
	\begin{center}
		 \includegraphics[scale=0.5]{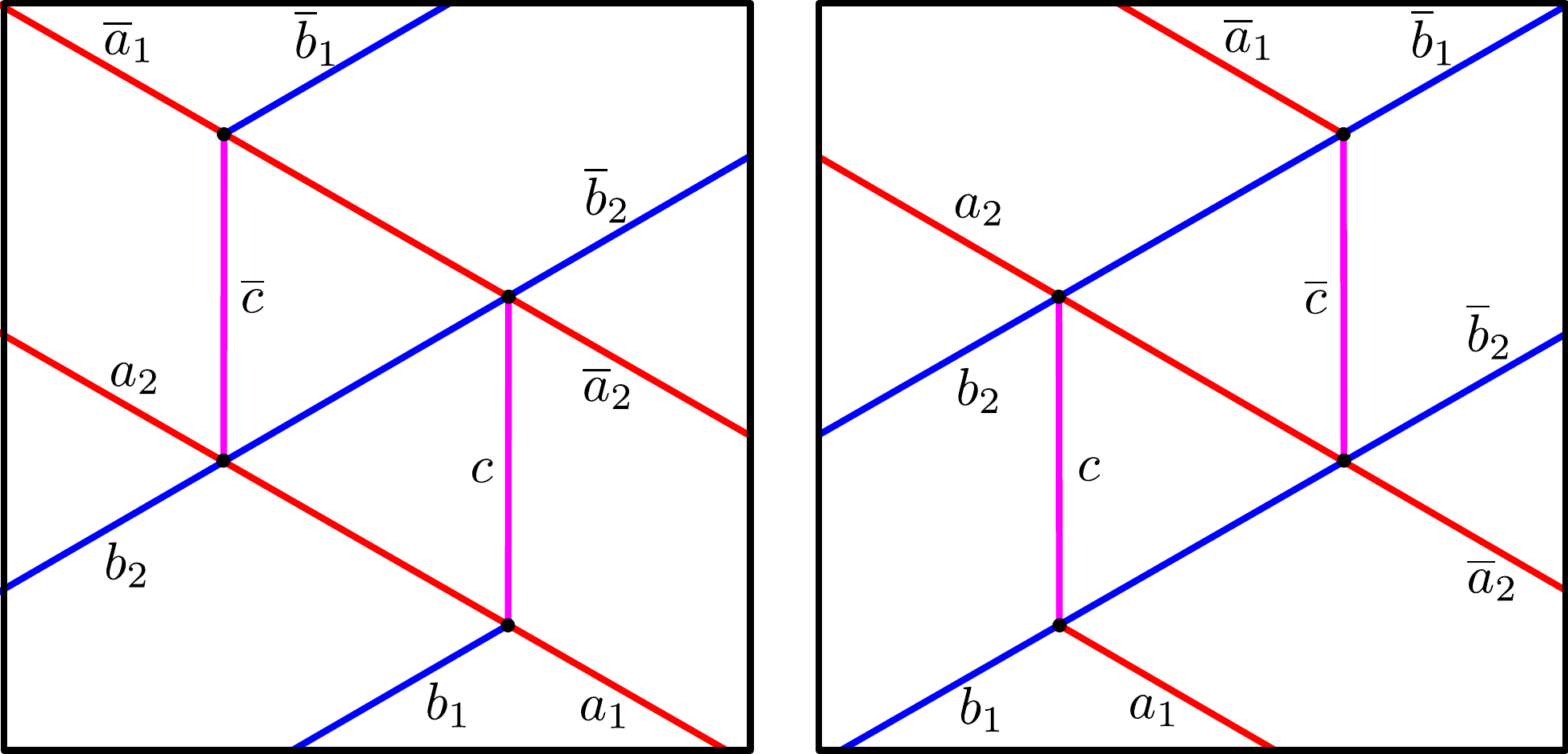}
		\caption{Left: Building-block for an $(m-1,1|m)$-herd.  Right: Building-block for a $(1,m-1|m)$-herd (which is the mirror image of the building-block in the left panel). \label{fig:(m-1_1)_(1_m-1)_blocks}}
	\end{center}
\end{figure}

\subsection{The \texorpdfstring{$(2,3|3)$ and $(3,2|3)$-herds}{(2,3|3) and (3,2|3)-herds}} \label{sec:2_3_herds}

To achieve a spectral network with the potential to produce a more interesting BPS spectrum, we turn our attention toward the $\CW$-network associated to bound states of charges $\{2n \gamma_{1} + 3n \gamma_{2}\}_{n=1}^{\infty}$: this is the network $\CW_{\vartheta}$ at phase $\vartheta = \arg \left(Z_{2 \gamma_{1} + 3 \gamma_{2}} \right)$ lying in the densely populated arc. At the point $u^{w} \in \CB$, the bound state of charge $\gamma_{c} = 2 \gamma_{1} + 3 \gamma_{2}$ has central charge $Z_{\gamma_{c}} = 63.692- 107.545 i$ and mass $M = |Z_{\gamma_{c}}| = 124.99$.  The corresponding $\CW_{\vartheta}$ network at $\vartheta = \arg(Z_{\gamma_{c}})$ is shown in Fig.~\ref{fig:(2_3|3)_WKB} with the relevant two-way streets highlighted using our colour-coding convention.\footnote{In practice, one way to determine the two-way streets by perturbing the phase of interest slightly, and carefully (read painfully) watching for near collisions (that become \textit{actual} collisions for the unperturbed phase) of anti-parallel streets as the mass-cutoff is increased.  This technique also helps rule out the myriad of one-way streets that run close and parallel to one another.}  A simplified picture, showing only two-way streets, is shown in Fig.~\ref{fig:(3_2)_diagram}.

\begin{figure}[t!]
	\begin{center}
		\includegraphics[scale=0.8]{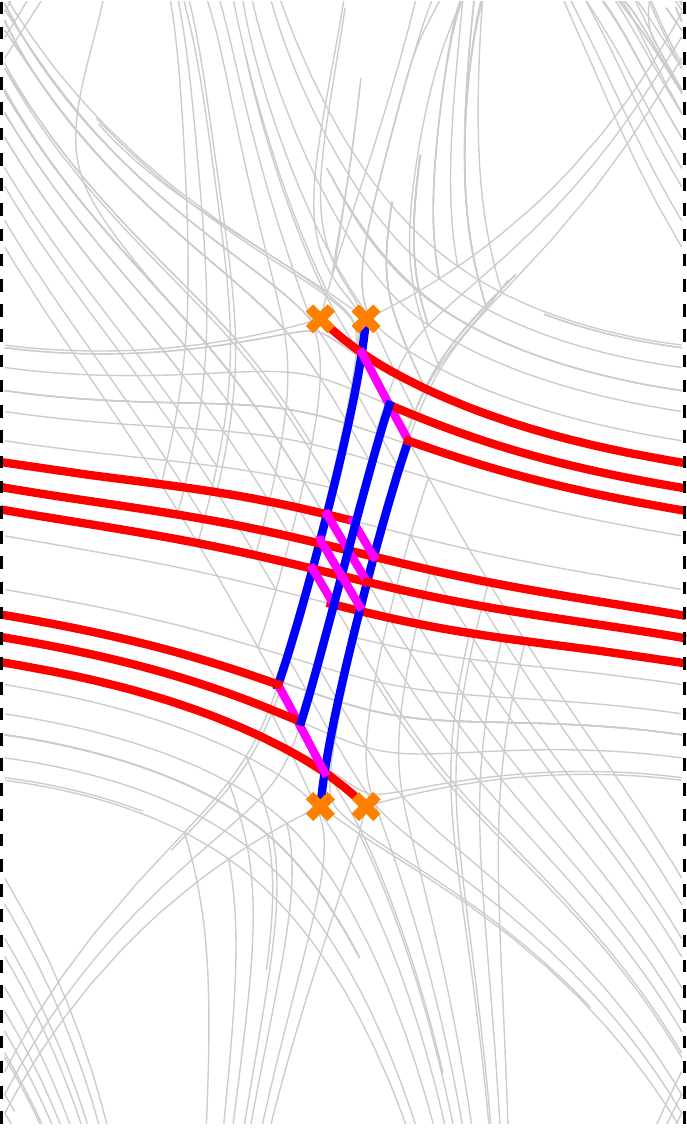}
		\caption{$\CW$-network demonstrating a $(2,3|3)$-herd at the point $u^{w}$.  The network is produced at mass cutoff $140$ close to the critical phase $\arg(Z_{2\gamma_{1} + 3 \gamma_{2}})$.  Streets that become two-way at the critical phase are thickened and highlighted in colour; in each figure, the dotted lines are identified in order to form a cylinder. \label{fig:(2_3|3)_WKB}}
	\end{center}
\end{figure}

One may hope that the process of solving for the street-factors reduces to a simple recursive/inductive building-block procedure as with the $m$-herds and $(m-1,1|m)$-herds.  Indeed, there is an obvious candidate for a building-block, shown in Fig.~\ref{fig:(3_2)_building_block}. However, the presence of the two coupled joints of valence-six (where all streets are two-way) leads to a set of highly-coupled non-linear equations with no simple explicit solution for the incoming soliton data in terms of the outgoing soliton data;\footnote{Perhaps, a more clever reader may be able to find such a solution.} so, we reluctantly abandon this technique.

Instead, we will pursue a more ad-hoc approach specific to the case $m=3$.  First, we recall the ``abelian" six-way junction rules: the equations that impose conditions on the street-factors at a general type of joint (where six streets, all possibly two-way, meet).\footnote{These equations can be derived either from the ``non-abelian" six-way junction rules for soliton generating series, or directly using the homotopy invariance techniques of \cite{sn} (which is how the non-abelian rules are derived).}  Referring to the labelled streets of such a joint in Fig.~\ref{fig:six_way_streetfactors}, we have:
\begin{equation}
	\begin{aligned}
	1 &= Q \left(q_{12} \right) Q \left(q_{13} \right) Q \left(p_{12} \right)^{-1} Q \left( p_{13} \right)^{-1}\\
	1 &= Q \left(q_{23} \right) Q \left(q_{12} \right) Q \left(p_{23} \right)^{-1} Q \left( p_{12} \right)^{-1}\\
	1 &= Q \left(q_{13} \right) Q \left(q_{23} \right) Q \left(p_{13} \right)^{-1} Q \left( p_{23} \right)^{-1}.
	\label{eq:abelian_rules}
	\end{aligned}
\end{equation}

\begin{figure}[t!]
	\begin{center}
		 \includegraphics[scale=0.8]{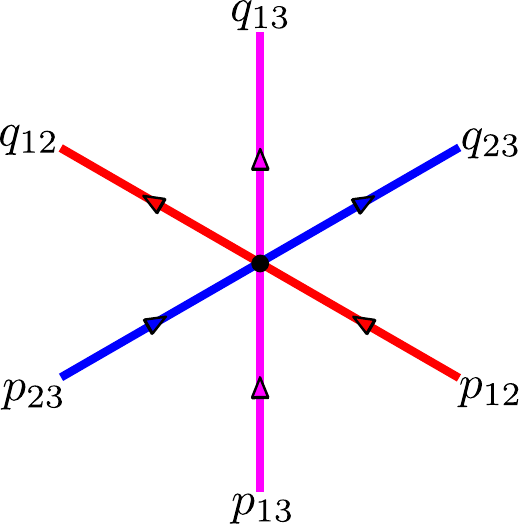}
		\caption{A ``six-way junction": joint where six (possibly) two-way streets meet. \label{fig:six_way_streetfactors}}
	\end{center}
\end{figure}

Any two of these equations are algebraically independent---however, by taking the inverse of both sides of one equation, and multiplying by another, produces the third.  Hence, the abelian rules can only eliminate two effective street-factor degrees of freedom from each joint.  Nevertheless they are quite powerful in reducing the number of unknown street-factors.  From these equations, along with the 180-degree rotation symmetry of the diagram, it is a simple exercise to show that all street factors can be expressed as Laurent monomials in terms of the three street-factors, labeled $M,V,$ and $W$, shown next to their corresponding streets in Fig.~\ref{fig:(3_2)_diagram}; in particular, the street factors attached to the streets extending from the branch points are given as powers of $\gdt_{3/2} := M V W$.

\begin{claim} \
	\begin{enumerate}
		\item All street factors are power series in the variable $z := -X_{\twid{\gamma}_{c}}$, where $\gamma_{c} = 3 \gamma_{1} + 2 \gamma_{2}$ with $\gamma_{1}$ and $\gamma_{2}$ the charges represented by the lifts of the saddle-connections of Fig.~\ref{fig:saddle_conns} to the spectral cover (i.e. the hypothetical hypermultiplet charges before their wall-crossing).
	
		\item	$M V W \in \formal{\mathbb{Z}}{z}$ is the generating series $\gdt_{3/2}$ for the BPS indices $\{\Omega(n \gamma_{c})\}_{n=1}^{\infty}$ in the sense of \eqref{eq:gen_omega_def}.
	\end{enumerate}
\end{claim}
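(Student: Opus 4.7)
My plan is to handle the two items of the claim separately, leaning on the reduction of every street factor to a Laurent monomial in $M$, $V$, $W$ that has already been extracted from the abelian six-way rules \eqref{eq:abelian_rules} combined with the 180-degree rotational symmetry of Fig.~\ref{fig:(3_2)_diagram}.

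For part (1), I would first invoke a version of the \textit{No Holes} condition (the analogue for $(3,2|3)$-herds of the condition imposed on $(m-1,1|m)$-herds, modeled on \cite[App.~C.3]{wwc}) to conclude that every street factor lies in $\formal{\mathbb{Z}}{\mathbb{Z}_{\geq 0} \twid{\gamma}}$ for a single primitive $\gamma \in \Gamma$, unique up to sign, exactly as in Remark~\ref{rmk:gen_choice}. To identify this $\gamma$ with $\gamma_c = 3\gamma_1 + 2\gamma_2$, I would then appeal to the defining property of an $(a,b|m)$-herd: its degenerate skeleton must continuously deform, by collapsing the type-13 (fuchsia) streets to zero length, into a pair of saddle connections with three transverse intersections carrying charges proportional to $\gamma_1$ and $\gamma_2$ (with multiplicities encoded by the $(3,2)$ data). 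The sum of the charges of the limiting diagram pins down the primitive class on the ray to be $3\gamma_1 + 2\gamma_2$, and the sign in $z := -X_{\twid{\gamma}_c}$ is then determined by the standard-lift conventions of Appendix~\ref{app:signs}, just as for the $(m-1,1|m)$-herd case of Prop.~\ref{prop_(m-1,1)_herds}.

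For part (2), the idea is that the street factor attached to a two-way street terminating on a branch point of the herd is, by the general $\mathcal{K}$-wall/spectral-generator correspondence (used in the derivation of Prop.~\ref{prop_Q} in \cite{wwc}), exactly the generating series $\gdt_{3/2}$ for the sequence $(\Omega(n \gamma_c))_{n=1}^\infty$ in the sense of \eqref{eq:gen_omega_def}. Since every street factor in the $(3,2|3)$-herd is already known to be a Laurent monomial in $M$, $V$, $W$, the remaining task is a finite bookkeeping exercise: starting from the branch-point boundary conditions and propagating \eqref{eq:abelian_rules} through the skeleton, one identifies the Laurent monomial on each terminal street with $MVW$ itself, thereby establishing $\gdt_{3/2} = MVW$.

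The main obstacle is the charge-tracking step inside the \textit{No Holes} argument of part (1). For $m$-herds and $(m-1,1|m)$-herds, the propagation of the single-variable property through the skeleton is immediate because each building block has only one fully valence-six joint. In the $(3,2|3)$-herd, by contrast, the candidate building block of Fig.~\ref{fig:(3_2)_building_block} carries two coupled valence-six joints, so one must check that the charge labels on \emph{every} internal two-way street remain collinear with $\gamma_c$ rather than developing a component along some transverse sublattice of $\Gamma$. Once this collinearity is verified throughout the skeleton, part (2) follows automatically from the general dictionary between terminal street factors and BPS generating series.
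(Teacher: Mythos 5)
There are two genuine gaps. First, in part (1) your identification of the supported charge does not work: collapsing the type-$13$ streets to zero length produces the \emph{same} limiting configuration (two saddle connections with three transverse intersections, Fig.~\ref{fig:saddle_conns}) for the $3$-herd, the $(2,3|3)$-herd and the $(3,2|3)$-herd alike, so ``the sum of the charges of the limiting diagram'' is $\gamma_1+\gamma_2$ and cannot pin down the multiplicities $(3,2)$. The coefficients live in how the two-way streets lift and wind on $\Sigma$, which is exactly the information destroyed by the collapse; the paper instead gets the single-variable property from the fact that the network is a $\CW$-network off any wall of marginal stability, and then identifies $\gamma_c=3\gamma_1+2\gamma_2$ by an order-by-order computation of the street factors with the six-way junction equations. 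The ``collinearity through the two coupled valence-six joints'' issue you flag at the end is precisely this step, and your proposal leaves it unresolved rather than closing it.

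Second, in part (2) the ``dictionary'' you invoke—that the street factor on a street terminating on a branch point \emph{is} $\gdt_{3/2}$—is not the correct statement and would be inconsistent here: in the $(3,2|3)$-herd the streets emanating from the type-$12$ branch points carry $(MVW)^3$ while those emanating from the type-$23$ branch points carry $(MVW)^2$, so your bookkeeping would force $\gdt_{3/2}$ to equal both, a contradiction. The correct extraction goes through \eqref{eq:Q-exp}--\eqref{eq:L_omega}: writing $MVW=\prod_n\bigl(1-(X_{\twid{\gamma}_c})^n\bigr)^{c_n}$, one forms the closed $1$-chains $L(n\gamma_c)=\sum_p\alpha_n(p)\,\ell(p)$ and uses $[L(n\gamma_c)]=n\gamma_c\,\Omega(n\gamma_c)$; since any $1$-chain on the lift representing $\gamma_c=3\gamma_1+2\gamma_2$ must cross the type-$12$ branch points three times and the type-$23$ branch points twice, the exponents $3$ and $2$ on the terminal street factors give $c_n=n\,\Omega(n\gamma_c)$, i.e. $\gdt_{3/2}=MVW$ in the sense of \eqref{eq:gen_omega_def}. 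Without this homological weighting step (or the equivalent intersection-number formula in the proof of Corollary~\ref{cor:alg_BPS}), the claimed identification cannot be reached by propagating \eqref{eq:abelian_rules} alone.
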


To argue the first claim we recall that a $(3,2|3)$-herd appears as a $\CW$-network off of a wall of marginal stability; thus, all generating series are formal series in the formal variable $X_{\twid{\gamma_{c}}}$ assigned to the standard lift $\twid{\gamma_{c}} \in \twid{\Gamma}$ of a unique (up to sign) charge $\gamma_{c} \in \Gamma$.  (For reasons expressed in Appendix \ref{app:signs}, it is more convenient express the street factors as formal series in the formal variable $z = -X_{\twid{\gamma_{c}}}$).  Using the six-way junction equations (c.f. \cite{wwc}), an order-by-order computation of street factors is consistent with this fact and shows that $\gamma_{c} = 3\gamma_{1} + 2\gamma_{2}$.  This shows the first claim.

To argue the second claim we recall some essential features of the machinery that produces the BPS indices from a spectral network.  Let $N$ be a spectral network, $\text{str}(N)$ be the set of streets of $N$, and  $\ell: \mathrm{str}(N) \rightarrow C_{1}(\Sigma; \mathbb{Z})$ the map describing the ``lift" of each street to a 1-chain on $\Sigma$.  Assume each street-factor is a formal series in some variable $X_{\twid{\gamma_{c}}}$ with integer coefficients.  Now for each $p \in \mathrm{str}(N)$ define a sequence of integers $\left(\alpha_{n}(p)\right)_{n =1}^{\infty} \subset \mathbb{Z}$ by a factorization of the street-factor $Q(p)$:

	\begin{equation}
		Q(p) = \prod_{n =1}^{\infty} \left(1 - \left(X_{\twid{\gamma}_{c}} \right)^{n} \right)^{\alpha_{n}(p)},
		\label{eq:Q-exp}
	\end{equation}
then we define a collection of 1-chains $\{L(n \gamma_{c})\}_{n=1}^{\infty} \subset C_{1}(\Sigma; \mathbb{Z})$ via
\begin{align}
	L(n \gamma_{c}) &:= \sum_{p \in \mathrm{str}(N)} \alpha_{n}(p) \ell(p).
	\label{eq:L_def}
\end{align}
It can be shown that $L(n \gamma_{c})$ is closed and its homology class is a multiple of $n \gamma_{c}$---in fact, the BPS index is related to the homology class $[L(n \gamma_{c})] \in H_{1}(\Sigma; \mathbb{Z})$ via
\begin{align}
	[L(n \gamma_{c})] &= n \gamma_{c} \Omega(n \gamma_{c}).
	\label{eq:L_omega}
\end{align}

Turning back to the case of the $(3,2|3)$-herd, define $c_{n}$ as the exponents in the following expansion
\begin{align*}
	MVW &= \prod_{n = 1}^{\infty} \left( 1 - \left(X_{\twid{\gamma}_{c}} \right)^{n} \right)^{c_{n}}\\
	&= \prod_{n = 1}^{\infty} \left( 1 - (-1)^{n} z^{n} \right)^{c_{n}}.
\end{align*}
Note that any 1-chain supported on the lift of the spectral network and representing the homology class $\gamma_{1}$ must project down to a 1-chain passing through the branch points of type $12$ (emanating from the red-streets in Fig.~\ref{fig:(3_2)_diagram}); similarly, any 1-chain on the lift of the spectral network representing the homology class $\gamma_{2}$ must project down to a 1-chain passing through the branch points of type $23$ (emanating from the blue-streets in Fig.~\ref{fig:(3_2)_diagram}).  Hence, any 1-chain representing the cohomology class $\gamma_{c} = 3\gamma_{1} + 2 \gamma_{2}$ must project down to a path passing three times through the branch points of type 12, and two times through the branch points of type 13.  With this observation and the fact that the street-factors attached to the red and blue streets emanating from branch points are given by $\left(MVW\right)^3$ and $\left(MVW \right)^2$ respectively, then from  \eqref{eq:L_def} and \eqref{eq:L_omega} it follows that
\begin{align*}
	c_{n} &= n \Omega(n \gamma_{c});
\end{align*}
the second claim follows.

\begin{figure}[t!]
	\begin{center}
		 \includegraphics[scale=0.55]{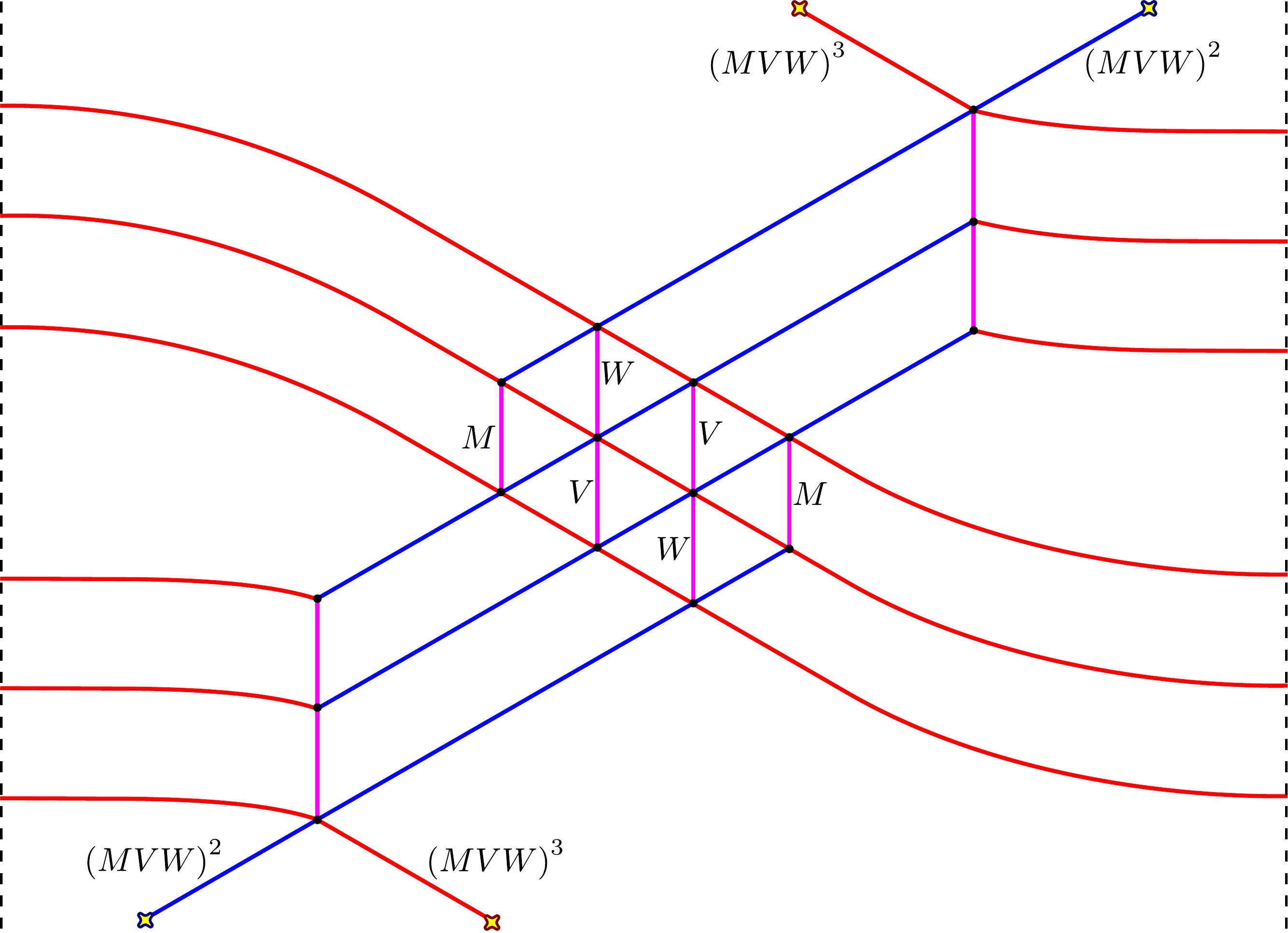}
		\caption{Two-way streets of a $(3,2|3)$-herd. \label{fig:(3_2)_diagram}}
	\end{center}
\end{figure}

\begin{figure}[t!]
	\begin{center}
		 \includegraphics[scale=0.55]{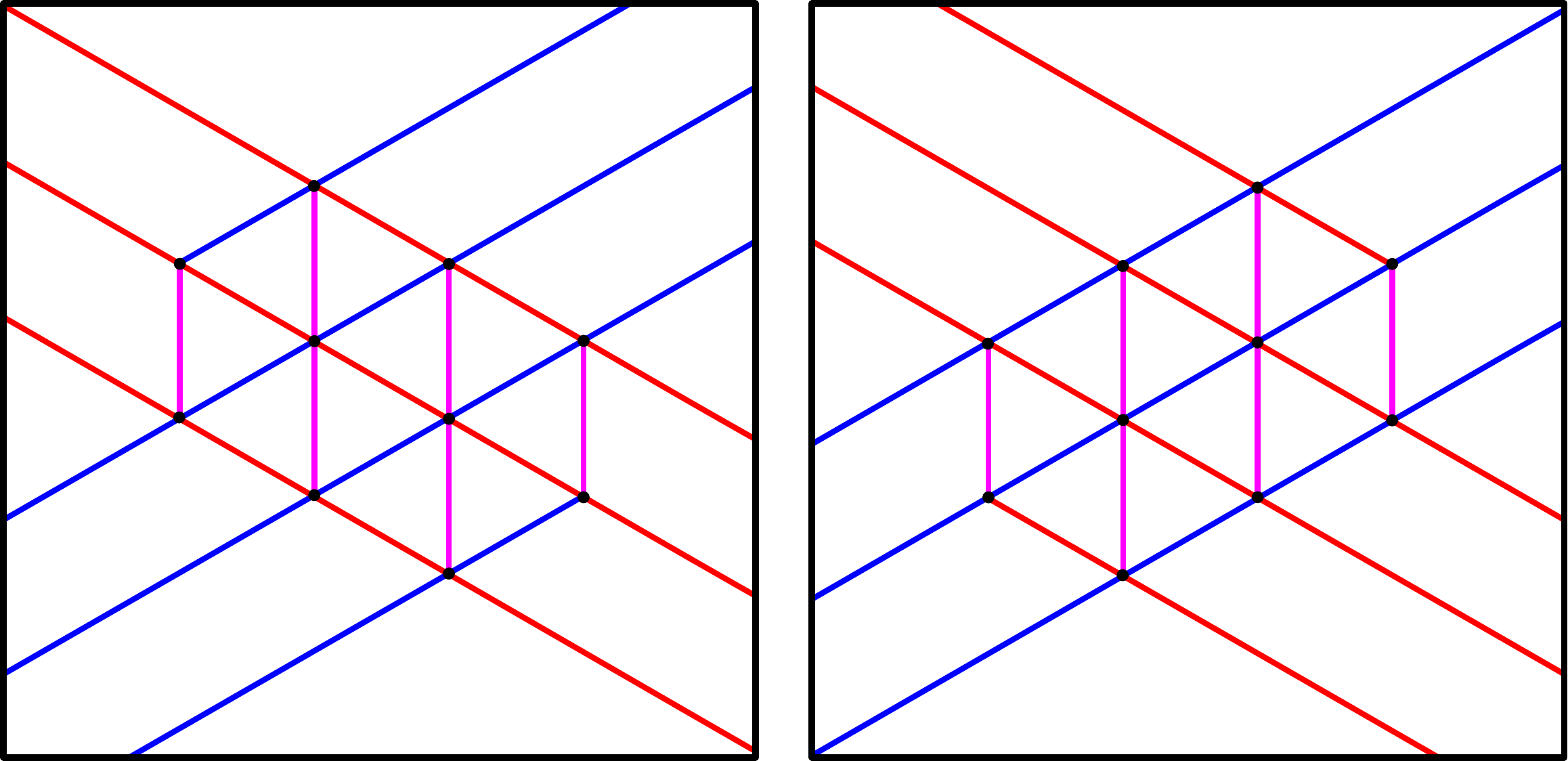}
		\caption{Candidate building-blocks for ``higher" networks. \label{fig:(3_2)_building_block}}
	\end{center}
\end{figure}

As sketched in the appendix, using the six-way junction equations of \cite{wwc} along with some order-by-order numerical calculations, one can derive the following system of algebraic relations:
\begin{equation}
	\begin{aligned}
	M &= 1 + z M^{4} \left\{ (1 + V) (1 + V - W)^2 [V^2(1+W) - 1]^{3} \right\}\\
	0 &= (-1 + V) (1 + V)^2 + (1 + V^3) W - V (M + V) W^2\\
	0 &=  V \left(V^2 -1 \right) - \left[M (V + 1) + V(V-2) - 1 \right] W.
	\end{aligned}
	\tag{\ref{eq:func_eqs}}
\end{equation}
Of course, there is a plethora of ways of rewriting \eqref{eq:func_eqs}, the above is a result of what the author found the most pleasing.  In other words, if we define three polynomials\footnote{polynomials in the variables $m,v,$ and $w$ and coefficients in $\mathbb{Z}[z]$.} in $\left(\mathbb{Z}[z] \right)[m,v,w]$
\begin{align*}
	\mathcal{G} &:= 1 + z m^{4} \left\{ (1 + v) (1 + v - w)^2 [v^2(1+w) - 1]^{3} \right\} - m\\
	\mathcal{M} &:= (-1 + v) (1 + v)^2 + (1 + v^3) w - v (m + v) w^2\\
	\mathcal{N} &:= v \left(v^2 -1 \right) - \left[m (v + 1) + v(v-2) - 1 \right] w
\end{align*}
then $(m,v,w) = (M,V,W)$ must provide a simultaneous root of $\mathcal{G},\, \mathcal{M}$, and $\mathcal{N}$.  There are, in fact, forty-two\footnote{Making a tempting relation to the meaning of life, the universe, and everything.} simultaneous root tuples; only one of which can be identified as the tuple of street-factors $(M,V,W)$.  Indeed, any tuple of simultaneous roots that has the interpretation as a street-factor must be a tuple of formal power series, each with constant coefficient 1.  One can check that there exists precisely one such simultaneous root tuple; we can produce it by substituting such a formal power series ansatz (e.g. $M = 1 + \sum_{n=1}^{\infty} m_{n} z^{n}$) into \eqref{eq:func_eqs} and solving order-by-order in $z$.
The first few terms are given by,

\begin{equation}
	\begin{aligned}
	M &= 1 + 2 z + 146 z^2 + 15824 z^3 + 2025066 z^4 + 284232734 z^5 +  42316425168 z^6 + \mathcal{O}(z^{7}),\\
	V &= 1+4 z+ 324 z^2+36224 z^3+4704404 z^4+665965148 z^5+99703601696 z^6 + \mathcal{O}(z^{7}),\\
	W &= 1+7 z+ 514 z^2+55685 z^3+7121694 z^4+999071727 z^5+148683258448 z^6 + \mathcal{O}(z^{7}).
	\end{aligned}
	\label{eq:MVW_expansion}
\end{equation}
Hence, 
\begin{align*}
	\gdt_{3/2} = 1+13z + 1034 z^2 + 115395 z^3 + 14986974 z^4 + 2122315501 z^5 + 317853709072 z^6 + \mathcal{O}(z^{7}).
\end{align*}
	Using \eqref{eq:omega_invert}, the first few BPS indices are
\begin{align*}
	\left(\Omega \left[n \left(3\gamma_{1} + 2 \gamma_{2} \right) \right] \right)_{n =1}^{6} &= \left(13, -478, 34227, -3279848, 367873950, -45602813070 \right)
\end{align*}
The first three values of this sequence are consistent with BPS indices obtained from the \texttt{HiggsBranchFormula} function\footnote{As mentioned in the documentation for \texttt{CoulombHiggs.m}, the \texttt{HiggsBranchFormula} function is based on Reineke's work in \cite{reineke:hn_system}.  Values of \eqref{eq:omega_invert} were not checked with this package for $n \geq 4$ due to the large computation-time required.} in the Mathematica package \texttt{CoulombHiggs.m} \cite{mps:gen_quiv}, based on formulae developed by Reineke \cite{reineke:hn_system} and Manschot-Pioline-Sen \cite{mps:1,mps:2,mps:3}.

Using an expansion of $\gdt_{3/2}$ out to the coefficient multiplying $z^{K}$, we can reliably extract the first $K$ BPS indices from \eqref{eq:omega_invert}.  This was performed for $K = 1360$ and the resulting BPS indices can be found in the file $\mathtt{DTList1360.csv}$ included with the arXiv preprint version of this paper.

In the next section we discuss the large $n$ asymptotics of BPS indices using an algebraic equation satisfied by $\gdt_{3/2}$.

\subsection{Asymptotics of BPS indices associated to dimension vectors \texorpdfstring{$(3n,2n)$}{(3n,2n)}} \label{sec:asymp_DT_3_2}
With a bit of elimination theory, the algebraic relations \eqref{eq:M_eq_prelim} and the relation $\gdt_{3/2} = MVW$ can be reduced to a single algebraic relation between $z$ and $\gdt := \gdt_{3/2}$:
\begin{equation}
	\begin{aligned}
 0 	=& - (1 + z) + \gdt (4 - 5 z) +  \gdt^2( - 6 + z) + \gdt^3(4 + 21 z) +\\
 	& \gdt^4 ( - 1 - 34 z)  - \gdt^5 \left(7 z \right)  + \gdt^6 \left(76 z + z^2\right)  + \gdt^7 \left( - 64 z - 13 z^2\right) +\\
 	& \gdt^8 \left(6 z - 114 z^2\right)  + \gdt^9 \left(7 z - 80 z^2\right)  + \gdt^{10} \left(6 z^2 \right)  + \gdt^{11} \left(119 z^2 \right)  +\\
 	& \gdt^{12} \left(53 z^2 + z^3\right)  + \gdt^{13} \left( - 55 z^2 + 44 z^3\right) +  \gdt^{14} \left( - 21 z^2 - 38 z^3\right) + \gdt^{15} \left( 77 z^3 \right)  -\\
 	& \gdt^{16} \left(382 z^3\right)  +  \gdt^{17} \left(270 z^3 \right)  +  \gdt^{18} \left(80 z^3 - z^4\right) + \gdt^{19} \left(35 z^3 + 7 z^4\right)  +\\
 	& \gdt^{20} \left(39 z^4\right)  - \gdt^{21} \left( 367 z^4 \right)  - \gdt^{22} \left(173 z^4 \right) -
  \gdt^{23} \left(30 z^4 \right) -\\
  	&  \gdt^{24} \left(35 z^4\right) +  \gdt^{25} \left(3 z^5 \right)-  \gdt^{26} \left( 17 z^5 \right) -  \gdt^{27} \left(77 z^5\right) -\\
  	& \gdt^{28}  \left(14 z^5 \right) +  \gdt^{29}  \left(21 z^5 \right) -  \gdt^{32} \left(3 z^6 \right) + \gdt^{33} \left(9 z^6\right) -\\
 	&  \gdt^{34} \left(7 z^6 \right) +\gdt^{39} z^7.
 	\end{aligned}
 	\label{eq:3_2_dt_relation}
\end{equation}

With this algebraic relation, using the techniques developed in \S \ref{sec:alg_asymp} we can determine an explicit form for the $n \rightarrow \infty$ asymptotics of $\Omega \left[n \left(3\gamma_{1} + 2 \gamma_{2} \right) \right]$.  Indeed, using Corollary \ref{cor:single} and \eqref{eq:3_2_dt_relation} we have the $n \rightarrow \infty$ asymptotics\footnote{Written in a different form, the ``exponential part" of the asymptotics can be expressed as $\rho^{-n} = e^{Kn}$ for $K = - \log(\rho) \approx 5.27187$.}
\begin{align}
	\Omega \left[n \left(3\gamma_{1} + 2 \gamma_{2} \right) \right] \sim C (-1)^{n+1} n^{-5/2} \rho^{-n} + \mathcal{O}(n^{-7/2} \rho^{-n})
	\label{eq:3_2_herd_asymptotics}
\end{align}
where
\begin{align*}
	\rho &\approx 0.005134\\
	C &\approx 0.075084.
\end{align*}
The derivation of these asymptotics is elaborated on in Example \ref{ex:3_2_asy} at the end of \S \ref{sec:generating_asymptotics}.  More precisely, $\rho$ is an algebraic number given by the smallest-magnitude root of the 10th degree polynomial \eqref{eq:d_2}.

Note that, defining the sequence,
\begin{align}
	R_{n} := 1 - \frac{\Omega \left[n \left(3\gamma_{1} + 2 \gamma_{2} \right) \right]}{C (-1)^{n+1} n^{-5/2} \rho^{-n}}
	\label{eq:R_n_def_DT}
\end{align}
the asymptotics \eqref{eq:3_2_herd_asymptotics} are equivalent to the statement that $R_{n} \in \mathcal{O}(n^{-1})$.  A plot of $R_{n}$---displaying the $\mathcal{O}(n^{-1})$ behaviour---is shown in Fig.~\ref{fig:DT_asymptotics_error} for $1 \leq n \leq 1360$.  In fact, for the range of $n$ shown in this plot, $0 < R_{n} \leq c n^{-1}$ for any constant $c \geq 0.12$.

\begin{figure}[t!]
	\begin{center}
		 \includegraphics[scale=0.6]{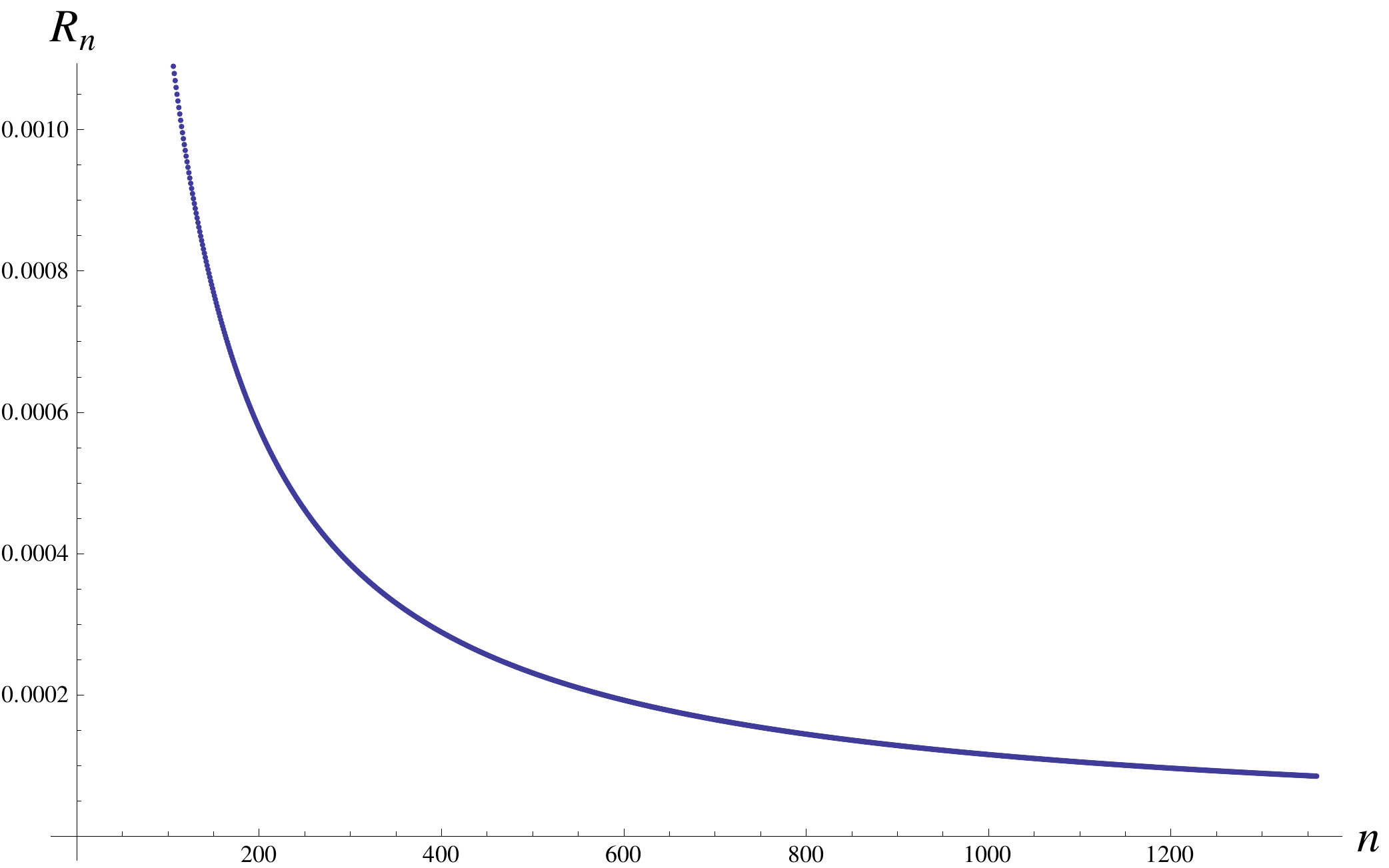}
		\caption{Plot of $R_{n}$ (defined in \eqref{eq:R_n_def_DT}) vs. $n$ for $1 \leq n  \leq 1360$. \label{fig:DT_asymptotics_error}}
	\end{center}
\end{figure}

\section{Euler Characteristics of Stable Moduli and their Asymptotics} \label{sec:stab_eul}

\subsection{Setup} \label{sec:stab_eul_setup}
Fix $m \geq 1$ an integer, then $m$-Kronecker quiver will be represented as the quiver with two vertices $q_{1}, q_{2}$ and $m$ arrows between them.
\begin{center}
\begin{tikzpicture}
\tikzstyle{block} = [rectangle, draw=blue, thick, fill=blue!10,
text width=16em, text centered, rounded corners, minimum height=2em]

\node at (1.2, 0)[circle,draw=blue,very thick] (second) {$q_{2}$};

\node at (-1.2, 0)[circle,draw=blue,very thick] (first) {$q_{1}$};

\node at (0,0.1) {{\Huge \vdots}};

\draw [decorate,decoration={brace, amplitude=6pt, mirror},xshift=0pt,yshift=-1pt,red,thick]
(-1.2,-0.9) -- (1.2,-0.9) node[black, midway,below,yshift = -6pt]{{\tiny $m$ arrows}};
\draw[-latex] (second) to  [bend right = 35] (first);
\draw[-latex] (second)  to  [bend right = 60] (first);
\draw[-latex] (second)  to  [bend right = -35] (first);
\draw[-latex] (second)  to  [bend right = -60] (first);
	\end{tikzpicture}
\end{center}
As described in \cite[\S 5.1]{reineke:quiv_rep}, and appendix \ref{app:quiv_rep}, there is a single \textit{non-trivial} stability condition that one can put on representations of this quiver.  Fix this non-trivial condition, define $\Lambda := \mathbb{Z} \langle q_{1}, q_{2} \rangle$, and denote the corresponding moduli space of stable representations with dimension vector $\alpha q_{1}^{*} + \beta q_{2}^{*} \in \Lambda^{*}$ by $\CM^{m}_{\st}(\alpha,\beta)$---it is a smooth, complex, quasi-projective variety.

Now, fix $(a,b) \in \mathbb{Z}^{2}$ coprime.  We are interested in studying the sequence of Euler characteristics
\begin{align*}
	\chi_{n} &:= \chi \left( \CM^{m}_{\st}(na,nb) \right)
\end{align*}
and DT invariants
\begin{align*}
	d_{n} &:= d(na,nb,m).
\end{align*}
To do so, we encapsulate these sequences' generating series in such a way that they arise as Euler-product factorizations.  For the Euler characteristics we define,
\begin{align*}
	\geul_{a/b} &:= \prod_{n = 1}^{\infty} (1 - z^{n})^{-n \chi_{n}} \in \formal{\mathbb{Z}}{z}
\end{align*}
and for the DT invariants:
\begin{align}
	\gdt_{a/b} &:= \prod_{n =1}^{\infty} \left(1 - \left((-1)^{N} z \right) \right)^{n d_{n}} \in \formal{\mathbb{Z}}{z}
	\label{eq:T_a_b}
\end{align}
where
\begin{align*}
	N &:= m a b - a^2 - b^2 = \dim \left(\CM^{(a,b)}_{\st} \right) - 1
\end{align*}
Because $(a,b)$ is fixed, without fear of confusion, we will write $\gdt = \gdt_{a/b}$ and $\geul = \geul_{a/b}$ for the remainder of this section.  The reason for the sign $(-1)^{N}$ in \eqref{eq:T_a_b} is at first sight an unnecessary complication, unless the formal variable $z$ has some sort of meaning in the appropriate context.  Indeed, secretly $z$ is leading a double-life as a generator of the group-ring $K_{0}(\mathsf{Rep}_{\mathbb{C}}(K_{m})) \cong \mathbb{Z}[\Lambda]$---but we will not expose $z$ for what it is, as we will not need to use this fact in any meaningful way.

Of course, we have intentionally confused the notation used for the generating series of BPS indices $\Omega \left[n \left(a \gamma_1 + b \gamma_2 \right) \right]$ at an $m$-wild point, and the generating series for the $d(na,nb,m)$; as mentioned in the introduction (c.f. \S \ref{sec:intro_BPS_indices}), BPS indices are a priori defined via super-traces over Hilbert spaces.  When the theory has a BPS quiver, however, strong physical evidence indicates that the BPS indices are precisely the DT invariants associated to the quiver (where the stability condition is derived from the central-charge function, and a quiver-potential may be present).   Hence, if $m$-Kronecker quiver arises as a BPS subquiver at a point $u \in \CB$, the nodes $q_{1}$ and $q_{2}$ can be identified with two charges $\gamma_1$ and $\gamma_2$ in $\Gamma = \widehat{\Gamma}_{u}$ such that $\langle \gamma_{1}, \gamma_{2} \rangle = m$; in particular, the sub-lattice $\mathbb{Z}\langle \gamma_{1}, \gamma_{2} \rangle$ of the full charge lattice $\Gamma$ can be identified with $\Lambda$.

On the other hand, the existence of a physical role for Euler characteristics (of stable moduli) associated to non-primitive dimension vectors is an interesting open question; nevertheless there appears to be a relationship between DT invariants and Euler characteristics of stable-moduli.

\subsection{Reineke's Functional Equation and Algebraicity for Stable Kronecker Moduli} \label{sec:reineke_func}

The key element of Reineke's proof of the integrality of the DT invariants $d(a,b,m)$ is the functional equation \cite{reineke:integrality} relating Euler characteristics and DT invariants:
\begin{align}
	\gdt &= \geul \circ \left(z \gdt^{N} \right)
	\label{eq:reineke_func}
\end{align}
where $\circ$ indicates composition of formal series.  As written, \eqref{eq:reineke_func} is a functional equation with a ``recursive flavour"; however, it is equivalent to a statement about compositional inverses of formal series.  To see this we will reverse-engineer some of Reineke's work, beginning by recalling some facts about composition of formal series.

	First, recall that formal power series $\formal{\mathbb{Z}}{z}$ form a ring under addition ($+$) and multiplication $(\cdot)$ that is, furthermore, equipped with a partially-defined composition $(\circ)$ operation (analogous the composition of functions) given by substitution of formal series.  Specifically, letting $(z) \subset \formal{\mathbb{Z}}{z}$ denote the ideal of formal series with vanishing constant coefficient, composition is a map
\begin{align*}
	\circ: \formal{\mathbb{Z}}{z} \times (z) \rightarrow \formal{\mathbb{Z}}{z}
\end{align*}
satisfying (for $A,B$ and $C$ taken in the proper domain of definition for $\circ$):
\begin{itemize}
	\item $0 \circ C = 0$ and $A \circ 0 = 0$;
	\item $(A + B) \circ C = (A \circ C) + (B \circ C)$;
	\item $(A \cdot B) \circ C = (A \circ C) \cdot (B \circ C)$;
	\item $(A \circ B) \circ C = A \circ (B \circ C)$;
	\item $A \circ z = A$, and $z \circ C = C$.
\end{itemize}
It is defined by the property that, for any $g \in (z)$, the map $(\cdot) \circ g: \formal{\mathbb{Z}}{z} \rightarrow \formal{\mathbb{Z}}{z}$ is the unique morphism of rings such that $(\cdot) \circ g: z \mapsto g$; in other words, $f \circ g$ is the formal series defined by substituting every instance of $z$ in $f$ with $g$.  Note that, in order for such a substitution to give a well-defined formal power series for general $f \in \formal{\mathbb{Z}}{z}$, the series $g$ must have zero constant coefficient\footnote{Otherwise the resulting series $f \circ g$ would have constant coefficient given by an infinite sum of elements of whatever ring we are working over.}---which is why we only pre-compose by formal series in $(z)$.

Now, the restricted composition $\circ: (z)^{\times 2} \rightarrow (z)$ equips $(z)$ with the structure of a composition ring with compositional identity given by $z \in (z)$---so it makes sense to ask whether or not a formal series in $S \in (z)$ has a compositional inverse.\footnote{Every left inverse must be a right inverse and vice-versa.}  It can be checked that such an inverse exists if and only if $[z^{1}] S = \pm 1$ (more generally, $[z^1] S$ should be a unit in the the coefficient ring).  Every such $S$ can be decomposed as $S = zF$ for some $F \in \formal{\mathbb{Z}}{z}$ with $[z^{0}] F = \pm 1$ and the existence of a compositional inverse is the statement that there exists $G \in (z)$ satisfying
\begin{align*}
	(z F) \circ G = z.
\end{align*}
We can manipulate this equation into a form with a more ``recursive flavour": because $F$ has constant coefficient in the ring of units of $\mathbb{Z}$, then $F$ admits multiplicative inverse $F^{-1} \in \formal{\mathbb{Z}}{z}$.  Using this fact, it follows that $(z F) \circ G = z$ is equivalent to the functional equation
\begin{align*}
	G = z \cdot \left(F^{-1} \circ G  \right).
\end{align*}
With this observation in mind, we are ready to rewrite Reineke's functional equation.  Indeed, define the compositionally invertible series
\begin{align*}
	D := z \gdt^{N};
\end{align*}
then \eqref{eq:reineke_func} can be written as
\begin{align*}
	D &= z \cdot \left(\geul^{N} \circ D \right);
\end{align*}
so via our discussion above, \eqref{eq:reineke_func} is equivalent to the statement that $D = z \gdt^{N}$ is the compositional inverse of $z \geul^{-N}$.  We repeat this fact in the proof of the theorem below.

\begin{theorem} \label{thm:dt_euler_alg} \
	\begin{enumerate}
		\item If $\gdt$ obeys the relation $f(z, \gdt) =0$ in $\formal{\mathbb{Z}}{z}$ for some $f \in \mathbb{Z}[z,t]$, then $\geul$ obeys the relation $f(z \geul^{-N}, \geul) = 0$.
		\item If $\geul$ obeys the relation $f(z, \geul) = 0$ in $\formal{\mathbb{Z}}{z}$ for some $f \in \mathbb{Z}[z,e]$, then $\gdt$ obeys the relation $f(\gdt, z \gdt^{N}) = 0$.
	\end{enumerate}
\end{theorem}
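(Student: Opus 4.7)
The plan is to leverage the reformulation, established in the preceding discussion, of Reineke's functional equation $\gdt = \geul \circ (z \gdt^{N})$ as the statement that $D := z \gdt^{N}$ and $E := z \geul^{-N}$ are mutually compositional inverses in $(z) \subset \formal{\mathbb{Z}}{z}$. Both series lie in $(z)$ with $[z^1]$-coefficient $1$, and the multiplicative inverses $\geul^{-1}, \gdt^{-1}$ are well-defined in $\formal{\mathbb{Z}}{z}$ because $\geul$ and $\gdt$ have constant term $1$ (immediate from their Euler-product definitions). The key mechanism is that for any $G \in (z)$ the substitution map $(\cdot)\circ G : \formal{\mathbb{Z}}{z} \to \formal{\mathbb{Z}}{z}$ is a ring homomorphism, so it carries polynomial relations to polynomial relations.

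For part (1), I would simply apply $(\cdot)\circ E$ to the hypothesized relation $f(z, \gdt) = 0$. Since $f \in \mathbb{Z}[z, t]$ is a polynomial and $(\cdot)\circ E$ is a ring homomorphism sending $z \mapsto E$, one obtains $f(E,\, \gdt \circ E) = 0$. The second slot telescopes via Reineke together with $D \circ E = z$:
\[
\gdt \circ E \;=\; (\geul \circ D) \circ E \;=\; \geul \circ (D \circ E) \;=\; \geul \circ z \;=\; \geul.
\]
Substituting the explicit form $E = z \geul^{-N}$ yields exactly $f(z \geul^{-N},\, \geul) = 0$, as claimed.

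Part (2) is the mirror computation: apply the ring homomorphism $(\cdot) \circ D$ to $f(z, \geul) = 0$, which gives $f(D,\, \geul \circ D) = 0$; Reineke's equation directly supplies $\geul \circ D = \gdt$, and substituting $D = z\gdt^{N}$ produces the claimed algebraic relation on $\gdt$. In truth there is no substantial obstacle; the whole proof is a ``change of coordinates'' along the composition-inverse pair $(D, E)$, and the upshot is that algebraicity of $\gdt$ and algebraicity of $\geul$ are manifestly interchangeable. The only care required is formal bookkeeping: checking that the inner series have zero constant term so composition is legitimate, and that $\geul^{-1}, \gdt^{-1}$ make sense as power series --- both of which reduce to the single observation $\gdt, \geul \in 1 + (z)$, which is built into the definitions. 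No analytic input (convergence, continuity, etc.) is needed beyond the ring structure of $\formal{\mathbb{Z}}{z}$ and its formal composition.
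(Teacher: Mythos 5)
Your proof is correct and is essentially the paper's own argument: both rest on recasting Reineke's equation as the statement that $D = z\gdt^{N}$ and $E = z\geul^{-N}$ are compositional inverses and on the fact that $(\cdot)\circ G$ is a ring homomorphism, the only (immaterial) difference being that you push the hypothesis forward along the inverse (e.g. computing $\gdt\circ E = \geul$), whereas the paper pre-composes the target relation $f(z\geul^{-N},\geul)$ with $D$ and then cancels $D$ by compositional invertibility. Note that your part (2) computation actually yields $f(z\gdt^{N},\gdt)=0$ rather than the printed $f(\gdt,z\gdt^{N})=0$; the slope-$1$ example ($\geul_{1}=(1-z)^{-m}$ recovering $\gdt_{1}(1-z\gdt_{1}^{m-2})^{m}=1$) confirms that your version is the intended statement and that the theorem as printed merely has its two slots transposed, so this is a typo in the paper rather than a gap in your argument.
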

\begin{proof}
For the first statement; taking $D = z \gdt^{N}$ as above, we have
	\begin{align*}
		f(z \geul^{-N}, \geul) \circ D &= f \left[D \cdot \left( \geul \circ D \right)^{-N}, \geul \circ D\right]\\
		&= f \left (z  \gdt^{N} \cdot \left( \gdt \right)^{-N}, \gdt \right)\\
		&= f(z, \gdt)\\
		&= 0.
	\end{align*}
	where on the second line we used Reineke's functional equation \eqref{eq:reineke_func}.  As $D$ is compositionally invertible, then $f(w \geul^{-N}, \geul) = 0$.  The second statement follows by a similar computation.
\end{proof}

This results in the obvious corollary.
\begin{corollary}
	$\geul$ is algebraic over $\mathbb{Q}(z)$ if and only if $\gdt$ is algebraic over $\mathbb{Q}(z)$.
\end{corollary}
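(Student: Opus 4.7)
The plan is that this corollary is a direct bookkeeping consequence of Theorem~\ref{thm:dt_euler_alg}, requiring only (i) clearing denominators in $z$ to pass from a $\mathbb{Q}(z)$-polynomial relation to a $\mathbb{Z}[z]$-polynomial relation, and (ii) clearing any negative powers of $\geul$ (or $\gdt$) introduced by the substitutions in that theorem. The key observation making (ii) free is that both $\gdt$ and $\geul$ lie in $1 + z\formal{\mathbb{Z}}{z}$, hence are units in $\formal{\mathbb{Z}}{z}$; multiplying a vanishing formal relation by any power of a unit preserves both its vanishing and its nontriviality.

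For the forward direction, suppose $\gdt$ is algebraic over $\mathbb{Q}(z)$, so that some nonzero $p(t) \in \mathbb{Q}(z)[t]$ annihilates $\gdt$. After clearing the common denominator of the coefficients of $p$, I obtain $f(z,t) = \sum_{i,j} a_{ij} z^i t^j \in \mathbb{Z}[z,t]$ with $f(z,\gdt) = 0$. Theorem~\ref{thm:dt_euler_alg}(1) then yields
\begin{align*}
	0 \;=\; f(z\geul^{-N},\,\geul) \;=\; \sum_{i,j} a_{ij}\, z^{i}\, \geul^{\,j - Ni}.
\end{align*}
Set $K := \max\bigl(\,0,\; \max\{Ni - j : a_{ij} \neq 0\}\,\bigr)$; multiplying through by $\geul^{K}$ produces a polynomial $g(z,e) := \sum_{i,j} a_{ij}\, z^{i}\, e^{\,j - Ni + K} \in \mathbb{Z}[z,e]$ with $g(z,\geul) = 0$. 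Nontriviality of $g$ is immediate since $\geul^{K}$-multiplication is injective on $\formal{\mathbb{Z}}{z}$, so $g$ could vanish identically only if $f$ did.

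The reverse direction is even simpler: starting from $f(z,\geul) = 0$ for some $f(z,e) \in \mathbb{Z}[z,e]$, Theorem~\ref{thm:dt_euler_alg}(2) gives $f(\gdt,\,z\gdt^N) = 0$. Expanding, each monomial $a_{ij} z^i e^j$ contributes $a_{ij}\, \gdt^{\,i}\, (z\gdt^N)^{j} = a_{ij}\, z^{j}\, \gdt^{\,i + Nj}$, which involves only non-negative powers of $\gdt$; so the relation is already a polynomial identity $h(z,\gdt) = 0$ with $h(z,t) \in \mathbb{Z}[z,t]$, and no post-processing is needed.

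The only step requiring care is the exponent bookkeeping in the forward direction; there is no analytic or combinatorial obstacle, because Theorem~\ref{thm:dt_euler_alg} has already absorbed all of the structural content (Reineke's functional equation \eqref{eq:reineke_func} repackaged as a compositional inversion identity). In this sense the corollary is formal, but it is the ``geometric'' statement one wants: algebraicity is an invariant attached to the slope-$(a/b)$ sector, not separately to the DT and Euler-characteristic generating series.
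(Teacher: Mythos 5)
Your proposal is correct and follows essentially the same route as the paper, which treats the statement as an immediate consequence of Theorem~\ref{thm:dt_euler_alg} and, in the remark right after the corollary, records exactly your clearing step: $g(z,e) = e^{d} f(z e^{-N}, e) \in \mathbb{Z}[z,e]$ for sufficiently large $d$. Your added bookkeeping (denominator clearing over $\mathbb{Q}(z)$, unit property of $\geul$, nontriviality via the injective reindexing of monomial exponents) just makes explicit what the paper leaves as obvious.
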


Moreover, if we know the polynomial relation for either $\gdt$ or $\geul$, then the proposition provides us with an explicit polynomial relation: e.g. if $f(z,\gdt) = 0$ for some two-variable polynomial $f$, then a polynomial relation $g(z,\geul) = 0$ for $\geul$ is given by the polynomial $g(z,e) = e^{d} f(z e^{-N}, e) \in \mathbb{Z}[z,e]$ for sufficiently large $d$.

\begin{remark}[Notes]\
	\begin{itemize}	
		\item Reineke's functional equation generalizes to any finite quiver (with no relations) for certain stability conditions \cite[Second proof of Thm 7.29]{js}.  The corresponding statement about algebraicity follows.
	
		\item The interpretation of Reineke's functional equation as a statement that the two functions
	\begin{align*}
		z \gdt^{N} &= \prod_{n = 1}^{\infty} \left(1 - \left((-1)^{N} z \right)^{n} \right)^{N n d_{n}} 
	\end{align*}
	and
	\begin{align*}
		z \geul^{-N} &= 	\prod_{n = 1}^{\infty} \left(1 - z^{n} \right)^{N n \chi_{n}} 
	\end{align*}
		are compositional inverses seems to suggest that we should have a priori defined the generating series for DT invariants and Euler characteristics (at least for the $m$-Kronecker quiver) directly in terms of $D = z \gdt^{N}$ and $C := z \geul^{-N}$ to make Reineke's functional equation seem more symmetric; doing so would make the statement of algebraicity also symmetric: if there exists a $g \in \mathbb{Z}[z,d]$ such that $g(z,D) = 0$, then by composing this relation with $C$ we have $g(C,z) = 0$.  Thus $D$ is algebraic if and only if $C$ is algebraic.
		\end{itemize}
\end{remark}

\begin{example}
	Let $a = b = 1$ so that $N = m-2$.  Defining $f_{m}(z,t) := -1 + t (1 - z t^{(m-2)})^{m}$ (see \eqref{eq:m_herd_relation}), then $f(z,\gdt_{1}) = 0$; hence, by Thm. \eqref{thm:dt_euler_alg}, (using the fact that $N = m-2$ when $a = b = 1$) we have a degree 1 polynomial satisfied by $\geul = \geul_{1}$:
\begin{align*}
	0 &=  -1 + \geul_{1} (1 - z \geul_{1}^{-(m-2)} \geul_{1}^{(m-2)})^{m} = -1 + \geul_{1} ( 1 - z)^{m}
\end{align*}
	Hence $\geul_{1} = (1-z)^{-m}$ corresponding to the Euler characteristics $\chi_{1} = -m$ and $\chi_{n} = 0$ for all $n > 1$.  This is effectively a reverse-engineering of Reineke's proof of the algebraic relation satisfied by $\gdt_{1}$.  From the point of view of spectral networks, however, it is easiest to start from the algebraic relation for $\gdt_{1}$ to derive the algebraic relation (and corresponding Euler characteristics) for $\geul_{1}$.  Indeed, if one accepts that the BPS generating series for $m$-herds is the generating series $\gdt_{1}$ for DT invariants $d(n,n,m)$, then the spectral-network derivation of the relation $f_{m}(z,\gdt_{1}) = 0$ in \cite{wwc}, along with the discussion above, constitutes an independent physical derivation of the Euler characteristics for stable quiver moduli with dimension vectors $(an,bn) = (n,n)$. 
\end{example}

In \S \ref{sec:Euler_3_2} we compute the Euler characteristics in the case $(a,b) = (3,2)$ using the algebraic relation \eqref{eq:3_2_dt_relation} for $\gdt_{3/2}$, which was derived through spectral network techniques applied to the $(3,2|3)$-herd.

\subsection{Determining Euler characteristics from DT invariants and Vice-Versa} \label{sec:euler_dt_formulae}
Using \eqref{eq:reineke_func}, Reineke uses Lagrange-inversion to write down a rather explicit formula expressing DT invariants in terms of a weighted sum over Euler characteristics of stable moduli (see Lemmata 5.2, 5.3 and Prop. 5.4 of \cite{reineke:integrality}):
	\begin{align*}
		k^2 d_{k} &= \sum_{l|k} \mu \left(\frac{k}{l}\right) (-1)^{l} \sum_{r = 1}^{l} \left( \sum_{(\lambda_{1}, \cdots \lambda_{r}) \in \mathcal{P}(l,r)} (-1)^{\lambda_{1}} \prod_{j = 1}^{r} \binom{j N \chi_{j}}{\lambda_{j} - \lambda_{j+1}} \right)
	\end{align*}
	where
	\begin{align*}
		\mathcal{P}(l,r) &:= \left \{ \text{Partitions of $l$ of length $r$} \right \} = \left \{(\lambda_{1}, \cdots, \lambda_{r}) \in \mathbb{Z}^{r}: \sum_{j = 1}^{r} \lambda_{j} = l \right \}.
	\end{align*}
	On the other hand, by composing \eqref{eq:reineke_func} with $z \geul^{-N}$, we have (immediate from our discussion about compositional inverses)
	\begin{align*}
		\geul &= \gdt \circ \left(z \geul^{-N} \right)	
	\end{align*}	
	so we can apply the same Lagrange-inversion procedure to derive an explicit formula expressing Euler characteristics in terms of a weighted-sum over DT invariants:
	\begin{align*}
		k^2 \chi_{k} &= \sum_{l|k} \mu \left(\frac{k}{l}\right) (-1)^{l} \sum_{r = 1}^{l} \left( \sum_{(\lambda_{1}, \cdots \lambda_{r}) \in \mathcal{P}(l,r)} (-1)^{\lambda_{1}} \prod_{j = 1}^{r} \binom{-j N d_{j}}{\lambda_{j} - \lambda_{j+1}} \right).
	\end{align*}

An alternative method, which is quite easy to implement into computer algebra software, is provided by the following.  Suppose we are given the BPS generating series $\gdt = \gdt_{a/b}$; then the corresponding Euler characteristics can be extracted via the following recursive procedure:
 \begin{align}
 	\chi_{k} &=  \frac{1}{k} [z^{k}] \left\{\frac{\gdt}{\prod_{l = 0}^{k-1} \left[1 - (z \gdt^{N})^{l}) \right]^{-l \chi_{l}} } \right\},
 	\label{eq:euler_rec}
 \end{align}
 which can be derived from \eqref{eq:reineke_func}.  If one knows the BPS generating series accurately up to the coefficient multiplying $z^{n}$, then we can reliably extract the first $n$ Euler characteristics in this manner.  Similarly, if one were provided with the Euler characteristic generating series $\geul = \geul_{a/b}$, then
 \begin{align*}
 	d_{k} &=  \frac{(-1)^{Nk}}{k} [z^{k}] \left\{\frac{\geul}{\prod_{l = 0}^{k-1} \left[1 - (z \geul^{-N})^{l}) \right]^{l d_{l}} } \right\}
 \end{align*}

 
\begin{numrmk}
 Note that from \eqref{eq:euler_rec}, we have
 \begin{align*} 
 	\chi_{1} = [z] \gdt = (-1)^{N+1} d(a,b,m) = (-1)^{\dim \CM^{(a,b)}_{\st}} d(a,b,m).
\end{align*} 	
Indeed, for primitive dimension vectors (i.e. coprime pairs $(a,b)$), all semi-stable representations are stable and the DT invariants $d(a,b,m)$ are weighted Euler characteristics of the smooth complex projective variety $\CM_{\st}^{m}(a,b) = \CM_{\sst}^{m}(a,b)$ parameterizing stable representations; in this situation the weighting function is just an overall sign: (see \cite[\S 4]{js}):
\begin{align*}
	d(a,b,m) = (-1)^{\dim \mathcal{M}^{(a,b)}_{\st}} \chi \left[\CM_{\st}^{m}(a,b) \right] = (-1)^{N + 1} \chi \left[\CM_{\st}^{m}(a,b) \right] .
\end{align*}
 On the other hand, for non-primitive dimension vectors, we should expect 
\begin{align*} 
	|d(ka,kb,m)| \neq \chi_{k}.
\end{align*}  
Indeed, from the perspective of Joyce-Song/Behrend, the DT invariant \footnote{In the notation of \cite{js}, the DT invariants we are interested in are called ``BPS-invariants" and denoted by the symbol $\hat{DT}$; the DT invariants denoted $\bar{DT}$ are related to the coefficients of generating series for DT invariants.}
$d(ka,kb,m)$, for $k > 1$, is a weighted Euler characteristic (with non-trivial weighting function: see\cite[Equation 7.56]{js}) of the (typically singular) projective variety $\CM_{\sst}^{m}(ka,kb)$ that parametrizes polystable representations.
\end{numrmk}

\subsection{Euler Characteristics of Stable Moduli with Dimension vectors \texorpdfstring{$(3n,2n)$}{(3n,2n)}} \label{sec:Euler_3_2}
In this section we specialize to the case $(a,b) = (3,2)$.  By applying Thm. \ref{thm:dt_euler_alg} to the conjectured algebraic relation \eqref{eq:3_2_dt_relation} for the generating series $\gdt = \gdt_{3/2}$, we find that $\geul = \geul_{3/2}$ must obey the algebraic relation:
\begin{equation}
	\begin{aligned}
 	0=& -z + \geul (-5 z + z^2) + \geul^2 (z - 13 z^2 + z^3) + \\
	& \geul^3 (21 z - 114 z^2 + 44 z^3 - z^4) + \\
 	& \geul^4 (-34 z - 80 z^2 - 38 z^3 + 7 z^4) + \\
 	& \geul^5 (-1 - 7 z + 6 z^2 + 77 z^3 + 39 z^4 + 3 z^5) + \\
	& \geul^6 (4 + 76 z + 119 z^2 - 382 z^3 - 367 z^4 - 17 z^5) + \\
	& \geul^7 (-6 - 64 z + 53 z^2 + 270 z^3 - 173 z^4 - 77 z^5 - 3 z^6) + \\
 	& \geul^8 (4 + 6 z - 55 z^2 + 80 z^3 - 30 z^4 - 14 z^5 + 9 z^6) + \\
 	& \geul^9 (-1 + 7 z - 21 z^2 + 35 z^3 - 35 z^4 + 21 z^5 - 7 z^6 + z^7).
 	 \end{aligned}
 	\label{eq:3_2_eul_relation}
\end{equation}

\begin{remark}
	Note that, while \eqref{eq:3_2_dt_relation} defines $\gdt$ as a root of a $39$th degree polynomial (with coefficients in $\mathbb{Z}[z]$, the relation \eqref{eq:3_2_eul_relation} defines $\geul$ as a root of a $9$th degree polynomial; in this sense, the relation satisfied by $\geul$ is much easier than that satisfied by $\gdt$.  
\end{remark}

By substituting in a formal power series (with constant coefficient 1) ansatz for $\geul$, one can determine the coefficients order by order.  The first few values of the expansion are,
\begin{align*}
	\geul &= 1 + 13 z + 189 z^2 + 1645 z^3 - 6611 z^4 - 429139 z^5 + \mathcal{O}(z^{6}).
\end{align*}
If we determine this expansion out to the term multiplying $z^K$, we can reliably determine the first $K$ Euler characteristics (of stable moduli).  Section \ref{sec:euler_dt_formulae} describes various other procedures for extracting the Euler characteristics $\chi_{n}$ given the DT invariant generating series $\gdt$.  The first few Euler characteristics are given by
\begin{align*}
	\left(\chi_{n} \right)_{n = 1}^{8} = \left(13, 49, -28, -5277, -50540, 546995, 19975988, 103632164 \right).
\end{align*}
For leisure reading, the first 1360 $\chi_{n}$---corresponding to dimension-vectors $(3n,2n)$---can be found in the file \texttt{EulerList1360.csv} included with the arXiv preprint version of this paper.\footnote{These were extracted using \eqref{eq:euler_rec} and the solution $\gdt_{3/2}$ expanded out to the coefficient multiplying $z^{1360}$.}
 
 As shown in \S \ref{sec:asymptotics}, one can determine the $n \rightarrow \infty$ asymptotics of the $\chi_{n}$ directly from the algebraic relation \eqref{eq:3_2_eul_relation}.  In particular, via Corollary~\ref{cor:multi}, we find
 \begin{align}
	\chi_{n} = n^{-5/2} \left[C \rho^{-n} + \conj{C} \left( \conj{\rho} \right)^{-n} \right] +  \mathcal{O}\left(n^{-7/2} |\rho|^{-n} \right)
	\label{eq:chi_3_2_asymps_cx}
\end{align} 
where 
\begin{align*}
	\rho \approx 0.0151352 + 0.0373931 i
\end{align*}
and its complex conjugate $\conj{\rho}$ are roots of the polynomial
\begin{equation}
	\begin{aligned}
	&-84375-86320988 z+1929101336 z^2-60820581960 z^3\\
	&+219678053524 z^4-230881412176 z^5+38310552576 z^6\\
	&-946356992 z^7+4917248 z^8
	\end{aligned}
	\label{eq:eul_discrim_factor}
\end{equation} 
(which arises as a factor of the discriminant of \eqref{eq:3_2_eul_relation} thought of as a polynomial in with coefficients in $\mathbb{Z}[z]$),
and
\begin{align*}
	C &\approx -0.077407+0.200149 i.
\end{align*}
A numerical study of the first 1360 such Euler characteristics indicates a very strong fit to these asymptotics.  Indeed, note that we may rewrite \eqref{eq:chi_3_2_asymps_cx} as
\begin{align}
	\chi_{n} = 2|C|  \Theta(n)  n^{-5/2} |\rho|^{-n}  + \mathcal{O}\left(n^{-7/2} |\rho|^{-n} \right)
	\label{eq:chi_n_asymps}
\end{align}
where
\begin{align}
	\Theta(n) &:=  \cos \left[n \arg(\rho) - \arg(C) \right].
	\label{eq:eul_oscillation}
\end{align}
Now, defining
\begin{align}
	R_{n}:= \frac{\chi_{n}}{2|C| n^{-5/2} |\rho|^{-n}} - \Theta(n)
	\label{eq:R_n_def}
\end{align}
then via \eqref{eq:chi_3_2_asymps_cx} we have $R_{n} \in \mathcal{O}(n^{-1})$.  In Fig.~\ref{fig:stable_asymptotics_error}, the first 1360 values of the sequence $R_{n}$ are plotted; visual inspection (and a bit of curve-fitting) suggests that $R_{n}$ is snugly bounded by the curves $\pm c n^{-1}$, where $c \approx .36$.

\begin{figure}[t!]
	\begin{center}
		 \includegraphics[scale=0.7]{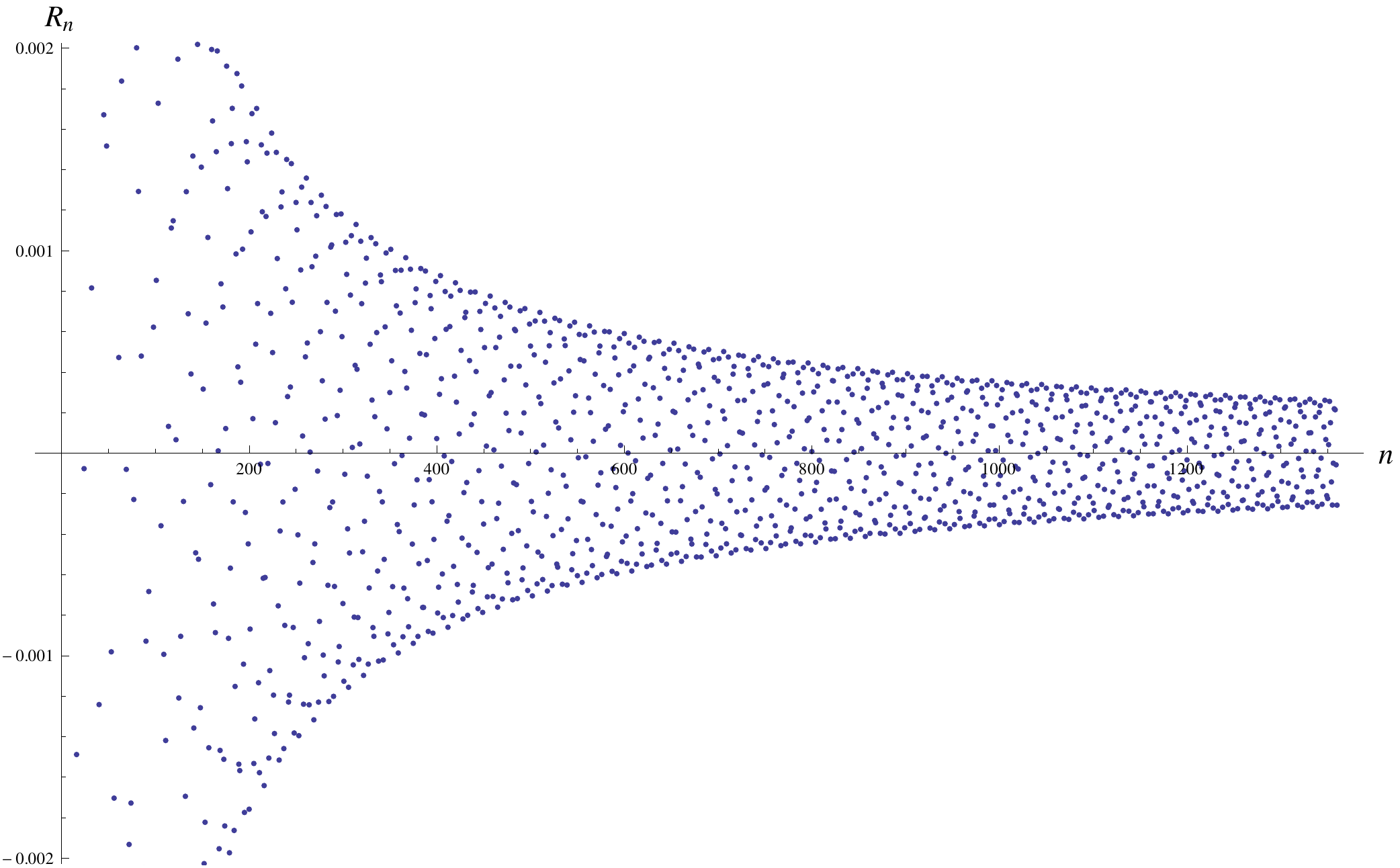}
		\caption{Plot of $R_{n}$ (defined in \eqref{eq:R_n_def}) for $1 \leq n \leq 1360$. \label{fig:stable_asymptotics_error}}
	\end{center}
\end{figure}

\begin{figure}[t!]
	\begin{center}
		 \includegraphics[scale=0.7]{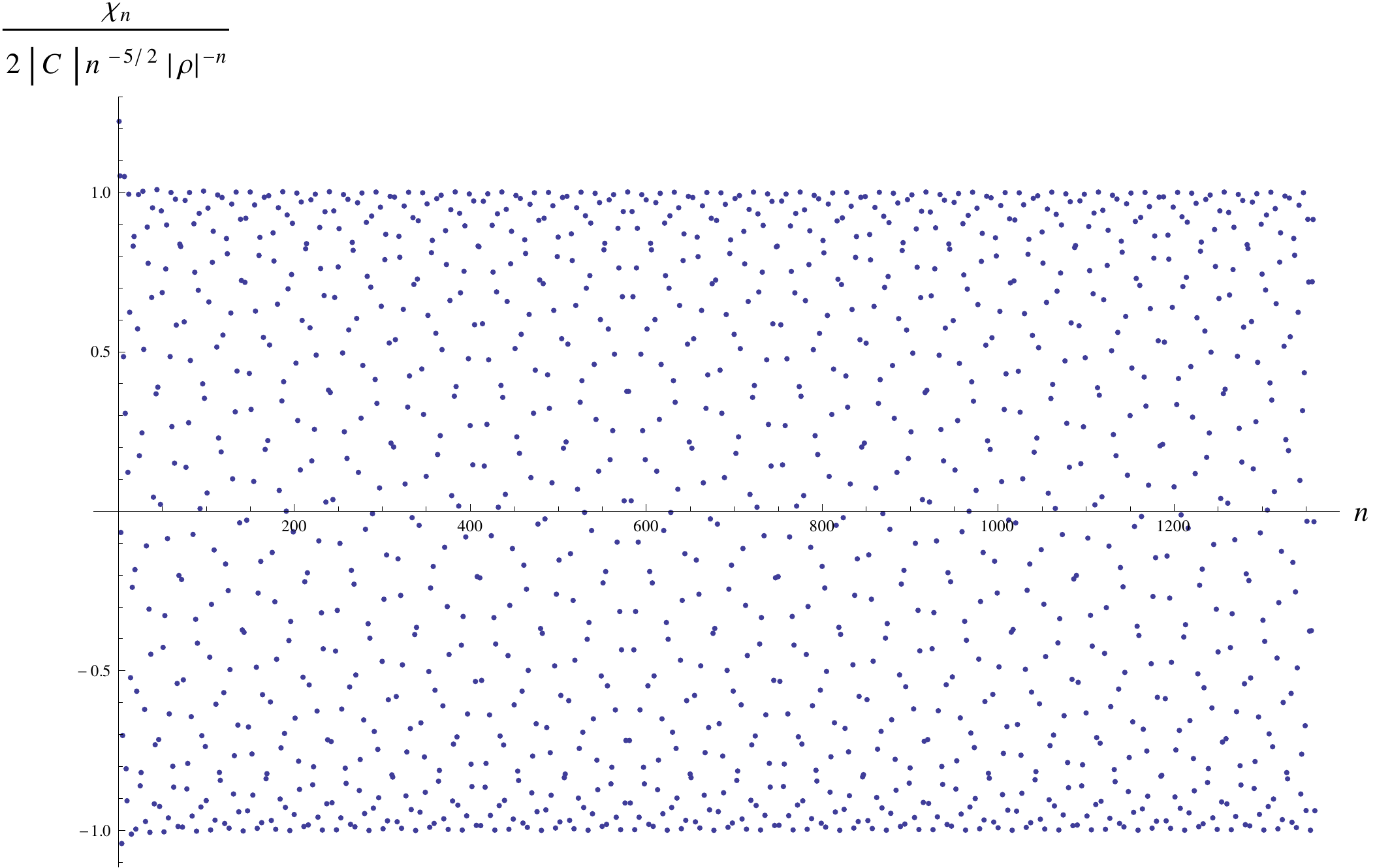}
		\caption{Plot of $\chi_{n}/(2|C| n^{-5/2} |\rho|^{-n})$ for $1 \leq n \leq 1360$  \label{fig:theory_vs_experiment_osc}}
	\end{center}
\end{figure}

\begin{figure}[t!]
	\begin{center}
		 \includegraphics[scale=0.5]{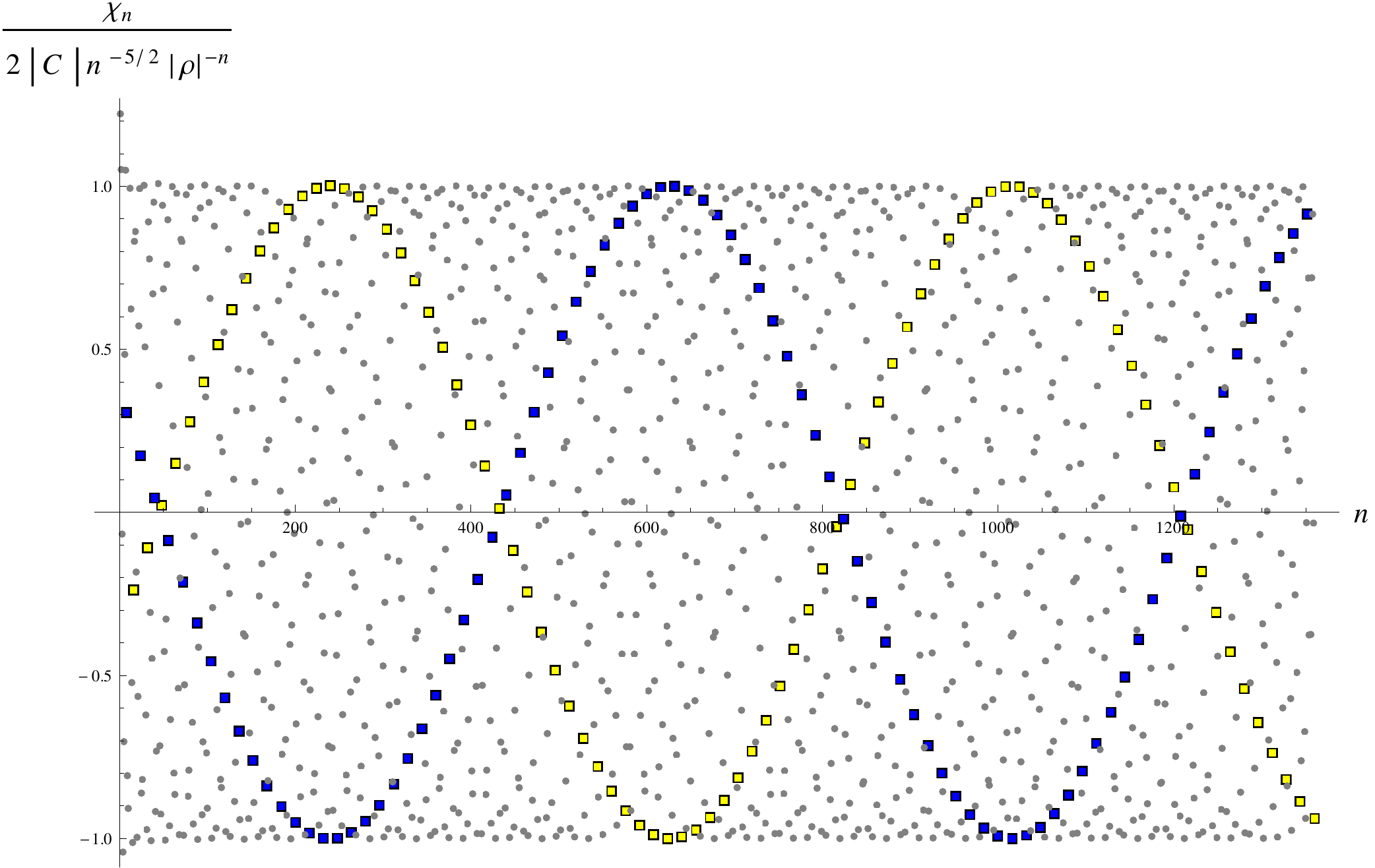} \
		 \vspace{5mm}
		 \includegraphics[scale=0.5]{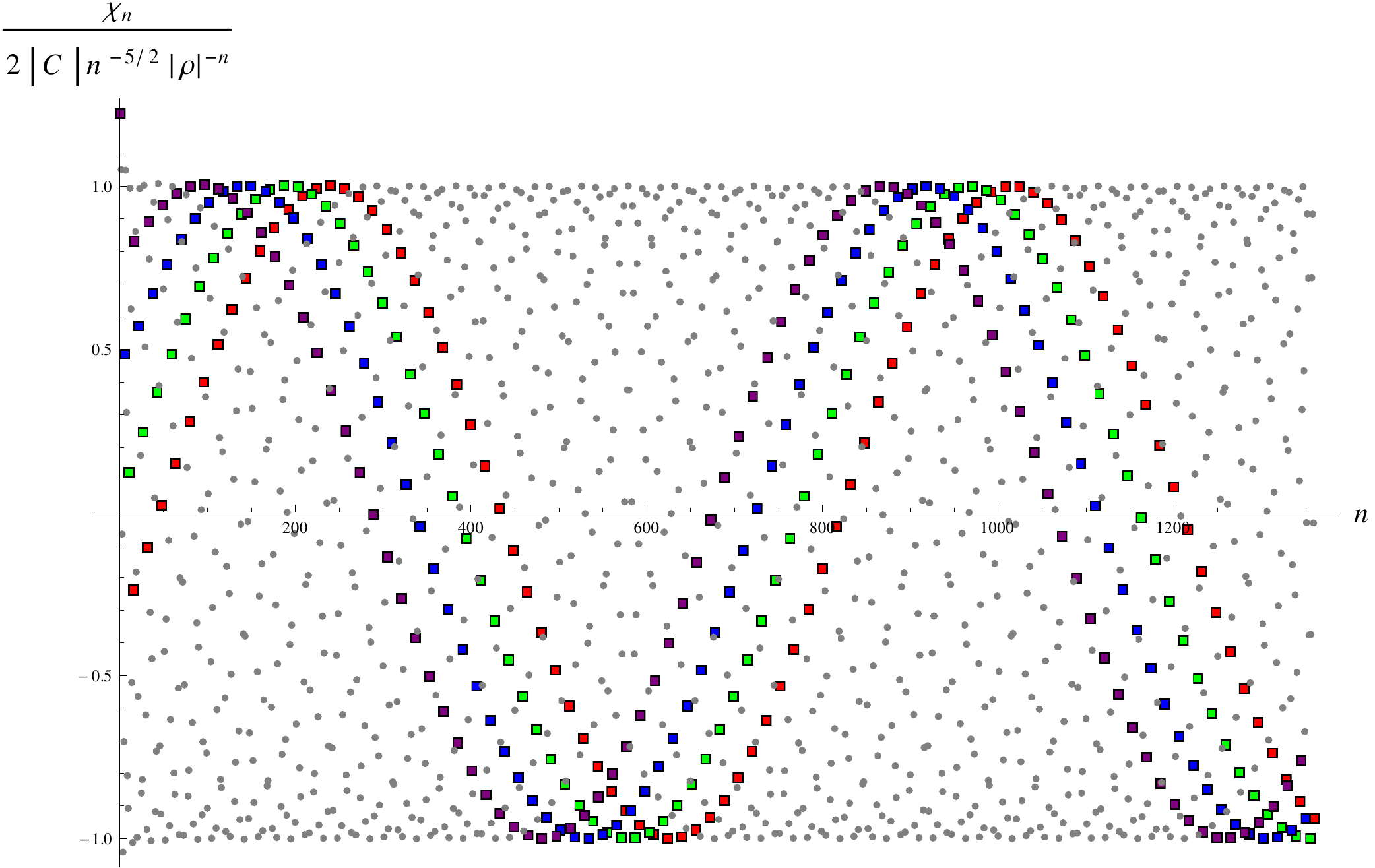}
		\caption{Plot of $\CX_{n} = \chi_{n}/(2|C| n^{-5/2} |\rho|^{-n})$ for $1 \leq n \leq 1360$ with some approximately periodic subsequences highlighted.  \textbf{Top}: Subsequence $(\CX_{n}: n \mod 16 \equiv 0)$ highlighted with yellow squares; subsequence $(\CX_{n}: n \mod 16 \equiv 8)$ highlighted with blue squares.  \textbf{Bottom}: subsequences $(\CX_{n}: n \mod 16 \equiv a)$ highlighted with red ($a = 0$), green ($a = 5$), blue ($a = 10$), and purple ($a = 15$) squares. \label{fig:theory_vs_experiment_osc_highlight}}
	\end{center}
\end{figure}

\subsubsection{Some Numerology} \label{sec:numerology}
	The scattered appearance of Fig.~\ref{fig:theory_vs_experiment_osc} is expected due to the sampling of the cosine function $\cos \left[n \arg(\rho) - \arg(C) \right]$ at integer points $n$:  if $\arg(\rho)$ is an irrational multiple of $2\pi$, then this sampling will never exhibit any exact periodicity in the variable $n$.  Indeed, using the polynomial \eqref{eq:eul_discrim_factor}, a calculation (assisted by computer-algebra software) using resultants shows that the minimal polynomial of $\rho/|\rho|$ is not a cyclotomic polynomial.  As a result, $\rho/|\rho|$ is not a root of unity; hence, $\arg(\rho)$ must be an irrational multiple of $2\pi$.

	Nevertheless, among the chaos there seems to be some regularity: it is possible to recognize patterns in Figs.~\ref{fig:stable_asymptotics_error} and \ref{fig:theory_vs_experiment_osc}.  In particular, a visually striking pattern reveals itself if one defines the sequence
\begin{align*}
	\CX_{n} := \frac{\chi_{n}}{2|C| n^{-5/2} |\rho|^{-n}},
\end{align*}	
(which quickly approaches $\Theta(n)$ defined in \eqref{eq:eul_oscillation}) and studies the sixteen subsequences defined by residue classes of $n$ mod 16:
\begin{align}
	(\CX_{n}: \text{$n$ mod 16} \equiv l)_{n = 0}^{\infty} = \left(\CX_{16k - l} \right)_{k = 0}^{\infty}, \: l = 0 ,\cdots, 15.
	\label{eq:approx_period_subseq}
\end{align}
Each such subsequence appears to be sampled from a sinusoidal function with period $97/2$ in the index $k$.  The reason for this is due to the close proximity of $\arg(\rho^{16})$ to $4\pi / 97$; specifically
\begin{align*}
	\left |\frac{16 \arg(\rho)}{2\pi} - \frac{2}{97} \right| \mod 1  < 2.41 \times 10^{-6}
\end{align*}
which leads to the approximate $97/2$ periodicity in the index $k$ over intervals of size $<10^{6}$.  Plots highlighting some of the subsequences \eqref{eq:approx_period_subseq} (indexed by $n$) are shown in Fig.~\ref{fig:theory_vs_experiment_osc_highlight}.

\section{Algebraic Generating Series and Spectral Networks: General Features} \label{sec:alg_sn}

	The appearance of algebraic equations for BPS indices seems to be a feature of spectral networks: in all degenerate spectral-networks so far encountered, the resulting equations for soliton generating series around joints (e.g. the six-way street equations) and branch points generate a system of algebraic equations that completely determine the corresponding street-factors.  Indeed, in \cite[App. F]{wwc} it was argued that this is a property for ``sufficiently nice" spectral networks: the corresponding street-factors are determined by a collection of algebraic equations.  In this section we will give a more precise statement of algebraicity for spectral networks.  Our key statement is Claim \ref{claim:func_eqn}, which requires a slightly weaker version of the condition on the spectral network compared to the condition\footnote{In \cite[App. F]{wwc}, if $\mathpzc{N}$ is a spectral network subordinate to $\pi: \Sigma \rightarrow C$, the argument required the existence of a compact set $K \subset C$ such that all two-way streets are contained in $K$, and only finitely many streets (one-way or two-way) are contained in $K$.  The condition we state here weakens this to only require finitely many two-way streets; this weakening may be important in the context of $(a,b|m)$-herds whose streets appear to densely fill regions on the cylinder, while still containing only finitely many two way streets.} in \cite[App. F]{wwc}.  We will not prove this claim here, but defer the reader to future work: an honest proof requires a development of careful definitions for spectral networks; in particular, to make sense of the product of soliton generating series---which are supposed to be formal series in non-commutative variables---one must carefully define the appropriate non-commutative ring of formal series as the completion of a filtered ring.  Such a development deserves separate treatment that will lead us too far astray.

In this section, by spectral network we mean something either satisfying a generalization\footnote{Namely, we allow for spectral networks with no singular points and relax the condition on C1 that any compact subset of $C'$ intersects only finitely many segments (allowing for spectral networks with densely defined streets).} of the conditions D1-C4 in \cite[\S 9.1]{sn} or a $\CW$-network.  Recall that, to each street of the spectral network we associate two quantities we call \textit{soliton generating series} (see \cite[\S 2.2.2]{wwc});\footnote{In \cite{wwc} and in the appendices of this paper these generating series are denoted by the symbols $\Upsilon_{p}$ and $\Delta_{p}$ or, in \cite{sn}, as $\tau$ and $\nu$.} they are elements of a non-commutative ring of formal series (which, as mentioned above, takes some care to properly define).  From a polynomial combination of soliton generating series on the street $p$ of a spectral network, we can derive \textit{street factors} $Q(p)$ (see \cite[\S 2.2.2]{wwc})---these latter quantities are elements of a \textit{commutative} ring of formal series, and are directly related to the generating series for BPS indices (see the proof of Cor. \ref{cor:alg_BPS}).

We should emphasize that the soliton generating series and street-factors are not taken as data in the definition of a spectral network, but are derived directly from the data of spectral networks.  Indeed, from the data, we can derive a system of relations that must be obeyed by soliton generating series; let us temporarily dub these as the \textit{vertex relations}.\footnote{Alternatively, as suggested by Greg Moore, these should be called ``non-abelian Kirchoff rules" due to their alternative interpretation as conservation laws for solitons of certain types coming in or out of each vertex.}  We claim that, the vertex relations can be used to define the soliton generating series: interpreting the relations as equations in non-commutative variables, there is a unique solution in a particular non-commutative ring of formal series.

Now let us specialize our attention to sufficiently ``generic" degenerate spectral networks: these are spectral networks that support a single charge (for $\CW$ networks, these are degenerate networks that are off any walls of marginal stability).   To define what this means, recall that the data of any spectral network $\mathpzc{N}$ defines a directed graph $\mathrm{Lift}(\mathpzc{N})$ with embedding into $\Sigma$ (see \S 2.2 and App. C.7.2 of \cite{wwc}).

\begin{definition}
We will say a spectral network $\mathpzc{N}$ \textit{supports a single charge} $\gamma_{c}$ if there exists a non-constant closed path on $\mathrm{Lift}(\mathpzc{N})$ (whose orientation is consistent with each edge) and, moreover, the image of each such closed path under the embedding $\mathrm{Lift}(\mathpzc{N}) \hookrightarrow \Sigma$ has homology class contained in $\mathbb{Z}_{>0} \gamma_{c} \leq H_{1}(\Sigma;\mathbb{Z})$.
\end{definition}

By the discussion above, using the definition of the soliton generating series as the unique solution to the vertex relations (in a particular non-commutative ring of formal series), we state the following.

\begin{claim}
	Suppose that $\mathpzc{N}$ is a spectral network supporting a single charge $\gamma_{c}$; define $\twid{z} := X_{\twid{\gamma}_{c}}$.  Then, for any street $p$ of $\mathpzc{N}$, we can use the vertex relations to define $Q(p)$ as an element of the ring $\formal{\mathbb{Z}}{\twid{z}}$ of formal series; moreover, $Q(p)$ has constant coefficient 1.
\end{claim}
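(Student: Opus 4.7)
The plan has three steps: (i) introduce a mass filtration on the non-commutative ring of soliton generating series; (ii) use this filtration, together with the vertex relations, to inductively establish existence and uniqueness of $\Upsilon_{p}, \Delta_{p}$ as formal series; (iii) extract $Q(p)$ as a commutative polynomial combination in these and verify that it lies in $\formal{\mathbb{Z}}{\twid{z}}$ with constant coefficient $1$.

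For step (i), I would equip the non-commutative ring in which $\Upsilon_{p}$ and $\Delta_{p}$ are supposed to live with a filtration indexed by the total mass of supported charges, computed via the composition $\twid{\Gamma} \to \Gamma \overset{|Z|}{\longrightarrow} \mathbb{R}_{\geq 0}$. Any soliton on $\mathpzc{N}$ is indexed by a non-constant oriented path on $\mathrm{Lift}(\mathpzc{N})$; under the single-charge hypothesis, after choosing once and for all a bounded ``anchor'' concatenation for each pair (source street, target street), every such path becomes a closed loop whose homology class in $\Gamma$ is a non-negative integer multiple of $\gamma_{c}$. This pins down the supported charges in $\twid{\Gamma}$ as $\mathbb{Z}_{\geq 0} \twid{\gamma}_{c}$ modulo finitely many ``anchor shifts'', so each filtration level is a finitely generated $\mathbb{Z}$-module and the mass function induces a well-ordering of the kind needed to define formal series in the sense of the paper.

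For step (ii), I would solve the vertex relations inductively on mass. The base case is trivial: the only mass-$0$ contribution to $\Upsilon_{p}, \Delta_{p}$ is the empty-soliton term, fixed by the normalization conventions for spectral networks. Assuming a unique solution up to mass less than $N|Z(\gamma_{c})|$, the vertex relations at mass $N|Z(\gamma_{c})|$ become a finite $\mathbb{Z}$-linear system in the degree-$N$ coefficients whose source terms are already determined by induction, because the non-linear pieces of the vertex relations multiply strictly positive-mass series together. The key technical step is to verify that this linear system is triangular with respect to an ordering along the orientations of the streets---expected to follow from the local form of the standard four-way, six-way, and branch-point rules, which always express newly formed soliton data on an outgoing street in terms of strictly lower-mass data on incoming streets.

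For step (iii), the street factor $Q(p)$ is by construction (cf.\ Section 2.2.2 of \cite{wwc}) a specific polynomial combination of $\Upsilon_{p}$ and $\Delta_{p}$ that lies in the commutative subalgebra generated by the $X_{\lambda}$ with $\lambda \in \twid{\Gamma}$ arising as the standard lift of the homology class of a \emph{closed} oriented loop on $\mathrm{Lift}(\mathpzc{N})$. By the single-charge hypothesis every such $\lambda$ lies in $\mathbb{Z}_{\geq 0} \twid{\gamma}_{c}$, so $Q(p) \in \formal{\mathbb{Z}}{X_{\twid{\gamma}_{c}}} = \formal{\mathbb{Z}}{\twid{z}}$; the constant coefficient is $1$ because the only mass-$0$ contribution to $Q(p)$ is the identity term built into its definition. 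The main obstacle---deferred in the paper to future work---is making step (ii) fully rigorous: it requires the careful construction of the non-commutative ring as the completion of a filtered algebra on $\twid{\Gamma}$-labelled paths, together with the triangularity argument. This is closely analogous to Appendix F of \cite{wwc}, but has to be upgraded to accommodate networks whose streets may densely fill regions of $C$ while still containing only finitely many two-way streets of any given mass.
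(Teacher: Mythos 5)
Your core argument is essentially the one the paper relies on, but note how the paper actually distributes the burden: the Section~5 claim is \emph{not} proved there at all --- it is stated as a consequence of taking the soliton generating series to be, by definition, the unique solution of the vertex relations in a (deferred, to-be-constructed) non-commutative completed filtered ring, and the substantive content (single charge $\Rightarrow$ everything collapses to series in $\twid{z}$ with constant term $1$) is carried by the argument that appears as Claim~\ref{claim:abelianization} in Appendix~\ref{sec:derivation}. That proof is exactly your steps (i) and (iii): the soliton charges of a fixed type on a street form a $\twid{\Gamma}$-torsor, the single-charge hypothesis forces any two of them to differ by a multiple of $\twid{\gamma}_{c}$, positivity of the mass gives a well-ordering with a least element $s_{*}$, so $\Upsilon(p)=X_{s_{*}}U(p)$ and $\Delta(p)=X_{\conj{s}_{*}}D(p)$ with $U,D\in\formal{\mathbb{Z}}{\twid{z}}$, and then $Q(p)=1+\Upsilon(p)\Delta(p)$ with $\cl[s_{*}+\conj{s}_{*}]$ a multiple of $\twid{\gamma}_{c}$ (up to $H$, which only contributes signs) gives $Q(p)\in\formal{\mathbb{Z}}{\twid{z}}$ with constant coefficient $1$. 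So for the part of the statement the paper actually argues, your route and the paper's coincide; your phrasing via closed loops on $\mathrm{Lift}(\mathpzc{N})$ is an equivalent packaging of the torsor-difference argument.

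Where you go beyond the paper is step (ii), the existence and uniqueness of $\Upsilon_{p},\Delta_{p}$ themselves, which the paper explicitly defers to future work; since the claim as stated presupposes that definition, this is not a gap relative to the paper, but your proposed mechanism is not quite right as written. The vertex relations do \emph{not} express outgoing soliton data in terms of \emph{strictly} lower-mass incoming data: a soliton passing through a joint or along a two-way street emerges with the same mass, and the two-way skeleton contains cycles, so there is no triangular ordering ``along the orientations of the streets''. What actually closes the induction is that every genuinely new term is generated by a closed circuit through the degenerate part of the network, and any such circuit has homology class in $\mathbb{Z}_{>0}\gamma_{c}$ and hence mass at least $|Z(\gamma_{c})|>0$; equivalently, one solves order by order in powers of $\twid{z}$, with each order a finite affine-linear problem in the finitely many two-way-street data, the pass-through (mass-preserving) part being the identity plus nilpotent transport. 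If you intend step (ii) to be more than a heuristic, you should replace the street-ordering triangularity by this $\twid{z}$-adic contraction argument (this is also the shape of the argument in \cite[App.~F]{wwc} that the paper points to).
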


Now we may give a statement of algebraicity in the context of spectral networks: under some mild assumptions, the (a priori) formal series $Q(p)$ are algebraic functions over $\mathbb{Q}$.  The precise statement is the following.

\begin{itclaim} \label{claim:func_eqn}
	Let $\mathpzc{N}$ be a spectral network subordinate to the branched cover $\Sigma \rightarrow C$, such that 
	\begin{enumerate}
		\item $\mathpzc{N}$ supports a single charge $\gamma_{c} \in H_{1}(\Sigma; \mathbb{Z})$;	
	
		\item Its collection of strictly two-way streets 
			\begin{align*}
				\mathrm{degstr}(\mathpzc{N}) := \{p \in \mathrm{str}(\mathpzc{N}): Q(p) \neq 1 \}
			\end{align*}
		 is a finite set.
	\end{enumerate}
	Then, letting $\twid{z} = X_{\twid{\gamma_{c}}}$, for each $p \in \mathrm{degstr}(\mathpzc{N})$ there exists a $\mathcal{F}_{p} \in \left(\mathbb{Q}[\twid{z}] \right)[x]$ (a polynomial with coefficients in $\mathbb{Q}[\twid{z}]$) such that
		\begin{align}
		0 &= \mathcal{F}_{p} \left[Q\left(p\right) \right].
		\label{eq:claim_relations}
	\end{align}

\end{itclaim}

Although we are deferring the proof of this claim until future work, it should be mentioned that the proof is of a constructive nature: the polynomial relations $\mathcal{F}_{p}$ of the claim can be extracted through applications of elimination theory to a system of polynomial equations given produced by ``abelianizing" the vertex relations.  In other words, from a spectral network we can (in principle) algorithmically extract explicit algebraic relations obeyed by the $Q(p)$.

\begin{remark}
	Although no examples are known, in principle it may be possible for a degenerate $\CW$-network to violate the first condition.  However, for the network to correspond to a BPS state of finite mass, the total length of any closed curve supported on the lift of the two-way streets to the spectral cover (defined using the holomorphic differential on the spectral cover) must remain finite.
\end{remark}

The claim induces the following corollary.
\begin{corollary} \label{cor:alg_BPS}
	Let $\gdt$ be the BPS generating series associated to any spectral network satisfying the conditions of Claim \ref{claim:func_eqn},  then there exists a polynomial $\mathcal{R} \in \left(\mathbb{Z}[z]\right)[t]$ such that
\begin{align*}
	0&= \mathcal{R} \left(\gdt \right).
\end{align*}
\end{corollary}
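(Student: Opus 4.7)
The plan is to reduce algebraicity of $\gdt$ to algebraicity of the finitely many two-way street factors $Q(p)$, which is the content of Claim \ref{claim:func_eqn}. The key intermediate step is to express $\gdt$, or at least a positive integer power of it, as a finite Laurent monomial in the $Q(p)$ indexed by $p\in\mathrm{degstr}(\mathpzc{N})$; after that, standard field-theoretic closure properties finish the argument.

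To obtain such a product formula I would exploit the identity $[L(n\gamma_c)] = n\gamma_c\,\Omega(n\gamma_c)$ in $H_1(\Sigma;\mathbb{Z})$, where $L(n\gamma_c) = \sum_{p} \alpha_n(p)\,\ell(p)$ and the $\alpha_n(p)$ are the Euler-product exponents of $Q(p) = \prod_n (1-\twid{z}^n)^{\alpha_n(p)}$. Since $\gamma_c$ has infinite order in the free abelian group $H_1(\Sigma;\mathbb{Z})$, there exists a homomorphism $\phi\colon H_1(\Sigma;\mathbb{Z})\to\mathbb{Z}$ with $\phi(\gamma_c) = k$ for some $k\in\mathbb{Z}_{>0}$. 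Applying $\phi$ to the homology identity yields, for each $n$, the scalar relation $k n\,\Omega(n\gamma_c) = \sum_{p\in\mathrm{degstr}(\mathpzc{N})} \phi(\ell(p))\,\alpha_n(p)$, where the sum has been restricted to two-way streets since one-way streets satisfy $Q(p)=1$ and hence $\alpha_n(p) = 0$ for every $n$. Reassembling these pointwise-in-$n$ identities into Euler products gives
\begin{align*}
\gdt^{k} \;=\; \prod_{p\in\mathrm{degstr}(\mathpzc{N})} Q(p)^{\phi(\ell(p))},
\end{align*}
a finite product by the hypothesis on $\mathrm{degstr}(\mathpzc{N})$.

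With this in hand, the conclusion is standard. By Claim \ref{claim:func_eqn}, each $Q(p)$ appearing in the product is algebraic over $\mathbb{Q}(z) = \mathbb{Q}(\twid{z})$ (the formal variables differ only by sign), so the compositum $K := \mathbb{Q}(z)(\{Q(p)\}_{p\in\mathrm{degstr}(\mathpzc{N})})$ is a finite algebraic extension of $\mathbb{Q}(z)$ containing $\gdt^{k}$. Thus $\gdt^{k}$ satisfies some $g(t)\in\mathbb{Q}(z)[t]$, whence $\gdt$ satisfies $g(t^{k})\in\mathbb{Q}(z)[t]$; clearing denominators first in $\mathbb{Q}[z]$ (by multiplying by a common denominator in $z$ of the coefficients) and then in $\mathbb{Z}$ (by clearing rational numerators) produces a polynomial $\mathcal{R}\in(\mathbb{Z}[z])[t]$ with $\mathcal{R}(\gdt) = 0$.

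The step I expect to require the most care, given that Claim \ref{claim:func_eqn} is taken as a black box, is justifying the passage from the pointwise-in-$n$ scalar relations among the $\alpha_n(p)$ to the single multiplicative identity for $\gdt^{k}$ displayed above: this involves manipulating infinite Euler products inside the completed ring $\formal{\mathbb{Z}}{\twid{z}}$ and tracking the sign converting $\twid{z}$ to $z$. However, since every factorization lives in a ring where infinite products of series with constant coefficient $1$ converge coefficient-wise, and only finitely many streets contribute, this is a bookkeeping task rather than a conceptual obstacle.
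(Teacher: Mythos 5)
Your argument is essentially the paper's: the paper also reduces the corollary to writing $\gdt$ as a finite product of powers of the two-way street factors via \eqref{eq:Q-exp}--\eqref{eq:L_omega}, namely $\gdt = \prod_{p} Q(p)^{\mathcal{I}(g_{c}^{\vee},\ell(p))}$ for a closed $1$-chain $g_{c}^{\vee}$ with $\langle \gamma_{c}^{\vee},\gamma_{c}\rangle = 1$, and then concludes by closure of algebraicity (made constructive there by iterated resultants, which is how the explicit $\mathcal{R}\in(\mathbb{Z}[z])[t]$ is actually produced). Your variation is to replace the intersection-dual $\gamma_{c}^{\vee}$ by an arbitrary $\phi\in\mathrm{Hom}(H_{1}(\Sigma;\mathbb{Z}),\mathbb{Z})$ with $\phi(\gamma_{c})=k>0$, which buys a little robustness -- such a $\phi$ exists for any nonzero $\gamma_{c}$ in the free abelian group $H_{1}(\Sigma;\mathbb{Z})$, whereas a class with $\langle\gamma_{c}^{\vee},\gamma_{c}\rangle=1$ need not exist if $\gamma_{c}$ meets the radical of the (possibly degenerate, since $\Sigma$ is punctured) intersection form -- at the mild cost of the detour through $\gdt^{k}$ and of a non-constructive compositum step in place of resultants.

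The one step you should tighten is the expression $\phi(\ell(p))$: the lifts $\ell(p)$ are $1$-chains, not cycles, so a homomorphism defined on $H_{1}(\Sigma;\mathbb{Z})$ does not literally evaluate on them, and the pointwise-in-$n$ relation $kn\,\Omega(n\gamma_{c})=\sum_{p}\phi(\ell(p))\,\alpha_{n}(p)$ is not yet meaningful. This is easily patched: extend the pullback of $\phi$ from $Z_{1}(\Sigma;\mathbb{Z})$ to all of $C_{1}(\Sigma;\mathbb{Z})$ (the inclusion $Z_{1}\hookrightarrow C_{1}$ splits because $C_{1}/Z_{1}\cong B_{0}$ is free abelian), and apply the extension to \eqref{eq:L_def}, using that $L(n\gamma_{c})$ is closed so the extension computes $\phi$ of its class \eqref{eq:L_omega}. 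The paper's choice of functional, $\mathcal{I}(g_{c}^{\vee},\cdot)$ for a fixed closed chain $g_{c}^{\vee}$, is precisely such a chain-level extension, which is why that formulation avoids the issue; with that repair, the remaining Euler-product bookkeeping and the passage from $g(\gdt^{k})=0$ to $\mathcal{R}(\gdt)=0$ with $\mathcal{R}\in(\mathbb{Z}[z])[t]$ are correct as you wrote them.
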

\begin{proof}
	The BPS generating series (c.f \eqref{eq:gen_omega_def}) $\gdt$ is a monomial in the street-factors $Q(p),\, p \in \mathrm{degstr}(\mathpzc{N})$: let
\begin{itemize}
	\item $\mathcal{I}(\Cdot, \Cdot): C_{1}(\Sigma; \mathbb{Z})^{\otimes 2} \rightarrow \mathbb{Z}$ be the skew-symmetric form on $C_{1}(\Sigma; \mathbb{Z})$ defined by linearizing the oriented-intersection pairing of paths on $\Sigma$;
	
	\item $g_{c}^{\vee} \in C_{1}(\Sigma; \mathbb{Z})$ be a \textit{closed} 1-chain such that its corresponding homology class $\gamma_{c}^{\vee} :=[g_{c}^{\vee}] \in H_{1}(\Sigma; \mathbb{Z})$ satisfies
	\begin{align*}
		\langle \gamma_{c}^{\vee}, \gamma_{c} \rangle = 1
	\end{align*}
	under the usual intersection pairing on homology $\langle \Cdot, \Cdot \rangle: H_{1}(\Sigma; \mathbb{Z})^{\otimes 2} \rightarrow \mathbb{Z}$;
	
	\item $\ell(p) \in C_{1}(\Sigma; \mathbb{Z})$ be the lift of $p$ to a 1-chain on $\Sigma$;
\end{itemize}
then, via \eqref{eq:Q-exp}-\eqref{eq:L_omega}, the BPS generating series is given by
\begin{align*}
	\gdt &= \prod_{p \in \mathrm{str}(\mathpzc{N})} Q(p)^{\mathcal{I} \left(g_{c}^{\vee}, \ell(p) \right)},
\end{align*}
where this latter product has finite support by assumption: $Q(p) \neq 1$ if and only if $p$ is two-way.  As any polynomial combination of elements algebraic over some field is again algebraic, then $\gdt$ is algebraic over $\mathbb{Q}(z)$: by successive applications of resultants to the polynomials $\mathcal{F}_{p} \in \left(\mathbb{Z}[z]\right)[x]$, one can produce a polynomial $\mathcal{R} \in \left(\mathbb{Z}[z]\right)[t]$ such that
\begin{align*}
	0&= \mathcal{R} \left(\gdt \right).
\end{align*}
\end{proof}

\begin{example}[Examples]\
	\begin{enumerate}
		\item \textbf{$m$-herds}: Let $m$ be an integer $\geq 1$.  One of the results of \cite{wwc} was that all generating series (the soliton generating series, the street-factors, and the BPS generating series) associated to the $m$-herd can be expressed in terms of powers of a formal series $P$ satisfying the relation
	\begin{align*}
		0 &= P - z P^{(m-1)^2} - 1;
	\end{align*}
	that is, if we define
	\begin{align}
		\mathcal{H}_{k} := p - z p^{k} - 1 \in \left(\mathbb{Z}[z] \right)[p]
		\label{eq:H_k_def}
	\end{align}
	then all street-factors and the BPS generating series are given by powers of the root $P$ of $\mathcal{H}_{(m-1)^2}$ and, hence, must satisfy algebraic relations themselves (which may be computed explicitly by use of resultants).  For example, it was shown in \cite{wwc} that the BPS generating series for an $m$-herd is given in terms of $P$ as 
	\begin{align*}
		\gdt_{(1|m)} = P^{m};
	\end{align*}
one can verify that $\gdt_{(1|m)}$ arises as a root of the polynomial\footnote{By substituting $t = p^m$ into \eqref{eq:m_herd_relation}, one can prove that $\mathcal{H}_{(m-1)^{2}}(p) = 0$ if and only if $\mathcal{F}_{(1|m)}(p^{m}) = 0$ (the ``if" direction requiring a tiny bit more thought than the ``only if" direction).}
	\begin{align}
		\mathcal{F}_{(1|m)} &= -1 + t (1 - z t^{(m-2)})^{m}.
		\label{eq:m_herd_relation}
	\end{align}

	\item \textbf{The $(3,2|3)$-herd}: All street-factors of the $(3,2|3)$-herd can be written as Laurent-polynomials in the quantities $M,\, V,\, W$ satisfying the system of algebraic relations \eqref{eq:func_eqs}.  Using elimination theory, from this system of relations, one can deduce univariate polynomials $\mathcal{F}_{V},\, \mathcal{F}_{W},\, \mathcal{F}_{M}$ with coefficients in $\mathbb{Z}[z]$ such that
	\begin{align*}
		0&= \mathcal{F}_{M}(V)\\
		0&= \mathcal{F}_{V}(W)\\
		0&= \mathcal{F}_{W}(M);
	\end{align*}	
	moreover, the roots of any one of $\mathcal{F}_{M},\, \mathcal{F}_{V},\, \mathcal{F}_{W}$ are in one-to-one correspondence tuples of algebraic functions $(m,v,w)$  obeying 
\begin{equation}
	\begin{aligned}
		0 &= 1 + z m^{4} \left\{ (1 + v) (1 + v - w)^2 [v^2(1+w) - 1]^{3} \right\} - m,\\
		0 &= (-1 + v) (1 + v)^2 + (1 + v^3) W - v (m + v) w^2,\\
		0 &=  v \left(v^2 -1 \right) - \left[m (v + 1) + v(v-2) - 1 \right] w,
	\end{aligned}
	\label{eq:mvw_relations}
\end{equation}	 
 Now, \eqref{eq:mvw_relations} has forty-two tuples $(m,v,w) \in \conj{\mathbb{Q}(z)}^{3}$ of solutions, three of which are constant solutions (elements of $\conj{\mathbb{Q}}^{3}$).  As already mentioned in \S \ref{sec:asymp_DT_3_2}, elimination theory provides us with the algebraic relation \eqref{eq:3_2_dt_relation} involving only $z$ and the BPS generating series $\gdt_{3/2} = M V W$.  In other words, $\gdt_{3/2}$ is a root of the degree 39 polynomial $\mathcal{F}_{(3/2|3)} \in \left(\mathbb{Z}[z] \right)[t]$ listed below:
 \begin{equation}
 	\begin{aligned}
	 	\mathcal{F}_{(3/2|3)} &:= -(1 + z) + t(4 - 5z) +  t^2(-6 + z) + t^3(4 + 21z) +\\
		&t^4(-1 - 34z) - t^5(7z) + t^6(76z + z^2) + t^7(-64z - 13z^2) +\\
		&t^8(6z - 114z^2) + t^9(7z - 80z^2) + t^{10}(6z^2) +\\
	 	&t^{11}(119z^2) + t^{12}(53z^2 + z^3) + t^{13}(-55z^2 + 44z^3) +\\
	 	&t^{14}(-21z^2 - 38z^3) + t^{15}(77z^3) - t^{16}(382z^3) +\\
	 	&t^{17}(270z^3) + t^{18}(80z^3 - z^4) + t^{19}(35z^3 + 7z^4) +\\
	 	&t^{20}(39z^4) - t^{21}(367z^4) - t^{22}(173z^4) -\\
		&t^{23}(30z^4) - t^{24}(35z^4) +  t^{25}(3z^5) - t^{26}(17z^5) -\\
		&t^{27}(77z^5) - t^{28}(14z^5) + t^{29}(21z^5) - t^{32}(3z^6) +\\
		&t^{33}(9z^6) - t^{34}(7z^6) + t^{39}z^7.
	\end{aligned}
	\label{eq:3_2_3_poly}
 \end{equation}
	This polynomial enjoys the following properties:
	\begin{itemize}
		\item it is irreducible as a polynomial in $\mathbb{C}[z,t]$;\\
		
		\item each of its roots is in one-to-one correspondence with a triple $(m,v,w)$ of non-constant solutions to \eqref{eq:mvw_relations}.
	\end{itemize}
	As a consequence of the latter property and the discussion at the end of \S \ref{sec:2_3_herds}, the BPS generating series $\gdt_{3/2} = M V W$ is the unique solution given by a formal power series with constant coefficient 1.\footnote{An alternative derivation of this statement is provided in Example \ref{ex:3_2_curve} at the end of \S \ref{sec:alg_curves}.}
	\end{enumerate}
\end{example}

In \S \ref{sec:alg_asymp} we go about exploring the general properties of the algebraic curves associated to these algebraic equations.  In the process, we show that we can always find a factor of $\mathcal{R} \in \left(\mathbb{Z}[z] \right)[y]$ that is irreducible over $\mathbb{C}(z)$ and, furthermore, still has coefficients in $\mathbb{Z}[z]$.

\section{Algebraic Generating Series and Asymptotics} \label{sec:alg_asymp}
%
%

\begin{definition}[Notation]\
	\begin{enumerate}
		\item $\conj{\mathscr{K}}$ will denote the algebraic closure of a field $\mathscr{K}$.
		\item Let $R$ be a ring, then $\mathsf{taut}_{R}: \left(R[z]\right)[t] \rightarrow R[z,t]$ will denote the tautological map; it is an isomorphism of $R[z]$-modules.  Elements of $\left(R[z] \right)[t]$ will be denoted by calligraphic letters and their images under $\mathsf{taut}_{R}$ will be denoted by their respective Roman-typeface analogues.
		\item In the following two sections: $T$ will denote a general algebraic function, free from a particular context (as opposed to the BPS generating series $\gdt$ that is indicated with sans-serif font). The symbol $z$ will denote a formal variable, which can be interpreted free from any particular context; however, when we specialize to examples coming from $(a,b|m)$-herds, $z$ will specifically denote the variable defined in \eqref{eq:z_var_def} (and explained in Appendix \ref{app:signs}).  We will occasionally abuse notation and write statements like ``$z \in \mathbb{C}$" when we mean the evaluation of the variable $z$ at some value in $\mathbb{C}$.
	\end{enumerate}
\end{definition}

\subsection{Algebraic Curves} \label{sec:alg_curves}

Suppose we wish to study solutions $T$ to the algebraic functional equation
\begin{align}
	\mathcal{R}(T) &= 0
	\label{eq:gen_func_eqn}
\end{align}
for some $\mathcal{R} \in \left(\mathbb{Z}[z] \right)[t]$, i.e. a polynomial with coefficients in $\mathbb{Z}[z]$.  Equivalently, we wish to study roots of $\mathcal{R}$, which splits in $\conj{\mathbb{Q}(z)}[t]$: if $c = \deg_{\mathbb{Z}[z]}\mathcal{R}$, then
\begin{align*}
	\mathcal{R} &= A \left(t - T_{1} \right) \cdots \left(t - T_{c} \right)
\end{align*}
for some $A \in \conj{\mathbb{Q}(z)}$ and $T_{l} \in  \conj{\mathbb{Q}(z)},\, l = 1, \cdots, c$. Being geometrically minded, we will study the collection of roots $\{T_{l}\}_{l = 1}^{c}$ by analysing complex affine/projective (typically singular) algebraic curve(s) associated to $\mathcal{R}$; when we pass to the analytic world, the roots can be interpreted as local sections of a branched cover defined by this curve.

Before proceeding further, we make an observation that will prove useful for analysing the behaviour of $T$ at ``poles" (points on the $z$-plane where $T$ becomes infinite).\footnote{We place ``poles" in quotes as the sort of singularities encountered may also manifest themselves as branch points: e.g. the point $z=0$ of the function $z^{-1/2}$.}

\begin{numrmk} \label{rmk:inversion}
	The space of algebraic functions $\conj{\mathbb{Q}(z)}$ is a field; in particular, if $T \not \equiv 0$ is algebraic, then $1/T$ must be algebraic.  In fact, if $T$ is a root of \eqref{eq:gen_func_eqn}, then 
	\begin{align*}
		0&= \widehat{\mathcal{R}} \left(\frac{1}{T} \right)
	\end{align*}
	where $\widehat{\mathcal{R}} \in \mathbb{Z}[z,y]$ is defined by
	\begin{align*}
		\widehat{\mathcal{R}}(y) &= y^{e} \mathcal{R} \left(\frac{1}{y} \right).
	\end{align*}
	 Note that the operation $\widehat{\left( \cdot \right)}: \mathcal{R} \mapsto \widehat{\mathcal{R}}$ is an involution and commutes with factorizations; in particular $\widehat{\mathcal{R}}$ is irreducible (over some chosen field) iff $\mathcal{R}$ is irreducible.
\end{numrmk}

  Let $r = \mathsf{taut}_{\mathbb{Z}} \mathcal{R} \in \mathbb{Z}[z,t]$ be the polynomial in two variables defined by $\mathcal{R}$.  The zero locus of $r$ defines an affine curve $V_{r} \subset \mathbb{C}^{2}$; to simplify the following discussion, we focus on any given irreducible component of this curve: suppose $r$ factors as a polynomial in $\mathbb{C}[z,t]$ as
  \begin{align}
  	r &= f_{1}^{m_{1}} \cdots f_{k}^{m_{n}}
  	\label{eq:f_factorization}
  \end{align}
  for irreducible $f_{l} \in \mathbb{C}[z,t]$ and multiplicities $m_{1}, \cdots, m_{n} \in \mathbb{Z}_{>0}$, then each $V_{f_{i}},\, i = 1,\cdots k$ is an irreducible complex algebraic curve.

\begin{numrmk} \label{rmk:Q-bar_coeffs}
  	As $r \in \mathbb{Q}[z,t]$, then the irreducible factors $f_{l} \in \mathbb{C}[z,t]$ in \eqref{eq:f_factorization} actually have $\conj{\mathbb{Q}}$ coefficients: $f_{l} \in \conj{\mathbb{Q}}[z,t]$.\footnote{Indeed, the factorization \eqref{eq:f_factorization} induces a factorization $\mathcal{R} = \mathcal{F}_{1} \cdots \mathcal{F}_{n}$ where each factor $\mathcal{F}_{l} \in \left(\mathbb{C}[z] \right)[t]$ is the image of $f_{l}$ under the obvious tautological map.  The algebraic closure $\conj{\mathbb{Q}(z)}$ contains the splitting field of $\mathcal{R}$; so $\mathcal{F}_{l} \in \conj{\mathbb{Q}(z)}[t]$ for all $l$.  On the other hand, $f_{l} \in \mathbb{C}[z,t]$ means that $\mathcal{F}_{l}$ must have coefficients in $\conj{\mathbb{Q}(z)} \cap \mathbb{C}[z] =\conj{\mathbb{Q}}[z]$.  The equality in the previous sentence (which completes the proof), follows by showing that any element of $\conj{\mathbb{Q}(z)}$ can have zeros only at algebraic numbers.}
  \end{numrmk}
Let $f$ be one of the irreducible factors $f_{l}$.  Regarding our choice of $f$ in practice, note that the roots of $\mathcal{F}$ are a subset of the roots of $\mathcal{R}$; so if our interest lies in studying a particular root $T_{*}$ of $\mathcal{R}$, then we choose $f$ to be an irreducible factor such that 
\begin{align*}
	\mathcal{F} := \mathsf{taut}_{\conj{\mathbb{Q}}}^{-1} f \in \left(\conj{\mathbb{Q}}[z] \right)[t]
\end{align*}
has $T_{*}$ as a root.  We make one further remark that will come into use later.
\begin{remark}
 $\mathcal{F}$ is irreducible in $\left(\mathbb{C}(z) \right)[t]$ by virtue of the irreducibility of $f$ in $\mathbb{C}[z,t]$ in combination with Gauss' Lemma.  Hence, $\mathcal{F}$ has a root in rational functions $\mathbb{C}(z)$ if and only if $\deg_{\conj{\mathbb{Q}}[z]} \mathcal{F} = 1$. 

 \end{remark}

  Our ulterior motive behind these statements are, of course, that we wish to choose $f$ such that $\mathcal{F}$ has a root $T \in \formal{\mathbb{Z}}{z} \cap \conj{\mathbb{Q}(z)}$ that enjoys an interpretation either as a street-factor, or the generating series for a collection of BPS indices. If we choose $f$ in such a way, we can use the following gem of a lemma (attributed to Eisenstein \cite{fs:an_comb}) that allows us---without loss of generality---to assume that $f$ has coefficients in $\mathbb{Z}$.

\begin{lemma}[Eisenstein] \label{lem:Eisenstein}
	Let $T \in \formal{\mathbb{Q}}{z} \cap \conj{\mathbb{C}(z)}$ be non-zero (i.e. an algebraic function over $\mathbb{C}$ that can be represented as a formal series with rational coefficients); suppose $f \in \mathbb{C}[z,t]$ is a non-zero polynomial such that $f(z,T(z)) \equiv 0$, then:
	\begin{enumerate}
		\item there exists a non-zero $g \in \mathbb{Z}[z,t]$ such that $g(z,T(z)) = 0$ and the degree of $g$ as a polynomial in $t$ (resp. $z$) is less than or equal to the degree of $f$ as a polynomial in $t$ (resp. $z$);
	
		\item if $f$ is irreducible in $\mathbb{C}[z,t]$, then there exists a constant $c \in \mathbb{C}$ such that $f = c g$.
	\end{enumerate}
\end{lemma}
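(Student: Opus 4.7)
The plan is to descend $f$ from $\mathbb{C}[z,t]$ to $\mathbb{Z}[z,t]$ via a linear-algebra argument that exploits the $\mathbb{Q}$-rationality of the coefficients of $T$.

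For part~(1), I would write $f = \sum_{0 \le i \le d_z,\, 0 \le j \le d_t} c_{ij}\, z^i t^j$ with the $c_{ij} \in \mathbb{C}$ regarded as unknowns, and expand $T = \sum_{n \ge 0} a_n z^n$ with $a_n \in \mathbb{Q}$. Substitution yields $f(z,T) = \sum_m b_m z^m$ with each $b_m$ a $\mathbb{Q}$-linear expression in the $c_{ij}$, since every coefficient $[z^m](z^i T^j)$ lies in $\mathbb{Q}$ (as $T^j \in \formal{\mathbb{Q}}{z}$). The vanishing $f(z,T) \equiv 0$ therefore cuts out a subspace $W_{\mathbb{C}} \subseteq \mathbb{C}^{(d_z+1)(d_t+1)}$ of the form $W_{\mathbb{Q}} \otimes_{\mathbb{Q}} \mathbb{C}$, with $W_{\mathbb{Q}} \subseteq \mathbb{Q}^{(d_z+1)(d_t+1)}$ defined by the same linear equations; since $f$ witnesses $W_{\mathbb{C}} \neq 0$, we deduce $W_{\mathbb{Q}} \neq 0$. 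Any non-zero rational point supplies $g' \in \mathbb{Q}[z,t] \setminus \{0\}$ with $g'(z,T)=0$, $\deg_z g' \le d_z$, $\deg_t g' \le d_t$, and clearing denominators yields the required $g \in \mathbb{Z}[z,t]$.

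For part~(2), I would examine the ideal $I := \{h \in \mathbb{C}[z,t] : h(z,T) = 0\}$: it is non-zero (since $T$ is algebraic), and prime as the kernel of the ring map $h \mapsto h(z,T)$ into the integral domain $\formal{\mathbb{C}}{z}$. Because $\mathbb{C}[z,t]/I$ contains the non-field $\mathbb{C}[z]$, the ideal $I$ cannot be maximal, so it is a height-$1$ prime in the UFD $\mathbb{C}[z,t]$ and therefore principal: $I = (h)$ for some irreducible $h$. Irreducibility of $f$ together with $f \in I$ forces $f = \lambda h$ for some $\lambda \in \mathbb{C}^{\times}$, so $g \in I$ admits a factorization $g = f q$ in $\mathbb{C}[z,t]$. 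The degree identities $\deg_t(fq) = \deg_t f + \deg_t q$ and $\deg_z(fq) = \deg_z f + \deg_z q$ (valid since $\mathbb{C}[z,t]$ is an integral domain), combined with the part-(1) bounds $\deg_t g \le \deg_t f$ and $\deg_z g \le \deg_z f$, pin $q$ down to a non-zero constant, giving $f = c g$.

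The only mildly technical step is confirming that the a priori infinite linear system $(b_m = 0)_{m \ge 0}$ actually cuts out a $\mathbb{Q}$-defined subspace of a \emph{finite-dimensional} ambient space; this is automatic from the finite number $(d_z+1)(d_t+1)$ of unknowns $c_{ij}$, which bounds the rank of the system. Beyond that, the proof rests on the entirely standard fact that $\mathbb{C}[z,t]$ is a two-dimensional UFD, so prime ideals annihilating an element of $\formal{\mathbb{C}}{z}$ that is algebraic over $\mathbb{C}(z)$ but transcendental over $\mathbb{C}$ are automatically principal.
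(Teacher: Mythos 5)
Your part~(1) is, modulo packaging, the paper's own argument: the paper writes $f=\sum_{b\in B} b\, f_{b}$ over a $\mathbb{Q}$-basis $B$ of the $\mathbb{Q}$-span of the coefficients of $f$ and concludes that every $f_{b}(z,T)$ vanishes, which is exactly the statement that the $\mathbb{Q}$-defined linear system $\left([z^{m}]f(z,T)=0\right)_{m\geq 0}$ has the same rank over $\mathbb{Q}$ as over $\mathbb{C}$; your rank/descent formulation and the paper's basis decomposition are two phrasings of the same $\mathbb{Q}$-linear descent, both followed by clearing denominators. For part~(2) the paper defers to ``properties of minimal polynomials'' (divide by the minimal polynomial of $T$ over $\mathbb{C}(z)$ and invoke Gauss's lemma), whereas you argue through the annihilating ideal $I=\ker\bigl(\mathbb{C}[z,t]\to\formal{\mathbb{C}}{z}\bigr)$ being a height-one prime, hence principal in a UFD; that route is legitimate and your degree bookkeeping at the end is correct. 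One step is stated too loosely, however: from ``$\mathbb{C}[z,t]/I$ contains the non-field $\mathbb{C}[z]$'' you cannot conclude that $I$ fails to be maximal, since a field can perfectly well contain a non-field subring (e.g.\ $\mathbb{C}(z)\supset\mathbb{C}[z]$). The fix is standard: if $I$ were maximal, the Nullstellensatz would give $\mathbb{C}[z,t]/I\cong\mathbb{C}$, so $z-a\in I$ for some $a\in\mathbb{C}$, contradicting $I\cap\mathbb{C}[z]=0$ (no nonzero polynomial in $z$ alone vanishes as a formal series). With that one-line patch your argument is complete.
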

\begin{proof}
	Assuming the first statement, the second statement is immediate from the first part along with properties of minimal polynomials.  To prove the first statement, we fill out the details of the proof in \cite[VII.37]{fs:an_comb}	. 
	 Let $A \subset \mathbb{C}$ be the set of coefficients of $f$ and $V = \mathbb{Q} A$ the (finite-dimensional) $\mathbb{Q}$-vector space generated by $A$.  Choose a basis $B \subset \mathbb{C}$ (of $\mathbb{Q}$-linearly independent elements) for $V$, then $f$ can be written as
\begin{align*}
	f &= \sum_{b \in B} b f_{b},
\end{align*}
where $f_{b} \in \mathbb{Q}[z,t]$ for every $b \in B$ and at least one $f_{b}$ is non-zero.  If there is only one non-zero $f_{b}$ then we are done; so assume that there are at least two non-zero $f_{b}$.  Because $f(z,T(z)) = 0$, we have
\begin{align}
	0 &= \sum_{b \in B} b f_{b}(z,T(z));
	\label{eq:Q_relation}
\end{align}
the right hand side is a formal power series in $z$.  Now, if none of the $f_{b}(z,T(z))$ are identically zero, then there exists a $k$ such that, extracting the coefficient of $z^{k}$ on the right hand side of \eqref{eq:Q_relation}, produces a $\mathbb{Q}$-linear relation between at least two elements of $B$---a contradiction.
\end{proof}

In summary, the discussion above shows us that if we only care about roots of \eqref{eq:gen_func_eqn} that arise as generating series for a combinatorial/ ``counting" problem (e.g. generating series that count solitons/BPS states, or closed curves on a spectral network), then we can always reduce to another algebraic equation with coefficients in $\mathbb{Z}$ which, moreover is irreducible over $\mathbb{C}[z,t]$ (a.k.a \textit{absolutely irreducible}).  Hence, we can study the geometry of the (complex) curve associated to the ``counting" problem root, while maintaining $\mathbb{Z}$-coefficients.

\begin{remark}
	Because the results of the following sections are quite general and do not rely on $\mathbb{Z}$ coefficients, we will work in complete generality and assume---via Remark \ref{rmk:Q-bar_coeffs}---that we have chosen $f \in \conj{\mathbb{Q}}[z,t]$, irreducible in $\mathbb{C}[z,t]$.
\end{remark}

Now, the curve $V_{f}$ is equipped with two maps to copies of $\mathbb{C}$ via restrictions of the standard coordinate projections of $\mathbb{C}^{2}$ to $\mathbb{C}$:

\begin{equation}
	\begin{tikzpicture}[baseline=(current  bounding  box.center)]
	\node at (0,0) (V) {$V_{f} := \left\{(z,t) \in \mathbb{C}^{2}: f(z,t) = 0 \right\} \subset \mathbb{C}^{2}$};
	
	\node at (-2, -2) (zspace) {$\mathbb{C}$};
	
	\node at (2, -2) (Tspace) {$\mathbb{C}$};
	
	
	\draw[-latex] (V) -- (zspace) node[sloped,above,midway] {\tiny $\text{pr}_{1}|_{V_{f}} = \phi$} node[sloped,below,midway]{\tiny $z \mapsfrom (z,t)$} ;
	
	\draw[-latex] (V) -- (Tspace) node[sloped,above,midway] {\tiny $\tau = \text{pr}_{2}|_{V_{f}}$} node[sloped,below,midway]{\tiny $(z,t) \mapsto t$};
	\end{tikzpicture}
	\label{eq:affine_cover}
\end{equation}
Letting $e := \deg_{\conj{\mathbb{Q}}[z]} \mathcal{F}$, then $\phi$ is a degree $e$ (algebraic) branched cover: there exists a Zariski closed set (the branching locus) $\mathpzc{B}$ such that, defining $V_{f}' := V_{f} \backslash \mathpzc{B}$, the restriction $\phi|_{V_{f}'}: V_{f}'\rightarrow \mathbb{C} \backslash \mathpzc{B}$ is a degree $e$ cover.

 For simplicity, first suppose that $\mathcal{F}$ is monic; then the branching locus (coinciding with the critical values of $\phi$) is given precisely by the values of $z_{*}$ where the polynomial $f|_{z=z_{*}} \in \conj{\mathbb{Q}}[t]$ has a repeated root: hence $\mathpzc{B}$ is just the zero locus $\Delta_{0}$ of the discriminant polynomial $\text{Disc}(\mathcal{F}) \in \conj{\mathbb{Q}}[z]$:
\begin{align}
	\Delta_{0} := \{z \in \mathbb{C}: \text{Disc}(\mathcal{F}) = 0\} \subset \conj{\mathbb{Q}};
	\label{eq:Delta_def}	
\end{align}
the corresponding set of critical points of $\phi$ is given by the \textit{ramification locus}:
\begin{align*}
	\mathpzc{R}_{\:0} := \left\{(z_{*},t_{*}) \in V_{f}: \left. \frac{\partial f}{\partial t} \right|_{(z_{*},t_{*})} = 0 \right \}.
\end{align*}
We can define the ramification index $\nu_{p} \in \mathbb{Z}_{>0}$ of a point $p = (z_{*},t_{*}) \in V_{f}$ directly in terms of derivatives of $f$:
\begin{align}
	\nu_{p} := \min \left \{k \in \mathbb{Z}_{>0}: \left. \frac{\partial^{k} f}{\partial t^{k}} \right|_{(z_{*},t_{*})} \neq 0 \right \} + 1;
	\label{eq:ram_index}
\end{align}
geometrically $\nu_{p}$ is the number of sheets of the cover $V_{f} \rightarrow \mathbb{C}$ that collide at the point $p$.  Of course, $\nu_{p} > 1$ if and only if $p \in \mathpzc{R}_{\:0}$.

Now let us consider the more general situation---the situation encountered in our examples---where $\mathcal{F}$ is not monic.  The resulting story is not the typical ``textbook" picture for affine algebraic branched covers; in particular, the curve may have infinite branches at finite values of $z$.  To see this, note that $f$ can be written in the form
\begin{align*}
	f(z,t) = a_{e}(z) t^{e} + a_{e-1}(z)t^{e-1} + \cdots a_{0}(z),
\end{align*}
where $a_{l} \in \conj{\mathbb{Q}}[z];\, l = 0,\,\cdots, \, e$.  Then, for generic $z_{*} \in \mathbb{C}$, the degree of the polynomial $f|_{z = z_{*}} \in \conj{\mathbb{Q}}[t]$ is just $e$; however, this degree drops precisely for $z_{*}$ in the set of isolated points 
\begin{align}
	\mathpzc{P} := \{z \in \mathbb{C} : a_{e}(z) = 0\} \subset \conj{\mathbb{Q}}.
	\label{eq:explode_def}
\end{align}

From the geometric point of view, branches of $\phi_{f}: 	V_{f} \rightarrow \mathbb{C}$ ``go off to infinity" as we approach points in $\mathpzc{P}$; if the degree of $f$ at $z_{*} \in \mathpzc{P}$ is $r < e$, then precisely $e - r$ branches diverge in this manner.  Thus, the branching locus\footnote{The name ``branching locus" may be misleading in this context: there may be no monodromy around a particular point in $\mathpzc{P}$; this problem will be eliminated from from the projective perspective.} for the affine curve associated to a general irreducible $f$ is given by
\begin{align*}
	\mathpzc{B} = \Delta_{0} \cup \mathpzc{P}.
\end{align*}

Moreover, from the topological perspective (using the analytic topology), it still makes sense to talk about ramification around a chosen point in $\mathpzc{P}$ by analysing the monodromy corresponding to lifts of small loops encircling the point.  One way to see the ramification algebraically is to eliminate the whole infinite branch problem by passing to the projectivized curve, as we will do shortly; however, we can also see any ramification from the perspective of affine curves.  Indeed, let $\widehat{\mathcal{F}}$ be defined according to the operation in Rmk. \ref{rmk:inversion}; then the zero locus of $\widehat{f} = \mathsf{taut}_{\conj{\mathbb{Q}}}^{-1} \widehat{\mathcal{F}}$ defines an affine curve $V_{\widehat{f}}$, birationally equivalent to $V_{f}$.  Note that,
\begin{align*}
	\widehat{f}(z,y) = a_{0}(z) y^{e} + a_{1}(z)y^{e-1} + \cdots a_{e}(z);
\end{align*}
so if the $t$-degree of $f$ at $z_{*}$ is $r \leq e$, then $a_{l}(z_{*}) = 0$ for $l \geq r + 1$; thus,
\begin{align*}
	\widehat{f}(z_{*},y) &= y^{e-r} \left[a_{0}(z_{*})y^{r} + \cdots + a_{r}(z_{*}) \right].
\end{align*}
Hence, there is a ramification point of the ``$z$-projection" 
\begin{align*}
	\begin{array}{lcl}
	\phi_{\widehat{f}}\,: & V_{\widehat{f}} &\longrightarrow  \mathbb{C}\\
		& (z,y) &\longmapsto z
	\end{array}
\end{align*}
at the point $p = (z_{*},0) \in V_{\widehat{f}} \subset \mathbb{C}^{2}$ with index $\nu_{p} = e-r+1$. 

\begin{remark}
It is a useful fact that the discriminant locus $\Delta_{0}$ defined in \eqref{eq:Delta_def} detects the presence of ramification of $\phi_{f}$ at points in $\mathpzc{P}$; however, strictly speaking, there is no corresponding ramification point to speak of on the affine curve $V_{f}$ (although there will be on its projectivization).
\end{remark}

 Similarly, the branched cover $\phi_{\widehat{f}}$ also has infinite (possibly ramified) branches over any point $z_{*} \in \mathbb{C}$ such that $(z_{*},0) \in V_{f}$, i.e. for any $z_{*}$ in the set
\begin{align}
	\mathpzc{Z} := \{z \in \mathbb{C} : a_{0}(z) = 0\} \subset \conj{\mathbb{Q}}.
	\label{eq:zed_def}
\end{align}

\begin{remark}
The sets $\mathpzc{R},\, \Delta,\, \mathpzc{P},$ and $\mathpzc{Z}$ will play an important role in \S \ref{sec:asymptotics}, during the analysis of asymptotics of sequences (e.g. of DT invariants/Euler characteristics) encoded in algebraic generating series.
\end{remark}

Returning to the diagram \eqref{eq:affine_cover}, the map $\tau \in \mathbb{C}[V_{f}] = \mathbb{C}[z,t]/(f) \subset \mathbb{C}(V_{f}) = \left(\mathbb{C}(z) \right)[t]/(\mathcal{F})$ can be thought of as a root of $\mathcal{F}$ in the field $\left(\mathbb{C}(z) \right)[t]/(\mathcal{F})$ (where $\tau$ is just the class defined by $t$).  In other words, there is a solution of \eqref{eq:gen_func_eqn} that is a regular algebraic function on a branched cover of the ``$z$-plane" $\mathbb{C}$; the attentive reader will realize nothing deep in this statement: from the analytic point of view $\tau$ is the analytic continuation of any root of $\mathcal{F}$ to a function on $V_{f}$.

To understand the behaviour of $\phi$ and $\tau$ near various sorts of ``infinity", it is convenient to consider the projectivized curve $\conj{V_{f}}$ formed from the zero locus of the homogenization $f_{h} \in \conj{\mathbb{Q}}[z,t,w]$ of $f \in \conj{\mathbb{Q}}[z,t]$, defined below.

\begin{center}
	\begin{tikzpicture}[baseline=(current  bounding  box.center)]
	\node at (0,0) (V) {$\conj{V_{f}} = \left\{[z:t:w] \in \mathbb{P}^{2}: f_{h}(z,t,w) = 0 \right\}$};
	
	\node at (-2.5, -2.5) (zspace) {$\mathbb{P}^1$};
	
	\node at (2.5, -2.5) (Tspace) {$\mathbb{P}^1$};
	
	
	\draw[-latex] (V) -- (zspace) node[sloped,above,midway] {\tiny $\conj{\phi}$} node[sloped,below,midway]{\tiny $[z:w] \mapsfrom [z:t:w]$} ;
	
	\draw[-latex] (V) -- (Tspace) node[sloped,above,midway] {\tiny $\conj{\tau}$} node[sloped,below,midway]{\tiny $[z:t:w] \mapsto [t:w]$};
	\end{tikzpicture}
\end{center}
Restricting $\conj{V_{f}}$ to the affine coordinate patch $U_{w \neq 0} = \{[z:t:w] \in \mathbb{P}^{2}: w \neq 0\}$, we recover  \eqref{eq:affine_cover}.   Indeed, the rational function $\conj{\tau} \in \mathbb{C}(\conj{V_{f}}) \cong \mathbb{C}(V_{f})$ is an extension of $\tau \in \mathbb{C}[V_{\mathcal{F}}]$ (i.e. $\conj{\tau}|_{U_{w \neq 0}} = \tau$), and the the map $\conj{\phi}$ is a degree $e$ branched cover of $\mathbb{P}^{1}$ that extends $\phi$.

\begin{remark}[Technical Note]
Of course, $\conj{\phi}$ is everywhere defined only if $[0:1:0] \notin \conj{V_{f}}$, and $\conj{\tau}$ is everywhere defined only if $[1:0:0] \notin \conj{V_{f}}$.  Annoyingly, our examples include such points, but luckily the point $[1:0:0]$ will not cause any philosophical or practical qualms: we just eliminate it from the domain of $\conj{\tau}$ and consider $\conj{\tau}$ as a birational map $\conj{V_{f}} \DashedArrow[->, densely dashed] \mathbb{P}^{1}$.  On the other hand, the point $[0:1:0]$ is a technical nuisance as it is convenient to have $\conj{\phi}$ defined everywhere; however, it can be removed by application of a projective transformation (an element of $\mathrm{PGL}_{2}(\mathbb{C})$) on $\mathbb{P}^{2}$ in order to change the embedding $\conj{V}_{f} \rightarrow \mathbb{P}^{2}$ before performing the coordinate projections that define $\conj{\phi}$ and $\conj{\tau}$.  Because we will eventually wish to write down expressions for sections of the cover $\conj{\phi}$ in terms of the coordinate $z$, the transformation should not mix $t$ and $z$.  In examples, applying one of the two $\mathrm{PGL}_{2}(\mathbb{Z})$ transformations:
\begin{align*}
	[z:t:w] &\mapsto [z':t':w']=[z+w:t:w]
\end{align*}
or
\begin{align*}
	[z:t:w] &\mapsto [z':t':w']=[z:w:t]
\end{align*}
will eliminate the annoyance points.  From the affine perspective (restricting to the coordinate patch where $w \neq 0$), the first map is just the shift $z \mapsto z + 1$ and the second map is the birational map 
\begin{align*}
	V_{f} &\DashedArrow[->, densely dashed] V_{\widehat{f}}\\
	 (z,t) &\mapsto (z,1/t).
\end{align*}
 Without further discussion of this issue, we will proceed as if this coordinate transformation has been performed (if necessary); and assume that $z,t,$ and $w$ represent the transformed coordinates.
\end{remark}

For later convenience, we note that $\conj{\phi}$ is ramified over the points
\begin{align*}
	\mathpzc{R} &:= \left \{[z:t:w] \in  \conj{V_{f}}: \left. \frac{\partial{f_{h}}}{\partial t} \right|_{(z,t,w)} = 0 \right \};
\end{align*}

Moreover, the image $\conj{\phi}(\mathpzc{R})$ (also referred to as \text{branch} points) is encoded in the zero locus of the discriminant $\text{Disc}\left(\mathcal{F}_{h} \right) \in \conj{\mathbb{Q}}[z,w]$ of $\mathcal{F}_{h} := \mathrm{taut}_{\mathbb{Q}[w]} f_{h} \in \left(\conj{\mathbb{Q}}[z,w] \right)[t]$:
\begin{align*}
	\Delta &:= \conj{\phi}(\mathpzc{R})\\
	&= \left \{[z:w] \in \mathbb{P}^{1}: \text{Disc}(\mathcal{F}_{h})(z,w) = 0 \right \}.
\end{align*}
Restricting to the standard coordinate patch $U_{w \neq 0}$ (and implicitly using the isomorphism $\mathbb{C}^{2} \rightarrow U_{w \neq 0}$), then $\mathpzc{R}$ and $\Delta$ restrict to $\mathpzc{R}_{\:0}$ and $\Delta_{0}$ respectively.

\begin{example}[Examples]\
	\begin{enumerate}
		\labitem{1}{ex:m_herd_curve} \textbf{$m$-herds}: Recall that all generating series (e.g. soliton generating series, street factors, and the BPS generating series) associated to an $m$-herd are given by powers of a root $P$ of the polynomial $\mathcal{H}_{(m-1)^2} \in \left(\mathbb{Z}[z]\right)[p]$, where
	\begin{align}
		\mathcal{H}_{k} := p - z p^{k} - 1 \in \left(\mathbb{Z}[z] \right)[p].
		\tag{\ref{eq:H_k_def}}
	\end{align}
	The discriminant of this polynomial is given as \cite[\S 2.7, pg. 41]{samuel:alg_num}
	\begin{align*}
		\mathrm{Disc}(\mathcal{H}_{k}) &= (-1)^{k(k-1)/2} z^{k-2} \left(-(k-1)^{k-1}+k^k z\right).
	\end{align*}
	which has two roots, one at $z = 0$, and the other at
	\begin{align*}
		z_{*} &= (k-1)^{k-1} k^{-k}.
	\end{align*}
		The root of the discriminant at $z = 0$ corresponds to ramification point of ramification degree $k-1$ (present on the projective curve, but not the affine curve): the degree of \eqref{eq:H_k_def} drops at the point $z = 0 \in \mathpzc{P}$.  To study ramification around this point, we can either pass to the projective curve or study the birationally equivalent curve defined by
		\begin{align*}
			\widehat{\mathcal{H}_{k}} &= y^{k-1} - y^{k} - z.
		\end{align*}

		\labitem{2}{ex:3_2_curve} $(3,2|3)\textbf{-herd}$: Using computer algebra software (e.g. \textit{Mathematica} or \textit{Magma}), one can check that the polynomial $f_{(3/2|3)} = \mathsf{taut}_{\mathbb{Z}}^{-1} \mathcal{F}_{(3/2|3)} \in \mathbb{Z}[z,t]$ is irreducible; the associated affine curve $V_{f}$ is a degree 39 branched cover of ``$z$-plane" $\mathbb{C}$.

		  The discriminant $\mathrm{Disc}(\mathcal{F}_{(3/2|3)}) \in \mathbb{Z}[z]$ is a degree 306 polynomial that factorizes as a product
	\begin{align}
		\mathrm{Disc}(\mathcal{F}_{(3/2|3)}) = -387420489 \left[d_{1}(z) \right]^{256} \left[d_{2}(z) \right] \left[d_{3}(z) \right]^2;
		\label{eq:Disc_F_3_2}
	\end{align}		  
	 where the $d_{i};\, i = 1,2,3;$ are all distinct irreducible polynomials; specifically $d_{1} = z, \, d_{2}$ is a degree 10 polynomial,\footnote{The expression for $d_{2}$ is explicitly shown below in \eqref{eq:d_2}; the root of least magnitude of this polynomial will play a role in the asymptotics of the associated BPS indices.} and $d_{3}$ is a polynomial of order $20$.  Thus, there are 31 distinct points in $\Delta_{0}$

		  The branch point $z = 0$ is of particular interest as this is the point around which we expect to find a series solution for the BPS generating series.  The degree of $f$ drops at the point $z = 0$ where $35$ roots become infinite; there are two associated ramification points above $z = 0$: one at infinity (present on the projective curve, or after performing the birational transformation $t \mapsto 1/t = y$), and one at the point $(z_{*}, t_{*}) = (0,1)$. 
		  
		  The point $(z_{*},t_{*}) = (0,1)$ is a singular point where four branches\footnote{Two of these branches can be seen in the depiction of the real points of the curve in Fig.~\ref{fig:(3_2|3)_curve}.} of the curve $V_{f}$ meet transversely; however, only one of these four branches extends as an analytic function in the variable $z$---it is this unique analytic extension that represents the BPS generating series.  Monodromy around the point $z = 0$ induces a permutation that is the product of a 3-cycle that permutes the three roots that do not analytically extend, a 35-cycle that permutes the roots that become infinite, and a 1-cycle that preserves the BPS generating series root.
	\end{enumerate}
\end{example}

\begin{figure}[t!]
	\begin{center}
		 \includegraphics[scale=0.5]{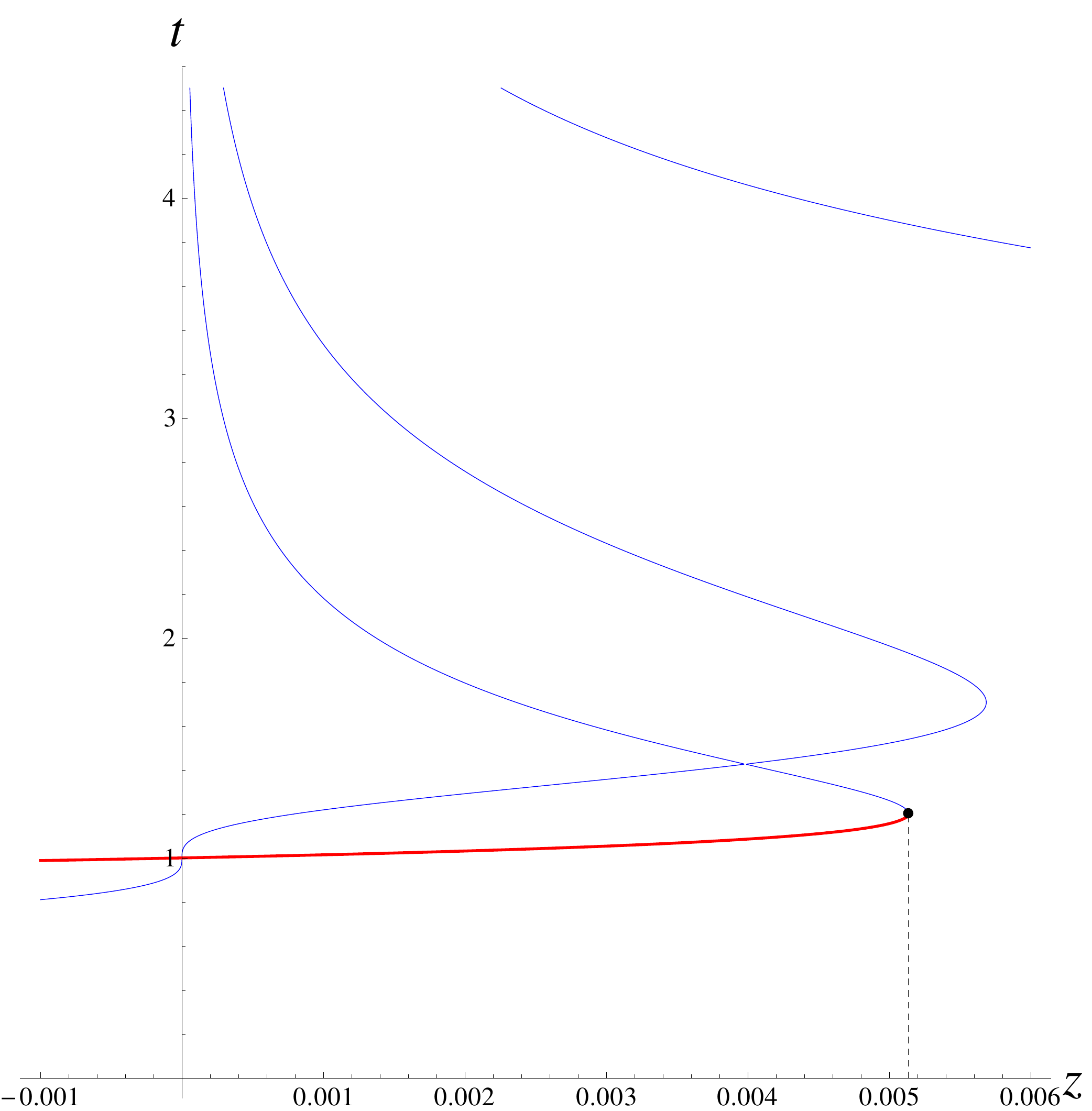}
		\caption{Real points of the curve defined by $\mathcal{F}_{(3,2|3)}$.  The portion shaded in red is the graph of the BPS generating series solution on its maximal domain of analyticity. \label{fig:(3_2|3)_curve}}
	\end{center}
\end{figure}

\subsubsection{Analytic Viewpoint} \label{sec:analytic_view}
The above discussion gives us a solution $\tau$ of \eqref{eq:gen_func_eqn} that is a regular function on a branched cover of the $z$-plane $\mathbb{C}$.  However, one of our goals was to seek out the roots $T_{l} \in \conj{\mathbb{Q}(z)}$ of \eqref{eq:gen_func_eqn} that are functions of the coordinate $z$; geometrically this is equivalent to seeking (local) sections of the branched cover $\phi: V_{f} \rightarrow \mathbb{C}$.  When $e > 1$, such sections cannot be birational maps (rational functions of $z$) as their existence would violate irreducibility of $\mathcal{F}$ over $\mathbb{C}(z)$. However, passing over into the holomorphic world, on the complement of the finite set $\Delta \cup \mathpzc{P}$, local holomorphic sections are guaranteed by the analytic implicit function theorem.  In fact, all of the above could have been said in analytic language.

\begin{definition}[Notation]
	Let $X$ be an analytic space, then $\mathcal{O}_{X}$ will denote the sheaf of analytic functions on $X$ and $\mathcal{M}_{X}$ will denote the sheaf of meromorphic functions on $X$.
\end{definition}

Passing to the world of analytic spaces:
\begin{itemize}
	\item  The projective variety $\conj{V_{f}}$ is re-interpreted as an analytic space; we will denote this corresponding analytic space via $\mathcal{V}_{f} \subset \mathbb{P}^{2}$.
	
	\item $\conj{\tau} \in \mathbb{C}(\conj{V_{f}}) \hookrightarrow \mathcal{M}_{\mathcal{V}_{f}}(\mathcal{V}_{f})$ is a meromorphic function on $\mathcal{V}_{f}$; it restricts to the analytic function $\tau \in \mathbb{C}[V_{f}] \hookrightarrow \mathcal{O}_{\mathcal{V}_{f}}(\mathcal{V}_{f} \cap U_{w \neq 0})$ where $U_{w \neq 0} := \{[z:t:w] \in \mathbb{P}^{2}: w \neq 0\}$.

	\item The (algebraic) branched cover $\conj{\phi}$ is re-interpreted as a branched cover of analytic spaces $\varphi: \mathcal{V}_{f} \rightarrow \mathbb{P}^{1}$.
\end{itemize}

  If the domain of $\varphi$ is restricted to $\mathcal{V}_{f}' := \mathcal{V}_{f} \backslash \varphi^{-1}(\Delta)$, we have a holomorphic degree $e$ cover of Riemann surfaces:
\begin{align*}
	\varphi_{\mathrm{cov}} := \varphi|_{\mathcal{V}_{f}'}: \mathcal{V}_{f}' \rightarrow \mathbb{P}^{1} \backslash \Delta.
\end{align*}
Then for any simply-connected open set $U \subset \mathbb{P}^{1} \backslash \Delta$, the pre-image $\varphi_{\mathrm{cov}}^{-1}(U)$ is a disjoint union of $e$ open sets; in fact, for any component $V$ of $\varphi_{\mathrm{cov}}^{-1}(U)$ we can use the analytic implicit function theorem, along with properties of analytic continuation, to show that there is a unique holomorphic section $s_{V}:U \rightarrow V \subset \mathcal{V}_{f}'$; by precomposing this section with the map $\conj{\tau}$ we get a \textit{meromorphic} function (a holomorphic map to $\mathbb{P}^{1}$) on $U$: 
\begin{align*}
	T_{V} := \conj{\tau} \circ s_{V}: U \rightarrow \mathbb{P}^{1}.
\end{align*}
Moreover, this meromorphic function\footnote{It is not hard to see that the poles of any such $T_{V}$ are precisely at points in $\mathpzc{P}$ with ramification index $1$, i.e. points in $\mathpzc{P} \cap \left(\mathbb{P}^{1} \backslash \Delta \right)$.}  is a root of $\mathcal{F}$ in the field $\mathcal{M}_{\mathbb{P}^{1} \backslash \Delta}(U)$; in fact, $\mathcal{F}$ splits over $\mathcal{M}_{\mathbb{P}^{1} \backslash \Delta}(U)$ as there are $e$ distinct roots corresponding to the $e$ components of $\varphi_{\mathrm{cov}}^{-1}(U)$.  

\subsection{Asymptotics and Algebraicity} \label{sec:asymptotics}
Let $\ggen$ be a generating series for the sequence of rational numbers $(\beta_{n})_{n = 1}^{\infty}$, i.e.:
\begin{align}
	\ggen &:= \prod_{n = 1}^{\infty} (1 - \left(\mathtt{s} z \right)^{n} )^{n \beta_{n}} \in \formal{\mathbb{Q}}{z},
	\label{eq:general_gen}
\end{align} 
where $\mathtt{s} \in \{+1, -1 \}$ is some fixed sign depending on the context under consideration; of course, we have two cases in mind.
\begin{enumerate}
	\item $\ggen = \gdt$ is the generating series for BPS indices/DT invariants: $\beta_{n} = \Omega(n \gamma_{c})$ (where $\gamma_{c}$ is some primitive charge) $\gamma_c$ and $\mathtt{s} z = X_{\twid{\gamma_{c}}} =: \twid{z}$.  The particular value of $\mathtt{s} \in \{ \pm 1\}$ depends on the definition of $z$ suited to the problem at hand; however, for $(a,b|m)$-herds/the $m$-Kronecker quiver we choose $\mathtt{s} = (-1)^{mab - a^2 - b^2}$ according to the definition of $z$ given in \eqref{eq:zed_def} and Appendix \ref{app:signs}.
	
	\item $\ggen = \geul$ is the generating series for the Euler characteristics of stable moduli for the Kronecker $m$-quiver: $\beta_{n} = -\chi(\CM_{\st}^{m}(an,bn))$ where $(a,b) \in \mathbb{Z}_{>0}^{2}$ is a pair of coprime integers, $\CM_{\st}^{m}(an,bn)$ is the moduli space of stable representations of the $m$-Kronecker quiver with dimension vector $(an,bn)$ (with respect to the non-trivial stability condition), and $\chi$ denotes the Euler characteristic.  In this case the choice $\mathtt{s} = 1$ is forced upon us via Reineke's functional equation \eqref{eq:reineke_func} if we wish to use the same variable $z$ used in the DT invariant generating series $\gdt$ (for which we choose the signs $\mathtt{s} = (-1)^{mab - a^2 - b^2}$).
\end{enumerate}

If $\ggen$ is an algebraic function, then the discussion in \S \ref{sec:alg_curves} presents a geometric picture that aids in the classification of the $n \rightarrow \infty$ behaviour of the $\beta_{n}$ via the study of local sections of the branched cover $\varphi$.  In particular holomorphic techniques allow us to extract asymptotics of the $\beta_{n}$ through a study of the possible zeros, poles, and branch point singularities of the local section corresponding to the $\ggen$; the fact that $\ggen$ is, moreover, algebraic places further restrictions on the asymptotic behaviour.  

In the following section we will derive the most general possible form for the asymptotics of the $\beta_{n}$ relying only on the assumption that $\ggen$ is an algebraic function over $\mathbb{Q}$.  As a warmup, we will first determine the asymptotics of the coefficients $\{t_{n}\}_{n = 0}^{\infty}$ in the series expansion
\begin{align*}
	T = \sum_{n = 0}^{\infty} t_{n} (z - z_{0})^{n}
\end{align*}
of an arbitrary algebraic function $T$, holomorphic around the point $z_{0}$ (for simplicity we will eventually take $z_{0} = 0$);\footnote{When this algebraic function is the BPS generating series, these coefficients can be interpreted as counts of halo-bound state BPS particles \cite[\S 3.4]{gmn:framed}.} the asymptotics of the $\beta_{n}$ follow a closely related story. The techniques used in this section are drawn heavily from the rather powerful book \cite{fs:an_comb} by Flajolet and Sedgewick.  Before beginning, we introduce some terminology.

\begin{definition}[Terminology]
In the subsequent discussion: a \textit{singularity} is one of the following:
	\begin{enumerate}	
	 \labitem{(P)}{list:poles} A pole of the function (where the function can be interpreted as a meromorphic function), e.g. the point $0$ in $h(\zeta) = \zeta^{-k}$ for some $k \in \mathbb{Z}$;
	 
	 \labitem{(B${}_{>0}$)}{list:fin_branch} A ``finite" branch point: the singularity is due to a failure of coordinates and the function can be continued as a holomorphic function on some finite-degree branched cover of $U$, e.g. the point $\zeta = 0$ in $\zeta^{\alpha}$ for some $\alpha \in \mathbb{Q}_{>0} \backslash \mathbb{Z}_{>0}$;
	 
	 \labitem{(B${}_{<0}$)}{list:inf_branch} An ``infinite" branch point: the singularity is due to a failure of coordinates and the function can be continued as a meromorphic function on some finite-degree cover of $U$, e.g. the point $\zeta = 0$ in $h(\zeta) = \zeta^{-\alpha}$ for some $\alpha \in \mathbb{Q}_{>0} \backslash \mathbb{Z}_{>0}$;
	 
	 \labitem{(B${}_{\mathrm{log}}$)}{list:log_branch} A logarithmic branch point: the singularity is due to a failure of coordinates and the function can be continued as a meromorphic function on some infinite-degree branched cover of $U$, e.g. the point $\zeta = 0$ in $h(\zeta) = \log(\zeta)$.
	 \end{enumerate}
	 Singularities of type \ref{list:poles}, \ref{list:fin_branch}, or \ref{list:inf_branch} are the only possibilities for an algebraic function; a singularity of type \ref{list:log_branch} can occur as the logarithm of a meromorphic function (specifically $\log(h)$ for some meromorphic function $h$ will have a logarithmic singularity at any zero or pole of $h$).    Furthermore, taking the logarithm of an algebraic function converts all zeros, singularities of type \ref{list:poles}, and singularities of type \ref{list:inf_branch} into logarithmic branch points.
\end{definition}

\subsubsection{Asymptotics of Algebraic Series Coefficients}



First, we begin by noting that local sections of the cover $\varphi_{\mathrm{cov}}: \mathcal{V}_{f}' \rightarrow \mathbb{P}^{1} \backslash \Delta$, defined only on open sets $U \subset \mathbb{P}^{1} \backslash \Delta$, are guaranteed to be analytically continued to regions including points in $\Delta$ as long as the image of the section does not hit a ramification point.  Thus, if we wish to study maximal analytic extensions of sections---a necessary study in order to understand type of singularities local sections may encounter--- we should include non-ramification points of $\mathcal{V}_{f}$ in our study as well.  Indeed, on the complement of ramification points, $\varphi$ defines a holomorphic map of Riemann surfaces (non-singular analytic curves)
\begin{align*}
	\varphi_{\mathrm{sub}} := \varphi|_{\mathcal{V}_{f} \backslash \mathpzc{R}} : \mathcal{V}_{f} \backslash  \mathpzc{R} \rightarrow \mathbb{P}^{1}.
\end{align*}
Even though this is generally not a covering space map, it remains a submersion; hence, we can still apply the analytic inverse function theorem to speak of sections.

\begin{lemma}
	 Denote the disk of radius $r$, centred at $z_{0}$, via
\begin{align*}
	D_{z_{0}}(r) := \{z \in \mathbb{C}: |z - z_{0}| < r \};
\end{align*}
 Let $s_{0} \in \varphi_{\mathrm{sub}}^{-1}(z_{0})$ for some $z_{0} \in \mathbb{C} \cap \varphi_{\mathrm{sub}}(\mathcal{V}_{f})$; if the component of $\varphi^{-1}(D_{z_{0}}(r))$ containing $s_{0}$ does not contain any point in $\mathpzc{R}$, then there is a unique holomorphic section $s: D_{z_{0}}(r) \rightarrow \mathcal{V}_{f}$ of $\varphi_{\mathrm{sub}}$ such that $s(z_{0}) = s_{0}$.
\end{lemma}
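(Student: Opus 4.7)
The plan is to realize the section $s$ as the inverse of $\varphi$ restricted to the distinguished component. The key input is the analytic implicit function theorem, which applies at every point of that component precisely because it avoids the ramification locus $\mathpzc{R}$.

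First I would introduce $W$ as the connected component of $\varphi^{-1}(D_{z_{0}}(r))$ containing $s_{0}$, and note that $W$ is an open subset of $\mathcal{V}_{f}\setminus\mathpzc{R}$, hence a Riemann surface. At each $p\in W$, the condition $p\notin\mathpzc{R}$ translates (in a suitable affine chart, using the definition of $\mathpzc{R}$ via $\partial_{t}f_{h}$) into the non-vanishing of the derivative of the defining equation in the fibre direction. Applying the analytic implicit function theorem then gives a local holomorphic inverse for $\varphi$ at each such $p$, so $\varphi|_{W}\colon W\to D_{z_{0}}(r)$ is an everywhere local biholomorphism.

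Next, the step I regard as the crux of the argument: upgrading this local biholomorphism to a covering map. Since $\mathcal{V}_{f}\subset\mathbb{P}^{2}$ is a closed analytic subvariety of a compact space, the global map $\varphi\colon\mathcal{V}_{f}\to\mathbb{P}^{1}$ is proper. Restricting to the preimage of $D_{z_{0}}(r)$ preserves properness, and passing to the single connected component $W$ still yields a proper map onto its image (as $W$ is a union of fibre components and $\varphi$ is open). Together with local biholomorphy this makes $\varphi|_{W}\colon W\to D_{z_{0}}(r)$ a finite covering of the disk. I expect the main technical care to go into verifying that $W$ really does surject onto $D_{z_{0}}(r)$ rather than some proper subset; openness of $\varphi|_{W}$ and closedness via properness, combined with connectedness of $D_{z_{0}}(r)$, are what close this gap.

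Finally, since $D_{z_{0}}(r)$ is simply connected, any connected covering of it is trivial of degree one, so $\varphi|_{W}$ is a biholomorphism. I would define
\begin{equation*}
 s := (\varphi|_{W})^{-1}\colon D_{z_{0}}(r)\longrightarrow W\subset\mathcal{V}_{f},
\end{equation*}
which is holomorphic and satisfies $\varphi_{\mathrm{sub}}\circ s=\mathrm{id}$. Because $s_{0}\in W$ and $\varphi(s_{0})=z_{0}$, the point $s(z_{0})$ and $s_{0}$ are both preimages of $z_{0}$ in $W$; injectivity of $\varphi|_{W}$ forces $s(z_{0})=s_{0}$. For uniqueness, any other holomorphic section $s'$ with $s'(z_{0})=s_{0}$ must, by continuity and connectedness of $D_{z_{0}}(r)$, have image contained in $W$, and then agree with $s$ pointwise since $\varphi|_{W}$ is injective. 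This completes the plan.
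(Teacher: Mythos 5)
Your proposal is correct, but it globalizes differently than the paper does. The paper's own proof is a two-line sketch: the analytic implicit function theorem gives a unique germ of a section at $z_{0}$ (every point of the distinguished component is off $\mathpzc{R}$, hence a smooth point where $\partial f_h/\partial t \neq 0$), and then one \emph{analytically continues} that germ to all of $D_{z_{0}}(r)$, the absence of ramification points in the component being the only stated justification. You instead construct the section in one stroke as $(\varphi|_{W})^{-1}$: local biholomorphy on $W$ from the implicit function theorem, properness of $\varphi$ from compactness of the projective curve $\mathcal{V}_{f}$ (note this uses the paper's earlier coordinate change ensuring $[0:1:0]\notin\mathcal{V}_{f}$ so that $\varphi$ is everywhere defined), hence $\varphi|_{W}$ is a finite covering of the simply connected disk, hence a biholomorphism of degree one. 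What your route buys is exactly the part the paper leaves implicit: surjectivity onto the whole disk and single-valuedness of the continuation are not automatic for an affine curve (sections can run off along the infinite branches over $\mathpzc{P}$), and it is compactness/properness that rules this out; your openness-plus-closedness argument makes that explicit, and injectivity of $\varphi|_{W}$ gives uniqueness cleanly. One small repair: the parenthetical ``$W$ is a union of fibre components and $\varphi$ is open'' is not the right reason that $\varphi|_{W}$ stays proper; the clean statement is that $W$, being a connected component of the locally connected open set $\varphi^{-1}(D_{z_{0}}(r))$, is closed (and open) in it, and the restriction of a proper map to a closed subset is proper. With that adjustment the argument is complete.
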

\begin{proof}
	For sufficiently small $\epsilon > 0$, the analytic implicit function theorem guarantees the existence of a unique holomorphic section $s: D_{z_{0}}(\epsilon) \rightarrow \mathcal{V}_{f}$ such that $s(z_{0}) = s_{0}$; we can analytically continue $s$ to any disk $D_{z_{0}}(r)$ such that the component of $\varphi^{-1}(D_{z_{0}}(r))$ containing $s_{0}$ does not contain any ramification points.
\end{proof}

Our true interests lie in finding roots of the polynomial $\mathcal{F}$, thought of as functions of the coordinate $z$; as described via the discussion in \S \ref{sec:analytic_view}, these are given by pushing holomorphic sections of $\mathcal{V}_{f} \rightarrow \mathbb{P}^{1}$ forward via $\conj{\tau}$.  Doing so, we obtain a meromorphic function, i.e. a holomorphic map $T := \conj{\tau} \circ s: D_{z_{0}}(r) \rightarrow \mathbb{P}^{1}$; the poles of this function correspond precisely to the location of the poles of $\conj{\tau}$
\begin{align*}
	\widehat{\mathpzc{P}} := \mathpzc{\conj{\tau}}^{-1}([1:0]) \subset \mathcal{V}_{f}
\end{align*}
(note that $\varphi(\widehat{\mathpzc{P}}) = \mathpzc{P}$).  Hence, we have the following corollary.

\begin{corollary} \label{cor:an_extension}
 Let $s$ be defined as in the previous lemma; if the component of $\varphi^{-1}(D_{z_{0}}(r))$ containing $s_{0}$ does not contain any point in $\mathpzc{R} \cup \widehat{\mathpzc{P}}$, then $T := \conj{\tau} \circ s: D_{z_{0}}(r) \rightarrow \mathbb{C}$ is a holomorphic function.
\end{corollary}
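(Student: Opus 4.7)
The plan is to assemble the corollary essentially by concatenating the previous lemma with the definition of $\widehat{\mathpzc{P}}$ as the pole locus of $\conj{\tau}$. First, I would invoke the previous lemma: because the component $W$ of $\varphi^{-1}(D_{z_0}(r))$ containing $s_0$ avoids $\mathpzc{R}$ by hypothesis, the lemma produces a unique holomorphic section $s: D_{z_0}(r) \to \mathcal{V}_f$ of $\varphi_{\mathrm{sub}}$ with $s(z_0) = s_0$. By uniqueness and continuity of analytic continuation, the image $s(D_{z_0}(r))$ is contained in $W$.

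Next, I would recall that $\conj{\tau}$ is a meromorphic function on $\mathcal{V}_f$, that is, a holomorphic map $\conj{\tau}: \mathcal{V}_f \to \mathbb{P}^1$. The composition $T = \conj{\tau} \circ s$ is therefore automatically a holomorphic map $D_{z_0}(r) \to \mathbb{P}^1$; to upgrade ``holomorphic to $\mathbb{P}^1$'' to ``holomorphic to $\mathbb{C}$'' one only needs to verify that $T$ never attains the value $[1:0] \in \mathbb{P}^1$. But the set of points where $\conj{\tau}$ takes the value $[1:0]$ is by definition $\widehat{\mathpzc{P}}$, and by hypothesis $W \cap \widehat{\mathpzc{P}} = \varnothing$, so $s(D_{z_0}(r)) \subset W$ misses $\widehat{\mathpzc{P}}$ entirely. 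Hence $T(D_{z_0}(r)) \subset \mathbb{C}$, and $T$ is a holomorphic $\mathbb{C}$-valued function, as claimed.

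There is no serious obstacle here: the corollary is essentially bookkeeping that combines the implicit function theorem (already packaged in the previous lemma) with the elementary fact that a meromorphic function restricted away from its poles is holomorphic. The only subtlety worth being careful about is that the image of the section $s$ really does lie in the single component $W$ of $\varphi^{-1}(D_{z_0}(r))$ that contains $s_0$; this follows from connectedness of $D_{z_0}(r)$ and continuity of $s$, together with the fact that distinct components of $\varphi^{-1}(D_{z_0}(r))$ are disjoint open subsets of $\mathcal{V}_f$. Once that is in place, the pole-avoidance hypothesis $W \cap \widehat{\mathpzc{P}} = \varnothing$ closes the argument immediately.
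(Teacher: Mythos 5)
Your proposal is correct and follows essentially the same route as the paper, which (in the discussion immediately preceding the corollary) also obtains $T=\conj{\tau}\circ s$ as a holomorphic map to $\mathbb{P}^{1}$ whose poles are exactly the points of $\widehat{\mathpzc{P}}$, so that avoiding $\mathpzc{R}\cup\widehat{\mathpzc{P}}$ in the relevant component yields a $\mathbb{C}$-valued holomorphic function. Your added remark that $s(D_{z_{0}}(r))$ stays in the single component containing $s_{0}$ by connectedness is a harmless and correct piece of bookkeeping.
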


With this in mind, suppose $T$ is a root of $\mathcal{F}$ that is holomorphic on a disk centred at $z_{0} \in \mathbb{C}$.  Then, to speak of the \textit{maximal} analytic continuation to such a disk, we define the following.

\begin{definition}
	Let $h$ be a holomorphic function on a disk centered at $z_{0} \in \mathbb{C}$, then 
	\begin{align*}
		\mathtt{R}_{h} := \sup \left \{r \in  \mathbb{R}_{>0} \cup \{\infty\}: \text{$h$ can be analytically continued to $D_{z_{0}}(r)$} \right \}.
	\end{align*}
\end{definition}
On $D_{z_{0}}(\mathtt{R}_{T})$, we may write $T$ as a convergent series
\begin{align}
	T(z) &= \tau(s_{0}) + \sum_{n = 1}^{\infty} t_{n} (z - z_{0})^{n},\quad z \in D_{z_{0}}(\mathtt{R}_{T}).
	\label{eq:max_section}
\end{align}
Assume that $\mathtt{R}_{T} < \infty$; then, heuristically speaking, we can predict the asymptotic behaviour of the coefficients $t_{n}$ in the large $n$ limit by noticing that the failure of the series representation of $T$ to converge on a disk of radius larger than $\mathtt{R}_{T}$ is due to the growth of $\left |t_{n} \right |$ as fast as $\left |\mathtt{R}_{T} \right |^{-n}$;  in fact, we can even derive the precise asymptotics of $t_{n}$ through a closer analysis of its behaviour near the singular points of $T$ (a collection of points in in $\mathpzc{R} \cup \widehat{\mathpzc{P}}$) that obstruct analytic continuation to a larger disk.

\begin{remark}
	If $\mathtt{R}_{T} < \infty$, then via Cor. \ref{cor:an_extension} the failure to holomorphically extend $T$ to a larger disk is due to a (non-empty) subset of points in $\varphi^{-1}\left[\partial \conj{D_{z_{0}}(\mathtt{R}_{T})} \right] \cap \left(\mathpzc{R} \cup \widehat{\mathpzc{P}} \right)$.
\end{remark}

\begin{definition}
	Let $h$ be a holomorphic function on some disk centred at $z_{0}$, then the set of \textit{dominant singularities} of $h$ (relative to $z_{0}$) is
	\begin{align*}
		\mathsf{sing}_{z_{0}}(h) &:= \left \{
		z \in \partial \conj{D_{z_{0}}(\mathtt{R}_{h})}: 
		\begin{array}{c}	
			\text{$h$ does not extend as a holomorphic function}\\
			 \text{to any open set containing $z$} 
		\end{array}		 
		\right \}.
	\end{align*}
\end{definition}

For the algebraic function $T$, we can lift each dominant singularity uniquely to a point on the curve $\mathcal{V}_{f}$.  Denote this set of lifts via $\mathsf{Sing}_{z_{0}}(T)$.



\begin{remark}\
	\begin{enumerate}
		\item By the note above,
			\begin{align*}
				\mathsf{sing}_{z_{0}}(T) \subseteq \Delta \cup \mathpzc{P}
			\end{align*}
			Moreover, under the map $\iota: \mathbb{C}^{2} \cup \{\infty \} \hookrightarrow \mathbb{P}^2$ which sends $(z,t) \mapsto [z:t:1]$ for $t \neq \infty$ and $(z,\infty) \mapsto [z:1:0]$, we have
			\begin{align*}		
				 \iota \left(\mathsf{Sing}_{z_{0}}(T) \right) \subseteq \mathpzc{R} \cup \widehat{\mathpzc{P}} \subset \mathcal{V}_{f}\\
			\end{align*}
			In particular, $\mathsf{sing}_{z_{0}}(T)$ (and $\mathsf{Sing}_{z_{0}}(T) $) are finite sets (of the same order) as the set $\mathpzc{R} \cup \widehat{\mathpzc{P}}$ is finite.
		
	
		\item It is helpful to keep in mind that $T(z)$ becomes infinite as $z$ approaches dominant singularities in $\mathpzc{P}$, while it remains finite as $z$ approaches dominant singularities in $\mathpzc{R} \backslash \left(\mathpzc{R} \cap \mathpzc{P} \right)$.
	\end{enumerate}
\end{remark}

Unfortunately, when $\mathcal{V}_{f}$ has singular points (in particular singular points projecting, via $\varphi$ to the affine patch containing 0), (smooth) complex geometric techniques are insufficient for finding $\mathtt{R}_{T}$ and the location of dominant singularities.  Indeed, let $\mathcal{W} \subset \mathcal{V}_{f}$ be the component of $\varphi^{-1}(D_{z_{0}}(r))$ containing the point $s_{0}$; assume that we have chosen $r$ large enough such that the there is a point $p \in \mathpzc{R} \cap \partial \conj{\mathcal{W}}$.  If $p$ is a smooth point of $\mathcal{V}_{f}$, and $\nu_{p} \geq 1$ is its ramification-index, then the monodromy associated to small loops containing the point $p$ induces a cyclic permutation of order $\nu_{p}$ on the components of
	\begin{align*}
		\varphi^{-1}(D_{\varphi(p)}(\epsilon)) = \bigsqcup_{i=1}^{\nu_{p}} L_{i},
	\end{align*}
where $\epsilon > 0$ is sufficiently small---i.e. a cyclic permutation on the set of all $\varphi$-preimages of small disks surrounding $\varphi(p)$.  The result is that $p$ is a branch-point singularity and we cannot extend $s$ holomorphically to a larger disk.  On the other hand, if $p$ is a \textit{singular} point,  then it may happen that one of the pre-images $L = L_{j}$, for some $j \in \{1, \cdots, \nu_{p} \}$, is fixed by the monodromy associated to a small loop around $p$; if $L \cap \mathcal{W} \neq \emptyset$, then we can analytically continue $s$ to $\mathcal{W} \cup L$---i.e. analytically continue to a disk larger than the na\"{i}ve one whose radius is constrained by $p$.  Furthermore, as long as $p$ is not an element of $\mathpzc{P}$, then $T = \conj{\tau} \circ s$ can also be analytically continued.  In order to detect if we can perform such continuations beyond the na\"{i}ve disks constrained by singular points of $V_{f}$, it is best to take a more algebraic viewpoint and use the Puiseux expansion of the root $T$.  The key idea is that the behaviour of $T$ near a singularity is encoded in its Puiseux expansion around the singularity.  To recall this expansion, we paraphrase the statement of Thm. VII.7 of \cite[\S VII.7]{fs:an_comb}.

\begin{theorem}[(Newton-)Puiseux Expansion, C.f. Theorem VII.7 of \cite{fs:an_comb}] \label{thm:Puiseux}
	Let $T \in \conj{\mathbb{Q}(z)}$ be an algebraic function; let $z_{0} \in \mathbb{C}$, then there exists an expansion of the form
	\begin{align}
		T(z) &= \sum_{l \geq l_{0}} c_{l} (z - z_{0})^{l/\kappa_{z_{0}}}
		\label{eq:T_puiseux}
	\end{align}
	where $l_{0} \in \mathbb{Z},$ $\kappa_{z_{0}} \geq 1$ is an integer,\footnote{Note that $\kappa_{z_{0}}$ depends on the choice of root $T$, not just the point $z_{0} \in \mathbb{C}$; so the truly pedantic should adopt a notation that suggests this---however, there should be no confusion due to our notation for the following discussion.} and $\{c_{l} \}_{l = l_{0}}^{\infty} \subset \mathbb{C}$.  Moreover, for sufficiently small $r > 0$, then for any $\theta, \vartheta \in \mathbb{R}/(2\pi \mathbb{R})$, this expansion gives a well-defined analytic function on a neighbourhood of the form (see Fig.~\ref{fig:wedge_region})
	\begin{equation}
		\begin{aligned}
			W_{z_{0}}(r, \theta, \vartheta) &:= \{z \in \mathbb{C} \backslash \{z_{0}\} : \text{$|z - z_{0}| < r$ and }\mathrm{Arg}(z - z_{0}) \notin [ \vartheta + \theta,  \vartheta - \theta ] \}\\
			&= e^{i \vartheta} \{z \in \mathbb{C} \backslash \{0\}: \text{$|z| < r$ and }\mathrm{Arg}(z) \notin [ \theta,  - \theta ] \} + z_{0}
		\end{aligned}
			\label{eq:W_rtheta_def}
	\end{equation}	
	(i.e. an indented disk given by a disk neighbourhood of the point $z_{0}$, minus a closed wedge spanning an angle $2 \theta$ and bisector the line at angle $\vartheta$).
\end{theorem}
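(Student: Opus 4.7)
The plan is to prove the Newton--Puiseux expansion theorem in the classical way: translate, apply Newton's polygon to produce the leading fractional exponent, iterate to generate the full fractional series, and finally verify that it defines a holomorphic function on a suitable indented disk. Without loss of generality, translate so that $z_0 = 0$ (the map $z \mapsto z - z_0$ preserves both algebraicity and the form of the claimed expansion). Let $\mathcal{F} \in \conj{\mathbb{Q}}(z)[t]$ be the minimal polynomial of $T$, clear denominators to get an absolutely irreducible $f \in \mathbb{C}[z,t]$ with $f(z, T(z)) = 0$, and write $f(z,t) = \sum_{(i,j)} a_{ij} \, z^i t^j$.

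First I would find the leading term. Form the Newton polygon of $f$ at the origin: the lower-left convex hull of the support $\{(i,j) : a_{ij} \neq 0\}$. Each edge of slope $-\mu$ (with $\mu \in \mathbb{Q}$) corresponds to a candidate leading exponent for a root: substituting the ansatz $t = c\, z^{\mu} + \text{(higher order)}$ into $f$ and collecting the lowest-order terms forces $c$ to be a nonzero root of the edge's \emph{characteristic polynomial}, a nontrivial polynomial in $c$ whose existence is guaranteed by the edge being nondegenerate. This pins down the leading pair $(c_0, \mu_0)$ of one specific branch. Since $T$ is a fixed root of $f$, exactly one choice of $(c_0, \mu_0)$ is consistent with $T$.

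Next, I would iterate. Substitute $t = z^{\mu_0}(c_0 + t_1)$ into $f$, obtaining a new polynomial equation $f_1(z^{1/N_0}, t_1) = 0$ (after clearing fractional powers via $z = \zeta^{N_0}$ for a suitable integer $N_0$) with $t_1$ vanishing at $\zeta = 0$; repeating Newton's polygon analysis on $f_1$ yields the next exponent $\mu_1$ and coefficient $c_1$. The crucial finiteness claim is that this procedure stabilizes: after finitely many steps, the denominators of the exponents divide a common integer $\kappa_{z_0}$, and moreover $\kappa_{z_0} \leq \deg_t f$. This is the classical ramification bound, and it is arguably the main obstacle of the proof. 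It follows either from a direct bookkeeping argument on how the Newton polygons evolve (the "regular" case, where all subsequent edges have integer slopes once one unwinds via $\zeta = z^{1/N_0}$), or more elegantly by invoking that the field of Puiseux series $\bigcup_{k \geq 1} \mathbb{C}((z^{1/k}))$ is algebraically closed over $\mathbb{C}((z))$; since $T$ generates a finite extension of $\mathbb{C}(z)$ of degree at most $\deg_t f$, its expansion lies in $\mathbb{C}((z^{1/\kappa}))$ for some $\kappa \leq \deg_t f$.

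Finally, I would establish convergence and the analytic interpretation on the wedge region. Having the formal Puiseux series $\sum_{l \geq l_0} c_l\, z^{l/\kappa_{z_0}}$, rewrite it as a Laurent series $\sum_{l \geq l_0} c_l\, \zeta^l$ in the new variable $\zeta$ satisfying $\zeta^{\kappa_{z_0}} = z$. The function $T$ lifts to an algebraic (and hence, by the analytic implicit function theorem applied on $\mathcal{V}_f \setminus \mathpzc{R}$, locally holomorphic) function in $\zeta$ near $\zeta = 0$, except possibly at $\zeta = 0$ itself where it has at worst a pole (case $l_0 < 0$); consequently the Laurent expansion in $\zeta$ converges on some punctured disk $0 < |\zeta| < r^{1/\kappa_{z_0}}$. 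Pulling back through the branch $\zeta = (z)^{1/\kappa_{z_0}}$ defined by choosing the argument of $z$ in an open arc of length $2\pi - 2\theta$ about $\vartheta$, one obtains a single-valued holomorphic function on exactly the indented disk $W_{z_0}(r, \theta, \vartheta)$ of \eqref{eq:W_rtheta_def}, which agrees term-by-term with the Puiseux series \eqref{eq:T_puiseux}. The wedge must be excised because analytic continuation of any genuine branch $z^{1/\kappa_{z_0}}$ around a full loop at $z_0$ would incur nontrivial monodromy when $\kappa_{z_0} > 1$.
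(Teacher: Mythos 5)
The paper does not actually prove this theorem: it is stated as a paraphrase of Theorem VII.7 of \cite{fs:an_comb}, so there is no in-paper argument to compare against. Your sketch is the classical Newton-polygon proof that the cited source itself relies on, and it is correct in outline; the only details left implicit are the standard ones (the growth bound from algebraicity showing the singularity in the variable $\zeta$ with $\zeta^{\kappa_{z_{0}}}=z$ is at worst a pole rather than essential, and the fact that your ``more elegant'' alternative---invoking algebraic closedness of the Puiseux series field---is essentially the theorem itself, so the non-circular route is the polygon bookkeeping you give first).
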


\begin{figure}[t!]
	\begin{center}
		 \includegraphics[scale=1.2]{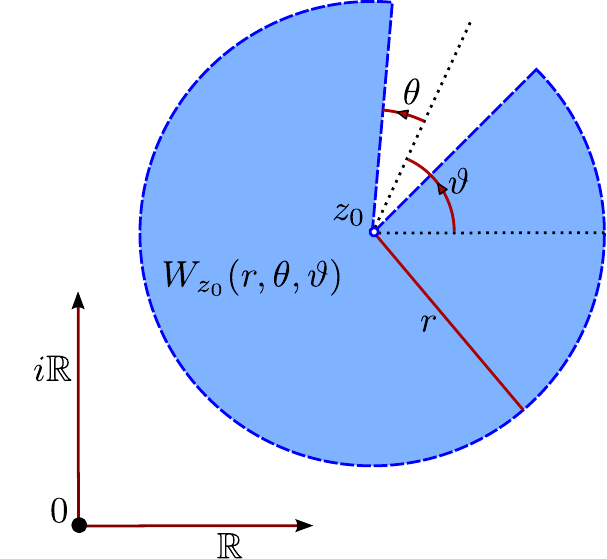}
		\caption{Wedge Region $W_{z_{0}}(r,\theta,\vartheta)$ defined in \eqref{eq:W_rtheta_def} \label{fig:wedge_region}}
	\end{center}
\end{figure}

The Puiseux expansion for $T$ is a Laurent series in the fractional power $(z - z_{0})^{1/\kappa_{z_{0}}}$; when it is not possible to choose $\kappa_{z_{0}} = 1$, then such an expansion cannot define a meromorphic function on any disk containing $z_{0}$ (but does on a finite branched cover of such a disk) due to the existence of monodromy around small loops that encircle $z_{0}$.  However, it is holomorphic after removing a small wedge from the disk containing $z_{0}$ and choosing one of the $\kappa_{z}$ branches of $(z - z_{0})^{1/\kappa_{z_{0}}}$.  Each choice of branch corresponds to a distinct choice of root around $z_{0}$ and the monodromy around $z_{0}$ induces a cyclic permutation of these $\kappa_{z}$ roots.  When it is possible to choose $\kappa_{z_{0}} = 1$, then there is no monodromy around $z_{0}$ (i.e. no branch point singularity) and the expansion \eqref{eq:T_puiseux} defines a meromorphic function on a sufficiently small disk $D_{z_{0}}(r) \supset W_{z_{0}}(r, \theta, \vartheta)$; furthermore, if $l_{0} \geq 0$, then the series defines a holomorphic function on $D_{z_{0}}(r)$.

In order to eliminate the various choice ambiguities that arise when fractional powers arise in the Puiseux expansion, we fix a convention.

\begin{definition}[Convention]
	When considering expressions of the form
	\begin{align*}
		\left(1 - \frac{z}{\rho} \right)^{\gamma}
	\end{align*}
	for $\gamma \in \mathbb{Q} \backslash \mathbb{Z}$, we will always choose the unique branch that is defined for $z$ in the domain
	\begin{align*}
		\mathbb{C} \backslash \left \{z: 1 - \frac{z}{\rho} \in \mathbb{R}_{<0} \right \},
	\end{align*}	
	and is \textit{positive} when $\rho^{-1} z \in [0,1)$.
\end{definition}

With this choice of convention, expanding $T$ around a point $\rho \in \mathsf{sing}_{z_{0}}(T)$, we have
\begin{align}
	T(z) &= \sum_{l \geq l_{0}} c_{l} \left(1 - \frac{z}{\rho} \right)^{l/\kappa_{\rho}}
	\label{eq:T_sing_puiseux}
\end{align}
where the collection of constants $\{c_{l}\}_{l = l_{0}}^{\infty} \subset \mathbb{C}$ are unambiguously determined when fixing the convention above, and $T$ converges on any ``wedge-shaped" region $W_{\rho}(r, \theta, \vartheta)$ (c.f. \eqref{eq:W_rtheta_def}) for sufficiently small $r$.

\begin{remark}
	To calculate $l_{0},\, \kappa_{\rho}$, and the coefficients $c_{l}$, one can use a suitable Newton-polygon associated to either the polynomial $f \in \mathbb{Z}[z,t]$ (if $\rho \in \mathpzc{P}$), or the polynomial $\widehat{f} \in \mathbb{Z}[z,y]$ (if $\rho \in \mathpzc{P}$) (see e.g. \cite[Ch. 7.2]{kirwan:alg_curves}).  Indeed, assume $\rho \notin \mathpzc{P}$ and let $p = (\rho, \alpha) = (\rho, T(\rho)) \in \mathsf{Sing}_{z_{0}}(T)$, then we can expand $f$ in a series about the point $p = (\rho, \alpha) \in \mathsf{Sing}_{z_{0}}(T) \hookrightarrow \mathbb{C}^{2}$
	\begin{align*}
		f(z,t) = \sum_{n,m = 0}^{\infty} f_{n,m}(p) (z - \rho)^{m} (t - \alpha)^{n} 
	\end{align*}
	where
	\begin{align}
		f_{n,m}(p) &:= \frac{1}{n!m!} \left. \left(\frac{\partial^{n+m} f}{\partial z^{n} \partial t^{m}} \right) \right|_{(z,t) = (\rho, \alpha)} \in \conj{\mathbb{Q}};
		\label{eq:f_coeffs}
	\end{align}
	then the possibilities for the values $\kappa_{\rho}$ are given by the inverse slopes of the leftmost convex envelope of the set $\{(n,m) \in \mathbb{Z}: f_{n,m}(p) \neq 0\}$; for each choice of $\kappa_{\rho}$, upon substituting in the expression 
	\begin{align}
		T = \sum_{l \geq l_{0}} c_{l} \left(1 - \frac{z}{\rho} \right)^{l/\kappa_{\rho}}
		\label{eq:sing_puiseux}
	\end{align}	
 into $f$, the condition that all coefficients of a particular order must vanish determines $l_{0}$ and imposes polynomial conditions on the $c_{l}$ that can be solved order-by-order in $k$.  (Note that, because these polynomials have coefficients in $\conj{\mathbb{Q}}$, then $c_{l} \in \conj{\mathbb{Q}}$.)  When $\rho \in \mathpzc{P}$ one can repeat the same procedure to determine the Newton-Puiseux expansion of $T^{-1}$ (and, hence, $T$) by expanding $\widehat{f} \in \conj{\mathbb{Q}}[z,y]$ around the point $p = (\rho,0) \in \mathsf{Sing}_{z_{0}}(T)$.
\end{remark}

It is important to observe that the expansion \eqref{eq:sing_puiseux} must contain at least one non-vanishing $c_{l}$ associated to an $l$ such that $l/\kappa_{\rho}$ is not a non-negative integer---otherwise we can analytically continue $T$ to an open set containing $\rho \in \mathrm{sing}_{z_{0}}(T)$, which contradicts the definition of $\mathsf{sing}_{z_{0}}(T)$. 

\begin{definition}
	Let $T$ be an algebraic function and \eqref{eq:T_sing_puiseux} an expansion of $T$ around a point $\rho \in \mathsf{sing}_{z_{0}}(T)$; then define
\begin{equation}
	\begin{aligned}
		\ell_{\rho}(T) &:=  \min \left\{l: \text{$c_{l} \neq 0$ and $\frac{l}{\kappa_{\rho}} \notin \mathbb{Z}_{\geq 0}$} \right \},\\
		C_{\rho}(T) &:= c_{\ell(\rho)},\\
		\sigma_{\rho}(T) &:= \frac{\ell_{\rho}}{\kappa_{\rho}};
	\end{aligned}
	\label{eq:sigma_rho_def}
\end{equation}
	In practice we will be working with respect to a fixed function $T$; so we will simplify our notation by suppressing appearances of $T$, e.g. just writing $C_{\rho}$ and $\sigma_{\rho}$.
\end{definition}

Using the definition above, as $z \rightarrow \rho$
\begin{align}
	T(z) = E_{\rho}(z) + C_{\rho} \left(1- \frac{z}{\rho} \right)^{\sigma_{\rho}}  + c_{0} + \mathcal{O} \left[\left(1- \frac{z}{\rho} \right)^{\sigma_{\rho} + 1/\kappa_{\rho}} \right],
	\label{eq:T_around_rho}
\end{align}
where
\begin{align*}
	E_{\rho}(z) &= \sum_{\{k:\text{$k \neq 0$ and $k < \sigma_{\rho}$} \}} c_{k} (z - \rho)^{k}
\end{align*}
is a (possibly identically zero) polynomial function of $z$ such that $E_{\rho}(z) \rightarrow 0$ as $z \rightarrow \rho$.

\begin{numrmk} \label{rmk:smooth_dom}
In the special case that the point $p \in \mathsf{Sing}_{z_{0}}(T)$ is a \textit{smooth} point of the curve $\mathcal{V}_{f}$ (i.e. $f_{1,0} := f_{1,0}(p) \neq 0$), then $\kappa_{\rho}$ is equal to the ramification index $\nu:= \nu_{p}$ (defined by equation \eqref{eq:ram_index}) at the point $p$ and
\begin{align*}
	\ell_{\rho} &= 
	\left\{
	\begin{array}{ll}	
		+1, & \text{if $\rho \notin \mathpzc{P}$}\\
		-1, & \text{if $\rho \in \mathpzc{P}$}
	\end{array}	\right. ; \\
	C_{\rho} &= 
	\left\{
	\begin{array}{lr}	
		\omega_{\nu} \left( \rho \frac{f_{1,0}}{f_{0,\nu}} \right)^{1/\nu} , & \text{if $\rho \notin \mathpzc{P}$}\\
		\omega_{\nu} \left( \rho \frac{\widehat{f}_{1,0}}{\widehat{f}_{0,\nu}} \right)^{-1/\nu} , & \text{if $\rho \in \mathpzc{P}$}
	\end{array}
	\right. ;\\
	\sigma_{\rho} &=
	\left\{
	\begin{array}{ll}	
		+\frac{1}{\nu_{\rho}} , & \text{if $\rho \notin \mathpzc{P}$}\\
		-\frac{1}{\nu_{\rho}} , & \text{if $\rho \in \mathpzc{P}$}
	\end{array}	\right. .
\end{align*}
where $\omega_{\nu}$ is a suitable choice of $\nu^{\mathrm{th}}$ root of unity.  However, if $p$ is a singular point of $\mathcal{V}_{f}$, then (the smallest choice for) $\kappa_{\rho}$ in the expansion for $T$ is $\leq \nu_{p}$; its precise value depends on the choice of section $T$, not just the point $p$ (different sections passing through the same point $p$ may have different smallest values for $\kappa_{\rho}$).
\end{numrmk}

Now, one can show that these local expansions completely determine the asymptotics of the coefficients $t_{n}$ in \eqref{eq:max_section}.  To see why this is possible, we make the following remark.

\begin{remark}
As we are ultimately interested in expansions about $z_{0} = 0$, it is in our best interests to simplify our notation by imposing $z_{0} = 0$ in the rest of discussion.  The formulae for non-zero $z_{0}$ can be derived by a simple translation of coordinates.
\end{remark}

\begin{numrmk} \label{rmk:indented_disk}
We have $W_{\rho}(r,\theta,\vartheta) \cap D_{0}(\mathtt{R}_{T}) \neq \emptyset$ so it makes sense to compare the expansions $\eqref{eq:T_puiseux}$ around any singular point $\rho$ with the expansion $\eqref{eq:max_section}$ around $0$; furthermore, $T$ can be analytically continued to the region 
\begin{align*}
	D_{0}(\mathtt{R}_{T}) \cup \bigcup_{\rho \in \mathsf{sing}_{0}(T)} W_{\rho}(r, \theta, \arg(\rho)),
\end{align*}
for some choice of sufficiently small $r>0$.

\begin{align*}
	\Delta_{0}(r_{\mathrm{in}}, r_{\mathrm{out}} \vartheta) &= \{z \in D_{0}(r): \arg(z) \notin [-\vartheta,\vartheta] \}
\end{align*}

\end{numrmk}

In the case that there is only one dominant singularity $\rho \in \mathsf{sing}_{0}(T)$, one can pass from the Puiseux expansion around $\rho$ to the expansion around $0$ by using the series
\begin{align*}
	\left(1 - \frac{z}{\rho} \right)^{\gamma} &= \sum_{n = 0}^{\infty} \binom{\gamma}{n} (-1)^{n} \rho^{-n} z^{n},
\end{align*}
which converges on $D_{0}(\rho)$.  using Stirling's asymptotics, to leading order in $n$ we have
\begin{align*}
	[z^{n}] \left(1 - \frac{z}{r} \right)^{\gamma}  = \left(\frac{n^{-1 - 1/\gamma}}{\Gamma(-\gamma)} \right)  \rho^{-n}  + \mathcal{O}(n^{-2 - 1/\gamma} \rho^{-n}).
\end{align*}
In the case of a single dominant singularity, it happens that the $n \rightarrow \infty$ asymptotics of the coefficients $t_{n}$ can be extracted by applying the above expansion to the expression \eqref{eq:T_around_rho}, ignoring the big-$\mathcal{O}$-terms of \eqref{eq:T_around_rho}; this yields the $n \rightarrow \infty$ asymptotics
\begin{align*}
	t_{n} \sim  \left(\frac{C_{\rho}}{\Gamma(-\sigma_{\rho})} \right)  n^{-1 - \sigma_{\rho}} \rho^{-n}.
\end{align*}
In the case that there are multiple dominant singularities, the correct asymptotics are given by summing up the individual contributions from each dominant singularity.  The following theorem expresses this more general situation, including the subleading asymptotics.  The full proof, which is mainly an application of Cauchy's integral formula, can be found in Theorem VI.5 of \cite{fs:an_comb}.

\begin{theorem} \label{thm:transfer}
	If $h$ is an analytic function on $D_{0}(\mathtt{r})$ such that
	\begin{itemize}
		\item $h$ has a finite number of dominant singularities $\{\rho_{1}, \cdots, \rho_{r} \} \subset \partial \conj{D_{0}(\mathtt{r})}$;
			
		\item $h(z)$ is analytic on a region of the form specified in Rmk. \ref{rmk:indented_disk};
			
		\item there exist functions
		\begin{align*}
			\Phi_{1},\, \cdots,\, \Phi_{r},\, \mathpzc{E} \in \left \{ (1-z)^{a} \log(1-z)^{b}: a \in \mathbb{C},\, b \in \mathbb{Z} \right \}
		\end{align*} 
	such that, for each $i = 1, \cdots, r$, 
		\begin{align*}
			h(z) = \Phi_{i} \left(\frac{z}{\rho_{i}} \right) + \mathcal{O} \left[ \mathpzc{E} \left(\frac{z}{\rho_{i}} \right) \right]
		\end{align*}
		as $z \rightarrow \rho_{i}$;
	\end{itemize}	
	 then 
	\begin{align*}
		[z^{n}] h(z) &= \sum_{i=1}^{r} \rho_{i}^{-n} \left( [z^n] \Phi_{i} \right) + \mathcal{O} \left[ \mathtt{r}^{-1} \left([z^{n}] \mathpzc{E} \right) \right].
	\end{align*}
\end{theorem}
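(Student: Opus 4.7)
The plan is to prove this via a contour-integral argument adapted to the multi-singularity indented-disk region on which $h$ is analytic. The starting point is Cauchy's integral formula
\begin{align*}
[z^{n}] h(z) \;=\; \frac{1}{2\pi i} \oint_{\Gamma} h(z)\, z^{-n-1}\, dz,
\end{align*}
where $\Gamma$ is initially a small circle about $0$, which I then wish to deform outward so that it hugs the boundary of the extended domain described in Rmk.~\ref{rmk:indented_disk}. First I would fix a radius $\mathtt{r}' > \mathtt{r}$ slightly larger than the common modulus of the dominant singularities, and build a composite contour consisting of (i) a circular arc of radius $\mathtt{r}'$ omitting small neighbourhoods of each $\rho_i$, together with (ii) for each $i$, a ``camembert'' indentation $\gamma_i$ around $\rho_i$ lying inside the corresponding wedge region $W_{\rho_i}(r,\theta,\arg \rho_i)$ on which the Puiseux/log expansions of $h$ converge.

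The computation then splits into two pieces. For the outer arc, the trivial bound $|z| = \mathtt{r}'$ together with the boundedness of $h$ on this arc gives a contribution of size $O(\mathtt{r}'^{-n})$, which is of smaller exponential order than $\mathtt{r}^{-n}$ and hence absorbed into the error term after letting $\mathtt{r}' \to \mathtt{r}^+$; what is actually required is to keep $\mathtt{r}'$ fixed and show the arc contribution is dominated by $\mathtt{r}^{-n}$ times the scale of $\mathpzc{E}$. For the camembert contour $\gamma_i$ around $\rho_i$, I would substitute the local approximation $h(z) = \Phi_i(z/\rho_i) + O\!\left[\mathpzc{E}(z/\rho_i)\right]$ and perform the change of variables $z = \rho_i(1 - w/n)$, which rescales $\gamma_i$ into a Hankel-type contour in the $w$-plane. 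The main term then reduces to the classical Hankel integral representation of $1/\Gamma$, yielding precisely $\rho_i^{-n} \bigl([z^n]\Phi_i\bigr)$ up to exponentially small corrections, and the error term yields $O\!\bigl(\mathtt{r}^{-n} [z^n]\mathpzc{E}\bigr)$ by the same rescaling applied to $\mathpzc{E}$.

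The hard part will be controlling the error uniformly as $z$ ranges over the whole camembert, not just in the strict asymptotic regime $|z - \rho_i| = O(1/n)$: one must verify that the $O[\mathpzc{E}(z/\rho_i)]$ bound, which is a priori only a statement about $z \to \rho_i$, can be promoted to a uniform bound on the indentation contour, and that the matching between the camembert pieces and the outer arc produces no spurious boundary contributions. This is standard (the ``transfer lemma'' for functions of the $(1-z)^a \log(1-z)^b$ scale, for which Hankel-contour asymptotics are explicitly known and well-behaved under $O$-error inheritance), so I would appeal to the pre-established singularity analysis for each individual template function $\Phi_i$ and $\mathpzc{E}$, and then assemble the $r$ local contributions additively. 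Summing the contributions of the $r$ camemberts and absorbing the arc estimate into the $O\bigl(\mathtt{r}^{-n}[z^n]\mathpzc{E}\bigr)$ term produces the claimed formula.
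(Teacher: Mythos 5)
Your proposal is correct and is essentially the argument the paper relies on: the paper does not prove Theorem \ref{thm:transfer} itself but defers to Flajolet--Sedgewick (Thm.~VI.5 of \cite{fs:an_comb}), whose proof is exactly this Cauchy-integral computation over a composite contour made of an outer arc plus Hankel-type indentations at each dominant singularity, with the $\mathcal{O}$-bound transferred uniformly on the indented (``$\Delta$''-type) domain of Rmk.~\ref{rmk:indented_disk}. The only cosmetic difference is that one may take the main terms directly as the known coefficients of the template functions $\Phi_{i}$ and reserve the contour estimate for the error $\mathpzc{E}$, rather than re-deriving them by a Hankel rescaling, but this does not change the substance.
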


The following is a corollary of the above theorem in the context of the algebraic functions under study.\footnote{For a more refined statement of the structure of the subleading asymptotics for algebraic functions, see \cite[Thm. VII.8.]{fs:an_comb}.}

\begin{theorem}	
	Let
	\begin{equation}
		\begin{aligned}
			\sigma_{*} &:= \min \{\sigma_{\rho}: \rho \in \mathrm{sing}_{0}(T) \},\\
			\sigma_{\mathrm{sub}} &:= \min \left \{\sigma_{\rho} + \frac{1}{\kappa_{\rho}}: \text{$\rho \in \mathrm{sing}_{0}(T)$ and $\sigma_{\rho} + \frac{1}{\kappa_{\rho}} \notin \mathbb{Z}_{>0}$} \right \},\\
			\mathsf{leadsing}_{0}(T) &:= \{\rho \in \mathsf{sing}_{0}(T): \sigma_{\rho} = \sigma_{*} \}.
		\end{aligned}
		\label{eq:leading_exps}
	\end{equation}
	Then	
	\begin{align}
		t_{n} = \left( \frac{n^{-1 - \sigma_{*}}}{\Gamma(-\sigma_{*})}  \right)   \sum_{\rho \in \mathsf{leadsing}_{0}(T)} C_{\rho} \rho^{-n} + \mathcal{O} \left( n^{-1-\sigma_{\mathrm{sub}}} \mathtt{R}_{T}^{-n} \right).
		\label{eq:t_n_asymp}
	\end{align}
\end{theorem}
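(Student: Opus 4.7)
The strategy is to apply the general transfer theorem (Thm.~\ref{thm:transfer}) to $T$, treating each dominant singularity with its Puiseux expansion, and then extract the asymptotic contribution of each singular term via a direct binomial/Stirling computation.

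First I would verify the hypotheses of Thm.~\ref{thm:transfer}: analyticity of $T$ on $D_0(\mathtt{R}_T)$ is the definition of $\mathtt{R}_T$; finiteness of $\mathsf{sing}_0(T)$ was established earlier (it is a subset of the finite set $\Delta\cup\mathpzc{P}$); and the analytic continuation of $T$ to an indented disk of the required form is guaranteed by Rmk.~\ref{rmk:indented_disk} together with the Puiseux theorem (Thm.~\ref{thm:Puiseux}), which gives validity of \eqref{eq:T_sing_puiseux} on a wedge $W_\rho(r,\theta,\arg\rho)$ at each $\rho\in\mathsf{sing}_0(T)$. From \eqref{eq:T_around_rho} I would then write, near each $\rho\in\mathsf{sing}_0(T)$,
\begin{align*}
T(z) \;=\; \Phi_\rho(z/\rho) \;+\; \mathcal{O}\!\left[\bigl(1-z/\rho\bigr)^{\sigma_\rho+1/\kappa_\rho}\right],
\end{align*}
where $\Phi_\rho(w):=C_\rho(1-w)^{\sigma_\rho}$ when $\rho\in\mathsf{leadsing}_0(T)$ and, for $\rho\in\mathsf{sing}_0(T)\setminus\mathsf{leadsing}_0(T)$, I would absorb the $C_\rho(1-z/\rho)^{\sigma_\rho}$ term into the error (since $\sigma_\rho>\sigma_*$ makes it subleading). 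The analytic summands $E_\rho(z)+c_0$ appearing in \eqref{eq:T_around_rho} are polynomial in $z-\rho$ and contribute nothing to $[z^n]T$ for $n$ beyond their degree, so they are absorbed harmlessly.

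Next I would compute the contribution of each $\Phi_\rho$. Using the generalized binomial series
\begin{align*}
\left(1-\tfrac{z}{\rho}\right)^{\sigma_\rho} \;=\; \sum_{n\geq 0}\binom{\sigma_\rho}{n}(-1)^n\rho^{-n}z^n,
\end{align*}
Stirling's formula gives $\binom{\sigma_\rho}{n}(-1)^n = \Gamma(-\sigma_\rho)^{-1}n^{-1-\sigma_\rho}(1+O(n^{-1}))$ whenever $\sigma_\rho\notin\mathbb{Z}_{\geq 0}$, hence
\begin{align*}
[z^n]\Phi_\rho(z/\rho) \;=\; \frac{C_\rho}{\Gamma(-\sigma_\rho)}\,n^{-1-\sigma_\rho}\rho^{-n}\bigl(1+O(n^{-1})\bigr).
\end{align*}
Summing over $\rho\in\mathsf{leadsing}_0(T)$ (all of which share the common exponent $\sigma_*$) yields the leading term of \eqref{eq:t_n_asymp}. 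For the error, I would take $\mathpzc{E}(w)=(1-w)^{\sigma_{\mathrm{sub}}}$: the same Stirling estimate shows $[z^n]\mathpzc{E}(z/\rho)=O(n^{-1-\sigma_{\mathrm{sub}}}|\rho|^{-n})$, and since every $\rho\in\mathsf{sing}_0(T)$ has $|\rho|=\mathtt{R}_T$ the error is $\mathcal{O}(n^{-1-\sigma_{\mathrm{sub}}}\mathtt{R}_T^{-n})$. Applying Thm.~\ref{thm:transfer} then delivers \eqref{eq:t_n_asymp}.

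\emph{Main obstacle.} The subtlety is that the bound $\sigma_{\mathrm{sub}}$ in \eqref{eq:leading_exps} is designed to simultaneously absorb two kinds of subleading contributions --- the next-order Puiseux coefficient at each $\rho\in\mathsf{leadsing}_0(T)$ (exponent $\sigma_*+1/\kappa_\rho$) and the \emph{leading} singular coefficient at each $\rho\in\mathsf{sing}_0(T)\setminus\mathsf{leadsing}_0(T)$ (exponent $\sigma_\rho>\sigma_*$). Checking that the definition genuinely dominates both requires care, especially when $\sigma_\rho+1/\kappa_\rho\in\mathbb{Z}_{>0}$ for some $\rho$: such terms are analytic and contribute nothing to the asymptotics, which is precisely why the minimum in \eqref{eq:leading_exps} excludes positive integers. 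Constructing a single admissible error function $\mathpzc{E}$ of the form required by Thm.~\ref{thm:transfer} that uniformly majorizes both kinds of contributions --- and verifying this majorization on the indented-disk domain --- is the technical core of the argument. The remaining work is bookkeeping.
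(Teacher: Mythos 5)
Your proposal is correct and follows essentially the same route as the paper: expand $T$ via its Puiseux series at each dominant singularity, feed these local expansions into the transfer theorem (Thm.~\ref{thm:transfer}), extract the $n^{-1-\sigma_\rho}\rho^{-n}$ contributions, and absorb everything above $\sigma_{\mathrm{sub}}$ into the error. The only (cosmetic) difference is that you fold the non-leading singular terms into the $\mathcal{O}$-term at the function level before transferring, whereas the paper keeps all dominant singularities in the post-transfer sum and then discards the redundant ones at the coefficient level; the bookkeeping subtlety about $\sigma_{\mathrm{sub}}$ that you flag is exactly the step the paper handles with its closing "absorbing redundant terms" remark.
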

\begin{proof}
	From the discussion above, for any $\rho \in \mathsf{sing}_{0}(T)$, as $z \rightarrow \rho$ we have
	\begin{align*}
		T(z) = C_{\rho} \left(1 - \frac{z}{\rho} \right)^{\sigma_{\rho}} + \mathrm{Poly}_{\rho}(z) + \mathcal{O} \left[ \left(1-\frac{z}{\rho} \right)^{\sigma_{\mathrm{sub}}} \right],
	\end{align*}
	where $\mathrm{Poly}_{\rho}(z)$ is some polynomial in $z$ of degree $<\sigma_{\mathrm{sub}}$.  Then from Thm. \ref{thm:transfer},
	\begin{align*}
		t_{n} = \sum_{\rho \in \mathsf{sing}_{0}(T)} \left(\frac{C_{\rho}}{\Gamma(-\sigma_{\rho})} \right)  n^{-1 - \sigma_{\rho}} \rho^{-n} + \mathcal{O} \left( n^{-1 - \sigma_{\mathrm{sub}}} \mathtt{R}_{T}^{-n} \right).
	\end{align*}
	Note that this expression may have some redundancies: some terms in the summation may be in the same big-$\mathcal{O}$ class as the unspecified subleading terms; absorbing such redundant terms into the unspecified big-$\mathcal{O}$ terms, we arrive at \eqref{eq:t_n_asymp}.
\end{proof}

\begin{numrmk} \label{remark:osc}
	Assume that we can choose $z_{0} = 0$, and $f$ has real coefficients.  Then the fact that $f$ has real coefficients ensures that all non-real dominant singularities of $T$ come in pairs with their complex conjugates.  Moreover, if $T$ is a root such that $T(x) \in \mathbb{R}$ for all $x \in \mathbb{R} \cap D_{0}(\mathtt{R}_{T})$, then $T(\conj{z}) = \conj{T(z)}$ for every $z \in D_{0}(\mathtt{R}_{T})$.  As a corollary of this latter fact, the fact that the Puiseux expansions around $z= \rho$ and $z = \conj{\rho}$ are holomorphic on a common domain, and the fact that $\conj{(1- \rho^{-1} z)^{\gamma}} = (1- \conj{\rho}^{-1} \conj{z})^{\gamma}$, it follows that $C_{\conj{\rho}} = \conj{C_{\rho}}$; hence, all (non-real) terms in \eqref{eq:t_n_asymp} come in complex-conjugate pairs and we have,
	\begin{align*}
		t_{n} = \left(\frac{1}{\Gamma(-\sigma_{*})}  \right)  \mathrm{Osc}(n)   n^{-1 - \sigma_{*}}\mathtt{R}_{T}^{-n}  + \mathcal{O} \left( n^{-1-\sigma_{\mathrm{sub}}} \mathtt{R}_{T}^{-n} \right)
	\end{align*}
	where
	\begin{align*}
		\mathrm{Osc}(n) &:=  \sum_{\rho \in \mathsf{leadsing}_{0}(T)} |C_{\rho}| \cos \left[n \arg(\rho) - \arg \left(C_{\rho} \right) \right].
	\end{align*}
\end{numrmk}

\subsubsection{Asymptotics of Euler-Product Exponents from Algebraicity} \label{sec:generating_asymptotics}
Let $\ggen \in \formal{\mathbb{Q}}{z}$ be a generating series for a sequence of rational numbers.  If $\ggen$ is algebraic, then it must be the series representation (Puiseux-expansion) of a holomorphic function $T$ around $z=0$ such that $T(0) = 1$.  Alternatively, if $T$ is an algebraic function that:
\begin{enumerate}
	\item is holomorphic on a disk containing $z=0$,
	
	\item satisfies $T(0) = 1$,
\end{enumerate}
then $T$ can has a series representation $T(z) = 1 + \sum_{n = 1}^{\infty} t_{n} z^{n} \in \formal{\conj{\mathbb{Q}}}{z}$.  Note that any such series admits an Euler-product factorization: for choice of a fixed $\mathtt{s} \in \{+1,-1\}$, we may write
\begin{align*}
	T(z) = \prod_{n = 1}^{\infty} \left(1 - \left(\mathtt{s} z \right)^{n} \right)^{n \beta_{n}}
\end{align*}
where the sequence of algebraic numbers $(\beta_{n})_{n = 1}^{\infty}$ is defined by \eqref{eq:beta_from_ggen}.  In other words, $T$ can be thought of as a generating series for a sequence of algebraic numbers.  

Of course, our interests lies in the case where the $(\beta_{n})$ are rational (or better yet, integers).  However, in the spirit of generality of the results in this section, we do not impose any rationality condition and allow $\ggen \in \formal{\conj{\mathbb{Q}}}{z}$ to denote a generating series of \textit{algebraic} numbers.

The following lemma hints that, if we wish to study asymptotics of the $\beta_{n}$, then we should really be studying the coefficient asymptotics of $\log(\ggen)$ expanded around $z = 0$.

\begin{lemma} \label{lem:omega_to_log}
	Let $\ggen \in \formal{\conj{\mathbb{Q}}}{z}$ be the generating series for $\left( \beta_{n} \right)_{n = 1}^{\infty}$; suppose $\ggen$ is algebraic over $\mathbb{Q}(z)$ and define
	\begin{align*}
		\mathtt{R} = \min \{ \mathtt{R}_{\ggen}, \mathtt{R}_{1/\ggen} \}.
	\end{align*} 
	\begin{enumerate}	
			\labitem{(A)}{case:growth} If $\mathtt{R} \leq 1$, then for any $0 < r < \mathtt{R}$
		\begin{align}
			\beta_{n} = -\frac{1}{n}  \left([z^{n}] \log \left(\ggen \right) \right) + \mathcal{O} \left( d(n) r^{-n/2} \right),
			\label{eq:Omega_Log_asymp}
		\end{align}
		where $d(n)$ is the number of divisors of $n$ that are less than $n$. 
	
		\labitem{(B)}{case:shrink} If $\mathtt{R} > 1$, then $\beta_{n} \in \mathcal{O} \left( n^{-2} \right)$.
	\end{enumerate}
\end{lemma}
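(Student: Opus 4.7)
The plan is to pass to logarithms in the Euler-product factorization, M\"obius-invert the resulting identity to isolate the $d=n$ term, and then control the remaining divisor sum via Cauchy's inequality applied on the disk $D(0,\mathtt{R})$.

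First I would take the logarithm of $\ggen = \prod_{n\geq 1}(1-(\mathtt{s}z)^n)^{n\beta_n}$, expand $\log(1-w)=-\sum_{k\geq 1} w^k/k$, and collect coefficients of $z^N$ to obtain the elementary identity
\[
N\,[z^N]\log\ggen \;=\; -\mathtt{s}^N \sum_{d\mid N} d^2\beta_d.
\]
M\"obius inversion (just as in the derivation of \eqref{eq:beta_from_ggen}) then solves for $\beta_N$:
\[
N^2 \beta_N \;=\; -\sum_{d\mid N} \mu(N/d)\,d\,\mathtt{s}^d\,[z^d]\log\ggen.
\]
The $d=N$ term contributes the putative main term $-\frac{\mathtt{s}^N}{N}[z^N]\log\ggen$, while the divisors $d<N$ form the error that must be bounded.

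The key analytic input is that $\log\ggen$ is holomorphic on the open disk $D(0,\mathtt{R})$. Because $\ggen$ is algebraic, its zeros and singularities are isolated: the nearest singularity of $\ggen$ itself lies at distance $\mathtt{R}_{\ggen}$, while the nearest zero of $\ggen$ coincides with the nearest pole of $1/\ggen$ and therefore lies at distance $\mathtt{R}_{1/\ggen}$. Since $\ggen(0)=1$ and $D(0,\mathtt{R})$ is simply connected, a single-valued holomorphic branch of $\log\ggen$ with $\log\ggen(0)=0$ exists throughout the disk, and Cauchy's inequality furnishes, for every $0<r<\mathtt{R}$, a constant $A(r)$ with $|[z^d]\log\ggen|\leq A(r)\,r^{-d}$.

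For part \ref{case:growth}, every proper divisor $d\mid N$ satisfies $d\leq N/2$, and the assumption $r<\mathtt{R}\leq 1$ forces $r^{-d}\leq r^{-N/2}$. Substituting into the error sum yields
\[
\Bigl|\beta_N + \tfrac{\mathtt{s}^N}{N}[z^N]\log\ggen\Bigr| \;\leq\; \frac{A(r)}{N^2}\sum_{d\mid N,\,d<N} d\,r^{-d} \;\leq\; \frac{A(r)}{2N}\,d(N)\,r^{-N/2},
\]
which is $O(d(N)\,r^{-N/2})$ as claimed. For part \ref{case:shrink}, when $\mathtt{R}>1$ I may take $r>1$ in the Cauchy bound so that $r^{-d}$ decays geometrically; the full divisor sum (including $d=N$) is then dominated by $\sum_{d\geq 1}d\,r^{-d}=r/(r-1)^2<\infty$, yielding $|\beta_N|\leq C/N^2$ and hence $\beta_N\in O(N^{-2})$. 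The argument is largely bookkeeping once the single geometric fact---holomorphy of $\log\ggen$ on $D(0,\mathtt{R})$---is in place; the only subtlety to verify carefully is that $\mathtt{R}_{1/\ggen}$ genuinely captures all zeros of $\ggen$, which is immediate from algebraicity since the zeros of $\ggen$ and the poles of $1/\ggen$ are isolated and occur at exactly the same points.
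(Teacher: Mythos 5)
Your proof is correct and follows essentially the same route as the paper's: M\"obius inversion of $\log\ggen$ to isolate the $d=n$ term, holomorphy of $\log\ggen$ on $D_{0}(\mathtt{R})$ plus Cauchy's estimate $|[z^{d}]\log\ggen|\leq A(r)r^{-d}$, the proper-divisor bound $d\leq n/2$ for case \ref{case:growth}, and a convergent geometric sum for case \ref{case:shrink}. The only cosmetic difference is that you retain the sign $\mathtt{s}^{n}$ in the main term (as the paper's own derivation in fact does) and you bound the full divisor sum at once in case \ref{case:shrink} rather than splitting off $R(n)$, which is harmless.
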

\begin{proof}
	From \eqref{eq:omega_invert} we have
	\begin{align*}
		\beta_{n} &= -\frac{1}{n^2} \sum_{k|n}  k \mu \left(\frac{n}{k} \right ) \mathtt{s}^{k}  \left([z^{k}] \log \left(\ggen \right) \right)\\
		&=-\frac{\mathtt{s}^{n}}{n}  \left([z^{n}] \log \left(\ggen \right) \right) - \underbrace{\frac{1}{n^2} \sum_{\substack{k|n\\ k<n}} k \mu \left(\frac{n}{k} \right) \mathtt{s}^{k} \left([z^{k}] \log \left(\ggen \right) \right)}_{R(n)}
	\end{align*}
	Next, note that $\ggen$ is defines a	holomorphic function on $D_{0}(\mathtt{R}_{\ggen})$ with $\ggen(0) = 1$; hence, the composite function $\log(\ggen)$ is also holomorphic on any disk $D_{0}(r) \subset D_{0}(\mathtt{R}_{\ggen})$ such that $D_{0}(r)$ contains no zeros of $\ggen$, i.e. any sub-disk where both $\ggen$ and $1/\ggen$ are holomorphic.  The maximal such sub-disk has radius $\mathtt{R} = \min\{\mathtt{R}_{\ggen}, \mathtt{R}_{1/\ggen} \}$.  Now, it is a corollary of Cauchy's integral formula that if $h: D_{z_{0}}(r) \rightarrow \mathbb{C}$ is a holomorphic function on $D_{z_{0}}(r)$, then $[z^{n}] h \leq r^{-n} \left(\sup_{r' \in D_{z_{0}}(r')} h \right)$ for any $0 < r' < r$.  Applying this to our situation: for any $0 < r < \mathtt{R}$,
	\begin{align}
		\left|[z^{k}] \log \left(\ggen \right) \right| \leq C r^{-k}
		\label{eq:log_coeff_bound}
	\end{align}
	where $0<C_{r}< \infty$ is given by the supremum of $|\log(\ggen)|$ over the circle of radius $r$.  
	\begin{enumerate}
	
	\item \textbf{Case} \ref{case:growth}.  First note that
	\begin{align*}
		|R(n)| \leq \frac{1}{n} \sum_{\substack{k|n\\ k<n}} \left|[z^{k}] \log \left(\ggen \right) \right|.
	\end{align*}	
	Using the fact that $r < \mathtt{R} < 1$, the bound \eqref{eq:log_coeff_bound}, and the fact that the next largest divisor of $n$ (other than $n$ itself) is $\geq n/2$, then
	\begin{align*}
		|R(n)| \leq C_{r} d(n) r^{-n/2},
	\end{align*}
	where $d(n)$ is the number of divisors of $n$ that are $< n$.  This verifies \eqref{eq:Omega_Log_asymp}.

	\item \textbf{Case} \ref{case:shrink}. Choose $r$ such that $1 < r < \mathtt{R}$.  In this case
	\begin{align*}
		|R(n)| &\leq \frac{1}{n^2} \sum_{k = 1}^{n} k \mathtt{R}^{-k}\\
		 	&\leq \frac{1}{(r-1)^2} \left[\frac{r}{n^2}-\frac{r^{1-n}}{n^2}-\frac{r^{1-n}}{n}+\frac{r^{-n}}{n} \right]
	\end{align*}
	so $R(n) \in \mathcal{O}(n^{-2})$ as $n \rightarrow \infty$.  Furthermore, by \eqref{eq:log_coeff_bound} $[z^{n}] \log \left(\ggen \right) \in \mathcal{O}(r^{-n})$ as $n \rightarrow \infty$; hence, $\beta(n) \in \mathcal{O}(n^{-2})$.
	\end{enumerate}
\end{proof}

\begin{numrmk} \label{rmk:divisor_bound}
 One can use the rather crude estimate $d(n) \leq n$ to rewrite \eqref{eq:Omega_Log_asymp} as
	\begin{align*}
		\beta_{n} = -\frac{\mathtt{s}^{n}}{n}  \left([z^{n}] \log \left(\ggen \right) \right) + \mathcal{O} \left(n r^{-n/2} \right)
	\end{align*}	
	 however, there are various improvements on this bound.  As an example of one improvement (see \cite[\S 13.10]{apostol}): for any $\epsilon > 0$, there exists a $C_{\epsilon}$ such that
	\begin{align*}
		d(n) \leq C_{\epsilon} n^{\epsilon}
	\end{align*}
	for all $n \geq 1$; with this estimate, the subleading terms of \eqref{eq:Omega_Log_asymp} are in $\mathcal{O}(n^{\epsilon} r^{-n/2})$.
\end{numrmk}	

We are most interested in the case that $\ggen \in \formal{\mathbb{Z}}{z}$; indeed, if $\left(\beta_{n}\right)_{n = 1}^{\infty} \subset \mathbb{Z}$---which must be the case for BPS indices\footnote{A BPS index can be expressed as the trace of an integer-valued operator over a finite dimensional vector space: see \eqref{eq:BPS_superdim}).} or Euler characteristics---it follows that $\ggen \in \formal{\mathbb{Z}}{z}$.

\begin{proposition} \label{prop:growth}
	If $\ggen \in \formal{\mathbb{Z}}{z}$, then $\mathtt{R} \leq 1$.
\end{proposition}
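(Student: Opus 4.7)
The plan is to derive a contradiction from the assumption $\mathtt{R} > 1$, using the integrality of the coefficients of both $\ggen$ and its formal inverse, together with elementary Cauchy estimates. The fact that $\ggen$ is algebraic is not actually required here; only integrality of $[z^n]\ggen$ and the fact that $\ggen$ has constant term $1$ are needed.

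First I would check that $1/\ggen \in \formal{\mathbb{Z}}{z}$. Since $\ggen$ is a generating series in the sense of the paper, it has constant coefficient $1$, so the multiplicative inverse exists in $\formal{\mathbb{Q}}{z}$ and its coefficients $h_n$ satisfy a linear recurrence of the form $h_n = -\sum_{k=1}^{n} g_k h_{n-k}$ with $h_0 = 1$; inductively $h_n \in \mathbb{Z}$. So $1/\ggen$ also has integer coefficients and constant term $1$.

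Next, suppose for contradiction that $\mathtt{R} > 1$, i.e.\ both $\mathtt{R}_{\ggen} > 1$ and $\mathtt{R}_{1/\ggen} > 1$. Pick any $r$ with $1 < r < \mathtt{R}$. By Cauchy's integral formula applied on the circle of radius $r$, there is a constant $M_r < \infty$ (namely the supremum of $|\ggen|$ on that circle) with $|[z^n]\ggen| \le M_r r^{-n}$. Since $r > 1$ the right-hand side tends to $0$, and because $[z^n]\ggen \in \mathbb{Z}$ this forces $[z^n]\ggen = 0$ for all sufficiently large $n$. Thus $\ggen \in \mathbb{Z}[z]$ is actually a polynomial. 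The identical argument applied to $1/\ggen$ shows that $1/\ggen \in \mathbb{Z}[z]$ as well.

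The final step is to observe that $\ggen$ is then a unit in the polynomial ring $\mathbb{Z}[z]$, and the unit group of $\mathbb{Z}[z]$ is $\{\pm 1\}$. Combined with $\ggen(0) = 1$, this gives $\ggen \equiv 1$, i.e.\ all $\beta_n = 0$, which is the (implicitly excluded) trivial case. Hence $\mathtt{R} \le 1$ for any non-trivial $\ggen \in \formal{\mathbb{Z}}{z}$. There is no serious obstacle here: the main conceptual point is simply that an integer sequence whose generating function is holomorphic on a disk of radius strictly greater than $1$ must vanish eventually, and applying this to both $\ggen$ and $1/\ggen$ pins the series down to a unit of $\mathbb{Z}[z]$.
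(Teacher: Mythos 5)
Your proof is correct, and it reaches the conclusion by a genuinely different route than the paper. The paper also argues by contradiction from $\mathtt{R}>1$, but it gets the eventual vanishing of the coefficients of $\ggen$ by invoking the singularity-analysis asymptotics \eqref{eq:t_n_asymp} (so it leans on algebraicity), and then derives the contradiction from Vieta's formula: writing the integer polynomial $\ggen=K(z-\omega_{1})\cdots(z-\omega_{n})$, the product of the roots has modulus $1/|K|\leq 1$, so some zero lies in the closed unit disk, contradicting $\mathtt{R}_{1/\ggen}>1$. You instead use a bare Cauchy estimate plus integrality — which, as you note, needs no algebraicity, only that $\ggen$ is holomorphic near $0$ so that $\mathtt{R}$ is defined — and you apply it to \emph{both} $\ggen$ and $1/\ggen$, using the easy (and correct) observation that $1/\ggen\in\formal{\mathbb{Z}}{z}$; the contradiction then comes from the unit group of $\mathbb{Z}[z]$ being $\{\pm 1\}$. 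What your route buys is elementarity and greater generality (any integer series with positive radius of convergence); what the paper's route buys is that it only needs polynomiality of $\ggen$ itself and never mentions $1/\ggen$ having integer coefficients. Both arguments share the same edge case: for $\ggen\equiv 1$ one has $\mathtt{R}=\infty$, so the proposition as literally stated fails there; the paper's proof silently assumes $\deg\ggen\geq 1$ when factoring, whereas you flag the trivial case explicitly, which is the more honest bookkeeping.
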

\begin{proof}
	Assume that $\mathtt{R} > 1$; then if $\ggen$ has integer coefficients, it must have only finitely many as \eqref{eq:t_n_asymp} guarantees that $|g_{n}|$ becomes arbitrarily small as $n \rightarrow \infty$.  Hence, $\ggen$ must be a polynomial with integer coefficients and
\begin{align*}
	\mathtt{R} = \min \{|\omega|: \text{$\omega \in \mathbb{C}$ and $\ggen(\omega) = 0$}\}.
\end{align*}	
	Factorizing $\ggen$ over $\mathbb{C}$ we have $\ggen = K (z - \omega_{1}) \cdots (z - \omega_{n})$ for some collection of roots $\omega_{i} \in \conj{\mathbb{Q}},\, i = 1, \cdots, n$.  Integrality of the coefficients of $\ggen$ ensures that $K \in \mathbb{Z}$, but the product of the roots of $T$ must satisfy
	\begin{align*}
		\omega_{1} \cdots \omega_{n} = (-1)^{n} \frac{\ggen(0)}{K} = (-1)^{n} \frac{1}{K};
	\end{align*}
	so at least one root must have magnitude $\leq 1$, a contradiction.
\end{proof}

Hence, we will restrict our attention to the situation where $\mathtt{R} \leq 1$. For the interested reader, in \S \ref{sec:asymptotics_rational} we will revisit the case that $\mathtt{R} > 1$ (which can only occur for generating series with some $\beta_{n}$ non-integral).

Our next step is to mimic the singularity analysis of the previous section with the function $\log(\ggen)$; although $\log(\ggen)$ is no longer an algebraic function,\footnote{Let $T$ be an algebraic function such that $T \not \equiv 0$ and $T \not \equiv 1$, then it follows as a corollary of the Lindemann-Weierstrass theorem that $\log(T)$ is transcendental over $\mathbb{Q}(z)$.  Indeed, suppose $T \not \equiv 0,1$ is an algebraic function---i.e. $T$ satisfies $r(z,T(z)) = 0$ for some $r \in \mathbb{Q}[z,t]$---then evaluating $z$ at an algebraic number $z_{*} \in \conj{\mathbb{Q}}$ we must have $T(z_{*}) \in \conj{\mathbb{Q}}$.  Furthermore, we can choose $z_{*}$ such that $T(z_{*}) \neq 0,1$.  Now, by Lindemann-Weierstrass $\log(T(z_{*})) \notin \conj{\mathbb{Q}}$; hence, $\log(T)$ cannot be an algebraic function.} its singularities are closely related to the singularities (and zeros) of $\ggen$.

\begin{remark}
	Denote the set of dominant singularities of $\log(\ggen)$, defined with respect to its analytic continuation to disks centred about $z=0$, by $\mathsf{sing}_{0}(\log(\ggen))$.  This set of singularities can be described in terms of the singularities and zeros of $\ggen$:
	\begin{enumerate}
		\item $\conj{D_{0}(\mathtt{R}_{\ggen})} \cap \mathpzc{Z} = \emptyset$, i.e. there are no zeros of $\ggen$ in the closed-disk $\conj{D_{0}(\mathtt{R}_{\ggen})}$.  Then $\mathtt{R} = \mathtt{R}_{\ggen} = \mathtt{R}_{1/\ggen}$ and we have
		\begin{align*}
			\mathsf{sing}_{0}(\log(\ggen)) = \mathsf{sing}_{0}(\ggen) = \mathsf{sing}_{0}(1/\ggen).
		\end{align*}
		
		\item $\conj{D_{0}(\mathtt{R}_{\ggen})} \cap \mathpzc{Z} \neq \emptyset$, i.e. there is a zero of $\ggen$ in the closed disk $\conj{D_{0}(\mathtt{R}_{\ggen})}$.  We separate this case into two further subcases:
			\begin{enumerate}
				\item There are no zeros of $\ggen$ in the interior $D_{0}(\mathtt{R}_{\ggen})$ of $\conj{D_{0}(\mathtt{R}_{\ggen})}$: then $\mathtt{R} = \mathtt{R}_{\ggen} = \mathtt{R}_{1/\ggen}$ and $\mathsf{sing}_{0}(\log(\ggen))$ is given by adjoining the zeros on the boundary to the set  $\mathrm{sing}_{0}(\log(\ggen))$.  More pedantically
				\begin{align*}
					\mathsf{sing}_{0}(\log(\ggen)) = \mathsf{sing}_{0}(\ggen) \cup \mathsf{sing}_{0}(1/\ggen).
				\end{align*}
				
				\item There is a zero of $\ggen$ in the interior of $\conj{D_{0}(\mathtt{R}_{\ggen})}$: then $\mathtt{R} = \mathtt{R}_{1/\ggen} < \mathtt{R}_{\ggen}$ and
				\begin{align*}
					\mathsf{sing}_{0}(\log(\ggen)) = \mathsf{sing}_{0}(1/\ggen).
				\end{align*}
			\end{enumerate}
	\end{enumerate}	
As the proof of Thm. \ref{thm:main} will show, the dominant singularities of $\log(\ggen)$ will be a ``finite" branch point if it lies in the complement of $\mathpzc{Z} \cup \mathpzc{P}$, and it will be a logarithmic branch point if it lies in $\mathpzc{Z} \cup \mathpzc{P}$.
\end{remark}

Because singularities in $\mathpzc{Z} \cup \mathpzc{P}$ (i.e. singularities arising as zeros and poles of $\ggen$) will play a different role than finite branch points, it is helpful to introduce some notation implicitly defined in the following remark.

\begin{numrmk} \label{rmk:m_rho_def}
Let $\rho \in \mathrm{sing}_{0}(\log(\ggen)) \cap \left(\mathpzc{Z} \cup \mathpzc{P} \right)$; then the Puiseux expansion of $\ggen$ around $\rho$ will take the form
\begin{align*}
	\ggen(z) = (1 - \rho^{-1} z)^{m_{\rho}} g \left[ \left(1 - \rho^{-1} z \right)^{1/\kappa_{\rho}} \right]
\end{align*}
where $g$ is a holomorphic function on the unit disk centred about $0$ such that $g(0) \neq 0$ and
	\begin{itemize}
		\item $m_{\rho} > 0$ if $\rho \in \mathpzc{Z}$ (i.e. is a zero of $\ggen$),
		
		\item $m_{\rho} < 0$ if $\rho \in \mathpzc{P}$ (i.e. is a zero of $1/\ggen$).
	\end{itemize}
Note that if $m_{\rho} \in \mathbb{Q} \backslash \mathbb{Z}_{>0}$ then $m_{\rho} = \sigma_{\rho}$; otherwise, if $m_{\rho} \in \mathbb{Z}_{>0}$ (i.e. $\rho$ is an unramified zero of $\ggen$), then $m_{\rho} < \sigma_{\rho}$.
\end{numrmk}

Equipped with Lemma \ref{lem:omega_to_log}, we state the following classification theorem.

\begin{theorem} \label{thm:main}
	Suppose $\ggen \in \formal{\conj{\mathbb{Q}}}{z}$ is an algebraic series generating the sequence $(\beta_{n})_{n = 1}^{\infty} \subset \conj{\mathbb{Q}}$, and such that $\mathtt{R} \leq 1$. 
	\begin{enumerate}
		\item If $\conj{D_{0}(\mathtt{R}_{\ggen})} \cap (\mathpzc{Z} \cup \mathpzc{P}) = \emptyset$, i.e. $\ggen$ has no zeros or poles on $D_{0}(\mathtt{R}_{\ggen})$, then $\mathtt{R} = \mathtt{R}_{\ggen}$, $\sigma_{*} > 0$, and 
		\begin{align}
			\beta_{n} &= \left(\frac{\mathtt{s}^{n}}{\Gamma(-\sigma_{*})} \right) n^{-2 - \sigma_{*}} \sum_{\rho \in \mathsf{leadsing}_{0}(\ggen)} \left( \ggen(\rho) C_{\rho} \right) \rho^{-n} + \mathcal{O} \left(n^{-2 - \sigma_{\mathrm{sub}}} \mathtt{R}^{-n} \right),
			\label{eq:beta_asymp_nozeropole}
		\end{align}
		as $n \rightarrow \infty$; where $\sigma_{*},\, \sigma_{\mathrm{sub}} > \sigma_{*},$ and $\mathsf{leadsing}_{0}(\ggen)$ are defined in \eqref{eq:leading_exps}.  Furthermore, if $\ggen$ is a series with real-coefficients,
		\begin{align}
			\beta_{n} &= \left(\frac{\mathtt{s}^{n}}{\Gamma(-\sigma_{*})} \right) \mathrm{Osc}(n) n^{-2 - \sigma_{*}} \mathtt{R}^{-n} + \mathcal{O} \left(n^{-2 - \sigma_{\mathrm{sub}}} \mathtt{R}^{-n} \right)
			\label{eq:beta_asymp_nozeropole_cosine}
		\end{align}
		as $n \rightarrow \infty$; where
		\begin{align*}
			\mathrm{Osc}(n) :=  \sum_{\rho \in \mathsf{leadsing}_{0}(\ggen)} \left| \ggen(\rho) C_{\rho} \right| \cos \left[n \arg(\rho) - \arg \left( \ggen(\rho) C_{\rho} \right) \right].
		\end{align*}
		
		\item If $\conj{D_{0}(\mathtt{R}_{\ggen})} \cap (\mathpzc{Z} \cup \mathpzc{P}) \neq \emptyset$, i.e. $\ggen$ has a zero or a pole on $\conj{D_{0}(\mathtt{R}_{\ggen})}$, then
		\begin{align}
			\beta_{n} &= \mathtt{s}^{n} n^{-2} \left( \sum_{\rho \in \mathrm{sing}_{0}(\log(\ggen)) \cap (\mathpzc{Z} \cup \mathpzc{P})} m_{\rho} \rho^{-n} \right) + \mathcal{O}(n^{-2-1/\kappa_{*}} \mathtt{R}^{-n});
			\label{eq:beta_asymp_zeropole}
		\end{align}	
		where $m_{\rho}$ is defined in Rmk. \ref{rmk:m_rho_def} and
		\begin{align*}
			\kappa_{*} := \max \left \{\kappa_{\rho}: \rho \in \mathpzc{Z} \cup \mathpzc{P} \right\} \geq 1.
		\end{align*}
		Furthermore, if $\ggen$ is a series with real-coefficients,
		\begin{align}
			\beta_{n} = \mathtt{s}^{n} \mathrm{Osc}(n) n^{-2} \mathtt{R}^{-n} + \mathcal{O}(n^{-2 - 1/\kappa_{*}} \mathtt{R}^{-n})
			\label{eq:beta_asymp_zeropole_cosine}
		\end{align}
		as $n \rightarrow \infty$; where
		\begin{align*}
			\mathrm{Osc}(n) &:=  \sum_{\rho \in \mathrm{sing}_{0}(\log(\ggen)) \cap (\mathpzc{Z} \cup \mathpzc{P})} m_{\rho}  \cos \left[n \arg(\rho) \right].
		\end{align*}
	\end{enumerate}
\end{theorem}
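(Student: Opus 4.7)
The plan is to combine Lemma~\ref{lem:omega_to_log} with the Puiseux expansion (Thm.~\ref{thm:Puiseux}) and the transfer theorem (Thm.~\ref{thm:transfer}) to extract the dominant coefficient asymptotics of $\log(\ggen)$, and then transport these back to the $\beta_n$. Since $\mathtt{R}\le 1$, Lemma~\ref{lem:omega_to_log} already tells us $\beta_n = -\tfrac{\mathtt{s}^n}{n}[z^n]\log(\ggen) + \mathcal{O}(d(n)r^{-n/2})$ for any $r<\mathtt{R}$; the polynomial-in-$n$ error is absorbed into the $\mathcal{O}$-term of \eqref{eq:beta_asymp_nozeropole}/\eqref{eq:beta_asymp_zeropole} (using Rmk.~\ref{rmk:divisor_bound}), so the whole problem reduces to identifying the dominant singularities of $\log(\ggen)$ and computing $[z^n]\log(\ggen)$ to subleading order.

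First I would describe the singular structure of $\log(\ggen)$ in terms of that of $\ggen$. Every singularity of $\log(\ggen)$ on $\partial\conj{D_0(\mathtt{R})}$ is either a branch-point singularity of $\ggen$ at a point in $\mathsf{sing}_0(\ggen)\setminus(\mathpzc{Z}\cup\mathpzc{P})$, or a logarithmic branch point coming from a zero or pole of $\ggen$. In Case~(1) there are no zeros or poles in $\conj{D_0(\mathtt{R}_\ggen)}$, forcing $\mathtt{R}=\mathtt{R}_\ggen$; since $\ggen$ is bounded and nonvanishing at each $\rho\in\mathsf{sing}_0(\ggen)$, its Puiseux expansion \eqref{eq:T_around_rho} has $\sigma_\rho>0$ (and $\sigma_\rho\notin\mathbb{Z}_{\ge 0}$ by definition), so $\log(\ggen)(z) = \log(\ggen(\rho))+[C_\rho/\ggen(\rho)](1-z/\rho)^{\sigma_\rho}+\mathcal{O}((1-z/\rho)^{\sigma_\rho+1/\kappa_\rho})$ is obtained by composing the expansion with the Taylor series of $\log(1+\cdot)$, and an analogue of Rmk.~\ref{rmk:indented_disk} identifies a wedged disk on which all expansions hold simultaneously. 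Applying Thm.~\ref{thm:transfer} termwise and multiplying by $-\mathtt{s}^n/n$ yields an expression of the desired shape \eqref{eq:beta_asymp_nozeropole}; the only delicate point is verifying that the transfer theorem's required indented-disk region is actually a common domain of analyticity, which follows because the singularity set is finite and each Puiseux expansion is valid on an indented disk $W_\rho(r,\theta,\arg(\rho))$.

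In Case~(2), near a singularity $\rho\in\mathpzc{Z}\cup\mathpzc{P}$ we invoke the factorization of Rmk.~\ref{rmk:m_rho_def}, $\ggen(z)=(1-z/\rho)^{m_\rho}g((1-z/\rho)^{1/\kappa_\rho})$ with $g(0)\ne 0$, so $\log(\ggen)(z) = m_\rho\log(1-z/\rho) + \log g((1-z/\rho)^{1/\kappa_\rho})$. The first summand is logarithmic and has exact coefficient $[z^n]\log(1-z/\rho) = -\tfrac{1}{n}\rho^{-n}$, contributing $-\tfrac{m_\rho}{n}\rho^{-n}$; the second is analytic in $(1-z/\rho)^{1/\kappa_\rho}$ whose leading singular term is of order $(1-z/\rho)^{1/\kappa_\rho}$, producing subleading contributions of order $n^{-1-1/\kappa_\rho}\rho^{-n}$ via Thm.~\ref{thm:transfer}. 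For any finite branch singularity in $\mathsf{sing}_0(\ggen)\setminus(\mathpzc{Z}\cup\mathpzc{P})$ with $|\rho|>\mathtt{R}$ there is no contribution; and for singularities strictly inside $D_0(\mathtt{R}_\ggen)$ (necessarily zeros, by the case hypothesis), the same logarithmic analysis applies. Summing, dividing by $-n$, and multiplying by $\mathtt{s}^n$ yields \eqref{eq:beta_asymp_zeropole}.

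For the oscillatory real-coefficient refinements \eqref{eq:beta_asymp_nozeropole_cosine} and \eqref{eq:beta_asymp_zeropole_cosine}, I would repeat the argument of Rmk.~\ref{remark:osc}: the real polynomial $f$ sends complex-conjugate singularities to complex-conjugate singularities, and the reality of $\ggen$ on $\mathbb{R}\cap D_0(\mathtt{R}_\ggen)$ forces $C_{\conj{\rho}}=\conj{C_\rho}$, $\ggen(\conj{\rho})=\conj{\ggen(\rho)}$, and $m_{\conj{\rho}}=m_\rho$; consequently each complex-conjugate pair of dominant singularities contributes $2\,\mathrm{Re}(\cdot)$, which rewrites as the stated cosine sum. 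I expect the main technical obstacle to be the \emph{uniformity} needed to feed everything into the transfer theorem simultaneously: constructing a single indented-disk region on which $\log(\ggen)$ is analytic and on which the full collection of Puiseux expansions around the dominant singularities yields a matched error term of the claimed order $\mathcal{O}(n^{-2-\sigma_{\mathrm{sub}}}\mathtt{R}^{-n})$ or $\mathcal{O}(n^{-2-1/\kappa_*}\mathtt{R}^{-n})$. This amounts to a careful geometric check that one can simultaneously indent around every $\rho$ in the finite set $\mathsf{sing}_0(\log(\ggen))$ using small enough wedges $W_\rho(r,\theta,\arg(\rho))$, and that pole/zero points coming from $\mathpzc{Z}\cup\mathpzc{P}$ strictly inside $\conj{D_0(\mathtt{R}_\ggen)}$ (Case~2) do not contribute spuriously because their neighbourhoods can likewise be bypassed by indented contours.
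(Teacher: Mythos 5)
Your route is essentially the paper's: reduce to the coefficient asymptotics of $\log(\ggen)$ via Lemma \ref{lem:omega_to_log}, split the dominant singularities into zeros/poles of $\ggen$ (logarithmic branch points of $\log(\ggen)$, handled through the factorization of Rmk.~\ref{rmk:m_rho_def}) versus finite branch points with $\ggen(\rho)\neq 0$ (handled by composing the Puiseux expansion with $\log(1+\cdot)$), transfer with Thm.~\ref{thm:transfer}, and treat real coefficients by conjugate pairing as in Rmk.~\ref{remark:osc}. The genuine gap is in your very first step, the absorption of the M\"{o}bius-inversion error of Lemma \ref{lem:omega_to_log}. That error is $\mathcal{O}\left(d(n)\, r^{-n/2}\right)$ for a chosen $0<r<\mathtt{R}$; it is not ``polynomial in $n$''. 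When $\mathtt{R}<1$ the absorption does work, but only after noting you may pick $r$ with $\mathtt{R}^{2}<r<\mathtt{R}$ (e.g.\ $r=\mathtt{R}^{3/2}$), so that $d(n)r^{-n/2}$ is exponentially dominated by $\mathtt{R}^{-n}$. When $\mathtt{R}=1$ --- a case squarely allowed by the hypothesis $\mathtt{R}\le 1$, and exactly the ``finitely many indices'' situation --- every admissible $r<1$ makes $d(n)r^{-n/2}$ grow exponentially while the claimed errors $\mathcal{O}(n^{-2-\sigma_{\mathrm{sub}}}\mathtt{R}^{-n})$ and $\mathcal{O}(n^{-2-1/\kappa_{*}}\mathtt{R}^{-n})$ decay, so no choice of $r$ lets you absorb the error and your argument, as written, breaks down precisely there.

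The paper closes this hole with a separate device (case (B) of its proof): rescale $z\mapsto az$ by an integer $a>1$ to form the auxiliary algebraic series $\twid{\ggen}$ with $\twid{\mathtt{R}}=|a|^{-1}\mathtt{R}<1$, run the $\mathtt{R}<1$ argument for $\twid{\ggen}$, and then transport the asymptotics back using $[z^{n}]\log(\twid{\ggen})=a^{n}[z^{n}]\log(\ggen)$ and the corresponding relation between the Euler-product exponents. Your proposal needs this (or some substitute argument) to cover $\mathtt{R}=1$; with that addition, the remaining ingredients --- the dichotomy of singularities, the exact logarithmic coefficient $[z^{n}]\log(1-z/\rho)=-\rho^{-n}/n$ giving the $m_{\rho}$-terms, the uniform indented-disk region feeding Thm.~\ref{thm:transfer}, and the cosine rewriting for real coefficients --- all match the paper's proof.
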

\begin{proof}
	To prove the theorem, we first analyze the leading order asymptotics of $\log(\ggen)$ and then apply Lem. \ref{lem:omega_to_log}.  To begin, we divide the dominant singularities of $\log(\ggen)$ into two cases:
	\begin{enumerate}
		\labitem{(i)}{list:log} $\rho \in \mathrm{sing}_{0}(\log(\ggen)) \cap \left(\mathpzc{Z} \cup \mathpzc{P} \right)$, i.e. $\rho$ is a zero of $\ggen$ or $1/\ggen$;

		\labitem{(ii)}{list:finite} $\rho \notin \mathrm{sing}_{0}(\log(\ggen)) \cap \left(\mathpzc{Z} \cup \mathpzc{P} \right)$, i.e. $\ggen(\rho) \neq 0$ and $\rho$ is a ``finite" branch point of $\ggen$ in the sense of \ref{list:fin_branch}.
	\end{enumerate}

	For simplicity of notation, throughout we define
	\begin{align*}
		\zeta := (1 - \rho^{-1} z)
	\end{align*}
	where the value of $\rho$ under consideration will be clear from context.
	
	\subsubsection*{Case \ref{list:log}}
	Assume that $\rho \in \mathrm{sing}_{0}(\log(\ggen))$ is a zero of $\ggen$ or $1/\ggen$, then $\exists m_{\rho} \in \mathbb{Q}_{\neq 0}$ such that the Puisuex expansion of $\ggen$ around $\rho$ can be written as
	\begin{align*}
		\ggen(z) &= \zeta^{m_{\rho}} g \left[ \zeta^{1/\kappa_{\rho}} \right]
	\end{align*}
	where $g$ is a holomorphic function on the unit disk centred about $0$ such that $g(0) \neq 0$, $m_{\rho}$ is a positive rational number, and $\kappa_{\rho} \in \mathbb{Z}_{ \geq 1}$.  Hence,
	\begin{align*}
		\log(\ggen) &= \log \left( \zeta ^{m_{\rho}} \right) +  \log \left\{ g \left[\zeta^{1/\kappa_{\rho}} \right]  \right\}\\
		&=  m_{\rho} \log(\zeta)  + \log \left\{g(0) + \mathcal{O} \left[\zeta^{1/\kappa_{\rho}} \right] \right\}\\
		&= m_{\rho} \log \left(\zeta \right) + \log(-g(0)) + \mathcal{O} \left[\zeta^{1/\kappa_{\rho}} \right].
	\end{align*}

	\subsubsection*{Case \ref{list:finite}}
	Next we move on to case \ref{list:finite}.  Let $\alpha := \ggen(\rho) \neq 0$ so that $(\rho, \alpha) \in \mathrm{Sing}_{0}(\ggen) \subset \mathpzc{R}_{\:0}$.  Then, on the appropriate open region, we may write
	\begin{align*}
		\ggen(z) = \zeta^{\sigma_{\rho}} g \left[\zeta^{1/\kappa_{\rho}} \right] + \alpha + E_{\rho}(z).
	\end{align*}
		where $g$ is a holomorphic function on the unit disk such that $g(0) = C_{\rho} \neq 0$, $\sigma_{\rho}$ is a \textit{positive} rational number, and $E_{\rho}(z)$ is a polynomial in $z$ of degree $< \sigma_{\rho}$.  Taking the logarithm of both sides, we have
	\begin{align*}
		\log \left[\ggen(z) \right] &= \log(\alpha) + \sum_{m = 1}^{\infty} \frac{(-1)^{m + 1}}{m} \alpha^{-m} \left[\zeta^{\sigma_{\rho}} g \left(\zeta^{1/\kappa_{\rho}} \right) +  E_{\rho}(z)  \right]^{m};
	\end{align*}
	Hence, as $z \rightarrow \rho$,
	\begin{align*}
	\log \left[\ggen(z) \right] &= -\alpha^{-1} g(0) \zeta^{\sigma_{\rho}} + \mathrm{Poly}_{\rho}(z) + \mathcal{O} \left[\zeta^{\sigma_{\rho} + 1/\kappa_{\rho}} \right].
	\end{align*}
	for some polynomial $\mathrm{Poly}_{\rho}(z)$ of degree $\leq \sigma_{\rho}$.

Our analyses of cases \ref{list:log} and \ref{list:finite} show that, as $z \rightarrow \rho$,
	\begin{align*}
		\log(\ggen)  &= \left\{
		\begin{array}{lr}
			m_{\rho} \log \left(1 - \rho^{-1}z \right), & \text{if $\rho \in \mathpzc{Z} \cup \mathpzc{P}$}\\
			-\alpha^{-1} C_{\rho} \left(1 - \rho^{-1}z \right)^{\sigma_{\rho}}, & \text{if $\rho \notin \mathpzc{Z} \cup \mathpzc{P}$}
		\end{array}
		\right. + \mathrm{Poly}_{\rho}'(z) + \mathcal{O} \left[\left(1 - \rho^{-1}z \right)^{D} \right]
	\end{align*}
	where
	\begin{itemize}	
		\item $D = \min \left[ \{ \sigma_{\rho} + \frac{1}{\kappa_{\rho}}: \text{$\rho \notin \mathpzc{Z} \cup \mathpzc{P}$ and $\sigma_{\rho} + \frac{1}{\kappa_{\rho}} \notin \mathbb{Z}_{>0}$} \} \cup \{\frac{1}{\kappa_{\rho}}: \rho \in \mathpzc{Z} \cup \mathpzc{P} \} \right]$
		
		\item $\mathrm{Poly}_{\rho}'(z)$ is a polynomial of degree $<D$;
		
	\end{itemize}		
		 The asymptotics of $[z^n] \log(\ggen)$ follow after applying Thm.~\ref{thm:transfer} to the above discussion, eliminating redundant terms that are absorbed into subleading asymptotics.  Equations \eqref{eq:beta_asymp_nozeropole} and \eqref{eq:beta_asymp_zeropole} follow by application of Lemma \ref{lem:omega_to_log}, which transfers the asymptotics of $[z^{n}] \log(\ggen)$ to the asymptotics of $\beta_{n}$.  A few more words are in order for this last statement: it remains to show that the subleading asymptotics of \eqref{eq:Omega_Log_asymp}, which live in $\mathcal{O}(d(n) r^{-n/2})$, can be safely absorbed into the subleading asymptotics of $n^{-1} \left([z^{n}] \log(\ggen) \right)$, which are in $\mathcal{O}(n^{\beta} \mathtt{R}^{-n})$ for some $\beta \in \mathbb{R}$.  There are two cases at hand.
		 \begin{enumerate}
		 	\labitem{(A)}{list:lessthan1}  \textbf{When $\mathtt{R} < 1$}. We may choose an $r$ in \eqref{eq:Omega_Log_asymp} such that $0 < r < \mathtt{R}$ and $r^{1/2} > \mathtt{R}$ (e.g. $r = \mathtt{R}^{3/2}$); the choice of such an $r$ ensures that $\mathcal{O}(n^{\alpha} r^{-n/2}) \subset \mathcal{O}(n^{\beta} \mathtt{R}^{-n})$ for any $\alpha,\, \beta \in \mathbb{R}$.  So, by Rmk. \ref{rmk:divisor_bound}, we have $\mathcal{O}(d(n) r^{-n/2}) \subset \mathcal{O}(n^{\beta} \mathtt{R}^{-n})$ for any $\beta \in \mathbb{R}$ and we are done.

			\labitem{(B)}{list:equalto1} \textbf{When $\mathtt{R} = 1$}.  The situation is slightly more subtle: the subleading asymptotics of $[z^{n}] \log(\ggen)$ are decreasing (being in $\mathcal{O}(n^{-\beta})$ for some $\beta >0$) while $d(n) r^{-n/2}$ grows exponentially for any $r < \mathtt{R} = 1$; hence, Lem. \ref{lem:omega_to_log} appears useless at first sight.  However, we may salvage the situation by utilizing the following trick that takes us back to the $\mathtt{R} < 1$ situation.  First, let $a > 1$ be any integer and define the series
	\begin{align*}
		\twid{\ggen} &= 1 + \sum_{n = 1}^{\infty} g_{n} a^{n} z^{n};
	\end{align*}
	as $a$ is an integer, then $\twid{\ggen} \in \formal{\conj{\mathbb{Q}}}{z}$.  Moreover, $\twid{\ggen}$ is an algebraic series over $\mathbb{Q}$ (if $f(z, \ggen) = 0$ for some $f \in \mathbb{Q}[z,t]$ then $f(az,\twid{\ggen}) = 0$).  Now, because $\twid{\ggen}$ has constant coefficient 1, it admits an Euler product expansion that defines a sequence of algebraic numbers $(\twid{\beta}_{n})_{n}$ (defined via \eqref{eq:beta_from_ggen}):
	\begin{align*}
		\twid{\ggen} &= \prod_{n = 1}^{\infty} \left( 1 - \left(\mathtt{s} z \right)^{k} \right)^{n \twid{\beta}_{n}}.
	\end{align*}
	Next, observe that $\rho \in \mathrm{sing}_{0}(\log(\mathsf{\ggen}))$ if and only if $a^{-1} \rho \in \mathrm{sing}_{0}(\log(\twid{\mathsf{G}}))$; in particular, $\twid{\mathtt{R}} = \min (\mathtt{R}_{\twid{\ggen}}, \mathtt{R}_{\twid{\ggen}^{-1}}) = a^{-1} \mathtt{R} < 1$.  Hence, by the argument in \ref{list:lessthan1}, Thm. \ref{thm:main} holds with $\ggen,\, \beta,\,$ and $\mathtt{R}$ replaced with their twiddled-counterparts $\twid{\ggen},\, \twid{\beta}_{n}$, and $\twid{\mathtt{R}}$.  To reduce to the asymptotics for $\beta_{n}$, we observe that $[z^{n}] \log \left(\twid{G} \right) = a^{n} [z^{n}] \log \left(G \right)$; so from \eqref{eq:beta_from_ggen}, we have
\begin{align*}	
	\twid{\beta}_{n} &= a^{n} \beta_{n}.
\end{align*}
	Moreover, from our observations, when passing back to all un-twiddled quantities, an overall multiplicative factor of $a^{n}$ may be extracted from the right hand side of the twiddled analogues of \eqref{eq:beta_asymp_nozeropole} or \eqref{eq:beta_asymp_zeropole}.  Cancelling factors of $a^{n}$, the un-twiddled \eqref{eq:beta_asymp_nozeropole} and \eqref{eq:beta_asymp_zeropole} remain valid.
\end{enumerate}
	 Lastly, if $\ggen$ has coefficients in $\mathbb{R}$, we may rewrite \eqref{eq:beta_asymp_nozeropole} (respectively \eqref{eq:beta_asymp_zeropole}) as \eqref{eq:beta_asymp_nozeropole_cosine} (respectively \eqref{eq:beta_asymp_zeropole_cosine}) by using the reality of the coefficients of $f$ (c.f. Remark \ref{remark:osc}).
\end{proof}

\begin{remark}
	As shown in \S \ref{sec:asymptotics_rational}, if we restrict to $\ggen \in \formal{\mathbb{Q}}{z}$, we may drop the condition that $\mathtt{R} \leq 1$ from the statement of Thm. \ref{thm:main}.
\end{remark}

The following explicit expression follows immediately when combining 	Thm.~\ref{thm:main} with Prop. \ref{prop:growth} and Rmk. \ref{rmk:smooth_dom}.

\begin{corollary} \label{cor:multi}
	Let 
	\begin{enumerate}
		\item $\ggen \in \formal{\conj{\mathbb{Q}}}{z}$ be a generating series for $(\beta_{n})_{n = 1}^{\infty} \subset \conj{\mathbb{Q}}$, algebraic over $\mathbb{Q}(z)$;
		
		\item  $f \in \conj{\mathbb{Q}}[z,t]$ be the (unique up to a constant) absolutely irreducible polynomial such that $f(z,\ggen) = 0$;
		
		\item $\mathrm{sing}_{0}(\log(\ggen)) = \{\rho_{1}, \cdots, \rho_{k}\}$ be the set of dominant singularities of $\log(\ggen)$;
	\end{enumerate}
	 If $\mathtt{R} \leq 1$, $p_{i} := (\rho_{i}, \ggen(\rho_{i}))$ is a smooth ramification point of $\phi: V_{f} \rightarrow \mathbb{C}$ for all $i \in \{1, \cdots, k\}$, and the ramification indices of each $p_{i}$ are all equal: $\nu := \nu_{p_{1}} = \nu_{p_{2}} = \cdots = \nu_{p_{n}}$, then as $n \rightarrow \infty$
	\begin{align}
		\beta_{n} = - \left[\frac{\mathtt{s}^{n} n^{-2 - 1/\nu} }{\Gamma \left(-\frac{1}{\nu} \right)} \right]  \sum_{i = 1}^{k} \frac{C_{\rho_{i}}}{\ggen(\rho_{i})} \rho_{i}^{-n} + \mathcal{O} \left(n^{-2 - \sigma_{\mathrm{sub}} } \mathtt{R}^{-n} \right)
		\label{eq:multi_dominant}
	\end{align}
	where 
		\begin{align*}
		\sigma_{\mathrm{sub}} &= 
		\left\{
		\begin{array}{ll}
			3/2, & \text{if $\nu = 2$}\\
			2/\nu, & \text{if $\nu > 2$}
		\end{array}
		\right.
	\end{align*}
	and the constants $C_{\rho_{i}}$ are defined in \eqref{eq:sigma_rho_def} (via the Newton-Puiseux expansion of $\ggen$ around $\rho_{i}$): explicitly, via Rmk. \ref{rmk:smooth_dom}, there exist a collection of $\nu$th roots of unity $\{\omega_{\nu,i}\}_{i = 1}^{k}$ (satisfying the condition that $\omega_{i} = \conj{\omega_{j}}$ if $\rho_{i}$ and $\rho_{j}$ are conjugate) such that
	\begin{align*}
		C_{\rho_{i}} = \omega_{\nu,i} \left[ \rho_{i} \left( \frac{f_{1,0}(p_{i})}{f_{0,\nu}(p_{i})} \right) \right]^{1/\nu} 
	\end{align*}
	where $f_{n,m}(p_{i})$ is defined via \eqref{eq:f_coeffs}.  Moreover, if $\ggen \in \formal{\mathbb{Q}}{z}$, then the quantity in the radical is a rational number and $\omega_{\nu,i} \in \{\pm 1\}$.
\end{corollary}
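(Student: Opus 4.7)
The plan is to specialize Theorem~\ref{thm:main} using the explicit Newton--Puiseux data of Remark~\ref{rmk:smooth_dom} at each dominant singularity, then read off the claimed asymptotic.

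First I would verify that the hypotheses place us in case~1 of Theorem~\ref{thm:main}, i.e.\ that $\conj{D_0(\mathtt{R}_{\ggen})} \cap (\mathpzc{Z} \cup \mathpzc{P}) = \emptyset$. Each $p_i = (\rho_i, \ggen(\rho_i))$ is a smooth ramification point at which $\ggen(\rho_i)$ is finite, so no $\rho_i$ lies in $\mathpzc{Z} \cup \mathpzc{P}$ (a zero or pole of $\ggen$ would produce a logarithmic rather than algebraic branch-type singularity of $\ggen$ itself, incompatible with smooth ramification). Any zero or pole strictly inside $D_0(\mathtt{R}_{\ggen})$ would be a logarithmic dominant singularity of $\log(\ggen)$ at distance $<\mathtt{R}_{\ggen}$ from the origin, forcing $\mathtt{R} < \mathtt{R}_{\ggen}$ and placing every dominant singularity in $\mathpzc{Z} \cup \mathpzc{P}$---contradicting the hypothesis. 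Hence $\mathtt{R} = \mathtt{R}_{\ggen}$ and case~1 of Theorem~\ref{thm:main} applies.

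Remark~\ref{rmk:smooth_dom} then supplies, for each $i$, the values $\sigma_{\rho_i} = 1/\nu$, $\kappa_{\rho_i} = \nu$, and $C_{\rho_i} = \omega_{\nu,i}\bigl(\rho_i f_{1,0}(p_i)/f_{0,\nu}(p_i)\bigr)^{1/\nu}$ for a suitable $\nu$th root of unity $\omega_{\nu,i}$. Substituting into the computation that proves~\eqref{eq:beta_asymp_nozeropole} yields~\eqref{eq:multi_dominant}: the common value $\sigma_* = 1/\nu$ pulls the prefactor $n^{-2-1/\nu}/\Gamma(-1/\nu)$ outside the sum, and the factor $1/\ggen(\rho_i)$ together with the overall minus sign arises by expanding $\log(\alpha + \delta) = \log\alpha + \alpha^{-1}\delta + O(\delta^2)$ at $\alpha = \ggen(\rho_i)\neq 0$ and combining with the sign in $\beta_n \sim -\mathtt{s}^n n^{-1}[z^n]\log(\ggen)$ from Lemma~\ref{lem:omega_to_log}. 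The subleading exponent $\sigma_{\mathrm{sub}}$ is determined as $\min\{\sigma_{\rho_i} + 1/\kappa_{\rho_i}\}$ restricted to non-integer values: each candidate equals $2/\nu$, which is non-integer for $\nu > 2$ giving $\sigma_{\mathrm{sub}} = 2/\nu$, while for $\nu = 2$ it equals $1 \in \mathbb{Z}_{>0}$, is absorbed into the polynomial part of the Puiseux expansion, and one must pass to the next available non-integer exponent $3/\nu = 3/2$.

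The hardest point will be the rationality statement when $\ggen \in \formal{\mathbb{Q}}{z}$. Eisenstein's Lemma~\ref{lem:Eisenstein} lets me replace $f$ by a polynomial in $\mathbb{Z}[z,t]$, so the partials $f_{1,0}(p_i),\, f_{0,\nu}(p_i) \in \mathbb{Z}[\rho_i, \ggen(\rho_i)]$ are algebraic but a priori not rational. To upgrade the radicand $\rho_i f_{1,0}(p_i)/f_{0,\nu}(p_i) = C_{\rho_i}^\nu$ to a rational number I would exploit the fact that every $\beta_n$ lies in $\mathbb{Q}$: the $\operatorname{Gal}(\conj{\mathbb{Q}}/\mathbb{Q})$-action permutes the dominant singularities and transports $C_{\rho_i}^\nu$ equivariantly, so matching the asymptotic expansion against $\mathbb{Q}$-valuedness of $\beta_n$ (via a simultaneous analysis over Galois orbits of singularities) forces $C_{\rho_i}^\nu \in \mathbb{Q}$; the sign ambiguity $\omega_{\nu,i} \in \{\pm 1\}$ is then the unique choice compatible with real-valuedness of each Galois-orbit contribution.
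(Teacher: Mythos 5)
Your route is the same as the paper's: the corollary is meant to follow by specializing case 1 of Thm.~\ref{thm:main} using the explicit Puiseux data of Rmk.~\ref{rmk:smooth_dom} (the paper simply declares it immediate from Thm.~\ref{thm:main}, Prop.~\ref{prop:growth} and Rmk.~\ref{rmk:smooth_dom}), and your execution of the main computation is correct and in fact more detailed than the paper: the derivation of the coefficient $-C_{\rho_i}/\ggen(\rho_i)$ from $\log(\alpha+\delta)=\log\alpha+\alpha^{-1}\delta+\mathcal{O}(\delta^{2})$ together with Lem.~\ref{lem:omega_to_log} reproduces exactly the constant appearing in \eqref{eq:multi_dominant}, and your $\nu=2$ versus $\nu>2$ dichotomy for $\sigma_{\mathrm{sub}}$ (the exponent $2/\nu$ being a positive integer precisely when $\nu=2$, forcing passage to $3/2$) is the right bookkeeping.

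Two caveats. First, your parenthetical reason for excluding $\mathpzc{Z}$ is false: a zero of $\ggen$ at a smooth ramification point is a perfectly algebraic singularity (e.g.\ $\ggen\sim C\left(1-z/\rho\right)^{1/\nu}$ with $\ggen(\rho)=0$); it is $\log\ggen$, not $\ggen$, that becomes logarithmic there. Your argument does correctly rule out poles (they are not points of $V_{f}$) and zeros strictly inside $D_{0}(\mathtt{R}_{\ggen})$ (an unbranched point of the section cannot lie over a smooth ramification point), but a ramified zero \emph{on} the boundary circle is not excluded by the stated hypotheses; it is excluded only implicitly, because \eqref{eq:multi_dominant} divides by $\ggen(\rho_i)$, and in that configuration one would land in case 2 of Thm.~\ref{thm:main} with different asymptotics. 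Second, the closing Galois-equivariance sketch for the rationality of $\rho_i f_{1,0}(p_i)/f_{0,\nu}(p_i)$ when $\ggen\in\formal{\mathbb{Q}}{z}$ is not yet an argument: nothing in it explains why this element of $\mathbb{Q}\bigl(\rho_i,\ggen(\rho_i)\bigr)$ should be Galois-fixed when $\rho_i$ itself is irrational, so that step remains a gap relative to the full statement (though the paper supplies no proof of that final sentence either).
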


\begin{remark}
Notice that we have been careful to not use the symbol ``$\sim$"; the reason for this is the following.  If there are multiple dominant singularities, at least one (along with its complex conjugate) of which is complex, then $\mathrm{Osc}(n)$ is a non-constant ``oscillatory" function.  The result is that, as $n \rightarrow \infty$
\begin{align*}
	\frac{\beta_{n}}{\mathtt{s}^{n} n^{-l} \mathrm{Osc}(n) \mathtt{R}^{-n}} = 1 + \mathcal{O} \left(n^{-s} \mathrm{Osc}(n)^{-1}  \right);
\end{align*}
where $l > 0$ and $s > l$ are rational numbers, depending on the specific example at hand, that are specified by Thm.~\ref{thm:main}.  The big-$\mathcal{O}$-terms, however, are not guaranteed to vanish (or even have a well-defined limit) as $n \rightarrow \infty$. For example, in the case of two complex dominant singularities, we would have
\begin{align*}
	\mathrm{Osc}(n) &= \cos \left[ n \arg(\rho) - \phi \right]
\end{align*}
for some $\phi \in [0,2\pi)$.  When $\arg(\rho)$ is an irrational multiple of $\pi$, then the set $\{\mathrm{Osc}(n)\}_{n = 1}^{\infty}$ densely fills the interval $[-1,1]$.  In particular the set $\{\mathrm{Osc}(n)\}_{n = 1}^{\infty}$ contains infinitely many elements of magnitude $< \epsilon$ for any fixed $\epsilon$; so for any fixed $M > 0$, then $\{\mathrm{Osc}(n)^{-1}\}_{n = 1}^{\infty}$ contains infinitely many elements of magnitude $>M$---it follows that $n^{-s} \mathrm{Osc}(n)^{-1}$ has no well-defined limit as $n \rightarrow \infty$.

However, returning to the general scenario, note that, because $\mathrm{Osc}(n)$ is a bounded function we always have that
\begin{align*}
	\limsup_{n \rightarrow \infty} \frac{\beta_{n}}{\mathtt{s}^{n} n^{-l} \mathtt{R}^{-n}} &= \limsup_{n \rightarrow \infty} \mathrm{Osc}(n) = C_{1}\\
	\liminf_{n \rightarrow \infty} \frac{\beta_{n}}{\mathtt{s}^{n} n^{-l} \mathtt{R}^{-n}} &=  \liminf_{n \rightarrow \infty} \mathrm{Osc}(n) = C_{2}
\end{align*}
for some constants $C_{1},\, C_{2} \in \mathbb{R}$ (at least one of which is non-vanishing). 

 In the simple case of a single dominant singularity then $\mathrm{Osc}(n) \equiv C$ for some non-zero $C \in \mathbb{R}$; hence, we are justified in writing
\begin{align*}
	\beta_{n} \sim \mathtt{s}^{n} C n^{-l} \mathtt{R}^{-n}.
\end{align*}
\end{remark}

\begin{corollary} \label{cor:single}
Let $\ggen$ and $f$ be defined as in Corollary \ref{cor:multi}. If $\log(\ggen)$ has a single dominant singularity $\rho$ such that $\rho \notin \mathpzc{Z} \cup \mathpzc{P}$ and $(\rho, \ggen(\rho))$ is a smooth ramification point of $\phi: V_{f} \rightarrow \mathbb{C}$ with ramification index $\nu$, then as $n \rightarrow \infty$
\begin{align}
	\beta_{n} \sim \omega_{v} \left(\rho  \frac{f_{1,0}(p)}{f_{0,\nu}(p)} \right)^{1/\nu} \left[\frac{\mathtt{s}^{n}}{\ggen(\rho) \Gamma \left(-\frac{1}{\nu} \right)} \right]n^{-2 - 1/\nu}  \rho^{-n}
	\label{eq:single_dominant}
\end{align}
where $f_{n,m}(p)$ is defined via \eqref{eq:f_coeffs}, $\omega_{v}$ is a $\nu$th root of unity and the big-$\mathcal{O}$ class of the subleading asymptotics is provided by the previous corollary.
\end{corollary}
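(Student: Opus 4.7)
The plan is to obtain Corollary \ref{cor:single} as a direct specialization of Corollary \ref{cor:multi} to the case $k=1$, combined with the explicit Puiseux data recorded in Remark \ref{rmk:smooth_dom}. First I would verify that the hypotheses of Corollary \ref{cor:multi} are satisfied: by Proposition \ref{prop:growth} we have $\mathtt{R}\leq 1$ automatically (or we may invoke Theorem \ref{thm:main} directly, whose hypothesis $\mathtt{R}\leq 1$ is given), and by assumption the unique element $\rho\in\mathrm{sing}_0(\log(\ggen))$ satisfies $\rho\notin\mathpzc{Z}\cup\mathpzc{P}$, so that the point $p=(\rho,\ggen(\rho))$ is a smooth ramification point of $\phi:V_f\to\mathbb{C}$ of ramification index $\nu$.

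Next, I would substitute $k=1$ in formula \eqref{eq:multi_dominant}, producing
\[
\beta_n \;=\; -\left[\frac{\mathtt{s}^n n^{-2-1/\nu}}{\Gamma(-1/\nu)}\right]\frac{C_\rho}{\ggen(\rho)}\,\rho^{-n} \;+\;\mathcal{O}\!\left(n^{-2-\sigma_{\mathrm{sub}}}\mathtt{R}^{-n}\right).
\]
I would then invoke Remark \ref{rmk:smooth_dom}, which, under the assumption that $p$ is a smooth ramification point with $\rho\notin\mathpzc{P}$, identifies $\sigma_\rho=1/\nu$ and gives the explicit formula
\[
C_\rho \;=\; \omega_\nu\!\left(\rho\,\frac{f_{1,0}(p)}{f_{0,\nu}(p)}\right)^{\!1/\nu}
\]
for a suitable $\nu$th root of unity $\omega_\nu$. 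Plugging this expression for $C_\rho$ into the displayed formula yields exactly \eqref{eq:single_dominant}, up to the sign absorbed into the choice of root of unity $\omega_\nu$ (this is the point where the overall minus sign coming from $-1/\Gamma(-1/\nu)$ can be swallowed by redefining $\omega_\nu$).

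Finally, I would justify the upgrade from the $\mathcal{O}$-statement to the ``$\sim$'' asymptotic. With a single dominant singularity, the oscillation function reduces to a (nonzero) constant---as discussed in the remark immediately following Corollary \ref{cor:multi}---so the ratio of $\beta_n$ to the claimed leading term converges to $1$ as $n\to\infty$. The only mild technical check worth flagging is confirming the subleading exponent $\sigma_{\mathrm{sub}}$ promised by Corollary \ref{cor:multi}, but this is not an obstacle: it is precisely what Theorem \ref{thm:main} delivers, since with $p$ a smooth ramification point the next term in the Newton--Puiseux expansion of $\ggen$ at $\rho$ contributes a power $(1-\rho^{-1}z)^{\sigma_\rho+1/\kappa_\rho}=(1-\rho^{-1}z)^{2/\nu}$, with the usual degeneration to $3/2$ when $\nu=2$. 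There is no serious obstacle in this argument---the work is entirely in unpacking the hypotheses and chasing constants---so the ``proof'' is essentially a one-line appeal to Corollary \ref{cor:multi} with $k=1$ plus the explicit Puiseux formula of Remark \ref{rmk:smooth_dom}.
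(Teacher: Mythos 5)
Your proposal is correct and matches the paper's intent exactly: the corollary is stated there without a separate proof precisely because it is the $k=1$ specialization of Corollary \ref{cor:multi} together with the explicit constant $C_{\rho}$ from Remark \ref{rmk:smooth_dom}, with the overall sign absorbed into $\omega_{\nu}$ and the ``$\sim$'' justified by the constancy of the oscillation term for a single dominant singularity, as in the remark preceding the corollary. The only negligible quibble is that invoking Proposition \ref{prop:growth} for $\mathtt{R}\leq 1$ presumes integer coefficients, whereas the hypothesis $\mathtt{R}\leq 1$ is already part of Corollary \ref{cor:multi} as you note.
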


%

\begin{example}[Examples]\
	\begin{enumerate}
		\labitem{1}{ex:m_herd_asy} \textbf{The $m$-herd}. One can study the $m$-herd BPS asymptotics by studying the BPS generating series $\gdt$, given as a root of
		\begin{align*}
			\mathcal{F}_{(3,2|3)} = -1 + t (t - z t^{(m-2)})^{m} \in \left(\mathbb{Z}[z] \right)[t];
		\end{align*}
		however, as with our previous $m$-herd examples, we will proceed indirectly with the slightly simpler polynomial
		\begin{align*}
			\mathcal{H}_{k} = p - z p^{k} - 1 \in \left(\mathbb{Z}[z] \right)[p]
		\end{align*}
		Letting $P$ be the unique root of this equation with $P(0) = 1$, we define the sequence $(c_{n})_{n=1}^{\infty}$ via the Euler-product factorization
		\begin{align*}
			P = \prod_{n = 1}^{\infty} \left(1  - (\mathtt{s} z)^{n} \right)^{n c_{n}};
		\end{align*}
		noting that, when $k = (m-1)^2$ and choosing $\mathtt{s} = (-1)^{m}$, the $m$-herd BPS generating series is given by 
		\begin{align*}
			\gdt= P^{m}
		\end{align*}		
		hence,
		\begin{align}
			\Omega(n \gamma_{c}) = m c_{n}.
			\label{eq:c_to_Omega}
		\end{align}
		Letting $k := (m-1)^{2}$, there is a single dominant singularity located at 
		\begin{align*}
			\rho = k^{-k} (k-1)^{k-1}
		\end{align*}		
		with
		\begin{align*}
			P(\rho) = \frac{k}{k-1} = \frac{(m-1)^{2}}{m(m-2)}.
		\end{align*}
		The corresponding ramification point $p = (\rho, P(\rho))$ is a smooth point of ramification index $\nu_{p} = 2$.  Working through the details of \eqref{eq:single_dominant}, we have
		\begin{align*}
			c_n \sim - \frac{\mathtt{s}^{n}}{k} \sqrt{\frac{k}{2 \pi (k-1)}} n^{-5/2} \left(k^{k} (k-1)^{-(k-1)} \right)^{n}.
		\end{align*}
		Setting $k = (m-1)^2$, $\mathtt{s} = (-1)^{m}$ and using \eqref{eq:c_to_Omega}, we have
		\begin{align*}
			\Omega(n \gamma_{c}) \sim (-1)^{m n + 1} \left(\frac{1}{m-1}  \sqrt{\frac{m}{2 \pi (m-2)}} \right) n^{-5/2} \left[(m-1)^{(m-1)} m(m-2)^{-m(m-2)} \right]^{n};
		\end{align*}
		which agrees with the asymptotics of \cite[Prop. 3.4]{wwc}, proved via a procedure using the explicit series expansion of $P$ around $z =0$.
			
		\labitem{2}{ex:3_2_asy} \textbf{The $(3,2|3)$-herd}.  The BPS generating series $\gdt$ is given by the root of a 39th degree (absolutely irreducible) polynomial (see \eqref{eq:3_2_dt_relation})---in particular, the root defined by the unique \textit{analytic} section of the associated algebraic curve, passing through the point $(0,1) \in\mathbb{C}^{2}$ (the point $(0,1)$ is a singular point of the algebraic curve corresponding to the intersection of two branches; only one branch defines an analytic section at $z=0$).  There is a single dominant singularity of $\gdt$ at $\rho \approx 0.005134$ that arises as the root of smallest magnitude of the degree 10 polynomial $d_{2}(z)$ (a factor of the discriminant polynomial of $\mathcal{F}_{(3/2|3)}$, c.f. \eqref{eq:Disc_F_3_2}); explicitly,
			\begin{equation}
				\begin{aligned}
				d_{2}(z) &= 7453227051205047621210969560803493368439376354217529296875\\
				& - 2322891406807452663970230316950021070808611526489257812500000 z  \\
				& + 153110106665001377524256387694240634085046455866589161785600000 z^2 \\
&- 7817871138183350523859861706656575827739367498904849637993525248 z^3  \\
&+ 2216428789463767802947118021038105095664745743271952551598777966592 z^4 \\ 
&- 12918183321713078299888780094691016787906213863596568047251860094976 z^5  \\
&+ 54012861575241903106685494870153854374366753649646197438214701056000 z^6  \\
&+ 2067823170060188178302247072833500754014100750420991514751809880064 z^7\\
&- 73168130623725009119914340392401973195062456631524680181743616 z^8 \\
&- 692626494138074646657585699931857043089076077167902720 z^9  \\
&+ 106627982583039156618936454468596279550148608 z^{10}.
				\end{aligned}
			\label{eq:d_2}
			\end{equation}
			It can be checked that $(\rho,\gdt(\rho))$ (where $\gdt(\rho) \approx 1.203051$) is a smooth point with ramification index $2$ (and one can see both associated branches within the real points of the curve: see Fig.~\ref{fig:(3_2|3)_curve}). Applying \eqref{eq:single_dominant}, we arrive at the asymptotics stated in \eqref{eq:3_2_herd_asymptotics}, which we restate here for convenience:
			\begin{align*}
				\Omega \left(n \gamma_{c} \right) \sim (-1)^{n+1} C  n^{-5/2} \rho^{-n},
			\end{align*}
		where $C \approx 0.075084$.
		
		\item \textbf{$3$-Kronecker Quiver Euler characteristics}: The generating series $\geul = \geul_{3/2}$ is a root of a 9th degree (absolutely irreducible) polynomial (c.f. \eqref{eq:3_2_eul_relation}) $\mathcal{E} \in \left(\mathbb{Z}[z]\right)[e]$.  It has two dominant singularities at two complex conjugate roots of the polynomial \eqref{eq:eul_discrim_factor}, a factor of the discriminant polynomial $\mathrm{Disc}_{\mathbb{Z}[e]}(\mathcal{E}) \in \mathbb{Z}[z]$.  Explicitly these dominant singularities are given by
		\begin{align*}
			\rho \approx 0.0151352 + 0.0373931 i
		\end{align*}
		and its complex conjugate.  The corresponding points $(\rho, \geul(\rho))$ and $(\conj{\rho}, \geul(\conj{\rho}) = \conj{\geul(\rho)})$, where
\begin{align*}
		\geul(\rho) \approx 0.6590388 + 0.7452078 i
\end{align*}		
		are non-singular points of ramification index $2$.  Applying \eqref{eq:multi_dominant}, we arrive at the asymptotics stated in \eqref{eq:chi_3_2_asymps_cx}.
	\end{enumerate}
\end{example}

The physical consequences of the results in this section, when combined with Claim \ref{claim:func_eqn},  can be summarized in a single statement.

\begin{phcorollary}
	Let $\gamma_{c}$ be a primitive charge of a BPS state in a theory of class $S[A_{K-1}]$ such that:
	\begin{itemize}
		\item the state occurs at a point on the Coulomb branch that is off of any walls of marginal stability,
		\item and it is represented by a spectral network with finitely many two-way streets;
	\end{itemize}
	then the generating series for the BPS indices $\{\Omega(n \gamma_{c}) \}_{n = 1}^{\infty} \subset \mathbb{Z}$ is algebraic.  Furthermore, $\Omega(n \gamma_{c})$ grows asymptotically as
	\begin{align*}
		\Omega(n \gamma_{c}) = \mathrm{Osc}(n) n^{-2 - \alpha} \mathtt{R}^{-n} + \mathcal{O} \left(n^{-2 - \alpha - \epsilon} \mathtt{R}^{-n} \right)
	\end{align*}
	as $n \rightarrow \infty$; where $\mathtt{R} \in \conj{\mathbb{Q}} \cap (0,1]$, $\alpha \in \mathbb{Q}_{\geq 0},\, \epsilon \in \mathbb{Q}_{>0}$, and $\mathrm{Osc}(n)$ is a bounded ``oscillatory" function.  Furthermore, because $\left(\Omega(n \gamma_{c}) \right)_{n} \subset \mathbb{Z}$ then either there are finitely many BPS indices ($\mathtt{R}=1$), or there are infinitely many ($\mathtt{R} < 1$) with asymptotic growth of the form given above.
\end{phcorollary}

\subsubsection{Algebraic Generating Series with Rational Coefficients} \label{sec:asymptotics_rational}
Suppose we were interested in studying the asymptotics of a series of rational numbers $(\beta_{n})_{n = 1}^{\infty}$ with algebraic generating series $\ggen \in \formal{\mathbb{Q}}{z}$ that does not necessarily lie in $\formal{\mathbb{Z}}{z}$.  Such a scenario cannot occur for the generating series of $m$-Kronecker Euler characteristics or BPS index generating series: in both scenarios $(\beta_{n})_{n = 1}^{\infty} \subset \mathbb{Z}$; however, this case may be of interest in the study of DT invariants that do not necessarily satisfy an integrality condition (and, hence, cannot arise as BPS indices).  If the radius of convergence $\mathtt{R}$ of $\log(\ggen)$ is less than or equal to 1, then we can simply apply Thm. \ref{thm:main}; however, it is possible that $\mathtt{R} > 1$ (a situation that cannot occur for series in $\formal{\mathbb{Z}}{z}$ by Prop. \ref{prop:growth}).  In this latter situation, Lemma \ref{lem:omega_to_log} indicates that the associated $\beta_{n}$ must shrink: $\beta_{n} \in \mathcal{O}(n^{-2})$ as $n \rightarrow \infty$.  However, as we will see, this is a rather crude estimate and, in fact, the formulae of Thm. \ref{thm:main} continue to hold for $\ggen \in \formal{\mathbb{Q}}{z}$.  The reason for this magic is due to the following Lemma.

\begin{lemma}[Eisenstein, Heine] \label{lem:Eisenstein_Heine}
	Let $\sum_{n = 0}^{\infty} t_{n} z^{n} \in \formal{\mathbb{Q}}{z}$ be algebraic over $\mathbb{Q}(z)$, then there exists an integer $a$ (an Eisenstein constant) such that $a^{n} t_{n},\, n \geq 1$ are all integers.
\end{lemma}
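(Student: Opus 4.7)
The plan is to exploit the algebraic equation $P(z, T(z)) = 0$ satisfied by $T$ to derive an explicit recurrence for the coefficients $t_n$ and then bound their denominators exponentially. After clearing denominators in the minimal polynomial of $T$ I may assume $P \in \mathbb{Z}[z, t]$ is irreducible, which forces $(\partial_t P)(z, T(z)) \not\equiv 0$ in $\formal{\mathbb{Q}}{z}$; translating $T \mapsto T - t_0$ and clearing denominators again reduces to the case $T(0) = 0$.

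The key identity comes from the Taylor expansion of $P$ about $T$ in its second variable. Because the coefficients in the shift expansion of an integer polynomial in $t$ are themselves integer polynomials in $t$ (the relevant binomial coefficients being integers), and because $P(z, T) = 0$, one has
\[
P(z, T + Y) \;=\; (\partial_t P)(z, T)\cdot Y \;+\; Y^2\, R(z, T, Y), \qquad R \in \mathbb{Z}[z, T, Y].
\]
Writing $(\partial_t P)(z, T(z)) = z^v \Delta(z)$ with $\delta := \Delta(0) \neq 0$ and substituting the truncation $Y = T_k(z) - T(z) = -\sum_{n > k} t_n z^n$, of $z$-order $\geq k+1$ (so that the $Y^2 R$ contribution has order $\geq 2k+2$), one reads off, for every $k \geq v$, the fundamental recursion
\[
-\delta\, t_{k+1} \;=\; [z^{k+v+1}]\, P(z, T_k(z)),
\]
whose right-hand side is an integer polynomial in $t_1, \ldots, t_k$ composed of monomials $p_{ij}\, t_{n_1} \cdots t_{n_j}$ satisfying $j \leq d := \deg_t P$ and $n_1 + \cdots + n_j \leq k + v + 1$.

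The Eisenstein bound is then obtained by a two-parameter inductive argument. I first preprocess by the change of variable $\widetilde T(z) := T(Dz)$ for a positive integer $D$, which replaces $\delta$ by $\widetilde\delta = D^v \delta$ and $t_n$ by $\widetilde t_n = D^n t_n$; choosing $D$ so that $\widetilde\delta$ is a nonzero \emph{integer} (absorbing the denominator of $\delta$ together with the factor $D^v$) reduces matters to showing $A\, a^n\, \widetilde t_n \in \mathbb{Z}$ for suitable $A, a \in \mathbb{Z}_{>0}$. Applying the recursion to $\widetilde T$ and inducting on $k$, the monomial contributions on the right-hand side have common denominator dividing $A^d a^{k+v+1}$, so $A^d a^{k+v+1} \widetilde\delta\, \widetilde t_{k+1} \in \mathbb{Z}$; a judicious choice of $A$---absorbing the finitely many base cases $\widetilde t_1, \ldots, \widetilde t_v$ together with the residual factor of order $|\widetilde\delta|^{d-1} a^v$---upgrades this to $A\, a^{k+1} \widetilde t_{k+1} \in \mathbb{Z}$. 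Undoing the scaling yields $A \cdot (aD)^n t_n \in \mathbb{Z}$ for all $n \geq 1$, and absorbing $A$ into the geometric factor by enlarging the constant delivers the claimed Eisenstein constant.

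The main obstacle is precisely the subtlety of the final denominator accounting: the raw per-term estimate from the recursion loses a multiplicative factor of order $|\delta| \cdot a^v$ at each iteration, so the induction only closes with the two-parameter form $A\, a^n$---not a single exponential $a^n$ on the nose. The substitution $z \mapsto Dz$ is the standard device for folding the $|\delta|$-factor into an integer scaling, with the residual factor $a^v$ absorbed by the constant $A$. A complete execution of this bookkeeping, together with an alternative, conceptually cleaner proof based on the local Newton--Puiseux analysis of the algebraic curve $\{P = 0\}$ at the origin (which makes the Puiseux coefficients of $T$ directly manipulable), appears as Theorem VII.8 of Flajolet--Sedgewick \cite{fs:an_comb}.
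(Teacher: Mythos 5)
Your derivation of the recursion is correct as far as it goes---after translating so that $T(0)=0$ and clearing denominators, the identity $-\delta\,t_{k+1}=[z^{k+v+1}]\,P(z,T_{k}(z))$ for $k\ge v$ is exactly right---but the denominator induction that is supposed to finish the proof does not close, and that is where the entire content of the lemma lives. From the hypothesis that $A\,a^{n}\,\widetilde t_{n}\in\mathbb{Z}$ for all $n\le k$, the recursion only yields $A^{d}a^{k+v+1}\widetilde\delta\,\widetilde t_{k+1}\in\mathbb{Z}$, and knowing $Nx\in\mathbb{Z}$ for a \emph{larger} integer $N$ never implies $Mx\in\mathbb{Z}$ for a smaller $M$: the discrepancy factor $|\widetilde\delta|\,A^{d-1}a^{v}$ is incurred afresh at \emph{every} step of the induction, so no ``judicious choice of $A$'' can absorb it---enlarging $A$ or $a$ only worsens the per-step loss when $d\ge 2$ or $v\ge 1$. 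The substitution $z\mapsto Dz$ rescales $\delta$ into an integer but leaves $v=\mathrm{ord}_{z}\,\partial_{t}P(z,T(z))$ untouched, so it does not remove the obstruction. What your estimate honestly establishes is only that the denominator of $t_{k+1}$ divides $|\widetilde\delta|$ times a product of at most $d$ earlier denominators, which allows doubly exponential growth, far short of the Eisenstein bound. Nor is $v\ge 1$ a marginal case in this paper: for $\gdt_{3/2}$ the point $(0,1)$ is a singular point of the curve where several branches cross, so $\partial_{t}\mathcal{F}_{(3/2|3)}(z,\gdt_{3/2}(z))$ vanishes at $z=0$.

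The missing idea is the preparatory reduction to the \emph{regular} case $v=0$ before running the induction. For instance, with $m\ge v$ set $U(z):=z^{-m}\bigl(T(z)-T_{m}(z)\bigr)$ and $\Pi(z,u):=z^{-(m+v)}P\bigl(z,T_{m}(z)+z^{m}u\bigr)$; one checks $\Pi$ is a polynomial (clear its finitely many denominators), $\Pi(0,0)=0$, and $\partial_{u}\Pi(0,0)=\delta\neq 0$. In that normalized situation the coefficient extraction reads $\delta\,u_{n}=$ integer polynomial in $u_{1},\dots,u_{n-1}$, and the sharper weighted hypothesis ``denominator of $u_{n}$ divides $\delta^{2n-1}$'' genuinely telescopes, because every contributing monomial $z^{i}u_{n_{1}}\cdots u_{n_{j}}$ has $i+j\ge 2$, which supplies exactly the slack of $\delta^{2}$ needed per step; one then reassembles an Eisenstein constant for $T$ from the one for $U$, the finitely many truncated coefficients, and the scaling $D$. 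Without this step (or an equivalent Newton--Puiseux normalization at $(0,t_{0})$) the argument is incomplete, and deferring ``the complete execution of this bookkeeping'' to Flajolet--Sedgewick Thm.\ VII.8 amounts to citing the theorem rather than proving it---which, to be fair, is also all the paper does, since its proof of the lemma is a pointer to Dienes (p.\ 327) and Heine.
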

\begin{proof}
	See \cite[pg. 327]{dienes:taylor_series} which attributes the statement to Eisenstein (1852) and the proof to Heine (1854); part of the proof is identical to the proof of Lem. \ref{lem:Eisenstein}.
\end{proof}
	
Now let $a$ denote an Eisenstein constant of $\ggen = 1 + \sum_{n = 1}^{\infty} g_{n} z^{n}$, and define the formal series
\begin{align*}
	\twid{\ggen} &= 1 + \sum_{n = 1}^{\infty} g_{n} a^{n} z^{n}
\end{align*}
by definition of $a$, then $\twid{\ggen} \in \formal{\mathbb{Z}}{z}$.  Moreover, $\twid{\ggen}$ is algebraic (if $f(z, \ggen) = 0$ for some $f \in \mathbb{Z}[z,t]$ then $f(az,\twid{\ggen}) = 0$).  Now, because $\twid{\ggen}$ has constant coefficient 1, it admits an Euler-product expansion that defines a sequence of rational numbers $\left(\twid{\beta}_{n} \right)_{n = 1}^{\infty}$ (defined via \eqref{eq:beta_from_ggen})
\begin{align*}
	\twid{\ggen} &= \prod_{n = 1}^{\infty} \left( 1 - \left(\mathtt{s} z \right)^{k} \right)^{n \twid{\beta}_{n}}.
\end{align*}
Note that we can apply Thm. \ref{thm:main} to extract the $n \rightarrow \infty$ asymptotics of $\twid{\beta}_{n}$: because $\twid{\ggen} \in \formal{\mathbb{Z}}{z}$, by Prop. \ref{prop:growth}
\begin{align*}
	\twid{\mathtt{R}} := \min \left \{\mathtt{R}_{\twid{\ggen}} , \mathtt{R}_{1/\twid{\ggen}} \right \} \leq 1.
\end{align*}

Now, because $\twid{\ggen}$ is just given by the composition of $\ggen$ with a rescaling of the variable $z$, we can translate the asymptotic results which express the asymptotics of $\twid{\beta}_{n}$ in terms of asymptotics of the $\beta_{n}$; indeed, it is easy to see that:\footnote{As an aside: note that because $\twid{\mathtt{R}} \leq 1$ then we have the bound $\mathtt{R} \leq |a|$; hence, for any algebraic series $\ggen$ with constant coefficient 1 and Eisenstein constant $n \in \mathbb{Z}_{>0}$, then $\log(\ggen)$ must have radius of convergence $\leq n$.}
\begin{align}
	\mathrm{sing}_{0}(\twid{\ggen}) = \{a^{-1} \rho: \rho \in \mathrm{sing}_{0}(\ggen)\};
	\label{eq:sing_twid_to_untwid}
\end{align}
in particular,
\begin{align}
	\twid{\mathtt{R}} = |a|^{-1} \mathtt{R};
	\label{eq:R_twid_to_untwid}
\end{align}
and from \ref{eq:beta_from_ggen} along with the fact that $[z^{n}] \log \left(\twid{G} \right) = a^{n} [z^{n}] \log \left(G \right)$, it is immediate that
\begin{align}
	\twid{\beta}_{n} &= a^{n} \beta_{n}.
	\label{eq:beta_twid_to_untwid}
\end{align}

\begin{corollary}
	Theorem \ref{thm:main} holds for any algebraic $\ggen \in \formal{\mathbb{Q}}{z}$ (without the condition $\mathtt{R} \leq 1$); moreover, Corollaries \ref{cor:multi} and \ref{cor:single} hold for any algebraic $\ggen \in \formal{\mathbb{Q}}{z}$.
\end{corollary}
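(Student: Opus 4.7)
The plan is to reduce the rational-coefficient case to the integer-coefficient case already handled by Thm.~\ref{thm:main} via a change of variables supplied by the Eisenstein--Heine lemma (Lem.~\ref{lem:Eisenstein_Heine}). Given algebraic $\ggen = 1 + \sum_{n \geq 1} g_{n} z^{n} \in \formal{\mathbb{Q}}{z}$, I would first invoke Lem.~\ref{lem:Eisenstein_Heine} to obtain an integer $a \in \mathbb{Z}_{>0}$ such that $a^{n} g_{n} \in \mathbb{Z}$ for every $n$; then the rescaled series $\twid{\ggen}(z) := \ggen(a z)$ lies in $\formal{\mathbb{Z}}{z}$. As noted in the excerpt (with $f(z,\ggen)=0$ implying $f(az,\twid{\ggen})=0$), $\twid{\ggen}$ is again algebraic over $\mathbb{Q}(z)$; hence by Prop.~\ref{prop:growth} we have $\twid{\mathtt{R}} \leq 1$, so Thm.~\ref{thm:main} applies to $\twid{\ggen}$ without further hypothesis.

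Next I would translate the resulting asymptotics for the Euler-product exponents $\twid{\beta}_{n}$ of $\twid{\ggen}$ back to the corresponding exponents $\beta_{n}$ of $\ggen$. Three translation formulas are needed, all of which appear in the excerpt: the bijection of dominant singularities $\mathrm{sing}_{0}(\twid{\ggen}) = \{a^{-1}\rho : \rho \in \mathrm{sing}_{0}(\ggen)\}$ from \eqref{eq:sing_twid_to_untwid}; the identification of radii $\twid{\mathtt{R}} = |a|^{-1}\mathtt{R}$ from \eqref{eq:R_twid_to_untwid}; and the coefficient relation $\twid{\beta}_{n} = a^{n}\beta_{n}$ from \eqref{eq:beta_twid_to_untwid}. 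The analogous identifications for the Puiseux data are immediate from $\twid{\ggen}(z) = \ggen(az)$: for any dominant singularity $\twid{\rho} = a^{-1}\rho$ of $\log \twid{\ggen}$, the local exponents $\sigma_{\twid{\rho}}$, $\kappa_{\twid{\rho}}$, $m_{\twid{\rho}}$ agree with those at $\rho$, while the leading Puiseux coefficients satisfy $C_{\twid{\rho}}(\twid{\ggen}) = C_{\rho}(\ggen)$ (since $(1 - \twid{\rho}^{-1}z)^{\sigma_{\rho}} = (1 - \rho^{-1}(az))^{\sigma_{\rho}}$). Substituting these identifications into the twiddled versions of \eqref{eq:beta_asymp_nozeropole} or \eqref{eq:beta_asymp_zeropole} produces a factor of $a^{-n}$ on the right hand side that cancels the $a^{n}$ from $\twid{\beta}_{n} = a^{n}\beta_{n}$ on the left, recovering the untwiddled formulas verbatim.

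The same rescaling handles the corollaries: smoothness of the ramification point $(\rho, \ggen(\rho)) \in V_{f}$ and its ramification index $\nu$ are unchanged under the affine substitution $z \mapsto az$ in the defining polynomial (the transformed curve $V_{\twid{f}}$ where $\twid{f}(z,t) := f(az,t)$ is obtained from $V_{f}$ by a biholomorphism in the $z$-direction), so the hypotheses of Cor.~\ref{cor:multi} and Cor.~\ref{cor:single} transfer directly between $\ggen$ and $\twid{\ggen}$, and the explicit constants transform compatibly through the chain rule $\twid{f}_{1,0} = a \, f_{1,0}$ together with $\twid{\rho} = a^{-1}\rho$.

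The main thing to be careful about is that the conclusions of Thm.~\ref{thm:main} be genuinely independent of the auxiliary choice of Eisenstein constant $a$; this is automatic since all the ingredients $\rho$, $\sigma_{\rho}$, $\kappa_{\rho}$, $m_{\rho}$, $C_{\rho}$, $\mathtt{R}$ are intrinsic to $\ggen$ (not $\twid{\ggen}$), and the cancellation of $a^{n}$ between the two sides of each asymptotic identity is exact rather than asymptotic. The only mild subtlety is that one must verify this cancellation term-by-term within the leading sum as well as the big-$\mathcal{O}$ subleading expression, which is a bookkeeping exercise rather than an analytic obstacle; no new ideas beyond those already deployed in the proof of Thm.~\ref{thm:main} are required.
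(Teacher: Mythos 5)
Your proof is correct and follows essentially the same route as the paper: rescale by an Eisenstein constant so that $\twid{\ggen}(z)=\ggen(az)\in\formal{\mathbb{Z}}{z}$, invoke Prop.~\ref{prop:growth} and Thm.~\ref{thm:main} for the twiddled series, and pass back via \eqref{eq:sing_twid_to_untwid}--\eqref{eq:beta_twid_to_untwid} with the powers of $a$ cancelling exactly (the same trick already used in case (B) of the proof of Thm.~\ref{thm:main}), together with the observation that the local Puiseux data $\sigma_{\rho},\kappa_{\rho},m_{\rho},C_{\rho}$ and the curve-theoretic hypotheses of Cors.~\ref{cor:multi} and \ref{cor:single} are unchanged under $z\mapsto az$. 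One cosmetic remark: since $\twid{\rho}^{-n}=a^{n}\rho^{-n}$, the twiddled right-hand sides carry a factor $a^{+n}$ (not $a^{-n}$) which cancels against $\twid{\beta}_{n}=a^{n}\beta_{n}$ on the left, so the cancellation is as you conclude.
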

\begin{proof}
	As mentioned, Thm. \ref{thm:main} holds with all  $\ggen,\, \beta,\,$ and $\mathtt{R}$ replaced with their twiddled-counterparts $\twid{\ggen},\, \twid{\beta},\,$ and $\twid{\mathtt{R}}$.  With \eqref{eq:sing_twid_to_untwid} - \eqref{eq:beta_twid_to_untwid}, one can check that the expressions \eqref{eq:beta_asymp_nozeropole} and \eqref{eq:beta_asymp_zeropole} remain valid when passing back to the un-twiddled quantities.
\end{proof}

\appendix

\section{Some Generalities and Recollections on Quiver Representations} \label{app:quiv_rep}


\subsection{Slope-stability and (Semi)-Stable Moduli} \label{app:slope_stab}
We recall the basics of quiver representations and associated moduli spaces. Most of this material is drawn from \cite{reineke:quiv_rep} and \cite{king:quiv_rep}.  Let $Q = (Q_{0}, Q_{1})$ be a quiver specified by a finite set of vertices $Q_{0}$ and a finite set of arrows $Q_{1}$. In the following sections we will progressively specialize to the cases where $Q$ is acyclic (possesses no oriented cycles) and where $Q$ is the Kronecker $m$-quiver.

\begin{definition}\
	\begin{enumerate}
 \item A \textit{representation} $V = \left( (V_{i})_{i \in Q_{0}}, (V_{\alpha})_{\alpha \in Q_{1}} \right)$ of $Q$ is a collection of finite-dimensional complex vector spaces $V_{i},\, i \in Q_{0}$ indexed by vertices, and a collection of $\mathbb{C}$-linear homomorphisms $V_{\alpha}: V_{i} \rightarrow V_{j}$ indexed by arrows $\alpha: i \rightarrow j \in Q_{1}$.  
 
 \item A \textit{subrepresentation} $W$ of $V$ is a representation $\left( (W_{i})_{i \in Q_{0}}, (W_{\alpha})_{\alpha \in Q_{1}} \right)$ such that $W_{i} \leq V_{i}$ for each $i \in Q_{0}$ and $W_{\alpha}$ is given by the respective restrictions of the $V_{\alpha}$ for each $\alpha \in Q_{1}$.
 
 \item A morphism $L: V \rightarrow W$ between representations of $Q$ is a collection of maps $\left(L_{i}: V_{i} \rightarrow W_{i}\right)_{i \in Q_{0}}$ satisfying the obvious commutative diagrams: $L_{j} V_{\alpha} = W_{\alpha} L_{i}$ for each arrow $\alpha: i \rightarrow j$.
 \end{enumerate}
\end{definition}

\begin{remark}[Remark/Definition]
	Representations of $Q$, as defined above, form an abelian category $\mathsf{Rep}_{\mathbb{C}}(Q)$.
\end{remark}

Define $\Lambda := \mathbb{Z} Q_{0}$ as the free abelian group generated by $Q_{0}$ and $\Lambda^{+} := \mathbb{Z}_{\geq 0} Q_{0}$ the set of possible ``dimension vectors".  For each representation $V$ we will denote its dimension vector via $\underline{\dim}(W) = \sum_{i \in Q_{0}} \dim_{\mathbb{C}}(V_{i}) i \in \Lambda^{+}$.  Now, fix a functional $\Theta \in \text{Hom}_{\mathbb{Z}}(\Lambda, \mathbb{Z})$ (specified by assigning each vertex an integer weight), then we define its corresponding \textit{slope function} $\mu_{\Theta}: \Lambda^{+} \rightarrow \mathbb{Q}$ as
\begin{align}
	\mu_{\Theta}(d) = \frac{\Theta(d)}{\dim(d)}
	\label{eq:quiv_slope}
\end{align}
where $\dim : \Lambda^{+} \rightarrow \mathbb{Z}$ is given by $\dim (d) := \sum_{i \in Q_{0}} d_{i}$ for $d = \sum_{i \in Q_{0}} d_{i} i$.

\begin{definition}\
	\begin{enumerate}
	\item A representation $V$ of $Q$ is $\Theta$-\textit{semistable} if for every non-zero subrepresentation $W$ of $V$ we have $\mu_{\Theta}(\underline{\dim}(W)) \leq \mu_{\Theta}(\underline{\dim}(V))$.
	\item 
	A representation $V$ of $Q$ is $\Theta$-\textit{stable} if for every non-zero proper subrepresentation $W$ of $V$ we have $\mu_{\Theta}(\underline{\dim}(W)) < \mu_{\Theta}(\underline{\dim}(V))$.
	
	\item A representation $V$ of is $\Theta$-\textit{polystable} if it can be decomposed as a direct sum of $\Theta$-stable representations $(V_{i})_{i \in I}$ such that $\mu_{\Theta}(\underline{\dim}(V_{i})) = \mu_{\Theta}(\underline{\dim}(V_{j}))$ for any $i,j \in I$.
	\end{enumerate}
\end{definition}	

Because any map between $\Theta$-semistable representations has kernel, cokernel and image another $\Theta$-stable representation, we have the following.\cite{king:quiv_rep}
\begin{remark}
	The full subcategory of $\Theta$-semistable representations in $\mathsf{Rep}_{\mathbb{C}}(Q)$ forms an abelian subcategory $\mathsf{Rep}^{\Theta}_{\mathbb{C}}(Q)$; the simple (semisimple) objects of this category are the $\Theta$-stable (polystable) representations.
\end{remark}

Any object $V$ in  $\mathsf{Rep}^{\Theta}_{\mathbb{C}}(Q)$ admits a composition series\footnote{Arising as a Jordan-H\"{o}lder filtration when thought of as a module of the quiver path algebra.}, i.e. there exists a sequence of subobjects 
\begin{align*}
	0 \subset V_{1} \subset V_{2} \subset \cdots \subset V_{n} = V
\end{align*}
such that $V_{i}/V_{i-1}$ is a simple object (stable representation) for $2 \leq i \leq n$; a composition series is unique up to permutation and isomorphism of composition factors.  Two objects are said to be \textit{S-equivalent} if they have equivalent composition series (that is same composition length and composition factors up to permutation and isomorphism); as its name suggests, the notion of $S$-equivalence defines an equivalence relation.

Fixing a dimension vector $d \in \mathbb{Z}_{>0}Q_{0}$, one can then ask about the set of $S$-equivalence classes of semistable representations.   By a theorem of King \cite{king:quiv_rep}, using the stability condition $\Theta$, this set can be given the structure of:
\begin{enumerate}
	\item A (possibly singular) complex variety $\CM_{\sst}^{Q}(d;\Theta)$ by taking a projective GIT quotient of the affine space $\mathfrak{R}_{Q} := \bigoplus_{i \rightarrow j \in Q_{1}} \mathrm{Hom} \left(\mathbb{C}^{d_{i}}, \mathbb{C}^{d_{j}} \right)$: indeed, $\mathfrak{R}_{Q}$ is equipped with an action of $G = \mathbb{P} \left(\prod_{i \in Q_{0}} \GL_{d_{i}} \mathbb{C} \right)$ induced by conjugation; for any character $\phi: G \rightarrow \mathbb{C}^{\times}$, let $\mathbb{C}[\mathfrak{R}_{Q}]^{G,\phi}$ denote the ring of functions on $\mathfrak{R}_{Q}$ that satisfy $f(g \cdot x) = \phi(x) f(g)$ for any $g \in G$ and $x \in \mathfrak{R}_{Q}$; then, we define
	\begin{align*}
		\CM_{\sst}^{Q}(d;\Theta) := \mathrm{Proj} \left( \bigoplus_{n \geq 0} \mathbb{C}[\mathfrak{R}_{Q}]^{G,\chi_{\Theta}^{n}} \right)
	\end{align*}
	where $\chi_{\Theta}$ is the character defined by the descent of the character 
	\begin{align*}
		\prod_{i \in Q_{0}} \GL_{d_{i}} \mathbb{C} &\longrightarrow \mathbb{C}^{\times}\\
		(g_{i})_{i \in Q_{0}}	 &\longmapsto \prod_{i \in Q_{0}} \det(g_{i})^{\Theta(d) - \dim(d) \Theta(i)}
	\end{align*}		
	to $G$.

	\item A smooth K\"{a}hler orbifold by taking a K\"{a}hler quotient.
\end{enumerate}

When $Q$ is acyclic, the GIT quotient has a classical algebro-geometric interpretation.

\begin{numrmk} \label{rmk:projective}
	If $Q$ is acyclic, then $\mathbb{C}[\mathfrak{R}_{Q}]^{G} = \mathbb{C}$; so it follows that $\CM_{\sst}^{Q}(d;\Theta)$ is a (possibly singular) complex projective variety.
\end{numrmk}

\begin{numrmk} \label{rmk:GIT_vs_Kahler}
We do not describe the K\"{a}hler construction explicitly; however, in the case that $Q$ is an acyclic quiver (so that the GIT quotient is a projective variety); the analytificiation of the GIT construction is biholomorphic to the K\"{a}hler construction away from any singular points of the K\"{a}hler quotient; moreover, both spaces are always homeomorphic as topological spaces.
\end{numrmk}

\begin{remark}
	Clearly every S-equivalence class contains a polystable representative, unique up to isomorphism. Hence, the space $\CM_{\sst}^{Q}(d;\Theta)$ parametrizes polystable representations of dimension vector $d$ up to isomorphism.
\end{remark}

The set of \textit{stable} representations up to isomorphism has the structure of a \textit{smooth} open subvariety $\CM_{\st}^{Q}(d;\Theta)$ of $\CM^{Q}_{\sst}(d; \Theta)$.  These spaces are indeed ``moduli spaces" as they are solutions to a moduli problem given by a functor from $\mathbb{C}$-schemes to \textsf{Set}.

\begin{numrmk} \label{rmk:moduli}
$\CM_{\sst}^{Q}(d;\Theta)$ is a coarse moduli space for families of $\Theta$-semistable modules (parametrized by $\mathbb{C}$-schemes) of dimension vector $d$ (up to $S$-equivalence). When $d$ is a primitive dimension vector then $\CM_{\st}^{Q}(d;\Theta) = \CM^{Q}_{\sst}(d; \Theta)$ and, moreover, $\CM_{\st}^{Q}(d;\Theta)$ is equipped with a universal family making it into a fine moduli space.\footnote{Some readers would correctly argue that the failure to always be a fine moduli space for non-primitive dimension vectors is an indication we should be working with moduli stacks; indeed,one always has a fine moduli (Artin) stack (even for non-primitive dimension vectors).  However, we do not need to explicitly use the language of stacks in our discussion.}
\end{numrmk}

\begin{definition}
We define two equivalence relations on stability conditions in the following manner:
	\begin{enumerate}
		\item Given a fixed dimension vector $d$, two stability conditions $\Theta_{1}$ and $\Theta_{2}$ are \textit{$d$-equivalent} if a representation of dimension vector $d$ is $\Theta_{1}$-(semi)stable if and only if it is $\Theta_{2}$-(semi)stable.
		
		\item Two stability conditions are \textit{stability-equivalent} if they are $d$-equivalent for all dimension vectors $d$.
	\end{enumerate}
\end{definition}

By definition, if $\Theta_{1}$ and $\Theta_{2}$ are $d$-equivalent, then the moduli spaces $\CM_{\sst}^{Q}(d;\Theta_{1})$ and $\CM_{\sst}^{Q}(d;\Theta_{2})$ are the same as subsets of ($S$-equivalence classes of) representations.  But Rmk.~\ref{rmk:moduli} provides a stronger equivalence: $\CM_{\sst}^{Q}(d;\Theta_{1})$ and $\CM_{\sst}^{Q}(d;\Theta_{2})$ are coarse-moduli spaces for the same moduli problem; hence, they must be isomorphic as varieties.  It follows that, at the level of constructing moduli spaces (associated to a fixed dimension vector $d$0, one need only be concerned with ($d-$)equivalence classes of stability conditions.

\begin{numrmk} \label{rmk:stability_equivs} \
There are two operations on functionals $\Theta \in \text{Hom}_{\mathbb{Z}}(\Lambda, \mathbb{Z})$ that do not change the stability-equivalence classes of (semi)stable representations:
	\begin{itemize}
		\item Rescaling: $\Theta \mapsto n \Theta$ for $n \in \mathbb{Z}_{>0}$
	
		\item Translating by $\dim$: $\Theta \mapsto \Theta + n \dim$ for $n \in \mathbb{Z}$.
	\end{itemize}
\end{numrmk}

We can generalize the notion of stability to allow $\Theta$ to be a real-valued functional, i.e. we can choose $\Theta \in \text{Hom}_{\mathbb{Z}}(\Lambda, \mathbb{R})$.  The corresponding slope function $\mu_{\Theta}$ is then a real-valued map $\Lambda^{+} \rightarrow \mathbb{R}$ and the definition of (semi)stability remains the same with respect to this slope function.  The space $\CM_{\sst}^{Q}(d;\Theta)$ can then be defined as a solution to the moduli problem of Rmk~\ref{rmk:moduli}; if such a solution exists, then $\CM_{\sst}^{Q}(d;\Theta)$ is unique up to unique isomorphism.  For existence, we may try to construct $\CM_{\sst}^{Q}(d;\Theta)$ via a GIT quotient; however, this requires $\mathbb{Z}$-valued $\Theta$.  For the $m$-Kronecker quiver, one need not worry about this condition: section \ref{app:stab_m_kron} below shows that every real-valued stability functional on the $m$-Kronecker quiver is stability-equivalent to an integer-valued stability functional.  However, for those concerned with generalities, the following shows that we can always replace an $\mathbb{R}$-valued functional with a $d$-equivalent $\mathbb{Z}$-valued functional; hence, we may always use GIT to construct $\CM_{\sst}^{Q}(d,\Theta)$ as a projective variety.

\begin{lemma} \label{lem:integer_replacement}
	Let $d \in \Lambda^{+}$ be a fixed dimension vector and $\Theta \in \mathrm{Hom}_{\mathbb{Z}}(\Lambda, \mathbb{R})$, sufficiently generic in the sense that
	\begin{align*}
		\Theta(k) = \frac{\Theta(d)}{\dim(d)} \dim(k)
	\end{align*}	
	if and only if $d = n k$ for some $n \in \mathbb{Z}_{>0}$.  Then $\Theta$ is $d$-equivalent to a stability functional in $\mathrm{Hom}_{\mathbb{Z}}(\Lambda, \mathbb{Z})$.
\end{lemma}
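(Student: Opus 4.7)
The plan is to reduce the question of $d$-equivalence to a finite, purely combinatorial question about three disjoint subsets of sub-dimension vectors of $d$, and then exploit density of $\mathbb{Q}^{|Q_0|}$ in $\mathbb{R}^{|Q_0|}$. First I would observe that the set
\[
K := \{ k \in \Lambda^{+} : 0 \neq k \leq d \text{ componentwise}\}
\]
is finite, and that for any stability functional $\Theta' \in \mathrm{Hom}_{\mathbb{Z}}(\Lambda, \mathbb{R})$ the notions of $\Theta'$-semistability and $\Theta'$-stability for representations of dimension vector $d$ depend only on the partition $K = S^{+}(\Theta') \sqcup S^{0}(\Theta') \sqcup S^{-}(\Theta')$ determined by the sign of $\mu_{\Theta'}(k) - \mu_{\Theta'}(d)$. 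So the lemma will follow if I can find $\Theta' \in \mathrm{Hom}_{\mathbb{Z}}(\Lambda, \mathbb{Z})$ producing the same partition as $\Theta$.

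Next I would analyse $S^{0}$. For any $\Theta'$ whatsoever, the elements $k = d/n$ with $n \in \mathbb{Z}_{>0}$ and $d/n \in \Lambda^{+}$ automatically lie in $S^{0}(\Theta')$, since $\mu_{\Theta'}$ is invariant under positive rational rescaling of its argument. The genericity hypothesis on $\Theta$ is exactly the converse: $S^{0}(\Theta) = \{d/n : n \in \mathbb{Z}_{>0},\ d/n \in \Lambda^{+}\}$. Hence $S^{0}(\Theta)$ depends only on $d$ (not on $\Theta$), and will coincide with $S^{0}(\Theta')$ for \emph{any} $\Theta'$ whose restriction to $K \setminus S^{0}(\Theta)$ is nowhere-vanishing as a function into $\mathbb{R}^{\times}$ via $k \mapsto \mu_{\Theta'}(k) - \mu_{\Theta'}(d)$.

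I would then invoke a continuity/openness argument. The maps
\[
F_{k} : \mathrm{Hom}_{\mathbb{Z}}(\Lambda, \mathbb{R}) \longrightarrow \mathbb{R}, \qquad F_{k}(\Theta') = \mu_{\Theta'}(k) - \mu_{\Theta'}(d),
\]
are $\mathbb{R}$-linear (hence continuous) for each $k \in K$, and $F_{k} \equiv 0$ on the nose when $k \in S^{0}(\Theta)$. For the finitely many $k \in K \setminus S^{0}(\Theta)$, we have $F_{k}(\Theta) \neq 0$ by genericity, so there is an open neighbourhood $U$ of $\Theta$ in the finite-dimensional vector space $\mathrm{Hom}_{\mathbb{Z}}(\Lambda, \mathbb{R}) \cong \mathbb{R}^{|Q_{0}|}$ on which $F_{k}$ keeps the same sign (for all such $k$ simultaneously). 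Any $\Theta'' \in U$ therefore has $S^{\pm}(\Theta'') = S^{\pm}(\Theta)$ and $S^{0}(\Theta'') = S^{0}(\Theta)$, hence is $d$-equivalent to $\Theta$.

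Finally I would clear denominators. Since $\mathrm{Hom}_{\mathbb{Z}}(\Lambda, \mathbb{Q})$ is dense in $\mathrm{Hom}_{\mathbb{Z}}(\Lambda, \mathbb{R})$, pick any $\Theta'' \in U \cap \mathrm{Hom}_{\mathbb{Z}}(\Lambda, \mathbb{Q})$ and let $N$ be a common denominator of the values $\Theta''(i)$ for $i \in Q_{0}$; then $\Theta' := N \Theta'' \in \mathrm{Hom}_{\mathbb{Z}}(\Lambda, \mathbb{Z})$, and by Remark~\ref{rmk:stability_equivs} (rescaling by a positive integer preserves stability-equivalence class) $\Theta'$ is $d$-equivalent to $\Theta''$, which in turn is $d$-equivalent to $\Theta$. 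This completes the construction. The only delicate point in the whole argument is ensuring no stray equalities $\mu_{\Theta''}(k) = \mu_{\Theta''}(d)$ appear outside $S^{0}(\Theta)$ after perturbation, and this is handled cleanly by the combination of genericity (which guarantees strict inequalities to begin with) and the joint openness of finitely many strict sign conditions.
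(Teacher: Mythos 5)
Your proof is correct and follows essentially the same route as the paper: both arguments exploit that only the finitely many sub-dimension vectors $k \leq d$ matter, that genericity gives strict slope inequalities for all $k$ not proportional to $d$ (while equalities on the proportional ones hold identically for \emph{every} functional), then perturb $\Theta$ to a nearby rational functional preserving those finitely many strict inequalities and clear denominators via Remark~\ref{rmk:stability_equivs}. The only difference is cosmetic: the paper carries out the perturbation quantitatively with an explicit $\epsilon$ bound on the slope functions, whereas you phrase it as openness of finitely many linear sign conditions plus density of $\mathrm{Hom}_{\mathbb{Z}}(\Lambda,\mathbb{Q})$.
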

\begin{proof}
	Fix any $\epsilon > 0$, then we can find $\phi_v \in \mathbb{Q}$ such that
	\begin{align*}
		|\Theta(q) - \phi_{q} | < \epsilon
	\end{align*}
	for all $q \in Q_{0}$.  Define
	\begin{align*}
		\Phi = \sum_{q \in Q_{0}} \phi_{q} q^{*} \in \mathrm{Hom}_{\mathbb{Z}}(\Lambda, \mathbb{Q});
	\end{align*}	
	then,
	\begin{align*}
		\left|\mu_{\Theta}(k) - \mu_{\Phi'}(k) \right| < \epsilon.
	\end{align*}
	for any $k \in \Lambda$.  Now equip $\Lambda^{+}$ with the poset structure defined by: $k \leq d$ if and only if $k_{q} \leq d_{q}$ for all $q \in Q_{0}$.  Clearly, for a fixed $d$, there are only finitely many $k$ such that $k \leq d$; hence, if we choose $\epsilon > 0$ such that
	\begin{align*}
		\epsilon < \frac{1}{2} \min \left\{ \left|\mu_{\Theta}(k)  - \mu_{\Theta}(d) \right|: \text{$k \leq d$ and $d \neq n k$ for any $n \in \mathbb{Z}_{>0}$}\right\},
	\end{align*}	
	then $\Theta$ and $\Phi$ are $d$-equivalent.  Letting $\lambda$ denote the largest denominator of the $\phi_{q},\, q \in Q_{0}$, we have $\lambda \Phi \in \mathrm{Hom}_{\mathbb{Z}}(\Lambda, \mathbb{Z})$ and, moreover, $\lambda \Phi$ is stability-equivalent to $\Phi$. 
\end{proof}

\subsection{Slope-Stability Conditions on the \texorpdfstring{$m$-Kronecker}{m-Kronecker} Quiver} \label{app:stab_m_kron}
Recall the $m$-Kronecker quiver is defined as the quiver $Q=K_{m}$ with two vertices $\{q_{1},q_{2}\}$ and $m \geq 1$ parallel arrows $\{\alpha_{l}: q_{2} \rightarrow q_{1} \}_{l=1}^{m}$ from $q_{2}$ to $q_{1}$.

\begin{center}
\begin{tikzpicture}
\tikzstyle{block} = [rectangle, draw=blue, thick, fill=blue!10,
text width=16em, text centered, rounded corners, minimum height=2em]

\node at (1.2, 0)[circle,draw=blue,very thick] (second) {$q_{2}$};

\node at (-1.2, 0)[circle,draw=blue,very thick] (first) {$q_{1}$};

\node at (0,0.1) {{\Huge \vdots}};

\draw [decorate,decoration={brace, amplitude=6pt, mirror},xshift=0pt,yshift=-1pt,red,thick]
(-1.2,-0.9) -- (1.2,-0.9) node[black, midway,below,yshift = -6pt]{{\tiny $m$ arrows}};
\draw[-latex] (second) to  [bend right = 35] (first);
\draw[-latex] (second)  to  [bend right = 60] (first);
\draw[-latex] (second)  to  [bend right = -35] (first);
\draw[-latex] (second)  to  [bend right = -60] (first);
	\end{tikzpicture}
\end{center}

Via the operations in Remark \ref{rmk:stability_equivs}, any stability functional $\Theta$ on the $m$-Kronecker quiver (real or integer-valued) can be reduced to one of three possibilities: 
\begin{enumerate}
	\item $\Theta_{1} \equiv 0$;
	
	\item $\Theta_{2} = q_{1}^{*}:e q_{1} + d q_{2} \mapsto e$;
	
	\item $\Theta_{3} = q_{2}^{*}: e q_{1} + d q_{2} \mapsto d$;
\end{enumerate}
that induces the same slope-stability condition as $\Theta$.  As described in \cite[\S 5.1]{reineke:quiv_rep}, $\Theta_{3}$ is the only choice that leads to an interesting notion of (semi-)stable representations.  Indeed, consider the representations:

\begin{center}
\begin{tikzpicture}
\node at (-2.2,0)[] {$S_{1} =$};
\tikzstyle{block} = [rectangle, draw=blue, thick, fill=blue!10,
text width=16em, text centered, rounded corners, minimum height=2em]

\node at (1.2, 0)[circle,draw=blue,very thick] (second) {$0$};

\node at (-1.2, 0)[circle,draw=blue,very thick] (first) {$\mathbb{C}$};

\node at (0,0.1) {${\Huge \vdots}$};

\draw[-latex] (second) to  [bend right = 35] (first);
\draw[-latex] (second)  to  [bend right = 60] (first);
\draw[-latex] (second)  to  [bend right = -35] (first);
\draw[-latex] (second)  to  [bend right = -60] (first);
\end{tikzpicture}

\begin{tikzpicture}
\tikzstyle{block} = [rectangle, draw=blue, thick, fill=blue!10,
text width=16em, text centered, rounded corners, minimum height=2em]

\node at (-2.2,0)[] {$S_{2} =$};

\node at (1.2, 0)[circle,draw=blue,very thick] (second) {$\mathbb{C}$};

\node at (-1.2, 0)[circle,draw=blue,very thick] (first) {$0$};

\node at (0,0.1) {${\Huge \vdots}$};

\draw[-latex] (second) to  [bend right = 35] (first);
\draw[-latex] (second)  to  [bend right = 60] (first) ;
\draw[-latex] (second)  to  [bend right = -35] (first);
\draw[-latex] (second)  to  [bend right = -60] (first);
	\end{tikzpicture}
\end{center}
For all three stability conditions, $S_{1}$ and $S_{2}$ are stable representations; for $\Theta_1$ and $\Theta_{2}$, these are the \textit{only} stable representations.  For $\Theta_1$, all representations are semi-stable; for $\Theta_2$, the only semi-stable representations are isomorphic to $S_{1}^{\oplus r}$ or $S_{2}^{\oplus r}$ for some $r>0$.  We will return to the classification of semi-stable representations for $\Theta_{3}$ in \S \ref{app:dense_arc}.

\begin{definition}[Terminology]
	A stability condition on an $m$-Kronecker quiver that is stability-equivalent to $\Theta_{3}$ is a \textit{wild} stability condition.
\end{definition}

\subsection{Stability Conditions on \texorpdfstring{$\mathsf{Rep}_{\mathbb{C}}(Q)$}{Rep_C(Q)}} \label{app:bridgeland_to_slope}
Perhaps closer to the spirit of the notion of stability that appears in the context of physics are stability conditions on the abelian category\footnote{
Even more natural are Bridgeland stability conditions on the triangulated category $D^{b}\mathsf{Rep}_{\mathbb{C}}(Q)$; however, we will only be concerned with the subspace of stability conditions after choosing the $t$-structure whose heart is the abelian subcategory $\mathsf{Rep}_{\mathbb{C}}(Q) \hookrightarrow D^{b}\mathsf{Rep}_{\mathbb{C}}(Q)$ via its obvious embedding into chain complexes supported in the zeroth degree.} $\mathsf{Rep}_{\mathbb{C}}(Q)$.  First, we recall the notion of a stability condition on an abelian category \cite{king:quiv_rep, bridgeland}.

\begin{definition}\
 Let $\mathsf{A}$ be an abelian category and $K_{0}(\mathsf{A})$ its Grothendieck group. A stability function on $\mathsf{A}$ is a group homomorphism $Z: K_{0}(\mathsf{A}) \rightarrow \mathbb{C}$ such that for all $0 \neq E \in \text{object}(\mathsf{A})$,
	\begin{align}
		Z(E) \in \mathbb{H} \cup \mathbb{R}_{<0} = \{r e^{i \theta}:\text{ $r \in \mathbb{R}_{>0}$ and $\theta \in (0,\pi]$}\} \subset \mathbb{C}
		\label{eq:stab_func_def}
	\end{align}
\end{definition} 
The appropriate notion of (semi-)stability is then extracted from the ``phase" $\arg \left[ Z \right]$ thought of as a function $\mathrm{ob}(\mathsf{A}) \rightarrow (0,\pi]$.

\begin{definition}\
 An object $E \in \text{ob}\left(\mathsf{A} \right)$ is $Z$-\textit{semistable} if for every non-zero proper subobject $F$ of $E$ we have 
	\begin{align}	
		\arg \left[ Z(F) \right] \leq \arg \left[ Z(E) \right].
		\label{eq:Z_semistable}
	\end{align}
  $E$ is $Z$-\textit{stable} if strict equality holds in \eqref{eq:Z_semistable} for all proper subobjects $F$.
\end{definition}

Our interest lies in the case $\mathsf{A} = \mathsf{Rep}_{\mathbb{C}}(Q)$.  For simplicity we will take $Q$ to be an \textit{acyclic} quiver, then the map that takes each representation to its dimension vector, $\underline{\dim}: K_{0}(\mathsf{Rep}_{\mathbb{C}}(Q)) \rightarrow  \mathbb{Z} Q_{0} =: \Lambda$, is an isomorphism.  Indeed, for $Q$ acyclic, $K_{0}(\mathsf{Rep}_{\mathbb{C}}(Q))$ is freely generated by the (isomorphism classes of) one-dimensional representations supported at a single vertex, and the zero map associated to each arrow.

We can equivalently state the notion of stability in terms of an associated slope-function $\mu_{Z}: \Lambda^{+} \rightarrow \mathbb{R} \cup \{\infty\}$ defined as
\begin{align}
	\mu_{Z}(d) &= -\frac{\text{Re}\left[Z(d)\right]}{\text{Im}\left[Z(d) \right]}\\
	&= - \cot \left\{\arg \left[ Z(d) \right] \right\}.
	\label{eq:Z_slope}
\end{align}
Then $\mu_{Z}(F) \leq \mu_{Z}(E)$ if and only if $\arg \left[ Z(F) \right] \leq \arg \left[ Z(E) \right]$ (where $\arg[Z]$ is thought of as a function taking values in $(0,\pi]$).

\begin{numrmk} \label{rmk:stab_stdform}
	Note that, if we are given $\Theta \in \mathrm{Hom}(\Lambda, \mathbb{Z})$, then we can define $Z \in \mathrm{Hom}(\Lambda, \mathbb{C})$ via
	\begin{align}
		Z &= - \Theta + i \dim
		\label{eq:Z_normal_form}
	\end{align}
	because $\dim$ is positive, then the image of $Z$ lies in $\mathbb{Z} + \mathbb{Z}_{>0} i \subset \mathbb{H}$; so this is a valid stability condition.  In this case, the function \eqref{eq:Z_slope} is precisely the slope function $\mu_{\Theta}$ defined in \eqref{eq:quiv_slope}.
\end{numrmk}

In the context of the $m$-Kronecker quiver, the following proposition shows that a stability condition on $\mathsf{Rep}_{\mathbb{C}}(K_{m})$ is equivalent to one of the three slope-stability conditions in \S \ref{app:stab_m_kron}.

\begin{proposition} \label{prop:kron_stdform}
	Let $K_{m}$ denote the $m$-Kronecker quiver (as discussed in \S \ref{app:stab_m_kron}) with vertices $q_{1}$ and $q_{2}$ and $m$ arrows from $q_{2}$ to $q_{1}$; $\Lambda := \mathbb{Z} \langle q_{1}, q_{2} \rangle \cong K_{0} \left(\mathsf{Rep}_{\mathbb{C}} (K_{m}) \right)$.  If $Z: \Lambda \rightarrow \mathbb{C}$ is a stability function on $K_{m}$ sufficiently generic in the sense that $Z(q_{1})$ and $Z(q_{2})$ are contained in $\mathbb{H}$, then
	\begin{enumerate}
		\item If $\arg[Z(q_{1})] = \arg[Z(q_{2})]$ then $Z$ induces a stability condition stability-equivalent to the slope-stability condition induced by $\Theta_{1}$;
		
		\item If $\arg[Z(q_{1})] > \arg[Z(q_{2})]$ then $Z$ induces a stability condition stability-equivalent to the slope-stability condition induced by $\Theta_{2}$;
		
		\item If $\arg[Z(q_{1})] < \arg[Z(q_{2})]$ then $Z$ induces a stability condition stability-equivalent to the slope-stability condition induced by $\Theta_{3}$.
	\end{enumerate}
\end{proposition}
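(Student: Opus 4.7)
The plan is to verify the proposition by reducing the claim of stability-equivalence to a single algebraic comparison of slope orderings on the positive cone $\Lambda^+ = \mathbb{Z}_{\geq 0}\langle q_1,q_2\rangle$. Two stability conditions are stability-equivalent precisely when, for every pair $(a,b),(a',b')\in\Lambda^+$, the inequalities ``$<$'', ``$=$'', ``$>$'' between the corresponding slopes agree; so it suffices to compare the slope functions $\mu_Z$ and $\mu_{\Theta_i}$ pairwise on $\Lambda^+$.

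First I would set $Z(q_1)=x_1+iy_1$, $Z(q_2)=x_2+iy_2$ with $y_1,y_2>0$ (the genericity hypothesis $Z(q_i)\in\mathbb{H}$), and compute from \eqref{eq:Z_slope} that
\begin{equation*}
\mu_Z(aq_1+bq_2)\;=\;-\frac{ax_1+bx_2}{ay_1+by_2}.
\end{equation*}
A direct clearing-of-denominators calculation (legitimate because both denominators are positive) then shows that for $(a,b),(a',b')\in\Lambda^+\setminus\{0\}$,
\begin{equation*}
\mu_Z(a,b)<\mu_Z(a',b') \iff (ab'-a'b)(x_1y_2-x_2y_1)>0,
\end{equation*}
with equality replacing ``$<$'' when the right-hand product vanishes. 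Writing $Z(q_i)=r_ie^{i\theta_i}$ gives $x_1y_2-x_2y_1=-r_1r_2\sin(\theta_1-\theta_2)$, so the sign of the second factor is governed entirely by the sign of $\theta_2-\theta_1$.

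Next I would carry out the same computation for the three reference functionals using Rmk.~\ref{rmk:stab_stdform}: $\Theta_1\equiv 0$ gives $\mu_{\Theta_1}\equiv 0$; $\Theta_2=q_1^*$ gives $\mu_{\Theta_2}(a,b)=a/(a+b)$; and $\Theta_3=q_2^*$ gives $\mu_{\Theta_3}(a,b)=b/(a+b)$. A trivial cross-multiplication yields
\begin{equation*}
\mu_{\Theta_2}(a,b)<\mu_{\Theta_2}(a',b')\iff ab'-a'b<0,\qquad \mu_{\Theta_3}(a,b)<\mu_{\Theta_3}(a',b')\iff ab'-a'b>0.
\end{equation*}
Comparing these three orderings with the displayed criterion for $\mu_Z$ resolves each of the three cases: when $\theta_1=\theta_2$ the product $(ab'-a'b)(x_1y_2-x_2y_1)$ is identically zero so $\mu_Z$ is constant, matching $\mu_{\Theta_1}$; when $\theta_1>\theta_2$ we have $x_1y_2-x_2y_1<0$, so the $\mu_Z$-ordering coincides with the $\mu_{\Theta_2}$-ordering; and when $\theta_1<\theta_2$ the sign flips and we recover the $\mu_{\Theta_3}$-ordering.

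There is essentially no hard obstacle here: the argument is a single algebraic identity together with careful bookkeeping of signs and edge cases ($b=0$ or $b'=0$, which are covered uniformly by the cross-multiplied form of the inequality, since $ay_1+by_2>0$ whenever $(a,b)\neq 0$). The only subtlety worth explicit attention is verifying that the equality case $\mu_Z(a,b)=\mu_Z(a',b')$ also matches the equality cases for $\mu_{\Theta_2}$ and $\mu_{\Theta_3}$ (so that \emph{strict} stability and not merely semistability is preserved); but since $x_1y_2-x_2y_1\neq 0$ in Cases 2 and 3, the vanishing condition on both sides reduces to the single proportionality condition $ab'=a'b$, so no additional case analysis is needed.
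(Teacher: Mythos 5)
Your proof is correct, but it takes a different route from the paper's. The paper argues structurally: it splits $Z$ into real and imaginary parts, uses a positive rescaling $D$ of $\Lambda\otimes_{\mathbb{Z}}\mathbb{R}$ to turn $\mathrm{Im}(Z)$ into $\dim$, shows that $Z$ and the normal-form condition $Z'=-\mathrm{Re}(Z)\circ D^{-1}+i\dim$ induce the same stability (because the intertwining automorphism $\twid{D}$ of the image plane is orientation-preserving), invokes Rmk.~\ref{rmk:stab_stdform} to pass to the slope-stability condition of $\Theta=-\mathrm{Re}(Z)\circ D^{-1}$, and only then reduces $\Theta$ to one of $\Theta_{1},\Theta_{2},\Theta_{3}$ via Rmk.~\ref{rmk:stability_equivs}. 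You instead verify stability-equivalence head-on by comparing slope orderings on all pairs in $\Lambda^{+}\setminus\{0\}$, reducing everything to the sign of $(ab'-a'b)(x_{1}y_{2}-x_{2}y_{1})$, with $x_{1}y_{2}-x_{2}y_{1}=\mathrm{Im}\left(\overline{Z(q_{1})}\,Z(q_{2})\right)$ controlled by $\arg Z(q_{2})-\arg Z(q_{1})$ (using $\theta_{1},\theta_{2}\in(0,\pi)$ so the sine has the sign of the angle difference). Your computation checks out, including the equality cases, and it has the virtue of being entirely elementary and self-contained — no lattice rescaling, no appeal to the normal form — while the paper's argument is more structural and makes visible the general mechanism (reduction of any generic $Z$ to $-\Theta+i\dim$) before specializing to two vertices. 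One small caveat: your opening claim that stability-equivalence holds ``precisely when'' the slope orderings agree on all pairs overstates what you need — agreement of orderings is clearly sufficient (and is all you use), but necessity would require an extra argument and should not be asserted as an equivalence.
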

\begin{proof}
	The proof of this statement is an exercise in linear algebra and relies on the fact that the Kronecker $m$-quiver only has two vertices.  We begin by showing that $Z$ induces the same notion of stability as a stability condition of the form \eqref{eq:Z_normal_form}.  By taking the real and imaginary parts of $Z$ we get two group homomorphisms from $\Lambda$ to $\mathbb{R}$.  Indeed,
		\begin{align*}
			\mathrm{Im}(Z) = a q_{1}^{*} + b q_{2}^{*} \in \mathrm{Hom}_{\mathbb{Z}}(\Lambda, \mathbb{R})
		\end{align*}
	for some $a,b \in \mathbb{R}_{>0}$. We claim that there is a positive-rescaling of the lattice $\Lambda$, thought of as embedded inside of $\Lambda_{\mathbb{R}} := \Lambda \otimes_{\mathbb{Z}} \mathbb{R}$, such that the induced pullback on $Z$ takes the form \eqref{eq:Z_normal_form}.  Indeed, let $D \in \Aut_{\mathbb{R}} \left[ \Lambda \otimes_{\mathbb{Z}} \mathbb{R} \right]$ given by
	\begin{align*}
		D = a q_{1} \otimes q_{1}^{*} + b q_{2} \otimes q_{2}^{*}
	\end{align*}
	then it follows that
	\begin{align*}
		\mathrm{Im}(Z) =\dim \circ D
	\end{align*}
	Hence, defining
	\begin{align*}
		Z' :=  \mathrm{Re}(Z) \circ D^{-1}  + i \dim \in \mathrm{Hom}_{\mathbb{Z}}(\Lambda,\mathbb{C})
	\end{align*}
	we may write
	\begin{align*}
		Z = Z' \circ D.
	\end{align*}
	Now we claim that $Z'$ induces the same stability condition as $Z$.  To see this, we think of $Z$ and $Z'$ as valued in the real two-dimensional vector space $\mathbb{R} \oplus i \mathbb{R}$; then
	\begin{align*}
		Z' \circ D = \twid{D} Z'
	\end{align*}
	where $\twid{D}$ is a (real) automorphism of the image $I = \mathrm{span}_{\mathbb{R}} \langle Z'(q_{1}), Z'(q_{2}) \rangle \leq \mathbb{R} \oplus i \mathbb{R}$ of $Z'$, defined via
	\begin{align*}
		\twid{D} = a Z'(q_{1}) \otimes Z'(q_{1})^{*} + b Z'(q_{2}) \otimes Z'(q_{2})^{*} \in \mathrm{Hom}_{\mathbb{R}}(I,I).
	\end{align*}
	But $\twid{D}$ is an orientation-preserving map: when $I$ is one-dimensional, $\twid{D}$ is a positive rescaling of $I$; when $I$ is two-dimensional, $\twid{D}$ is a diagonal matrix with positive entries when expressed in terms of the basis $\{Z'(q_{1}), Z'(q_{2)} \}$; hence, $Z = \twid{D} Z'$ and $Z'$ induce the same notion of stability.
	
	Next, by Remark \ref{rmk:stab_stdform}, we have that $Z$ induces the same notion of stability as the slope-stability defined by the linear functional, $\Theta = -\mathrm{Re}(Z) \circ D^{-1}$; the statement of the proposition holds by taking $\Theta$ into one of the standard forms of \S \ref{app:stab_m_kron} via Remark~\ref{rmk:stability_equivs}.
\end{proof}

\subsection{The BPS Quiver} \label{app:BPS_quiv}

Let us put the above technology into the context of four-dimensional $\mathcal{N}=2$ field theories admitting a Coulomb branch $\CB$---a more complete review can be found in \cite{alim_cecotti_BPS_quiv_1,alim_cecotti_BPS_quiv_2}.  First, begin by fixing a point $u \in \CB$; let $\widehat{\Gamma}_{u}$ be the charge lattice at the point $u$, $\langle \cdot, \cdot \rangle_{u}: \widehat{\Gamma}_{u}^{\otimes 2} \rightarrow \mathbb{C}$ the associated anti-symmetric pairing, and $\mathcal{Z}_{u}: \widehat{\Gamma}_{u} \rightarrow \mathbb{C}$ the central charge function.

For notational convenience, first define the subspace of \textit{occupied} BPS charges
\begin{align*}
	\mathtt{Occ} := \{\gamma \in \widehat{\Gamma}_{u}: \text{There exists a BPS state of charge $\gamma$} \}.
\end{align*}

In order to define a BPS quiver, we require the existence of $\vartheta \in \left(\mathbb{R}/\mathbb{Z} \right) \backslash \arg \left[\mathcal{Z}_{u}(\mathtt{Occ}) \right]$ such that:

\begin{enumerate}
	\labitem{(C1)}{list:quivC1} 	There exists a finite set of linearly-independent charges $\mathpzc{b} \subset \widehat{\Gamma}_{u}$ such that $\mathcal{Z}_{u}^{-1} \left(e^{i \vartheta} \mathbb{H} \right) \subset \mathbb{Z}_{\geq 0} \mathpzc{b}$, i.e. any charge $\gamma \in \mathtt{Occ}$ such that $Z(\gamma) \in e^{i \vartheta} \mathbb{H}$ is a positive linear combination of elements of $\mathpzc{b}$;
		
	\labitem{(C2)}{list:quivC2} Each $\gamma \in \mathpzc{b}$ is the charge of a hypermultiplet.
\end{enumerate}

Condition \ref{list:quivC1} is a purely linear/convex algebraic question about the set $\mathtt{Occ}$.  It is instructive to construct a hypothetical counterexample of a set $\mathtt{Occ} \subset \mathbb{Z}^{2}$ and a map $Z: \mathbb{Z}^{2} \rightarrow \mathbb{C}$ such that the $\mathpzc{b}$ of condition \ref{list:quivC1} does not exist for any choice of $\vartheta \in S^{1} \backslash \arg \left[\mathcal{Z}_{u}(\mathtt{Occ}) \right]$.

\begin{numrmk} \label{rmk:unique_basis}
	If there exists a $\vartheta$ satisfying $\ref{list:quivC1}$, then the corresponding $\mathpzc{b}$ is unique.  This follows from the fact that any invertible matrix with non-negative integer entries whose inverse also has non-negative entries must necessarily be a permutation matrix.
\end{numrmk}
Thus, if such a $\mathpzc{b}$ exists for a given element of $\left(\mathbb{R}/(2\pi \mathbb{Z}) \right) \backslash \arg \left[\mathcal{Z}_{u}(\mathtt{Occ}) \right]$, one knows that it is unique and need only check that \ref{list:quivC2} holds.

Assuming $\vartheta$ satisfies \ref{list:quivC1} and \ref{list:quivC2}, then the \textit{BPS quiver} $Q = (Q_{0}, Q_{1})$ at the point $u \in \CB$ (and associated to $\vartheta$) is the quiver with:
\begin{enumerate}
	\item vertices $Q_{0} = \mathpzc{b}$;
	
	\item $\langle \gamma_{1}, \gamma_{2} \rangle$ arrows from $\gamma_{2} \in \mathpzc{b}$ to $\gamma_{1} \in \mathpzc{b}$ if $\langle \gamma_{1}, \gamma_{2} \rangle > 0$.
\end{enumerate}.
Moreover, $Q$ is equipped with a stability condition determined via the central charge function:
	\begin{enumerate}
	\item The lattice $\Lambda = \mathbb{Z}Q_{0}$, equipped with the canonical basis spanned by the vertices, is identified with the charge (sub)lattice $\mathbb{Z} \mathpzc{b} \leq \widehat{\Gamma}_{u}$, equipped with the basis $\mathpzc{b}$;

	\item The stability function $Z: \Lambda \rightarrow \mathbb{C}$ defined in \eqref{eq:stab_func_def} is the central charge function $e^{-i \vartheta} \mathcal{Z}_{u}: \Gamma \cong \Lambda \rightarrow \mathbb{C}$.  
\end{enumerate}

A few remarks are in order.

\begin{remark}[Remarks]\
	\begin{enumerate}
	\item Note that (with Rmk.~\ref{rmk:unique_basis} in mind), the quiver $Q$ only depends on a choice of $\vartheta \in S^{1} \backslash \arg \left[\mathcal{Z}_{u}(\mathtt{Occ}) \right]$ satisfying \ref{list:quivC1} and \ref{list:quivC2}.  If there are several such $\vartheta$, then the corresponding quivers are not necessarily the same; however, it might be possible to relate them by a finite sequence of mutations \cite{alim_cecotti_BPS_quiv_2}.
	
	\item The existence and construction of a quiver $Q$ (when it exists) depends on a priori knowledge about properties of the full set $\mathtt{Occ}$ of BPS states.  However, it is not necessary to understand the full content of the 1-particle BPS subspaces associated to every element of $\mathtt{Occ}$: one needs only check \ref{list:quivC1} and \ref{list:quivC2}.  Once these are verified (or assumed via conjecture), then all information about the BPS spectrum for non-basis elements can be computed by looking at the corresponding quiver moduli spaces (see \S \ref{app:BPS_index_quiver}).  In particular, all of the BPS indices are given by computing DT invariants for the associated quiver. 
	\end{enumerate}
	
\end{remark}

\begin{definition}[Terminology]
	Let $\CB$ be the Coulomb branch for a four-dimensional $\mathcal{N}=2$ theory.  We will say a point $u \in \CB$ is an \textit{$m$-wild point} ($m \geq 3$), if 
	\begin{enumerate}	
		\item there exists a BPS quiver $Q$, associated to some $\vartheta \in \mathbb{R}/(2 \pi \mathbb{Z})$, such that $Q$ contains the $m$-Kronecker quiver as a full subquiver $K$.

		\item the central charge function $\mathcal{Z}_{u}: \Gamma \rightarrow \mathbb{C}$ determines a wild stability condition on $K$: suppose that the nodes of $K$ are labelled by $\gamma_{1}$ and $\gamma_{2}$ with $m$ arrows from $\gamma_{2}$ to $\gamma_{1}$, then $\mathcal{Z}$ equips $K$ with a stability condition stability-equivalent to $\Theta_{3}$ if and only if
	\begin{align}
		 \arg \left(e^{-i \vartheta} \mathcal{Z}_{\gamma_{1}} \right) <   \arg  \left(e^{-i \vartheta} \mathcal{Z}_{\gamma_{2}} \right)
		 \label{eq:cond_wild_centralcharge}
	\end{align}
	as elements of $(0,\pi)$. 
	\end{enumerate}
\end{definition}

\begin{remark}
\eqref{eq:cond_wild_centralcharge} is satisfied if
 	\begin{align*}
		 \arg \left( e^{-i \theta} \mathcal{Z}_{\gamma_{1}} \right) <  \arg  \left(e^{-i \theta} \mathcal{Z}_{\gamma_{2}} \right)
	\end{align*}
	for \textit{any} $\theta \in \mathbb{R}/\mathbb{Z}$ such that $e^{-i \theta} \mathcal{Z}_{\gamma_{1}}$ and $e^{-i \theta} \mathcal{Z}_{\gamma_{2}}$ lie in $\mathbb{H}$.  Thus, one can check if a point is $m$-wild only knowing the existence of an $m$-Kronecker subquiver of a BPS quiver $Q$, without knowing the precise (possibly family of) $\vartheta$ required to produce $Q$.
\end{remark}

\subsection{BPS States and Quiver Moduli} \label{app:BPS_index_quiver}
In order to define the BPS index, we first fix a rest-frame (a spacelike slice) and then focus our attention on a subspace of the 1-particle BPS states at rest:	$\mathcal{H}^{\mathrm{rest}}_{\mathrm{BPS}}(u)$; this space is naturally graded by the charge lattice at $u$:
\begin{align*}
	\mathcal{H}^{\mathrm{rest}}_{\mathrm{BPS}}(u) = \bigoplus_{\gamma \in \widehat{\Gamma}_{u}} \mathcal{H}^{\mathrm{rest}}_{\mathrm{BPS}}(\gamma;u).
\end{align*}
Furthermore, each direct summand is a finite dimensional representation of the massive little-algebra, the even part of which contains the spatial rotation subalgebra $\mathfrak{r} \cong \mathfrak{so}(3)$.  It is a fact (see \cite[\S 4.2.3]{moore:felix} and \cite[\S II]{wess-bagger}) that there is always a factorization as little-algebra representations:
\begin{align*}
	\mathcal{H}^{\mathrm{rest}}_{\mathrm{BPS}}(\gamma;u) \cong \rho_{\mathrm{hh}} \otimes \mathfrak{h}(\gamma;u)
\end{align*}
where $\rho_{\mathrm{hh}}$ is the half-hypermultiplet representation.  Using this factorization, the BPS index $\Omega(\gamma;u)$ can be written as
\begin{align}
	\Omega(\gamma;u) &= \Tr_{\mathfrak{h}(\gamma;u)} (-1)^{2J_{3}}
	\label{eq:BPS_superdim}
\end{align}
where, $J_{3}$ is a generator of a Cartan subalgebra of $\mathfrak{r}$.

Now, assume that $Q$ is an acyclic (full) subquiver of a BPS quiver at the point $u \in \CB$.  In order to define the moduli space(s) of (semi)stable representations with respect to the stability condition $Z: \mathbb{Z}Q_{0} \rightarrow \mathbb{C}$ (determined from the central charge---see \S \ref{app:BPS_quiv}), we replace $Z$ with a slope-stability condition.  If $Q$ is the Kronecker quiver, then Prop.~\ref{prop:kron_stdform} automatically provides us with a slope-stability condition that provides an equivalent notion of stability as $Z$.  For more general acyclic quivers this can be done in the following manner: 
fix $\gamma \in \mathbb{Z}_{>0} Q_{0}$ and define $\theta_{\gamma,Z} \in \mathrm{Hom}_{\mathbb{R}} \left(\mathbb{Z} Q_{0}, \mathbb{R} \right)$ by
\begin{align*}
	\theta_{\gamma,Z}: k \mapsto \mathrm{Im}\left( \frac{Z(k)}{Z(\gamma)} \right);
\end{align*}
then one can verify that a representation $V$ of dimension $\underline{\dim}(V) = \gamma$ is $\theta_{\gamma,Z}$-(semi)stable if and only if it is $Z$-(semi)stable.  Hence, we may define  
\begin{align*}
	\begin{array}{lcc}
	\CM_{\sst}^{Q}(\gamma;Z) & := &\CM_{\sst}^{Q}(\gamma;\theta_{\gamma,Z}) \\
	{} & {} & \rotatebox{90}{$\subseteq$}\\
	\CM_{\st}^{Q}(\gamma;Z) & := & \CM_{\st}^{Q}(\gamma;\theta_{\gamma,Z})
	\end{array}
\end{align*}
where $\CM_{\sst}^{Q}(\gamma;\theta_{\gamma,Z})$ is constructed via GIT quotient.

Now BPS indices are conjectured to be computed via quiver DT invariants.
\begin{enumerate}
	\item Generalized DT invariants in the sense of Kontesevich and Soibelman: A specialization of the motivic DT invariants of $D^{b} \mathsf{Rep}_{\mathbb{C}}(Q)$ equipped with the Bridgeland stability condition defined by the heart $\mathsf{Rep}_{\mathbb{C}}(Q) \hookrightarrow \mathbb{C}$ and $Z: K_{0}(\mathsf{Rep}_{\mathbb{C}}(Q)) \cong \mathbb{Z}Q_{0} \rightarrow \mathbb{C}$.
	
	\item DT invariants in the sense of Joyce-Song-Behrend: As weighted Euler characteristics of a suitable moduli space/stack of semistable representations.
\end{enumerate}

  Via Rmk.~\ref{rmk:projective}, $\CM_{\sst}^{Q}(\gamma;Z)$ is a projective variety and so its analytification is equipped with the pullback of the Fubini-Study metric given a choice of embedding into a projective space.

If $\gamma \in \mathbb{Z}_{\geq 0} Q_{0} \leq \widehat{\Gamma}_{u}$ is a primitive dimension vector, then $\CM_{\st}^{Q}(\gamma;Z) = \CM_{\sst}^{Q}(\gamma;Z)$ is a smooth projective variety; hence, as an analytic space $\CM_{\st}^{Q}(\gamma;Z)$ is a K\"{a}hler manifold.  Now, as shown in \cite{denef:qqhh}, the space $\mathfrak{h}(\gamma;u)$ is precisely the space of BPS states for supersymmetric quantum mechanics of a point particle travelling on the configuration space $\CM_{\st}^{Q}(\gamma;Z)$ (c.f. \cite[\S 10.4.3]{mirror_sym_1}).  In this situation, the $\mathfrak{r} \cong \mathfrak{su}(2)$-representation $\mathfrak{h}(\gamma;u)$ can be identified with the ring of harmonic forms on $\CM_{\st}^{Q}(\gamma;Z)$, equipped with the Lefschetz $\mathfrak{su}(2)$-action \cite[\S 4.3]{denef:qqhh}---i.e. as $\mathfrak{r}$-representations:
\begin{align*}
	\mathfrak{h}(\gamma;u) \cong H^{*}_{dR}(\mathcal{M}_{\st}^{Q}(\gamma;Z); \mathbb{C}),
\end{align*}
where we have identified the ring of harmonic forms with the de-Rham cohomology ring.  Now, in the Lefschetz representation, the generator of a Cartan subalgebra $J_{3}$ is
\begin{align*}
	J_{3} = \frac{1}{2} \left(\deg - \dim \left[\mathcal{M}_{\st}^{Q}(\gamma;Z) \right] \right) \mathbbm{1}_{H^{*}_{dR}(\mathcal{M}_{\st}^{Q}(\gamma;Z); \mathbb{C})}
\end{align*}
where $\deg :H^{*}_{dR} (\mathcal{M}_{\st}^{Q}(\gamma;Z); \mathbb{C} ) \rightarrow \mathbb{Z}$ maps a cohomology class to its degree.  As a result, the BPS index \eqref{eq:BPS_superdim} is (up to a sign) the same as an Euler characteristic:
\begin{align*}
	\Omega(\gamma;u) = (-1)^{\dim \left[\mathcal{M}_{\st}^{Q}(\gamma;Z) \right]} \chi \left(\CM_{\st}^{Q}(\gamma;Z) \right).
\end{align*}

For non-primitive $\gamma$, the moduli space $\mathcal{M}_{\st}^{Q}(\gamma;Z)$ is properly contained in the (possibly singular) projective variety $\CM_{\sst}^{Q}(\gamma;Z)$ and the argument above is no longer valid.  However, the BPS index may always be computed in terms of quiver DT invariants (which reduce to Euler characteristics of stable moduli in the case of primitive dimension vectors).

\section{Useful properties of \texorpdfstring{$m$-wild}{m-wild} spectra} \label{app:m_wild}
Recall that, as mentioned in the introduction (\S \ref{sec:intro_BPS_indices}), $m$-herds can appear as the spectral network representing slope-1 BPS states at an $m$-wild point on the Coulomb branch.  In order to study generalizations of $m$-herds associated to other states in the $m$-wild spectrum, we provide the following recollections for some properties of the $m$-wild spectrum, all of which can be found in \cite{weist:kron}, \cite{weist:loc}, and \cite{dhkk:dyn_cat}.

\subsection{Equivalences between moduli spaces} \label{app:mod_equivs}

\begin{definition}
	For notational convenience, letting $K_{m}$ denote the Kronecker $m$-quiver,
	\begin{align*}
		\CM^{m}_{\st}(a,b) := \CM^{K_{m}}_{\st}(a q_{1} + b q_{2}; \Theta_{3})
	\end{align*}

\end{definition}

\begin{proposition}[c.f. Prop. 4.4 of \cite{weist:loc}] \label{prop:mod_equiv}
		 There exists isomorphisms of varieties\footnote{This equivalence of moduli spaces holds for any choice of stability condition.} :
		\begin{equation}
			\begin{aligned}
				 \CM^{m}_{\st}(a,b) &\cong \CM^{m}_{\st}(b,a)
			\end{aligned}
			\label{eq:dim_flip}
		\end{equation}
		and
		\begin{equation}
			\begin{aligned}
				\CM^{m}_{\st}(a,b) &\cong \CM^{m}_{\st}(mb - a, b).
			\end{aligned}
			\label{eq:dim_reflect}
		\end{equation}
%
		
\end{proposition}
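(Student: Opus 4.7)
The plan is to realize each isomorphism by constructing an explicit exact endofunctor of $\mathsf{Rep}_\mathbb{C}(K_m)$ that carries stable representations to stable representations of the required dimension vector, and then to show that this functor descends to an algebraic isomorphism of coarse moduli.

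For \eqref{eq:dim_flip}, I would use the $\mathbb{C}$-linear duality $D\colon V = (V_1, V_2, \{\phi_i\}) \mapsto (V_2^\ast, V_1^\ast, \{\phi_i^\ast\})$, a contravariant autoequivalence of $\mathsf{Rep}_\mathbb{C}(K_m)$ sending dimension $(a,b)$ to $(b,a)$. Since $D$ exchanges subobjects with quotient objects, a subrep $W \subset V$ of dimension $(a', b')$ corresponds to a subrep $D(V/W) \subset D(V)$ of dimension $(b - b', a - a')$, and with slope $\mu(c_1, c_2) = c_2/(c_1+c_2)$ a short calculation reduces both $\mu(W) < \mu(V)$ and $\mu(D(V/W)) < \mu(D(V))$ to the single inequality $ab' < a'b$. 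Hence $D$ preserves wild stability. Because taking duals of vector spaces and linear maps is covariantly functorial in families, $D$ induces an algebraic morphism $\CM^m_{\st}(a,b) \to \CM^m_{\st}(b,a)$; this is an isomorphism because $D \circ D \cong \mathrm{id}$ under the canonical identification $V \cong V^{\ast\ast}$.

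For \eqref{eq:dim_reflect}, I would construct the BGP reflection functor $\sigma^+_{q_1}$ at the sink $q_1$. Given a stable $V$ of dimension $(a,b)$ with $b > 0$ (the case $b = 0$ is vacuous since then $mb - a < 0$), assemble the arrow maps into $\Phi := (\phi_1, \ldots, \phi_m)\colon V_2 \otimes \mathbb{C}^m \to V_1$; the image $\mathrm{Im}\,\Phi$ together with $V_2$ forms a subrep of $V$ of dimension $(\dim\mathrm{Im}\,\Phi, b)$, and the wild slope inequality $\mu(\dim\mathrm{Im}\,\Phi, b) < \mu(a,b)$ forces $\dim\mathrm{Im}\,\Phi = a$, i.e., $\Phi$ is surjective. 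Setting $K := \ker \Phi$, which has dimension $mb - a$, and using the $m$ natural projections $V_2 \otimes \mathbb{C}^m \to V_2$ restricted to $K$ (composed with a duality to orient arrows correctly on $K_m$), one obtains the reflected representation $\sigma^+_{q_1}(V)$ of dimension $(mb - a, b)$.

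The main obstacle is verifying that $\sigma^+_{q_1}$ preserves (semi)stability, which is a bookkeeping exercise: a subrep of $\sigma^+_{q_1}(V)$ of dimension $(c,d)$ pulls back through the defining short exact sequence to an explicit subrep of $V$, and the slope inequality $\mu(c,d) < \mu(mb - a, b)$ translates into a slope inequality for that subrep of $V$ that is implied by stability. Conceptually this expresses invariance of the wild stability cone under the simple reflection $s_{q_1}\colon (a,b) \mapsto (mb - a, b)$ on the root lattice of $K_m$, which preserves the Euler form. Once stability preservation is established, the kernel construction is algebraic in flat families (the kernel of a surjection of vector bundles is again a vector bundle), giving a morphism of moduli functors; invertibility of $\sigma^+_{q_1}$ on the stable locus (via the analogous reflection at $q_2$, or by reversing the construction) upgrades this to an isomorphism of varieties. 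This overall argument is the strategy of Weist \cite{weist:loc}, cited in the statement.
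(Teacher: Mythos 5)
Your proposal is correct and follows essentially the same route as the paper: the paper proves \eqref{eq:dim_flip} via the transposition (duality) functor to $K_m^{\mathrm{op}}\cong K_m$ with vertices relabelled, and \eqref{eq:dim_reflect} via the BGP-type reflection functor of Weist \cite{weist:loc}, exactly the two functors you construct (the paper defers the stability-preservation and moduli-descent checks you sketch to the cited reference).
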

Hence, we have a corresponding equivalence between Euler characteristics of stable moduli; using Reineke's functional equation (c.f. \S \ref{sec:reineke_func}), this also implies a corresponding equivalence between DT invariants.\footnote{The proof of Prop.~\ref{prop:mod_equiv}, sketched below, is a corollary of the existence of two autofunctors $\mathsf{Rep}_{\mathbb{C}}^{\Theta}(K_{m}) \rightarrow \mathsf{Rep}_{\mathbb{C}}^{\Theta}(K_{m})$; from this the equivalence of DT invariants can be seen more directly.}
\begin{corollary}
We have the equivalence of DT invariants
	\begin{align*}
		d(a,b,m) &= d(b,a,m)\\
		d(a,b,m) &= d(mb - a,b,m).
	\end{align*}
\end{corollary}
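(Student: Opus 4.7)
The plan is to derive this corollary as a direct consequence of Proposition~\ref{prop:mod_equiv} by transporting the moduli-space isomorphisms through Reineke's functional equation, exactly in the spirit of the parenthetical remark in the text. First I would reduce to the case of coprime dimension vectors: given $(a,b)$, write $(a,b) = n(a_{0},b_{0})$ with $\gcd(a_{0},b_{0}) = 1$, and observe that $(b,a) = n(b_{0},a_{0})$ and $(mb-a,b) = n(mb_{0}-a_{0},b_{0})$; since $\gcd(mb_{0}-a_{0},b_{0}) = \gcd(a_{0},b_{0}) = 1$, the second pair is also a scalar multiple of a coprime pair. Thus it suffices to prove, for each coprime $(a_{0},b_{0})$ and every $n \geq 1$, the equalities $d(na_{0},nb_{0},m) = d(nb_{0},na_{0},m)$ and $d(na_{0},nb_{0},m) = d(n(mb_{0}-a_{0}),nb_{0},m)$.

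Next, I would apply Proposition~\ref{prop:mod_equiv} at every dimension vector $(na_{0},nb_{0})$ to obtain isomorphisms of varieties $\CM^{m}_{\st}(na_{0},nb_{0}) \cong \CM^{m}_{\st}(nb_{0},na_{0})$ and $\CM^{m}_{\st}(na_{0},nb_{0}) \cong \CM^{m}_{\st}(n(mb_{0}-a_{0}),nb_{0})$. Taking Euler characteristics and passing to generating series (as defined in \S\ref{sec:stab_eul_setup}) yields
\begin{equation*}
    \geul_{a_{0}/b_{0}} \;=\; \geul_{b_{0}/a_{0}}
    \quad\text{and}\quad
    \geul_{a_{0}/b_{0}} \;=\; \geul_{(mb_{0}-a_{0})/b_{0}}
\end{equation*}
as elements of $\formal{\mathbb{Z}}{z}$.

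To transfer this equality to DT generating series, I would invoke Reineke's functional equation in the form of Theorem~\ref{thm:dt_euler_alg}, which shows that $\gdt_{a/b}$ is determined by the pair $(\geul_{a/b}, N(a,b))$ with $N(a,b) := mab - a^{2} - b^{2}$. A short computation verifies the invariance $N(b,a) = N(a,b)$ (manifest) and $N(mb-a,b) = m(mb-a)b - (mb-a)^{2} - b^{2} = mab - a^{2} - b^{2} = N(a,b)$. Consequently $\gdt_{a_{0}/b_{0}} = \gdt_{b_{0}/a_{0}}$ and $\gdt_{a_{0}/b_{0}} = \gdt_{(mb_{0}-a_{0})/b_{0}}$; reading off the Euler-product exponents via \eqref{eq:beta_from_ggen} yields the desired DT-invariant equalities for every $n$, and hence for all $(a,b)$ after undoing the reduction.

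There is no serious obstacle: all heavy lifting has already been performed in Proposition~\ref{prop:mod_equiv} and Theorem~\ref{thm:dt_euler_alg}. The only point requiring genuine (but trivial) verification is the invariance of the sign exponent $N(a,b)$ under the reflection $a \mapsto mb-a$; this is precisely what ensures compatibility of the generating-series conventions with the underlying moduli-theoretic symmetry and is what makes the transport through Reineke's functional equation coefficient-by-coefficient, rather than only up to an overall sign twist.
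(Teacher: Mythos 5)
Your proposal is correct and follows essentially the same route as the paper: the moduli isomorphisms of Prop.~\ref{prop:mod_equiv} give equality of the stable Euler characteristics for all multiples of a coprime pair, and Reineke's functional equation then transports this to the DT generating series and hence to the invariants $d(na,nb,m)$. Your explicit check that $N = mab - a^{2} - b^{2}$ is invariant under $(a,b)\mapsto(b,a)$ and $a\mapsto mb-a$ is exactly the detail that makes the paper's terse "using Reineke's functional equation" step work coefficient-by-coefficient.
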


Hence, at an $m$-wild point on the Coulomb branch, if we let $\gamma_{1}$ and $\gamma_{2}$ be BPS hypermultiplet charges that satisfy \eqref{eq:cond_wild_centralcharge} and generate a Kronecker $m$-subquiver of the BPS-quiver, then we have
\begin{equation}
	\begin{aligned}
		\Omega\left(a \gamma_{1} + b \gamma_{2} \right) &= \Omega \left(b \gamma_{1} + a \gamma_{2} \right)\\
		\Omega \left(a \gamma_{1} + b \gamma_{2} \right) &= \Omega\left((mb - a) \gamma_{1} + b \gamma_{2} \right).
	\end{aligned}
	\label{eq:reflection_omega}
\end{equation}

The proof of \eqref{eq:dim_flip} follows by application of the transposition functor that takes representations of a quiver $Q$ to representations of the quiver $Q^{\mathrm{op}}$ given by reversing all arrows of $Q$: each vector space is taken to its dual and each morphism $f: V_{q_{i}} \rightarrow V_{q_{j}}$ is taken to its induced action on dual-spaces: $f^{*}; V_{q_{j}}^{*} \rightarrow V_{q_{i}}^{*}$.  When $Q$ is the Kronecker $m$-quiver, then $Q^{\mathrm{op}}$ is the $m$-Kronecker quiver again with a relabelling of its vertices: $q_{1}^{\mathrm{op}} = q_2$ and $q_{2}^{\mathrm{op}} = q_{1}$; thus, $\Theta_{3} = q_{2}^{*}$ (semi)-stable representations of $Q$ are taken to $(q_{2}^{\mathrm{op}})^{*}$ (semi)stable representations for $Q^{\mathrm{op}}$.

The proof of \eqref{eq:dim_reflect} is a consequence of the application of another type of functor: Weist's ``reflection functor", between quiver representation categories (c.f. \cite[Thm. 2.6]{weist:loc} and \cite{bgp:gabriel_thm}).  In the case of the $m$-Kronecker quiver, this functor maps representations of the $m$-Kronecker quiver to itself. 

From a physical perspective, the equivalences \eqref{eq:reflection_omega} are expected to arise as a corollary of the existence of a $\mathbb{Z}/(2\mathbb{Z}) * \mathbb{Z}/(2\mathbb{Z})$ subgroup of monodromy transformations on the local system $\widehat{\Gamma} \rightarrow \CB^{*}$ \cite[\S 6.2]{wwc}.  In particular, let us fix an $m$-wild point $u \in \CB^{*}$ and focus our attention on the rank 2-sublattice of $\Gamma=\widehat{\Gamma}_{u}$ spanned by the primitive hypermultiplet charges $\gamma_{1}$ and $\gamma_{2}$; then, in the basis $(\gamma_{1}, \gamma_{2})$ we expect to see a $\mathbb{Z}/2\mathbb{Z} * \mathbb{Z}/2\mathbb{Z}$ monodromy subgroup generated by the $\text{Sp}\left(2;\mathbb{Z} \right)$ matrices (where $*$ denotes the free-product of groups):
\begin{align*}
	T &= \left(
	\begin{array}{cc}
		0 & 1\\
		1 & 0
	\end{array}
	\right)\\	
	R &= \left(
	\begin{array}{cc}
		-1 & m\\
		0 & 1
	\end{array}
	\right).
\end{align*}
As the BPS spectrum must be invariant under all such monodromies, then the equivalences \eqref{eq:reflection_omega} follow.

\subsection{Schur Roots and The Dense Arc} \label{app:dense_arc}
Let us now return to the case of a general acyclic quiver $Q$; we begin with the question: ``\textit{Given some slope-stability condition $\Theta \in \mathrm{Hom}(\Lambda, \mathbb{Z})$, for which dimension vectors are there non-trivial stable moduli?}".  A partial answer is given by the Schur roots.

\begin{definition}\footnote{This definition was taken from the notes \cite{faenzi:quiv_tour}.}
	A dimension vector $d \in \mathbb{Z}_{\geq 0} Q$ is a \textit{Schur root} if $d = \underline{\dim}(V)$ for some $V \in \mathsf{Rep}_{\mathbb{C}}(Q)$ such that $\mathrm{End}_{\mathsf{Rep}_{\mathbb{C}}(Q)}(V) \cong \mathbb{C}$. Moreover, $d$ is a \textit{real} Schur root if $V$ is unique up to isomorphism and is an \textit{imaginary} Schur root if there exists infinitely many non-isomorphic choices for $V$.
\end{definition}

Typically Schur roots associated to a particular quiver are defined in terms of the positive roots of a Kac-Moody algebra associated to $Q$; defined in this manner, Schur roots can be calculated combinatorially.  However, the definition above is equivalent to the usual definition by a theorem of Kac \cite{kac:inf_root_1}.


The following is a result of King.

\begin{proposition}[\cite{king:quiv_rep}] \label{prop:king_schur}
	A dimension vector $d$ is a Schur root if and only if there exists a stability condition $\Theta \in \mathrm{Hom}(\Lambda, \mathbb{R})$ such that $\CM_{\st}^{Q}(d;\Theta) \neq \emptyset$.
\end{proposition}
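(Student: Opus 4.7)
My plan is to handle the two directions separately, with the ``stable $\Rightarrow$ Schur" direction being routine and the converse requiring the substantive work.

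For the backward direction, suppose there exists some $\Theta \in \mathrm{Hom}(\Lambda, \mathbb{R})$ with $\CM_{\st}^Q(d; \Theta) \neq \emptyset$ and pick a $\Theta$-stable representation $V$ of dimension $d$. I would argue that $\mathrm{End}_{\mathsf{Rep}_{\mathbb{C}}(Q)}(V) = \mathbb{C}$ by a Schur-lemma-style argument: let $f\colon V \to V$ be nonzero, and consider the subrepresentation $\mathrm{im}(f) \subseteq V$ and the quotient $V / \ker(f) \cong \mathrm{im}(f)$. Stability gives $\mu_\Theta(W) < \mu_\Theta(V)$ for any proper nonzero subrepresentation $W$, and dually $\mu_\Theta(V/W) > \mu_\Theta(V)$ for any proper nonzero quotient. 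Applying both to $\mathrm{im}(f)$ forces $\ker(f) = 0$ and $\mathrm{im}(f) = V$, so $f$ is an isomorphism. Hence $\mathrm{End}(V)$ is a finite-dimensional division $\mathbb{C}$-algebra, so $\mathrm{End}(V) = \mathbb{C}$, and $d$ is a Schur root.

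For the forward direction, suppose $d$ is a Schur root, so that there exists some $V \in \mathsf{Rep}_{\mathbb{C}}(Q)$ with $\underline{\dim}(V) = d$ and $\mathrm{End}(V) = \mathbb{C}$. I would attempt the direct strategy: the set of dimension vectors of proper nonzero subrepresentations,
\[
	\mathtt{Sub}(V) := \{ \underline{\dim}(W) : 0 \neq W \subsetneq V \},
\]
is finite (since it is contained in the box $\{e \in \Lambda^+ : 0 \leq e_i \leq d_i \}$). The condition that $V$ be $\Theta$-stable reduces to the finite system of strict linear inequalities $\Theta\bigl(\dim(d)\, e - \dim(e)\, d\bigr) < 0$ for $e \in \mathtt{Sub}(V)$, which by the standard hyperplane-separation theorem is solvable for $\Theta \in \mathrm{Hom}(\Lambda, \mathbb{R})$ precisely when $0$ is not in the convex hull of the vectors $\dim(d)\,e - \dim(e)\,d$ for $e \in \mathtt{Sub}(V)$.

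The main obstacle is precisely verifying this convex-hull condition: a priori there could be $e \in \mathtt{Sub}(V)$ proportional to $d$ (i.e.\ $e = \lambda d$ with $\lambda \in \mathbb{Q} \cap (0,1)$), or a nontrivial positive combination of the $\dim(d)\,e_i - \dim(e_i)\,d$ summing to zero, either of which obstructs separation. The resolution is to pass from this \emph{particular} $V$ to a \emph{generic} representation of dimension $d$, invoking Schofield's theory of general subrepresentations~\cite{king:quiv_rep}: the Schurian locus $\{V \in \mathfrak{R}(d) : \mathrm{End}(V) = \mathbb{C}\}$ is Zariski open in $\mathfrak{R}(d)$, and for a generic Schurian $V$, the set of \emph{generic} subdimension vectors $e \hookrightarrow d$ satisfies (by Schofield's characterization in terms of the Euler form) a combinatorial condition ruling out the bad coincidences above. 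One then produces $\Theta$ as an appropriate rational-coefficient functional, and replaces it by a $d$-equivalent integer-valued functional via Lemma~\ref{lem:integer_replacement} so that the GIT construction of $\CM_{\st}^Q(d; \Theta)$ applies and yields a nonempty space containing (the $S$-equivalence class of) the generic Schurian representation.
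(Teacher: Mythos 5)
First, a note on the comparison baseline: the paper does not prove this proposition at all---it is quoted verbatim from King's work (the citation \cite{king:quiv_rep})---so your attempt can only be judged against the standard arguments in the literature, not against an in-paper proof.

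Your backward direction is correct and standard: the Schur-lemma argument via $\mathrm{im}(f)$, using the seesaw property that $\mu_\Theta(W)<\mu_\Theta(V)$ for proper nonzero subobjects forces $\mu_\Theta(V/W)>\mu_\Theta(V)$ for proper nonzero quotients, shows $\mathrm{End}(V)$ is a finite-dimensional division algebra over $\mathbb{C}$, hence $\mathbb{C}$. The reduction in the forward direction is also correctly set up: stability of a fixed $V$ is a finite system of strict inequalities $\Theta(\dim(d)\,e-\dim(e)\,d)<0$ over $e\in\mathtt{Sub}(V)$, solvable iff $0$ is not in the convex hull of those vectors.

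The genuine gap is that you stop exactly where the theorem begins. The entire content of ``Schur root $\Rightarrow$ stable for some $\Theta$'' is the verification that, for a suitable (generic) Schurian representation, the separation condition holds---equivalently, that no proper generic subdimension vector is proportional to $d$ and no positive combination of the vectors $\dim(d)\,e-\dim(e)\,d$ vanishes. You name these ``bad coincidences'' and then dispose of them by invoking an unspecified ``combinatorial condition'' from ``Schofield's characterization,'' which you neither state nor use to derive the convex-hull condition (and which, incidentally, is not in King's paper; the generic-subrepresentation theory is Schofield's). As it stands this is an appeal to authority for precisely the nontrivial step. The standard way to close it is concrete: take the canonical weight $\Theta_d(e)=\langle d,e\rangle-\langle e,d\rangle$ built from the Euler form, note that the subdimension vectors of the \emph{general} representation of dimension $d$ are exactly Schofield's generic subdimension vectors $e\hookrightarrow d$ (characterized by the vanishing of the generic $\mathrm{ext}(e,d-e)$), and prove that when $d$ is a Schur root every proper nonzero such $e$ satisfies the \emph{strict} inequality $\mu_{\Theta_d}(e)<\mu_{\Theta_d}(d)$; this simultaneously rules out $e\propto d$ as a generic subdimension vector. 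Without that argument (or an equivalent one), the forward direction is asserted rather than proved. The final cosmetic step---replacing the real/rational $\Theta$ by a $d$-equivalent integral one via Lemma \ref{lem:integer_replacement} so that the GIT moduli space is defined and nonempty---is fine.
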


We return to the $m$-Kronecker quiver and ask for its Schur roots.  For a concise description of the following results see \cite{derksen_wayman} or the notes \cite{faenzi:quiv_tour}; these references expand upon the original reference \cite[Page 159, Example (c)]{kac:inf_root_2}.  For the remainder of this section we assume $m \geq 3$.

Define a ``generalized Fibbonaci" sequence $(a_{k})_{k=0}^{\infty}$ via $a_{k} = m a_{k-1} - a_{k-2}$ with $a_{0} = 0,\, a_{1} = 1$; this has the closed-form solution:
\begin{align}
		a_{k} = \frac{1}{2^{k} \sqrt{m^2 - 4}} \left[ \left(m + \sqrt{m^2 - 4} \right)^{k} -  \left(m - \sqrt{m^2 - 4} \right)^{k} \right].
		\label{eq:gen_fib}
\end{align}
Then the real Schur roots are given by
\begin{align*}
	\Delta^{\mathrm{Re}}_{m} &:= \left\{a_{k} q_{1} + a_{k+1} q_{2}: k \in \mathbb{Z}_{\geq 0} \right\} \cup \left\{a_{k+1} q_{1} + a_{k} q_{2}: k \in \mathbb{Z}_{\geq 0} \right\}
\end{align*}
It is an exercise in induction to show that $\gcd(a_{k}, a_{k+1}) = 1$ for all $k \geq 0$; hence, all real Schur roots are primitive dimension vectors.  On the other hand, the imaginary Schur roots need not be primitive, and are given by
\begin{align*}
	\Delta^{\mathrm{Im}}_{m} &:= \left\{a q_{1} + b q_{2} \in \mathbb{Z}_{>0}^{2}: \frac{a}{b} \in [m_{-},m_{+}] \right\}.
\end{align*}
where
\begin{align*}
	m_{\pm} := \frac{1}{2} \left(m \pm \sqrt{m^2 - 4} \right).
\end{align*}
See Fig.~\ref{fig:3_kron_schur} for a depiction of small Schur roots for the 3-Kronecker quiver.

\begin{figure}[t!]
	\begin{center}
		 \includegraphics[scale=0.6]{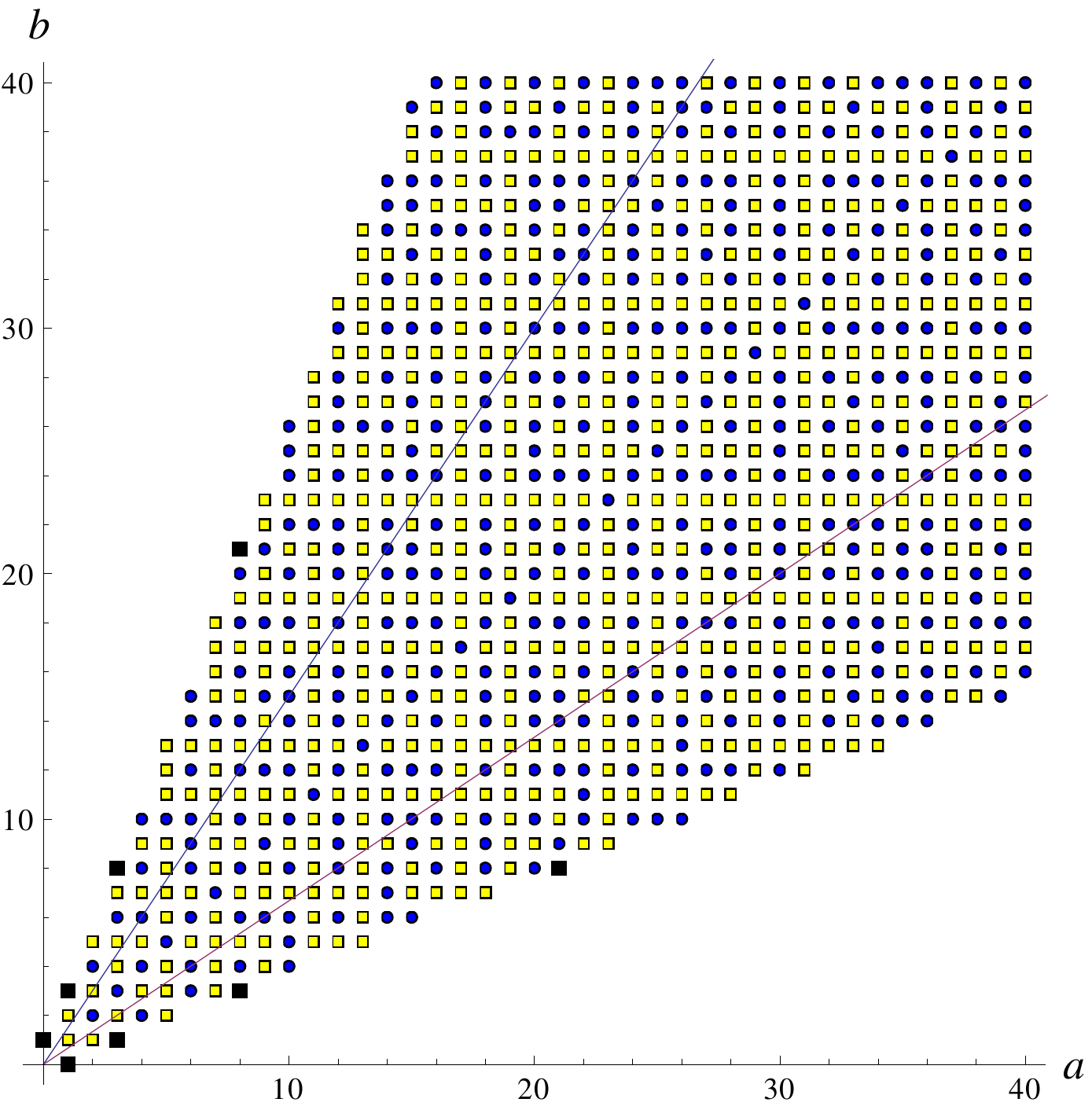}
		\caption{Schur roots for the 3-Kronecker quiver lying in the set $\{a q_{1} + b q_{1} : (a,b) \in \{1,\cdots,40\} \}$.  Real Schur roots are denoted by black boxes, primitive imaginary schur roots denoted by yellow boxes, and non-primitive imaginary schur roots are denoted by blue circles.  Lines of slope $2/3$ and $3/2$ are drawn. \label{fig:3_kron_schur}}
	\end{center}
\end{figure}

Now, via Prop.~\ref{prop:king_schur} every Schur root must correspond to a stable representation with respect to \textit{some} stability condition.  As the previous sections have shown, however, there are only three choices of stability conditions for the $m$-Kronecker quiver up to equivalence; from the discussion at the end of \S \ref{app:stab_m_kron}, we have the following.

\begin{theorem} \label{thm:schur_roots_m_wild}
	All Schur roots are the dimension vector of a stable representation with respect to the stability condition $\Theta_3$.
\end{theorem}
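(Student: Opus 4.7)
The plan is to combine King's characterization of Schur roots (Proposition \ref{prop:king_schur}) with a classification of stability-equivalence classes of real-valued functionals on the $m$-Kronecker quiver, and then rule out the uninteresting classes $\Theta_1,\Theta_2$ for all Schur roots other than $q_1,q_2$.

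First I would classify the stability-equivalence classes of real-valued functionals. Write $\Theta = \alpha q_1^{*} + \beta q_2^{*} \in \mathrm{Hom}(\Lambda,\mathbb{R})$ and let $d = e q_1 + f q_2$ be a dimension vector with a subrepresentation dimension $k = p q_1 + q q_2$. A short calculation shows that $\mu_\Theta(k) \leq \mu_\Theta(d)$ reduces to the sign condition
\begin{align*}
(\alpha - \beta)(pf - qe) \leq 0.
\end{align*}
Consequently every real $\Theta$ is stability-equivalent to exactly one of the three representatives of \S \ref{app:stab_m_kron}: to $\Theta_1$ when $\alpha = \beta$, to $\Theta_2$ when $\alpha > \beta$, and to $\Theta_3$ when $\alpha < \beta$. (Lemma \ref{lem:integer_replacement} is not even needed here since this equivalence is visible at the level of real functionals.) In particular the isomorphism class of $\CM_{\mathrm{st}}^{K_m}(d;\Theta)$ depends on $\Theta$ only through which of these three classes it belongs to.

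Next, by Proposition \ref{prop:king_schur}, each Schur root $d \in \Delta_m^{\mathrm{Re}} \cup \Delta_m^{\mathrm{Im}}$ admits \emph{some} real $\Theta$ with $\CM_{\mathrm{st}}^{K_m}(d;\Theta) \neq \emptyset$. By the trichotomy above, this forces $d$ to possess a stable representation with respect to $\Theta_1$, $\Theta_2$, or $\Theta_3$. I would then invoke the explicit description in \S \ref{app:stab_m_kron}: the only $\Theta_1$- or $\Theta_2$-stable representations are the simples $S_1$ and $S_2$, with dimension vectors $q_1$ and $q_2$ respectively. Hence for every Schur root with $d \notin \{q_1,q_2\}$, the nonempty stable moduli must occur for $\Theta_3$, proving the theorem in this case.

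It remains only to treat the two exceptional Schur roots $q_1 = a_1 q_1 + a_0 q_2$ and $q_2 = a_0 q_1 + a_1 q_2$; this is the easy step. The representations $S_1$ and $S_2$ have no nonzero proper subrepresentations, so the defining inequality for stability is vacuous and both are $\Theta_3$-stable. The main obstacle is really the bookkeeping of the stability-equivalence classification; once that is in hand, everything else follows by pure process of elimination combined with the $S_1,S_2$ check, so there is no serious geometric input beyond what King's theorem already provides.
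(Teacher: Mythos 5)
Your proposal is correct and takes essentially the same route as the paper: Prop.~\ref{prop:king_schur} supplies \emph{some} real stability condition for each Schur root, the trichotomy of \S\ref{app:stab_m_kron} (which your sign computation $(\alpha-\beta)(pf-qe)\leq 0$ merely makes explicit) reduces it to $\Theta_1$, $\Theta_2$, or $\Theta_3$, and since the only $\Theta_1$- or $\Theta_2$-stable representations are $S_1,S_2$---which are also $\Theta_3$-stable---every Schur root admits a $\Theta_3$-stable representation. Your separate treatment of the exceptional roots $q_1,q_2$ is the same observation the paper records when noting $S_1,S_2$ are stable for all three conditions.
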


As a consequence of our definition for a real Schur root, we have the following corollary.
\begin{corollary}
	The moduli space $\CM_{\st}^{m}(a,b)$ is a point if $(a,b) \in \Delta^{\mathrm{Re}}_{m}$
\end{corollary}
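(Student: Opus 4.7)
The plan is to combine the definition of a real Schur root with Theorem~\ref{thm:schur_roots_m_wild} and the basic moduli-theoretic fact (Remark~\ref{rmk:moduli}) that, for primitive dimension vectors, $\CM_{\st}^{m}(a,b)$ is a fine moduli space parametrizing isomorphism classes of $\Theta_3$-stable representations.

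First, I would observe that every real Schur root of $K_m$ is primitive. This is an immediate induction on the recursion $a_{k+1} = m a_{k} - a_{k-1}$ with $(a_0,a_1)=(0,1)$: $\gcd(a_k,a_{k+1}) = \gcd(a_k, m a_k - a_{k-1}) = \gcd(a_k, a_{k-1})$, so the gcd is preserved and equals $\gcd(a_0,a_1)=1$. Hence any $(a,b)\in\Delta^{\mathrm{Re}}_m$ is a primitive dimension vector.

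Next, by Theorem~\ref{thm:schur_roots_m_wild}, there exists a $\Theta_3$-stable representation $V$ of $K_m$ with $\underline{\dim}(V)=(a,b)$. Since $(a,b)$ is primitive, every $\Theta_3$-semistable representation of this dimension vector is in fact $\Theta_3$-stable, so $\CM_{\st}^{m}(a,b) = \CM_{\sst}^{m}(a,b)$, and by Remark~\ref{rmk:moduli} this variety is a fine moduli space whose $\mathbb{C}$-points are in bijection with isomorphism classes of $\Theta_3$-stable representations with dimension vector $(a,b)$. By the definition of a real Schur root, such a stable $V$ is unique up to isomorphism; equivalently, the underlying set of the moduli space is a singleton.

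The only remaining point is to argue that the scheme structure on this singleton is reduced. This follows from the smoothness of $\CM_{\st}^{m}(a,b)$ (a general fact about stable quiver moduli, already invoked in Appendix~\ref{app:slope_stab}): a smooth variety whose underlying set is a single point is necessarily $\mathrm{Spec}\,\mathbb{C}$. Thus $\CM_{\st}^{m}(a,b)$ is a point. The main (minor) subtlety is the smoothness/reducedness step; everything else is a direct unwinding of definitions combined with Theorem~\ref{thm:schur_roots_m_wild}.
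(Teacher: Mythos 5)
Your argument is correct and takes essentially the same route as the paper, where the corollary is stated as an immediate consequence of the definition of a real Schur root together with Thm.~\ref{thm:schur_roots_m_wild}; you have merely filled in the supporting details (primitivity of $(a_{k},a_{k+1})$, which the paper also notes, the identification of points of $\CM_{\st}^{m}(a,b)$ with isomorphism classes, and reducedness via smoothness). The only step left implicit is that a $\Theta_{3}$-stable representation automatically satisfies $\mathrm{End}(V)\cong\mathbb{C}$, so that the uniqueness clause in the definition of a real Schur root applies to it; this is standard and clearly what the paper intends.
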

This yields the immediate physical corollary (see \S \ref{app:BPS_index_quiver}): ``real Schur roots correspond to BPS hypermultiplets at $m$-wild points", or more precisely:

%
%
	
\begin{corollary}[Physical Corollary]
	Let $u$ be an $m$-wild point with charges $\gamma_1,\, \gamma_2 \in \widehat{\Gamma}_{u}$ such that $\langle \gamma_1, \gamma_2 \rangle = m \geq 3	$ are nodes of an $m$-Kronecker BPS subquiver satisfying \eqref{eq:cond_wild_centralcharge}.  If $a \gamma_{1} + b \gamma_{2} \in \Delta^{\mathrm{Re}}_{m}$, then the subspace $\Hilb_{\mathrm{BPS}}^{\mathrm{rest}}(a \gamma_{1} + b \gamma_{2})$ is a half-hypermultiplet representation.
\end{corollary}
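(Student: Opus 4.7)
The plan is to combine the dictionary between BPS Hilbert spaces and cohomology of quiver moduli (reviewed in App.~\ref{app:BPS_index_quiver}) with the defining property of a real Schur root (the Corollary immediately preceding the statement): the moduli space of stable representations for such a dimension vector is a point.

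First, I would identify the relevant moduli space. Under the hypotheses, $\gamma_1$ and $\gamma_2$ generate an $m$-Kronecker subquiver $K$ of a BPS quiver at $u$, and \eqref{eq:cond_wild_centralcharge} ensures that $\mathcal{Z}_u$ induces a wild stability condition on $K$ (i.e. stability-equivalent to $\Theta_3$ via Prop.~\ref{prop:kron_stdform}). Since $a\gamma_1+b\gamma_2 \in \Delta^{\mathrm{Re}}_m$, the coprimality of $(a,b)$ (noted just after the definition of $\Delta^{\mathrm{Re}}_m$) implies $a\gamma_1+b\gamma_2$ is a primitive charge, so every semistable representation of dimension vector $(a,b)$ is automatically stable and $\mathcal{M}^{K}_{\mathrm{sst}}(a,b;Z) = \mathcal{M}^{m}_{\mathrm{st}}(a,b)$. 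By the corollary preceding the statement, this space is a single (reduced) point.

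Next I would feed this into the quiver/QM description of the BPS Hilbert space. From App.~\ref{app:BPS_index_quiver}, the charge-$\gamma$ summand of the 1-particle rest-frame BPS Hilbert space factorizes as
\begin{align*}
\mathcal{H}^{\mathrm{rest}}_{\mathrm{BPS}}(a\gamma_1+b\gamma_2;u) \;\cong\; \rho_{\mathrm{hh}} \otimes \mathfrak{h}(a\gamma_1+b\gamma_2;u),
\end{align*}
and since the dimension vector is primitive, $\mathfrak{h}$ carries the Lefschetz $\mathfrak{su}(2)$-action on $H^{*}_{dR}(\mathcal{M}^{m}_{\mathrm{st}}(a,b);\mathbb{C})$. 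A point has cohomology $\mathbb{C}$ concentrated in degree $0$, with trivial Lefschetz action; consequently $\mathfrak{h}(a\gamma_1+b\gamma_2;u)$ is the trivial one-dimensional $\mathfrak{r}$-representation, and $\mathcal{H}^{\mathrm{rest}}_{\mathrm{BPS}}(a\gamma_1+b\gamma_2;u) \cong \rho_{\mathrm{hh}}$, which is precisely the half-hypermultiplet representation. (Equivalently, $\Omega(a\gamma_1+b\gamma_2;u) = (-1)^{0}\chi(\mathrm{pt}) = 1$, matching the BPS index of a hypermultiplet.)

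The only genuine subtlety, and what I would regard as the main conceptual hurdle, is the step that passes from the BPS quiver of the full theory at $u$ to the $m$-Kronecker subquiver $K$: one must argue that the BPS states of charge $a\gamma_1+b\gamma_2$ are faithfully captured by representations of $K$ alone, rather than by representations of the ambient BPS quiver $Q$ in which $K$ sits as a full subquiver. This is precisely the content of the identification in App.~\ref{app:BPS_index_quiver}---the space of states of a given charge depends only on the full subquiver generated by the charges accessible from the stability condition---and I would invoke it explicitly. Once this identification is granted, the remainder of the argument is the purely combinatorial-geometric chain: real Schur root $\Rightarrow$ point moduli $\Rightarrow$ trivial $\mathfrak{h}$ $\Rightarrow$ half-hypermultiplet.
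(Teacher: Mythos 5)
Your argument is correct and is essentially the paper's own (the corollary is presented there as an immediate consequence of the preceding corollary together with the discussion in App.~\ref{app:BPS_index_quiver}): primitivity of a real Schur root gives $\CM_{\st}=\CM_{\sst}=$ a point, so $\mathfrak{h}\cong H^{*}_{dR}(\mathrm{pt};\mathbb{C})$ is the trivial one-dimensional Lefschetz representation and $\mathcal{H}^{\mathrm{rest}}_{\mathrm{BPS}}\cong\rho_{\mathrm{hh}}$. The subtlety you flag about replacing the ambient BPS quiver by the full Kronecker subquiver is handled in the paper exactly as you suggest, by taking the quiver $Q$ of App.~\ref{app:BPS_index_quiver} to be that full subquiver (representations with dimension vector supported on $\gamma_{1},\gamma_{2}$ are the same for both), so nothing further is needed.
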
	

More generally, one can calculate the dimension of $\CM_{\st}^{m}(a,b)$ for both real and imaginary Schur roots:
\begin{align*}
	\dim(\CM_{\st}^{m}(a,b)) = mab - a^2 - b^2 + 1
\end{align*}
for any $(a,b) \in \Delta^{\mathrm{Re}}_{m} \cup \Delta^{\mathrm{Im}}_{m}$.  It follows that $\dim(\CM_{\st}^{m}(a,b)) > 1$ for any $a$ and $b$ satisfying $a/b \in (m_{-}, m_{+})$, i.e. any imaginary Schur root.

%

One may be curious about the central-charge phases of BPS states---particularly when computing spectral networks.  Note that for any stability condition $Z: \Lambda \rightarrow \mathbb{C}$, the phase function
\begin{align*}
	\begin{array}{llccl}
	\arg \left[Z \right] & : & \Lambda^{+} \backslash \{0\} & \longrightarrow & [0,\pi)\\
		 {} & :  &  aq_{1} + bq_{2} & \longmapsto & \arg \left[ Z \left( aq_{1} + b q_{2} \right) \right]
	\end{array}
\end{align*}
depends only on the ratio\footnote{In the appendix we will refrain from referring to this as the ``slope" to prevent confusion with the slope-function in slope-stability conditions.} $a/b$.  Hence, if we extend $Z$ linearly to a map $\Lambda \otimes_{\mathbb{Z}} \mathbb{R} \rightarrow [0,\pi)$ we may define a function
\begin{align*}
	\Phi: \mathbb{R}_{>0} \rightarrow  [0,\pi)
\end{align*}
via
\begin{align*}
	r \mapsto \arg\left[Z(\alpha q_{1} + \beta q_{2}) \right],
\end{align*}
for any $\alpha, \beta \in \mathbb{R}_{\geq 0}$ such that $\alpha/\beta = r$.  With this bit of notation, the following corollary of Thm.~\ref{thm:schur_roots_m_wild} describes the structure of the set of central-charge phases at an $m$-wild point.

\begin{corollary}(see also \cite[Corollary 3.20]{dhkk:dyn_cat})
	Let $Z: \Lambda \rightarrow \mathbb{C}$ be a stability condition such that $\arg[Z(q_{1})] < \arg[Z(q_{2})]$, then
	\begin{align}
		\arg \left[Z \left(\Delta^{\mathrm{Re}}_{m} \cup \Delta^{\mathrm{Im}}_{m} \right) \right] = F_{-} \cup D \cup F_{+}
		\label{eq:phase_arc}
	\end{align}
	where 
	\begin{itemize}
	\item $D = \arg \left[Z \left(\Delta^{\mathrm{Im}}_{m} \right) \right] = \Phi \left([m_{-}, m_{+}] \cap \mathbb{Q} \right)$
	
	\item and $F_{-}$ (resp. $F_{+}$) is a set consisting of a sequence of points (monotonically) converging to $\Phi(m_{-})$ (resp. $\Phi(m_{+})$).\footnote{The limit points $\Phi(m_{-})$ and $\Phi(m_{+})$ are not contained in $F_{-} \cup D \cup F_{+}$.}
	\end{itemize}
\end{corollary}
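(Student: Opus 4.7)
The plan is to reduce the corollary to a monotonicity-plus-limits analysis of $\Phi$ on the ratios $a/b$ of real and imaginary Schur roots, then push those ratios through $\Phi$. First I would write $\alpha := Z(q_1)$ and $\beta := Z(q_2)$, so that linearity of $Z$ gives $Z(aq_1+bq_2) = b(r\alpha + \beta)$ with $r = a/b$ and $b > 0$, whence $\arg Z(aq_1+bq_2) = \arg(r\alpha + \beta) = \Phi(r)$. Under the hypothesis $\arg(\alpha) < \arg(\beta)$ with $\alpha, \beta \in \mathbb{H}$, a direct computation of $\tfrac{d}{dr}\arg(r\alpha + \beta) = \mathrm{Im}\bigl(\alpha/(r\alpha+\beta)\bigr)$ shows this derivative is strictly negative for all $r \geq 0$; hence $\Phi$ extends to a strictly decreasing homeomorphism of $[0, \infty]$ onto $[\arg\alpha, \arg\beta]$.

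With this reduction in hand, it suffices to identify the set $\mathcal{R}$ of ratios $a/b$ coming from $\Delta^{\mathrm{Re}}_m \cup \Delta^{\mathrm{Im}}_m$ and then apply $\Phi$. The imaginary Schur roots contribute exactly the rationals in $[m_-, m_+]$: since $m \geq 3$ makes $m_\pm$ irrational (the discriminant $m^2 - 4$ is not a perfect square), we get $\{a/b : aq_1+bq_2 \in \Delta^{\mathrm{Im}}_m\} = [m_-,m_+]\cap \mathbb{Q}$, and $\Phi$ of this set is exactly $D$. The real Schur roots contribute two sequences of rationals, $S_- := \{a_k/a_{k+1}\}_{k\geq 0}$ and $S_+ := \{a_{k+1}/a_k\}_{k\geq 0}$, where the $k=0$ entry in $S_+$ corresponds to the vertex $q_1$ and is understood as the point $\infty$ (sent by $\Phi$ to $\arg\alpha$), and the $k=0$ entry in $S_-$ is $0$ (corresponding to $q_2$).

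The core step is then showing that $S_-$ and $S_+$ are strictly monotone sequences in $\mathbb{R}_{>0}$, lying on opposite sides of $[m_-, m_+]$, and converging respectively to $m_-$ and $m_+$. Convergence follows at once from the closed form \eqref{eq:gen_fib}: the characteristic polynomial $x^2 - mx + 1$ has roots $m_\pm$ with $m_+ > 1 > m_- > 0$ and $m_+ m_- = 1$, so $a_k \sim m_+^k/\sqrt{m^2-4}$, giving $a_{k+1}/a_k \to m_+$ and $a_k/a_{k+1} \to m_-$. For monotonicity I would introduce $r_k := a_{k+1}/a_k$ and use the recursion $r_{k+1} = m - 1/r_k$. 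The map $T(x) := m - 1/x$ is monotone increasing on $(0,\infty)$ with fixed points exactly $m_\pm$; since $T'(m_+) = m_+^{-2} < 1$ while $T'(m_-) = m_-^{-2} = m_+^2 > 1$, $m_+$ is attracting and $m_-$ repelling. A short induction using that $T$ is increasing and $T(x) < x$ for $x > m_+$, with initial value $r_1 = m > m_+$, shows $r_k$ is strictly decreasing with $r_k > m_+$ for all $k \geq 1$; reciprocating yields that $S_-$ is strictly increasing and bounded above by $m_-$.

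Applying the decreasing homeomorphism $\Phi$ finally converts these facts into the statement of the corollary: $F_+ := \Phi(S_+)$ is strictly increasing below $\Phi(m_+)$ with limit $\Phi(m_+)$; $F_- := \Phi(S_-)$ is strictly decreasing above $\Phi(m_-)$ with limit $\Phi(m_-)$; and $D := \Phi([m_-, m_+] \cap \mathbb{Q})$ lies strictly between them. Disjointness of the three parts is automatic from injectivity of $\Phi$, and the points $\Phi(m_\pm)$ are not in the image since $m_\pm \notin \mathcal{R}$. The only real (and rather mild) obstacle is the monotonicity of $r_k$; this is a fixed-point analysis of the M\"obius map $T$, and the only subtlety is checking that the initial seed $r_1 = m$ really lies on the attracting side of $m_-$, which follows from the identity $m - m_+ = m_- > 0$.
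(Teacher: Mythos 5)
Your proposal is correct and is essentially the argument the paper leaves implicit: the phase depends on the dimension vector only through the ratio $a/b$ via the strictly monotone map $\Phi$, the imaginary Schur roots contribute exactly $[m_-,m_+]\cap\mathbb{Q}$, and the real Schur roots give the two sequences $a_k/a_{k+1}$ and $a_{k+1}/a_k$ converging monotonically to $m_\mp$ from outside the interval (which the paper reads off from \eqref{eq:gen_fib} rather than via your fixed-point analysis of $T(x)=m-1/x$, but that is the same content). The only blemish is the phrase ``attracting side of $m_-$'' in your last paragraph, where you clearly mean that the seed $r_1=m$ lies above the attracting fixed point $m_+$ (equivalently, to the right of the repelling one), which your identity $m-m_+=m_->0$ indeed establishes.
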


Because the set $D$ is dense in the arc $\Phi([m_{-},m_{+}])$ we will refer to it as the \textit{dense arc}.  A depiction of the set \eqref{eq:phase_arc} for $m=3$ and $Z = - \Theta_{3}  + i \dim$ is shown in Fig.~\ref{fig:3_kron_arc}.

\begin{figure}[t!]
	\begin{center}
		 \includegraphics[scale=0.55]{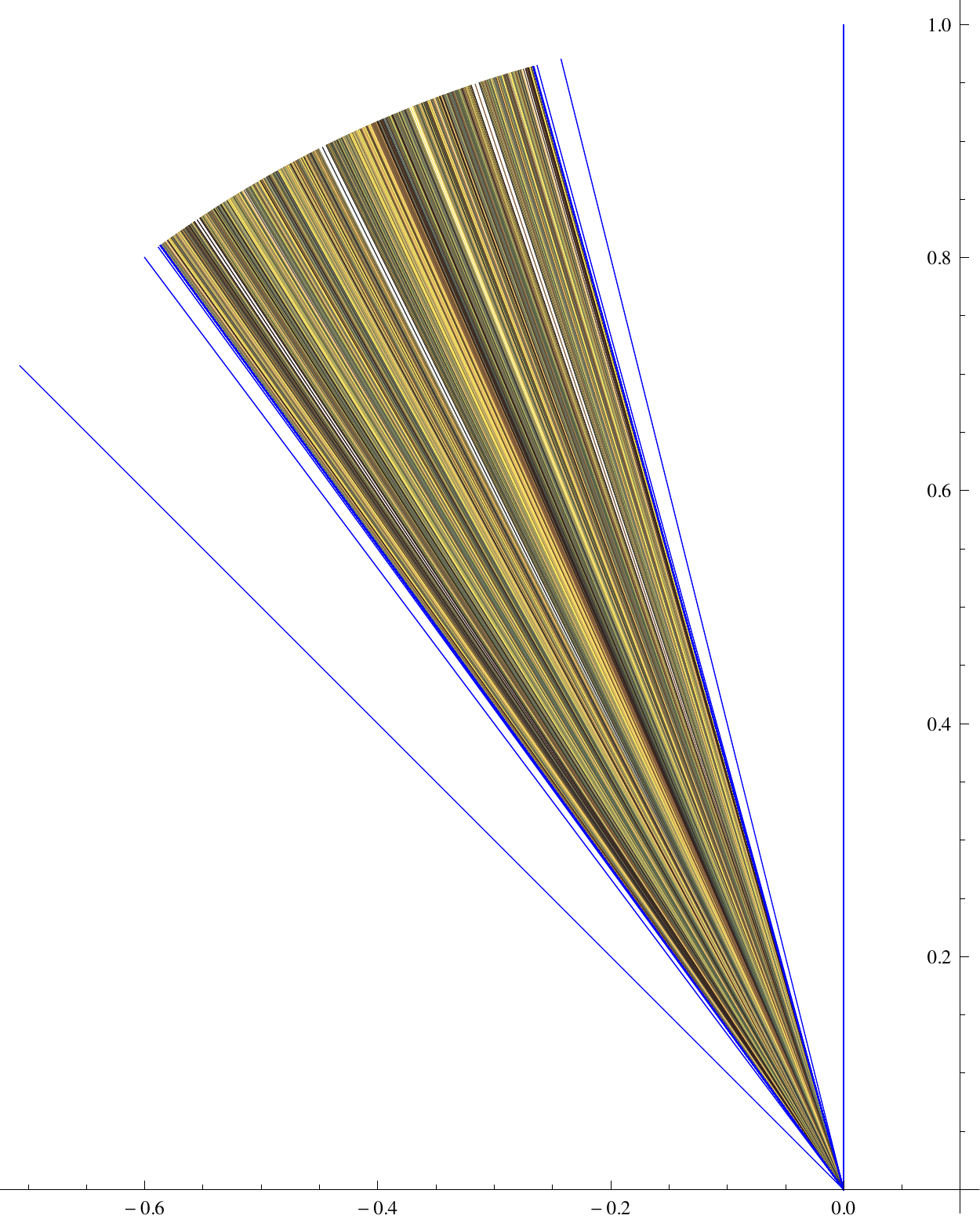}
		\caption{Depiction of the set of phases \eqref{eq:phase_arc} for $m=3$ and $Z= -\Theta_{3} + i \dim$.  For each element $\theta \in F_{-} \cup D \cup F_{+}$, a unit length ray is drawn with slope defined by the angle $\theta$.  Phases of real Schur roots (elements of $F_{-} \cup F_{+}$) are represented by blue rays, while phases of imaginary schur roots are represented by brownish/tan rays.  There are 202 real Schur root phases depicted: 101 elements of $F_{-}$ and 101 elements of $F_{+}$, corresponding to the first $101$ elements of the sequence \eqref{eq:gen_fib}.  One can see how points of $F_{\pm}$ accumulate to $\Phi(m_{\pm})$ where $m_{\pm} = \frac{1}{2}(3 \pm \sqrt{5})$ (given $m = 3$).  Further, there are 1161 imaginary Schur root phases (elements of the ``dense arc" $D$) depicted, corresponding to the subset of coprime integers in the set $\{(a,b) \in \{1, \cdots, N_{\mathrm{cut}} \}^{\times 2}: a/b \in (m_{-}, m_{+}) \}$ where $N_{\mathrm{cut}} = 55$. \label{fig:3_kron_arc}}
	\end{center}
\end{figure}

\section{Derivation of \texorpdfstring{\eqref{eq:func_eqs}}{The algebraic equations for M,V, and W}} \label{sec:derivation}
In the following we provide a sketch of the methods used to obtain \eqref{eq:func_eqs}, which we restate here (with each individual equation numbered) for referential convenience.
\begin{align}
	M &= 1 + z M^{4} \left\{ (1 + V) (1 + V - W)^2 [V^2(1+W) - 1]^{3} \right\} \label{eq:func_eqs_1}\\
	0 &= (-1 + V) (1 + V)^2 + (1 + V^3) W - V (M + V) W^2 \label{eq:func_eqs_2}\\
	0 &=  V \left(V^2 -1 \right) - \left[M (V + 1) + V(V-2) - 1 \right] W \label{eq:func_eqs_3}.
\end{align}

  Unless noted otherwise all notation will be drawn from \cite{wwc}.  The reader is warned that these derivations are ad-hoc, and rely on numerical order-by-order observations to establish particular identities.  Nevertheless we present them here in the spirit of full transparency, and hopes that some reader may improve upon the technique.

\begin{definition}
	We will write $\twid{\Gamma_{ij}}(p)$ to denote the set of collection relative homology classes (``charges") representing solitons of type $ij$ that propagate along the street $p$. 
\end{definition}

\begin{remark}
	In \cite{wwc} the notation $\twid{\Gamma}_{ij}(\twid{x},-\twid{x})$ was used to denote the set of soliton charges of type $ij$ over some \textit{point} $x \in p$ (and with $\twid{x} \in UT \Sigma$ a lift of $x$ to the unit tangent bundle of $\Sigma$, using the orientation on $p$).  Dropping the explicit point $x$ from the notation is only be a mild abuse as all such sets attached to various points on $p$ are related by parallel transport along $p$.
\end{remark}

First, we recall a major part of the spectral network machinery: to each (two-way) street $p$ of type $ij \in \{12,23,13\}$ we attach two formal series: $\Upsilon(p) \in \formal{\mathbb{Z}}{\twid{\Gamma}_{ij}(p)},\, \Delta(p) \in \formal{\mathbb{Z}}{\twid{\Gamma}_{ji}(p)}$, called \textit{soliton generating series}, where $\formal{\mathbb{Z}}{\twid{\Gamma}_{ij}(p)}$ and $\formal{\mathbb{Z}}{\twid{\Gamma}_{ji}(p)}$ are certain (mildly) non-commutative $\formal{\mathbb{Z}}{\twid{\Gamma}}$-algebras.  Less formally speaking, the terms of these formal series are formal variables of the form $X_{a}$ for $a \in \twid{\Gamma}_{ij}$; furthermore, one can order the terms of soliton generating series according to the mass of each term (see the proof of the claim below).  

Now, assume our spectral network has the property that for every branch point of type $ij$, there is at most one two-way street with an endpoint on that branchpoint.\footnote{$m$-herds and all of their generalizations in this paper satisfy this property.}  From a spectral network we develop a collection of equations on the soliton generating series:

\begin{enumerate}
	\item At each joint of a spectral network, write down the six-way junction equations \cite[App. B]{wwc}.
	
	\item For each two-way street $p$ emanating from a branch point of type $ij$ (according to our assumptions there is at most one), 
\end{enumerate}

As we are often only interested in generating series on the two-way streets, we use the observation of \cite[App. C.3]{wwc}: the resulting equations on two-way streets form a closed system of equations, i.e. there is a way in which we can ignore all one-way streets.\footnote{In particular, the two-way ``skeleton networks" shown throughout this paper and \cite{wwc}--which are not full spectral networks but only a specification of their two-way streets---suffice to understand the BPS state counts.}

\begin{figure}[t!]
	\begin{center}
		 \includegraphics[scale=0.55]{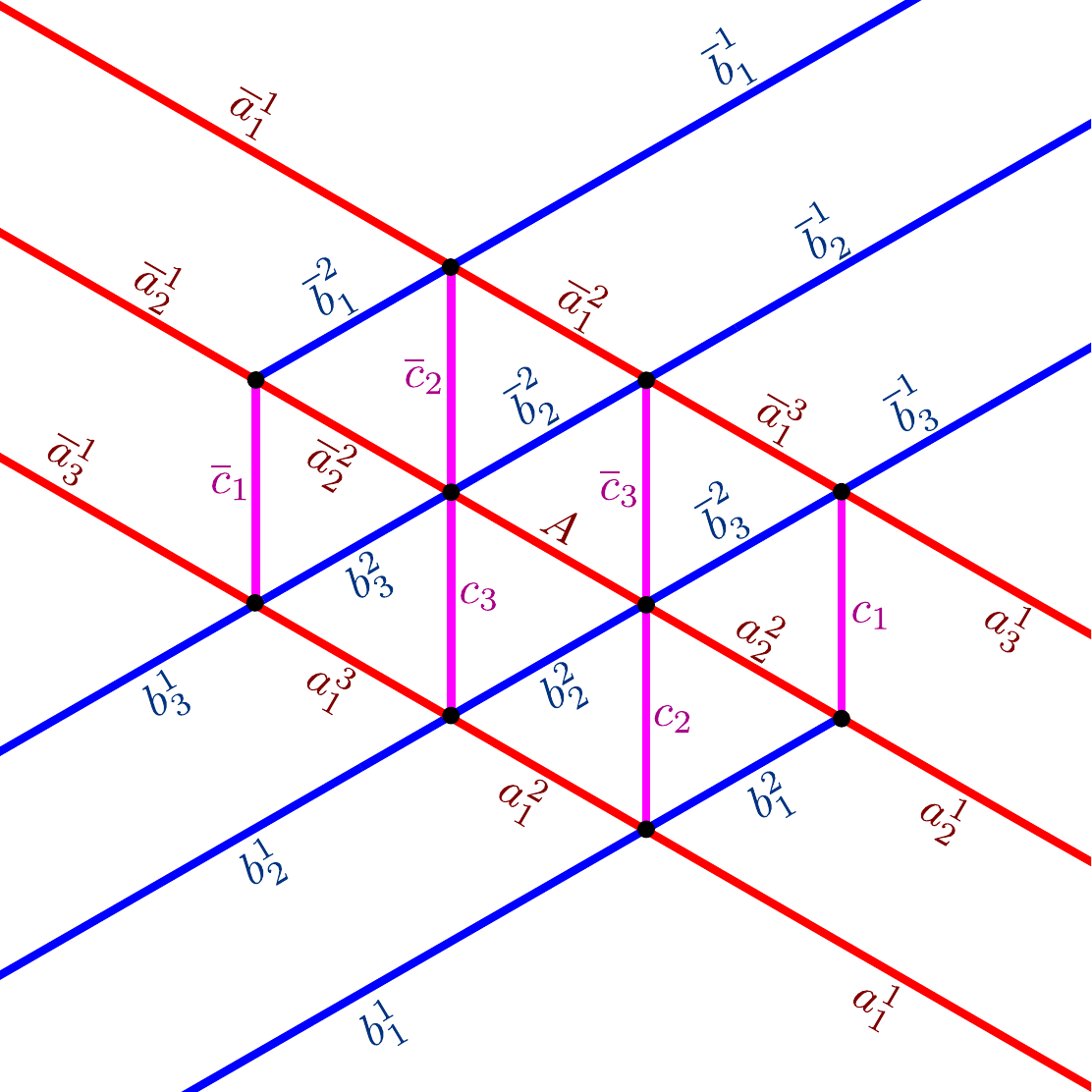}
		\caption{The central building-block of a $(3,2|3)$-herd with streets labeled.  For the computations in App. \ref{sec:derivation} this diagram fits into the central region of Fig.~\ref{fig:(3_2)_diagram}.  The labelling is chosen in a way that reflects the 180 degree rotational symmetry of the full $(3,2|3)$-herd (which can also be seen on the central building block). \label{fig:(3_2)_building_block_labeled}}
	\end{center}
\end{figure}

For each pair of soliton generating series $\Upsilon(p)$ and $\Delta(p)$ we make the following claim.

\begin{itclaim} \label{claim:abelianization}
Let $\mathpzc{N}$ be a spectral network supporting integer multiples of a \textit{single} BPS charge $\gamma_{c}$ and let $p \in \mathrm{str}(\mathpzc{N})$ be a street of type $ij$, then there exists a unique decomposition
\begin{align}
	\Upsilon(p) &= X_{s} U(p) \in \formal{\mathbb{Z}}{\twid{\Gamma}_{ij}(p)}\\
	\Delta(p) &= X_{\conj{s}} D(p) \in \formal{\mathbb{Z}}{\twid{\Gamma}_{ji}(p)}
	\label{eq:ab_decomp}
\end{align}
where
\begin{itemize}
	\item  $s \in \twid{\Gamma}_{ij}(p)$ and $\conj{s} \in \twid{\Gamma}_{ji}(p)$ are two pairs of solitons.
	\item $U(p), D(p) \in \formal{\mathbb{Z}}{\twid{\Gamma_{c}}} = \formal{\mathbb{Z}}{z}$ are formal series with non-vanishing degree zero terms..
\end{itemize}
\end{itclaim}
\begin{proof}
		To show existence, first note that $\twid{\Gamma}_{ij}(\twid{x}, -\twid{x})$, for any point $x \in p$, is a $\twid{\Gamma}$-torsor; so, in particular we can make sense of differences $s - s' \in \twid{\Gamma}$ for any $s,s' \in \twid{\Gamma}_{ij}(\twid{x}, -\twid{x})$.  In particular, as $N$ supports only integer multiples of a single primitive charge $\twid{\gamma}_{c}$, any two solitons supported on $p$ must differ by a multiple of $\gamma_{c}$.  Let $S$ denote the (countable) set of solitons that appear in the formal variables in the expansion of $\Upsilon(p)$.  We now impose a total ordering on $S$: $s < s'$ if the mass of the soliton $s$ is less than or equal to the mass of $s'$; this is a total order as $\text{Mass}(s) < \text{Mass}(s + \twid{\gamma}_{c})$; furthermore, as the mass is always strictly positive, then by the well-ordering principle there exists a smallest element $s_{*} \in S$.  Now, as any soliton in $S$ can be obtained by adding a multiple of $\twid{\gamma}_{c}$, it follows that there exists $U(p) \in \formal{\mathbb{Z}}{\twid{\Gamma_{c}}} $ such that $\Upsilon(p) = X_{s_{*}} U(p)$.  Furthermore, by construction the degree zero term of $U(p)$, must be non-vanishing (otherwise $s_{*} \notin S$).

	Uniqueness is immediate from the condition of non-vanishing degree zero terms: if there exists two decomposition $\Upsilon(p) = X_{s} U(p) = X_{s'} U'(p)$, extracting degree zero terms from $U(p)$ and $U'(p)$ requires $X_{s} = X_{s'}$; hence, $U(p) = U'(p)$ must also follow.
\end{proof}

\begin{remark}
	$U(p), D(p) \in \formal{\mathbb{Z}}{z}$ are invertible as elements of the ring $\formal{\mathbb{Q}}{z}$: an element of $\formal{\mathbb{Q}}{z}$ is invertible if and only if it has non-vanishing degree zero term.  This observation is computationally convenient.
\end{remark}

\begin{definition}[Terminology]
	We will occasionally refer to the commutative formal series $U(p)$ and $D(p)$ of \eqref{eq:ab_decomp} as \textit{abelian parts}.
\end{definition}

In practice, this decomposition allows one to transform equations in the (mildly) non-commutative soliton generating series $\Upsilon(p),\, \Delta(p)$ into equations in the commutative (and invertible) formal series $U(p)$ and $D(p)$.

\begin{remark}
	Note that in the decomposition \eqref{eq:ab_decomp}, $s$ and $\conj{s}$ must be solitons such that, $\cl \left[s + \conj{s} \right] = \pm n \twid{\gamma}_{c}$ under the closure map $\cl: \bigcup_{i} \twid{\Gamma}_{ii} \mapsto \twid{\Gamma}$
\end{remark}

We will denote particular street factors via
\begin{equation}
	\begin{aligned}
		M &:= \oQ{c_{1}} = \oQ{\conj{c}_{1}}\\
		V &:= \oQ{c_{3}} = \oQ{\conj{c}_{3}}\\
		W &:= \oQ{c_{2}} = \oQ{\conj{c}_{2}}.
	\end{aligned}
	\label{eq:MVW_defs}
\end{equation}
where the second equalities follow via the 180 degree rotation symmetry of the diagram.

 In the following, $a$ and $c$ represent the simpleton charges of types 12 and 21 emitted from the red streets at the top and bottom (respectively) of the red streets of Fig.~\ref{fig:(2_1)_diagram}.  Similarly, $b$ and $d$ represent the simpleton charges of type $23$ and $32$.  We will make two rather abusive, but notationally convenient omissions:
\begin{enumerate}
	\item We will explicitly omit the various parallel transport maps applied to the $a,\,b,\,c,\,$ and $d$;
	
	\item We will explicitly omit applications of the closure map $\cl: \bigcup_{i} \twid{\Gamma}_{ii} \rightarrow \twid{\Gamma}$.
\end{enumerate}
In practice these omissions simplify notation greatly without an unrecoverable loss of rigour. Indeed, for the first type of omission: applying the six-way street equations appropriately ensures that the appropriate parallel transport maps have been applied whenever one has a product of two formal variables in soliton-charges $X_{s} X_{s'}$; hence, there exists a well-defined manner in which this product is non-vanishing\footnote{Technically speaking, according to the groupoid addition rule, at least one of $s + s'$ and $s' + s$ is ill-defined (assuming $s$ and $s'$ are not both charges of type $ii$), i.e. one of $X_{s + s'}$ and $X_{s' + s}$ vanishes; if one keeps track of the correct order in the six-way street equations, one will not have such non-vanishing sums.} and can be written as $X_{s + s'}$.  With this in mind, we can freely identify factors of $z$ in equations via:
\begin{align*}
	z = X_{3a + 2b + 3c + 2d}.
\end{align*}
Note that, letting $\widehat{\gamma}_{1}$ and $\widehat{\gamma}_{2}$ be the charges defined\footnote{Please be aware of the following notational conflict between this section and App. \ref{app:signs}: in the former the letters $a, b$ denote soliton charges, in the latter $a$ and $b$ are integers.} in App. \ref{app:signs},
\begin{align*}
	\widehat{\gamma}_{1} &= a + b\\
	\widehat{\gamma}_{2} &= c + d
\end{align*}
assuming the appropriate parallel transport maps have been applied to the right hand side.  Thus,
\begin{align*}
	z &= X_{3 \widehat{\gamma}_{1} + 2 \widehat{\gamma}_{2}}.
\end{align*}

\subsection{Numerical Observations}
We first remark that, given the decomposition of soliton generating series as stated in the claim, if we define $z := X_{\widehat{\gamma_{c}}}$ we must have
\begin{align*}
	\oQ{p} &= 1 + z^{k} U(p) D(p).
\end{align*}
for some $k \in \mathbb{Z}$.   For the $(3,2|3)$-herd, it can be checked that for any two-way street $p$
\begin{align*}
	\Upsilon(p) \Delta(p) &= z \oU{p} \oD{p}.
\end{align*}
i.e. given $\oU{p}$ and $\oD{p}$ we can recover the street-factors via
\begin{align}
	Q(p) &= 1 + z \oU{p} \oD{p}.
	\label{eq:street_ident}
\end{align}

 Now, for $m$-herds, all soliton generating series were of the form $X_{s} P^{k}$ for some soliton charge $s,\,$ street-factor $P,\,$ and positive integer $k$.  Now, it would be a wonderful situation if something analogous held for $(3,2|3)$-herds: if $\oU{p}$ and $\oD{p}$ could be entirely expressed as monomials in terms of the street factors $M,\,V,\,$ and $W$ defined in Fig.~\ref{fig:(3_2)_diagram}.  A bit of numerical exploration shows that this dream fails, but in a mild way.  Namely, the $\oU{p}$ and $\oD{p}$ turn out to be rational functions in the three street factors $M,\,V,\,W \in \formal{\mathbb{Z}}{z}$, the simplest of which are monomials in $M,V,$ and $W$.  
 
Order-by-order exploration suggest the following identities (a partial list among the $29 \times 2 =58$ soliton generating series associated to the $29$ two-way streets of Fig.~\ref{fig:(3_2)_building_block_labeled}).

\begin{equation}
\begin{array}{ll}
   \oup{A} = X_{2a + b + c + d} \oU{A} &  \odown{A} = X_{a + b + 2c + d} \oU{A}\\
	\oup{a_{1}^{1}} = X_{a} M W &  \odown{\conj{a}_{1}^{1}} = X_{c} M W\\
	\oup{a_{2}^{1}} = X_{2a + c + b + d} Y &  \odown{\conj{a}_{1}^{3}} = X_{c} M^2 V W\\
	\oup{a_{2}^{2}} = X_{2a + b + c + d} Y & \oup{\conj{a}_{2}^{1}} = X_{2a + b + c + d} \oU{b_{2}^{2}}\\
	\odown{a_{2}^{2}} = X_{a + b + 2c + d} \oD{a_{2}^{2}} &  \odown{\conj{b}_{2}^{2}} = X_{2a + b + 2c + 2d} \oU{b_{2}^{2}} \\
	\oup{b_{1}^{1}} = X_{b} &  \oup{\conj{b}_{3}^{2}} = X_{2a + 2b + 2c + d} \oU{\conj{b}_{3}^{2}}\\
	\oup{b_{2}^{2}} = X_{2a + 2b + 2c + d} \oU{b_{2}^{2}} & \odown{\conj{b}_{3}^{2}} = X_{a + c + d} (MW)^2 V\\
	\odown{b_{2}^{2}} = X_{a + c + d} \oD{b_{2}^{2}} &  \oup{\conj{c}_{3}} = \eta^{-1} X_{a + b} \oU{\conj{c}_{3}}\\
	\odown{b_{2}^{1}} = X_{a + c + d} Y & \odown{\conj{c}_{3}} = \eta X_{2a + b + 3c + 2d} \oD{\conj{c}_{3}}\\
	\oup{c_{2}} = \eta^{-1} X_{a + b} M W & \\
	\odown{c_{2}} = \eta X_{2a + b + 3c + 2d} \oD{c_{2}} &.
	 \end{array}
	 \label{eq:numerical_results}
\end{equation}

Where $Y \in \formal{\mathbb{Z}}{z}$ is a series whose first few terms are given by
\begin{align*}
	Y &=  2 + 72 z + 6438 z^2 + 752640 z^3 + 100221708 z^4 + 14415715416 z^5 +  \mathcal{O}(z^{6}).
\end{align*}
Eventually we will find that $Y$ can be expressed as a rational function in $M,V$ and $W$.

\begin{remark}[Observation]
Let $p$ be any street in Fig.~\ref{fig:(3_2)_building_block_labeled}, i.e. $p \in \{a_{i}^{j},\, \conj{a}_{i}^{j},\, b_{i}^{j},\, \conj{b}_{i}^{j}\}_{i,j = 1}^{3} \cup \{c_{i},\, \conj{c}_{i}\}_{i=1}^{3} \cup \{A\}$.  With the convention that $\conj{\conj{p}} = p$ and $\conj{A} = A$, then performing a 180-degree rotation of the $(3,2|3)$-herd, we have a map $p \mapsto \conj{p}$; furthermore, the $(3,2|3)$-herd's symmetry under this rotation implies that, under this rotation we have an involution that maps soliton generating series to soliton generating series:
\begin{align*}
	\oup{p} &\rightsquigarrow \odown{\conj{p}}\\
	\odown{p} &\rightsquigarrow \oup{\conj{p}},
\end{align*}
which is induced by the replacements
\begin{align*}
	\begin{array}{lcr}
		 a \mapsto c, &  c \mapsto a\\
		 b \mapsto d, & d \mapsto b
	\end{array}
\end{align*}
and 
\begin{align*}
	\eta \mapsto \eta^{-1},
\end{align*}
while fixing the abelian part of the generating series.  This is reflected in some of the numerical observations above (e.g. $\oup{a_{1}^{1}} = X_{a} MW$ and $\odown{\conj{a}_{1}^{1}} = X_{c} M W$).
\end{remark}

In the following section we show that we may recover the algebraic relations \eqref{eq:func_eqs} from: the collection of results in \eqref{eq:numerical_results}, the observation above, the six-way junction equations, the abelian six-way junction equations\footnote{Which follow as a corollary of the six-way junction equations.} \eqref{eq:abelian_rules}, and the relation between abelian parts and street factors \eqref{eq:street_ident}.


\subsection{Manipulations}
We now develop a functional equation through applications of the six-way junction equations, combined with the numerical observations of the previous section.  First, one can show through the six-way junction equations.
\begin{align}
	\oup{b_{1}^{2}} &= \inup{b_{1}^{1}} \label{eq:b12eb11}\\
	\oup{b_{2}^{2}} &= \inup{b_{2}^{1}} \nonumber\\
	\odown{b_{1}^{2}} &=  \oQ{\conj{b}_{2}^{2}} \times \nonumber\\
		&\times \frac{\indown{\conj{a}_{1}^{1}} \indown{\conj{b}_{3}^{1}} \inup{a_{2}^{1}} \left[1 + \indown{\conj{a}_{2}^{1}} \indown{\conj{b}_{1}^{1}} \inup{a_{1}^{1}} \inup{b_{3}^{1}} \oQ{b_{2}^{2}} \right]}{1 - \indown{\conj{a}_{1}^{1}} \indown{\conj{a}_{2}^{1}} \indown{\conj{b}_{1}^{1}} \indown{\conj{b}_{3}^{1}} \inup{a_{1}^{1}} \inup{a_{2}^{1}} \inup{b_{3}^{1}} \inup{b_{1}^{1}} \oQ{b_{2}^{2}} \oQ{\conj{b}_{2}^{2}} }. \label{eq:db12_sol}
\end{align}
Defining,
\begin{align*}
	\Pi &:= \indown{\conj{a}_{1}^{1}} \indown{\conj{a}_{2}^{1}} \indown{\conj{b}_{1}^{1}} \indown{\conj{b}_{3}^{1}} \inup{a_{1}^{1}} \inup{a_{2}^{1}} \inup{b_{3}^{1}} \inup{b_{1}^{1}},
\end{align*}
then \eqref{eq:db12_sol} implies (using \eqref{eq:b12eb11} and $\oQ{p} = 1 + \oup{p} \odown{p}$ for any street $p$)
\begin{align}
	\oQ{b_{1}^{2}} &=  1 +  \frac{\oQ{\conj{b}_{2}^{2}} \oup{b_{1}^{2}} \indown{\conj{a}_{1}^{1}} \indown{\conj{b}_{3}^{1}} \inup{a_{2}^{1}} \left[1 + \indown{\conj{a}_{2}^{1}} \indown{\conj{b}_{1}^{1}} \inup{a_{1}^{1}} \inup{b_{3}^{1}} \oQ{b_{2}^{2}} \right]}{1 - \indown{\conj{a}_{1}^{1}} \indown{\conj{a}_{2}^{1}} \indown{\conj{b}_{1}^{1}} \indown{\conj{b}_{3}^{1}} \inup{a_{1}^{1}} \inup{a_{2}^{1}} \inup{b_{3}^{1}} \inup{b_{1}^{1}} \oQ{b_{2}^{2}} \oQ{\conj{b}_{2}^{2}} } \nonumber\\
	&= 1 + \frac{\oQ{\conj{b}_{2}^{2}}  \inup{b_{1}^{1}} \indown{\conj{a}_{1}^{1}} \indown{\conj{b}_{3}^{1}} \inup{a_{2}^{1}} + \Pi \oQ{b_{2}^{2}} \oQ{\conj{b}_{2}^{2}}}{1 - \Pi \oQ{b_{2}^{2}} \oQ{\conj{b}_{2}^{2}}} \nonumber \\
	&= \frac{1 + \oQ{\conj{b}_{2}^{2}} \inup{b_{1}^{1}} \indown{\conj{b}_{3}^{1}} \inup{a_{2}^{1}} \indown{\conj{a}_{1}^{1}} }{1 - \Pi \oQ{b_{2}^{2}} \oQ{\conj{b}_{2}^{2}}} \label{eq:Q_b12_reduce}\\
\end{align}
From the abelian rules \eqref{eq:abelian_rules} and the identifications \eqref{eq:MVW_defs}, we have
\begin{align*}
	\oQ{b_{1}^{2}} &= M\\
	\oQ{b_{2}^{2}} &= V = \oQ{\conj{b}_{2}^{2}}.
\end{align*}
Furthermore, using the numerical results of \eqref{eq:numerical_results}, we find
\begin{align*}
		\Pi &= \left[z (M W)^{3} V Y \right]^2
\end{align*}
and
\begin{align*}
		\inup{b_{1}^{1}} \indown{\conj{b}_{3}^{1}} \inup{a_{2}^{1}} \indown{\conj{a}_{1}^{1}} &= z (MW)^3 V Y.
\end{align*}
With these identities, \eqref{eq:Q_b12_reduce} reduces to:
\begin{align*}
	M &= \left[1 - z (MW)^3 V^2 Y \right]^{-1},
\end{align*}
or in a slightly different form:
\begin{align}
	M &=1 + z M^{4} W^3 V^2 Y
	\label{eq:M_eq_prelim}
\end{align}
which is highly reminiscent of \eqref{eq:func_eqs_1} (i.e. the top equation of \eqref{eq:func_eqs}).  Indeed, as mentioned previously, further analysis will show $Y$ can be written as a rational function in $M,V,$ and $W$; so \eqref{eq:M_eq_prelim} will, indeed, reduce to \eqref{eq:func_eqs_1}.

By applying appropriate six-way street equations, and replacing $X_{3a + 2b + 3c + 2d}$ with $z$, one can obtain the following nine equations:

\begin{align}
	\oD{b_{2}^{2}} &= (MW)^2 V + z \oD{\conj{c}_{3}} \oU{A} \label{eq:aD_b22}\\
	\oD{a_{2}^{2}} &= \oU{A} + z \oU{b_{2}^{2}} \oD{c_{2}} \label{eq:aD_a22}\\
	\oD{c_{2}} &= \oD{\conj{c}_{3}} + \oD{b_{2}^{2}} \oU{A} \label{eq:aD_c2}\\
	\oU{\conj{c}_{3}} &= MW + z \oU{\conj{b}_{3}^{2}} Y \label{eq:aU_c3}\\
	\oU{\conj{b}_{3}^{2}} &= \oU{b_{2}^{2}} + MW \oD{a_{2}^{2}} \label{eq:aU_b32}\\
	\oU{A} &= Y + (M W)^2 V \oU{\conj{c}_{3}} \label{eq:aU_A}\\
	\oU{b_{2}^{2}} &= \oD{a_{2}^{2}} M + M^4 V^2 W^3 \label{eq:aU_b22}\\
	Y &= \oD{b_{2}^{2}} + \oU{\conj{c}_{3}} M^2 V W \label{eq:Y}\\
	\oD{\conj{c}_{3}} &= M^2 W \oU{b_{2}^{2}} \label{eq:aD_c3}.
\end{align}
Through various manipulations of these nine equations we can express $Y$ in terms of $M,V,$ and $W$, and derive \eqref{eq:func_eqs_1} and \eqref{eq:func_eqs_2}(the bottom two equations of \eqref{eq:func_eqs}), i.e. the equations not containing the variable $z$, but force further constraints on the relationships between $M,V,$ and $W$.  In the spirit of unattractive transparency, we present a derivation of one such constraint equation (which, by all means, may not be the most efficient way of deriving such an equation).

First, we play with (\ref{eq:aD_b22}):
\begin{align*}
	\oD{b_{2}^{2}} &= (MW)^2 V + z \oD{\conj{c}_{3}} \oU{A}\\
	&= (MW)^2 V + z \oD{\conj{c}_{3}} \left\{  Y + (M W)^2 V \oU{\conj{c}_{3}} \right\}\\
	&= (MW)^2 V + z \oD{\conj{c}_{3}}Y + z (MW)^2 V \oD{\conj{c}_{3}} \oU{\conj{c}_{3}}\\
	&= (MW)^2 V \left[ 1 + z\oD{\conj{c}_{3}} \oU{\conj{c}_{3}} \right] + z \oD{\conj{c}_{3}}Y\\
	&= (MW)^2 V \oQ{\conj{c}_{3}} + z \oD{\conj{c}_{3}} Y\\
	&= (M V W)^2 + z \oD{\conj{c}_{3}} Y.
\end{align*}
Thus,
\begin{align*}
	\oQ{b_{2}^{2}} &= 1 + z \oU{b_{2}^{2}} \oD{b_{2}^{2}}\\
	&= 1 + \frac{z \oD{\conj{c}_{3}}}{M^2 W} \left[Y - \oU{\conj{c}_{3}} M^2 V W \right]\\
	&= 1 + \frac{z \oD{\conj{c}_{3}} Y}{M^2 W} - z \oD{\conj{c}_{3}} \oU{\conj{c}_{3}} V\\
	V &= 1 +  \frac{z \oD{\conj{c}_{3}} Y}{M^2 W} - \left(V - 1 \right)V;
\end{align*}
so
\begin{align}
	z \oD{\conj{c}_{3}} Y &= M^2 W \left(V^2 - 1 \right).
	\label{eq:aDxy_c3}
\end{align}
This yields,
\begin{align}
	\oD{b_{2}^{2}} &= \left(M V W \right)^2 + M^2 W \left(V^2 -1 \right). \label{eq:aD_b22_sol}
\end{align}
Using $\oD{\conj{b}_{3}^{2}} = (MW)^2 V$ (c.f. \eqref{eq:numerical_results}), $\oQ{\conj{b}_{3}^{2}} = W$, and (\ref{eq:aU_c3})
\begin{align*}
	W &= 1 + z (MW)^2 V \oU{\conj{b}_{3}^{2}}\\
	&= 1 + z (MW)^2 V \left[\frac{\oU{\conj{c}_{3}} - MW}{z Y} \right]\\
	&= 1 + (MW)^2 V \left[\frac{\oU{\conj{c}_{3}} - MW}{Y} \right];
\end{align*}
we rewrite this as
\begin{align}
	Y &= \frac{(MW)^2 V \left[\oU{\conj{c}_{3}} - MW \right]}{W - 1}.
	\label{eq:Y_partsol}
\end{align}
Hence, (\ref{eq:Y}) requires, along with (\ref{eq:aD_b22_sol}),
\begin{align*}
	(M V W)^2 + M^2 W(V^2 - 1) + \oU{\conj{c}_{3}} M^2 V W &= \frac{(MW)^2 V \left[\oU{\conj{c}_{3}} - MW \right]}{W - 1};
\end{align*}
solving for $\oU{\conj{c}_{3}}$, we have,
\begin{align}
	\oU{\conj{c}_{3}} &= \frac{1 - V^2 - W + M V W^2 + V^2 W^2}{V}
	\label{eq:aU_c3_sol}.
\end{align}
Substituting (\ref{eq:aU_c3_sol}) back into (\ref{eq:Y_partsol}), and simplifying we have
\begin{align}
	Y &= M^2 W^2 \left[ V^2 + V \left( M + V \right) W -1 \right] \nonumber \\
	&= (M V W)^2 + M^3 V W^3 + M^2 V ^2 W^3 - M^2 W^2.
	\label{eq:Y_sol}
\end{align}
Finally, from the identity
\begin{align*}
	V = \oQ{\conj{c}_{3}} = 1 + \left(z  \oD{\conj{c}_{3}} \right) \oU{\conj{c}_{3}} 
\end{align*}
and the results (\ref{eq:aDxy_c3}), (\ref{eq:aU_c3_sol}), and (\ref{eq:Y_sol}), we have
\begin{align*}
	V &= 1+\left(\frac{1-V^2-W+M V W^2+V^2W^2}{V}\right) \frac{M^2W\left(V^2-1\right)}{(M V W)^2+M^3V; W^3+M^2V^2W^3-M^2W^2}
\end{align*}
this simplifies to
\begin{align*}
	0 &= (-1 + V) \left[(-1 + V) (1 + V)^2 + (1 + V^3) W - V (M + V) W^2 \right].
\end{align*}
But as $V \neq 1$, then we must have
\begin{align}
	0 &= (-1 + V) (1 + V)^2 + (1 + V^3) W - V (M + V) W^2;
	\label{eq:rel_1}
\end{align}
this is \eqref{eq:func_eqs_2}.

Now, solving \eqref{eq:rel_1} for $M$, and substituting this expression into (\ref{eq:Y_sol}) gives
\begin{align}
	Y &= \frac{(1+V) (1+V-W)^2 \left(V^2 (1+W) - 1 \right)^3}{V^2 W^3}.
	\label{eq:Y_red}
\end{align}
Combining this with \eqref{eq:M_eq_prelim}, we arrive at \eqref{eq:func_eqs_1}:
\begin{align*}
	M &= 1 + z M^{4} \left\{ (1 + V) (1 + V - W)^2 [V^2(1+W) - 1]^{3} \right\}.
\end{align*}

To derive the second relation between $M,\, V,\,$ and $W$ we study $\oQ{c_{2}},\,\oQ{a_{2}^{2}},\,$ or $\oQ{b_{2}^{1}}$, among various other generating series.  In particular, we can write
\begin{align*}
	V &= \oQ{a_{2}^{2}}\\
	&= 1 + x Y \oD{a_{2}^{2}}\\
	&= 1 + \left(\frac{M^2 W (V^2 - 1)}{\oD{\conj{c}_{3}}} \right) \oD{a_{2}^{2}}\\
	&= 1 + \frac{(V^2 -1) \oD{\conj{a_{2}^{2}}}}{\oU{b_{2}^{2}}}
\end{align*}
or
\begin{align*}
	W &= \oQ{c_{2}}\\
	&= 1 + z M W \oD{\conj{c}_{2}}\\
	&= 1 + MW \left[\frac{\oD{a_{2}^{2}} - \oU{A}}{\oU{b_{2}^{2}}} \right].
\end{align*}
Using the finalized expressions above, the relation \eqref{eq:rel_1} (which can be used to eliminate $M$), and the condition that $V \neq 1$, and the assistance of Mathematica, we consistently arrive at the relation \eqref{eq:func_eqs_3}:
\begin{align}
	0 &= (-1 + V) (1 + V)^3 + (1 + V) (1 + V^2) W - V (1 + 3 V) W^2.
\label{eq:rel_2}
\end{align}

%

\section{The Standard Lift} \label{app:standard_lift}
We will define a map  $\twid{\left( \cdot \right)}: \Gamma: \rightarrow \twid{\Gamma}$. First, represent $\gamma$ as a sum of $k$ smooth closed curves $\{\beta_m:S^{1} \rightarrow \Sigma\}_{k=1}^{m}$ on $\Sigma$: 
\begin{align*}
	\gamma = \sum_{k=1}^{m}[\beta_{m}] \in \Gamma
\end{align*}
Now, each $\beta_{m}$ has a canonical lift $\widehat{\beta}_m: S^{1} \rightarrow UT\Sigma$ given by its tangent framing. 
\begin{equation}
\twid{\gamma} := \sum_{m=1}^k ([\widehat{\beta}_m] + H) + \sum_{m \le n} \# (\beta_m \cap \beta_n) H \in \twid{\Gamma}
 \label{eq:lift_def}
\end{equation}
where $\# (\beta_m \cap \beta_n)$ is the mod 2 intersection number of the closed curves $\beta_{m}$ and $\beta_{n}$; it can be calculated by perturbing the $\beta_{m}$ such that each intersection is transverse.\footnote{Because we are working mod 2, it agrees with the intersection pairing on homology:
\begin{align*}
	\# (\beta_m \cap \beta_n) = \langle [\beta_m], [\beta_n] \rangle_{\Gamma}  \text{ mod 2}.
\end{align*}
}
As shown in \cite[App. E]{wwc}, $\twid{\gamma}$ is independent of our choices of $\{\beta_{m}\}_{m=1}^{k}$; hence, $\twid{\left( \cdot \right)}: \Gamma: \rightarrow \twid{\Gamma}$ is a well-defined map.

\begin{remark}
It should be noted that $\twid{\left( \cdot \right)}$ is not a homomorphism; indeed, its failure to be a homomorphism is encapsulated via the identity:
	\begin{align*}
		\twid{\gamma} + \twid{\gamma}' = \twid{\left(\gamma + \gamma'\right)} + \langle \gamma, \gamma' \rangle_{\Gamma} H.
	\end{align*}
where $\langle \cdot, \cdot \rangle_{\Gamma}$ denotes the skew-symmetric pairing on $\Gamma$
\end{remark}

\section{Signs in the definition of \texorpdfstring{$z$}{z}} \label{app:signs}
We now elaborate on the reasons for the sign $(-1)^{m a b - a^2 - b^2}$ that appears in the definition of the variable $z$ for $(a,b|m)$-herds; the reasoning follows along the lines of the reason for the sign $(-1)^{m}$ in the definition of the formal variable in in the $m$-herd (c.f. Prop. \ref{prop_Q}).  Indeed, as in the proof for $m$-herds, we expect that the street factors (generating series associated to each street) are derived most naturally (using the same techniques to derive them as used with $m$-herds) as series in the formal variable $z = X_{m \widehat{\gamma}_{1} + (m-1) \widehat{\gamma}_{2}}$; where $\widehat{\gamma}_{1}, \, \widehat{\gamma}_{2} \in \twid{\Gamma}$ are defined in the following manner:


\begin{enumerate}
	\item Starting from an $(a,b|m)$-herd, shrink all fuchsia streets (streets of type 13) to points; the resulting diagram should appear as the superposition of two saddle connections with transverse intersections at $m$ points (c.f. Fig.~\ref{fig:saddle_conns}).
	
	\item Each saddle-connection lifts to a closed, oriented loop on the spectral cover $\Sigma$; let $\alpha$ denote the loop whose homology class is $\gamma_{1}$ (using the red-blue colour-coding used throughout this paper: the lift of the blue saddle connection), and $\beta$ the lift of the loop whose homology class is $\gamma_{2}$ (the lift of the red saddle-connection).  
	
	\item The loops $\alpha$ and $\beta$ have natural tangent-framing lifts to the unit tangent bundle $UT\Sigma$; denote them by $\widehat{\alpha}$ and $\widehat{\beta}$ respectively.  Then by taking the homology classes of these tangent framing lifts, we define $\widehat{\gamma}_{1} := [\widehat{\alpha}] \in \twid{\Gamma}$ and $\widehat{\gamma}_{2} := [\widehat{\beta}] \in \twid{\Gamma}$.  The explicit loops $\widehat{\alpha}$ and $\widehat{\beta}$ depend on the choice of shrinking procedure in step 1---however, it is clear that these loops are unique up to homotopy.  Hence, $\widehat{\gamma}_{1}$ and $\widehat{\gamma}_{2}$ are well-defined homology classes.
	
\end{enumerate}

Now, let $a,b \in \mathbb{Z}$, then the homology class $a \gamma_{1} + b \gamma_{2}$ can be represented by the sum of $a$ copies of $\widehat{\alpha}$ and $b$ copies of $\widehat{\beta}$; thus, from \eqref{eq:lift_def}, it follows that
\begin{align*}
	\twid{\left(a \gamma_{1} + b \gamma_{2} \right)} = a \widehat{\gamma}_{1} + b \widehat{\gamma}_{2} + (a + b + m a b) H
\end{align*}
Hence,
\begin{align*}
	z := X_{a \widehat{\gamma}_{1} + b \widehat{\gamma}_{2}} = (-1)^{mab + a + b} X_{\twid{\left(a \gamma_{1} + b \gamma_{2} \right)}}.
\end{align*}
Because $mab + a + b$ has the same parity as $mab - a^2 - b^2$, we may write (if we so wish)
\begin{align*}
	z = (-1)^{mab - a^2 - b^2} X_{\twid{\left(a \gamma_{1} + b \gamma_{2} \right)}}.
\end{align*}

\section{Amusingly Large Polynomials} \label{app:3_2_polys}
For the reader's convenience, we rewrite some of the polynomials that determine important generating series in a form that allows for a simple copy and paste into \textit{Mathematica} (or other computer-algebra software).
The absolutely irreducible polynomial (in two-variables) with root (as a polynomial with coefficients in $\mathbb{Z}[z]$) the series $V$ (see \S \ref{sec:2_3_herds} and App. \ref{sec:derivation}) is stated below.
\begin{verbatim}
PolyV[z_, v_] := -z - 
   z^2 + (-11*z - 20*z^2)*v + (-24*z - 170*z^2)*v^2 + (-1 + 177*z - 
      761*z^2)*v^3 + (-14 + 1053*z - 1620*z^2)*v^4 + (-75 + 1285*z + 
      549*z^2)*v^5 + (-180 - 3866*z + 11973*z^2)*v^6 + (-135 - 
      10062*z + 24543*z^2)*v^7 + (162 + 7192*z - 
      5700*z^2)*v^8 + (243 + 35757*z - 102404*z^2)*v^9 + (-11615*z - 
      129829*z^2)*v^10 + (-97765*z + 124394*z^2)*v^11 + (4563*z + 
      462849*z^2)*v^12 + (180593*z + 198599*z^2)*v^13 + (8205*z - 
      711637*z^2)*v^14 + (-215927*z - 932750*z^2)*v^15 + (27968*z + 
      378160*z^2)*v^16 + (184698*z + 1536505*z^2)*v^17 + (-81597*z + 
      527710*z^2)*v^18 + (-100800*z - 1365795*z^2)*v^19 + (74492*z - 
      1276293*z^2)*v^20 + (10798*z + 539545*z^2)*v^21 + (-28991*z + 
      1270220*z^2)*v^22 + (11257*z + 200132*z^2)*v^23 + (-1475*z - 
      718560*z^2)*v^24 - 408046*z^2*v^25 + 204359*z^2*v^26 + 
   259484*z^2*v^27 + 10647*z^2*v^28 - 87687*z^2*v^29 - 
   31977*z^2*v^30 + 13506*z^2*v^31 + 11541*z^2*v^32 + 725*z^2*v^33 - 
   1718*z^2*v^34 - 578*z^2*v^35 + 33*z^2*v^36 + 58*z^2*v^37 + 
   13*z^2*v^38 + z^2*v^39
\end{verbatim}
The absolutely irreducible polynomial (in two-variables) with root (as a polynomial with coefficients in $\mathbb{Z}[z]$) $\gdt_{3/2}$ (the polynomial \eqref{eq:3_2_3_poly}) is stated below.
\begin{verbatim}
PolyT[z_,t_] := -(1 + z) + t*(4 - 5*z) +  t^2*(-6 + z) + t^3*(4 + 21*z) +
	t^4*(-1 - 34*z) - t^5*(7*z) + t^6*(76*z + z^2) + t^7*(-64*z - 13*z^2) +
	t^8*(6*z - 114*z^2) + t^9*(7*z - 80*z^2) + t^(10)*(6*z^2) +
 	t^(11)*(119*z^2) + t^(12)*(53*z^2 + z^3) + t^(13)*(-55*z^2 + 44*z^3) +
 	t^(14)*(-21*z^2 - 38*z^3) + t^(15)*(77*z^3) - t^(16)*(382*z^3) +
 	t^(17)*(270*z^3) + t^(18)*(80*z^3 - z^4) + t^(19)*(35*z^3 + 7*z^4) +
 	t^(20)*(39*z^4) - t^(21)*(367*z^4) - t^(22)*(173*z^4) -
	t^(23)*(30*z^4) - t^(24)*(35*z^4) +  t^(25)*(3*z^5) - t^(26)*(17*z^5) -
	t^(27)*(77*z^5) - t^(28)*(14*z^5) + t^(29)*(21*z^5) - t^(32)*(3*z^6) +
	t^(33)*(9*z^6) - t^(34)*(7*z^6) + t^(39)*z^7
\end{verbatim}

The absolutely irreducible polynomial (in two-variables) with root (as a polynomial with coefficients in $\mathbb{Z}[z]$) $\geul_{3/2}$ is stated below.
\begin{verbatim}
PolyE[z_,e_] := -z + e*(-5*z + z^2) + e^2*(z - 13*z^2 + z^3) + 
	e^3*(21*z - 114*z^2 + 44*z^3 - z^4) + 
	e^4*(-34*z - 80*z^2 - 38*z^3 + 7*z^4) + 
	e^6*(4 + 76*z + 119*z^2 - 382*z^3 - 367*z^4 - 17*z^5) + 
	e^5*(-1 - 7*z + 6*z^2 + 77*z^3 + 39*z^4 + 3*z^5) + 
	e^7*(-6 - 64*z + 53*z^2 + 270*z^3 - 173*z^4 - 77*z^5 - 3*z^6) + 
	e^8*(4 + 6*z - 55*z^2 + 80*z^3 - 30*z^4 - 14*z^5 + 9*z^6) + 
	e^9*(-1 + 7*z - 21*z^2 + 35*z^3 - 35*z^4 + 21*z^5 - 7*z^6 + z^7)
\end{verbatim}

\printbibliography

\end{document}